\documentclass[%
 reprint,
 superscriptaddress,
 showpacs,
 amsmath,amssymb,
 aps,
 prl,
]{revtex4-2}

\usepackage{mathtools}
\usepackage{xcolor}
\usepackage{amsthm}
\usepackage{mathrsfs}
\usepackage{enumitem}
\usepackage{booktabs}
\theoremstyle{definition}
\newtheorem{proposition}{Proposition}
\newtheorem{remark}{Remark}
\newtheorem{observation}{Observation}
\newtheorem{lemma}{Lemma}

\usepackage{graphicx}
\usepackage{dcolumn}
\usepackage{bm}
\usepackage{hyperref}

\newcounter{mainresult}
\renewcommand{\themainresult}{\arabic{mainresult}}

\begin{document}

\preprint{APS/123-QED}

\title{Size-Independent Robustness in Multipartite Bell Self-Testing}

\author{Shen Cao}
\affiliation{
  QICI Quantum Information and Computation Initiative, School of Computing and Data Science,
  The University of Hong Kong, Pokfulam Road, Hong Kong SAR, China
}
\author{Xingjian Zhang}
\affiliation{
  QICI Quantum Information and Computation Initiative, School of Computing and Data Science,
  The University of Hong Kong, Pokfulam Road, Hong Kong SAR, China
}
\affiliation{
  Centre for Quantum Software and Information, University of Technology Sydney, Sydney, Australia
}
\author{Fei Shi}
\affiliation{
  QICI Quantum Information and Computation Initiative, School of Computing and Data Science,
  The University of Hong Kong, Pokfulam Road, Hong Kong SAR, China
}
\affiliation{Institute of Quantum Computing and Software, School of Computer Science and Engineering, Sun Yat-sen University, Guangzhou 510006, China}
\author{Qi Zhao}
 \email{{zhaoq@hku.hk}}
\affiliation{
  QICI Quantum Information and Computation Initiative, School of Computing and Data Science,
  The University of Hong Kong, Pokfulam Road, Hong Kong SAR, China
}

\date{\today}

\begin{abstract}
  Practical robust self-testing of multipartite entanglement has so far been restricted to small-scale systems due to error bounds that degrade severely with system size.
  In this work, we establish multipartite self-testing with robustness independent of the size of the quantum network.
  We derive a fully analytic, device-independent self-testing bound for $n$-qubit Greenberger-Horne-Zeilinger (GHZ) states.
  The bound scales linearly with the observed violation error and lies universally within a constant factor of two from a theoretical upper bound.
  Furthermore, the operator-inequality framework reduces the verification of the conjectured optimal bound to a highly efficient numerical check, which we perform up to $n=100$.
  Consequently, GHZ entanglement can be certified under a fixed noise level in arbitrarily large systems, enabling scalable device-independent verification.
\end{abstract}

\maketitle


\paragraph{Introduction.}
Multipartite entangled states, such as $n$-qubit Greenberger-Horne-Zeilinger (GHZ) states \cite{Greenberger_Horne_Zeilinger_1989,Greenberger_Horne_Shimony_Zeilinger_1990}, are fundamental workhorses of large-scale quantum networks \cite{McCutcheon_Pappa_Bell_et_al_2016,Bugalho_Coutinho_Monteiro_Omar_2023,Azuma_Economou_Elkouss_Hilaire_Jiang_Lo_Tzitrin_2023}.
Verifying the quality of these distributed resources is paramount \cite{Friis_Vitagliano_Malik_Huber_2019}, yet experimental devices are inherently noisy and often untrusted \cite{shor1996fault,Mayers_Yao_1998}.
Device-independent self-testing provides the most stringent form of certification, uniquely determining the entangled quantum systems based solely on classical input-output statistics \cite{Supic_Bowles_2020,Eisert_Hangleiter_Walk_Roth_Markham_Parekh_Chabaud_Kashefi_2020}.
In practice, experimental statistics rarely match the ideal perfectly; thus, self-testing is of practical interest only if it is robust \cite{zhang2019experimental,Wu_et_al_2021,Wu_et_al_2022}.
Specifically, a deficit $\varepsilon > 0$ of the violation of Bell inequality must imply a quantitative bound of the distance to the ideal state.
The central unresolved challenge for multipartite systems is how this robust bound scales with the number of parties $n$.

For bipartite systems, this method is mature.
The Clauser-Horne-Shimony-Holt (CHSH) inequality self-tests the singlet \cite{Clauser_Horne_Shimony_Holt_1969,Summers_Werner_1987,Mayers_Yao_1998,Mayers_Yao_2003}, with analytic bounds ranging from $O(\sqrt{\varepsilon})$ for SWAP-based methods \cite{McKague_Yang_Scarani_2012,Yang_Navascues_2013,Coladangelo_Goh_Scarani_2017} to Kaniewski's optimal linear bound \cite{Kaniewski_2016}.
For $n$-qubit GHZ states, the canonical multipartite generalization of CHSH is the MABK inequality \cite{Ardehali_1992,Belinskii_Klyshko_1993}.
Kaniewski's operator-inequality method gives the optimal linear bound for $n=3$ \cite{Kaniewski_2016}, recently extended to $n=4,5$ \cite{Jha_Singh_Pan_2026}.
The obstacle for quantum networks is therefore not exact self-testing but scaling.
Existing analytic bounds valid for arbitrary $n$ tolerate only violation errors that shrink polynomially \cite{McKague_2011,McKague_2016} or exponentially \cite{Singh_Sasmal_Pan_2025} with $n$.
For example, for $n \ge 10$, even if the observed violation misses the maximal quantum value by only a relative deficiency of $10^{-5}$, we cannot derive any nontrivial result.
At the same scale, numerical methods such as the Navascu\'{e}s-Pironio-Ac\'{i}n hierarchy or searches over measurement angles also become intractable for the exponential blow-up in computational complexity \cite{Navascues_Pironio_Acin_2008,Baccari_Cavalcanti_Wittek_Acin_2017,Baccari_Augusiak_Supic_Tura_Acin_2020,Zhao_Zhou_2022}.
Certification therefore becomes hypersensitive to microscopic noise precisely as networks grow.


In this Letter, we remove this scaling bottleneck by establishing size-independent robustness in self-testing of the multipartite GHZ state.
Adapting the operator inequality framework to the MABK inequality, we prove an analytic, device-independent robust self-testing bound of $n$-qubit GHZ states for all $n \ge 3$.
Crucially, this bound scales linearly with the violation error, with a slope twice that of an explicit linear upper bound known to be optimal for $n \le 5$ \cite{Kaniewski_2016,Jha_Singh_Pan_2026}.
Furthermore, we support the universal optimality of the latter bound up to $n=100$ by reducing its verification from an exponentially hard problem to a polynomial-size positivity check.
As a consequence, GHZ states can be certified under a fixed level of noise for any number of parties.
We further demonstrate its potential as a primitive for large-scale quantum networks by deriving the first robust, fully device-independent lower bound on the randomness certified from the $n$-qubit GHZ state for all $n \ge 3$, complete with finite-sampling guarantees.

\paragraph{Extractability.}
Consider an $n$-partite quantum system $\bigotimes_{k=1}^n \mathcal{H}_k$, where a multipartite entangled state $\rho$ is distributed among $n$ remote parties.
The parties have access only to local operations, and the quantum devices (state source and measurement apparatus) are untrusted.
The sole information about $\rho$ is the statistical distribution of measurement outcomes.
In such a scenario, exact identification of the state is impossible because local unitary rotations on the measurements and the state leave the outcome distribution unchanged.
Therefore, self-testing can at most certify the state up to local equivalence.
This leads to the notion of \emph{extractability} \cite{Bardyn_2009}.
For a target pure state $|\Psi\rangle \langle \Psi|$ and an $n$-partite input state $\rho$, the extractability of $\Psi$ from $\rho$ is defined as
\begin{equation*}
  \Xi \left( \rho \to \Psi \right) \coloneqq \max \left\{ F \left( \Lambda(\rho), \Psi \right) \middle| \Lambda = \bigotimes_{k=1}^n \Lambda_k \right\}.
\end{equation*}
where the maximum is taken over all local channels $\Lambda$, $F(\rho, \sigma) = \|\sqrt{\rho}\sqrt{\sigma}\|_1^2$ denotes the quantum fidelity, and $\|\cdot\|_1$ is the trace norm.

\paragraph{Result}\hspace{-0.2cm}\refstepcounter{mainresult}\themainresult\label{result:main}.
  Let $s_n = \left(\sqrt{2} + 1\right)/2^\frac{n}{2}$ and $\mu = - \left( 1 + \sqrt{2} \right)$.
  Then for any $n$-partite quantum state $\rho$ attaining violation value $\beta$ on the $n$-qubit MABK inequality \cite{Ardehali_1992,Belinskii_Klyshko_1993} with maximal quantum violation $\beta_{nQ}=2^{\frac{n+1}{2}}$, we have
  \begin{equation}
    \label{eq:linear-robustness}
    \Xi(\rho \to |\text{GHZ}_n\rangle \langle\text{GHZ}_n|) \geq s_n \beta + \mu,
  \end{equation}
  where 
  \begin{equation*}
    |\text{GHZ}_n\rangle \coloneqq \frac{1}{\sqrt{2}} \left( |0\rangle^{\otimes n} + |1\rangle^{\otimes n} \right).
  \end{equation*}
We note that this lower bound differs from a linear upper bound only by a constant factor of 2 in the slope, as shown below.

\paragraph{Optimality.}
Let $\beta_{nC} = 2^{\frac{n}{2}}$ be the maximal violation value of $n$-qubit MABK inequality achieved with a biseparable state.
Suppose we observe a violation value $\beta$.
To derive an upper bound, write $\beta = p \beta_{nQ} + (1-p) \beta_{nC}$ with $p \in [0, 1]$, and consider the state
\begin{equation*}
  \sigma \coloneqq p |\text{GHZ}_n\rangle \langle \text{GHZ}_n| + (1-p) \sigma_{\text{sep}},
\end{equation*}
where $\sigma_{\text{sep}}$ is a biseparable state attaining the biseparable bound $\beta_{nC}$, so that $\sigma$ attains the violation $\beta$.
Then, since extractability is convex in the input state, 
\begin{equation}
  \label{eq:1}
  \Xi \left( \sigma \to |\text{GHZ}_n\rangle \langle\text{GHZ}_n| \right) \leq p + (1 - p) \cdot 0.5.
\end{equation}
Here $\Xi \left( \sigma_{\text{sep}} \to |\text{GHZ}_n\rangle \langle\text{GHZ}_n| \right) \leq 0.5$ follows from the fact that the maximal Schmidt coefficient of $|\text{GHZ}_n\rangle$ is $1/\sqrt{2}$, the square of which upper bounds the fidelity between $|\text{GHZ}_n\rangle$ and any biseparable state \cite{Shimony_1995}.
Substituting $p = \frac{\beta - \beta_{nC}}{\beta_{nQ} - \beta_{nC}}$ into Eq.~\eqref{eq:1} yields
\begin{equation}
  \label{eq:upper-bound}
  \begin{aligned}
    &\Xi(\sigma \to |\text{GHZ}_n\rangle \langle\text{GHZ}_n|) \\
    \leq& \frac{1}{2} + \frac{1}{2}\frac{\beta - \beta_{nC}}{\beta_{nQ} - \beta_{nC}} = \frac{1 + \sqrt{2}}{2^{\frac{n}{2} + 1}} \beta - \frac{1}{\sqrt{2}}.
  \end{aligned}
\end{equation}
For $n \le 5$, this bound is tight: it recovers Kaniewski's optimal bound for the Mermin inequality ($n=3$) \cite{Kaniewski_2016} and the optimal bounds established analytically for $n = 4, 5$ in \cite{Jha_Singh_Pan_2026}.
Define 
\begin{equation}
  \label{eq:upper-bound-params}
  \bar{s}_n \coloneqq \frac{1 + \sqrt{2}}{2^{\frac{n}{2} + 1}},\quad \bar{\mu} \coloneqq - \frac{1}{\sqrt{2}},
\end{equation}
which are slope and intercept of the upper bound in Eq.~\eqref{eq:upper-bound}, respectively.
Since $s_n = 2 \bar{s}_n$, the gap to optimality remains constant, independent of $n$ (Fig.~\ref{fig:lower_bound_qubits}).

\begin{figure}[htbp]
  \begin{minipage}{0.49\textwidth}
    \includegraphics[width=\textwidth]{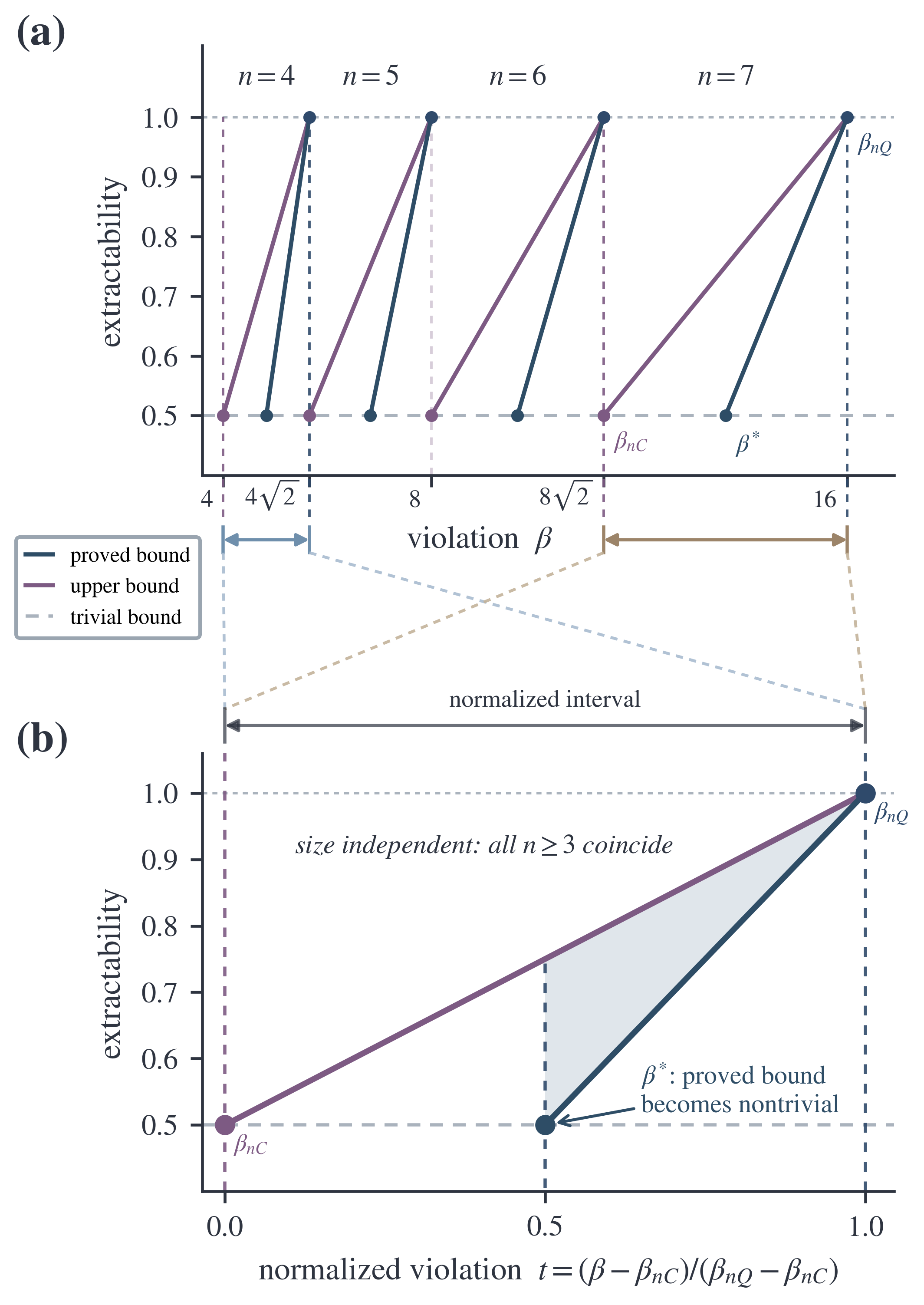}
  \end{minipage}
  \caption{\label{fig:lower_bound_qubits}Size-independent robust self-testing. (a) The proved bound and the upper bound of Eq.~\eqref{eq:upper-bound} versus the raw violation $\beta$ for the windows $[2^{n/2}, 2^{(n+1)/2}]$, $n=4,\ldots,7$: $\beta_{nC}$ and $\beta_{nQ}$ are the window boundaries, and $\beta^{*}=(\beta_{nC}+\beta_{nQ})/2$ is the onset of nontriviality. (b) Under the normalization $t=(\beta-\beta_{nC})/(\beta_{nQ}-\beta_{nC})$, for each $n$ the window collapses onto the two universal lines $\Xi=t$ and $\Xi=(1+t)/2$, i.e.\ the robustness is independent of the number of parties.}
\end{figure}

\paragraph{Proof Sketch.}
Following~\cite{Kaniewski_2016}, we convert the extractability bound of Eq.~\eqref{eq:linear-robustness} into an operator inequality
\begin{equation*}
  \Lambda^\dagger(|\text{GHZ}_n\rangle \langle\text{GHZ}_n|) \geq s_n W_n + \mu I,
\end{equation*}
where $\Lambda$ is the local extraction channel and $W_n$ is the Bell operator of $n$-qubit MABK inequality.
Proving this inequality for measurements of arbitrary dimension is the central technical challenge.
We address it in two stages.
First, Naimark dilation~\cite{Neumark1943} is applied, followed by block-diagonalization of qubit observables by Jordan's Lemma \cite{Aharon}.
The problem is thus effectively restricted to $n$-qubit systems.
Second, we project the simplified inequality onto $2^{n-1}$ orthogonal 2-dimensional subspaces using $n-1$ commuting Pauli strings.
On each subspace, the positivity condition reduces to a scalar inequality
involving a smooth function of the $n$ observable angles, which we verify
analytically for the parameters in Result~\ref{result:main}.

For the optimal parameters of Eq.~\eqref{eq:upper-bound-params},
the analytic proof succeeds on all but two families of subspaces,
reducing the remaining verification to a tractable numerical problem.

\paragraph{Result}\hspace{-0.2cm}\refstepcounter{mainresult}\themainresult\label{result:numeric}.
For each $n \ge 3$, a sufficient condition for Eq.~\eqref{eq:upper-bound} to be tight for every $\beta \ge \beta_{nC}$ (that is, for Eq.~\eqref{eq:linear-robustness} to hold with the optimal parameters of Eq.~\eqref{eq:upper-bound-params}) is the nonnegativity of two smooth functions defined respectively over the ordered simplexes
\begin{equation*}
  \begin{aligned}
    [0, 1] \times& \{0 \leq x_1 \leq \cdots \leq x_{n-1} \leq 1\}, \\
    \{0 \leq x_1 \leq x_2 \leq 1\} \times& \{0 \leq x_1 \leq \cdots \leq x_{n-2} \leq 1\},
  \end{aligned}
\end{equation*}
The ordering of the simplexes makes the verification tractable, reducing the number of lattice points required to check the condition by grid search from $\propto m^n$ to $\propto n^m$ (a polynomial in $n$ for fixed $m$), where $m$ is the number of grid points per dimension.
This conversion of an exponential verification problem into a polynomial one bypasses the exponential blow-up that confines numerical certification methods to few-party systems: SDP hierarchies grow exponentially with the number of parties \cite{Navascues_Singh_Acin_2020}, and numerical evaluations of operator-inequality bounds have reached only $n \le 7$ \cite{Baccari_Augusiak_Supic_Tura_Acin_2020}.
We conducted a grid search and performed gradient descent optimization for all grid points.
Under double precision, with fine grids ($m \ge 7$) up to $n = 40$ qubits and coarser grids (down to $m = 5$) up to $n = 100$, the numerical results support robust self-testing in Eq.~\eqref{eq:linear-robustness} with the optimal parameters in Eq.~\eqref{eq:upper-bound-params}, extending far beyond the party numbers covered by existing analytic proofs.
Details are provided in Supplemental Material~\cite{SM_key}.

\paragraph{Applications.}
Size-independent robustness propagates to downstream applications.
The following quantitative results are derived from the sub-optimal bound of Result~\ref{result:main}; the optimal parameters of Eq.~\eqref{eq:upper-bound-params} would yield correspondingly stronger guarantees.
Suppose we want to certify $|\text{GHZ}_n\rangle$ under white noise, that is, the input state $\rho$ can be written
\begin{equation*}
  \rho = p|\text{GHZ}_n\rangle\langle\text{GHZ}_n| + (1 - p) \frac{I}{2^n}.
\end{equation*}
We have $F(|\text{GHZ}_n\rangle\langle\text{GHZ}_n|, \rho) = p$ and the violation
\begin{equation*}
  \text{tr}[W\rho] = p \beta_{nQ} + (1-p) \cdot 0 = 2^{\frac{n+1}{2}}p.
\end{equation*}
The extractability certified by Result~\ref{result:main} and the upper bound derived from Eq.~\eqref{eq:upper-bound} are
\begin{equation*}
  \begin{aligned}
    &\Xi(\rho \to |\text{GHZ}_n\rangle\langle\text{GHZ}_n|) \\
    & \begin{cases}
      \geq s_n \beta + \mu = \left( 1 + \sqrt{2} \right) \left( \sqrt{2}p - 1 \right), & \text{Result~\ref{result:main}},\\
      \leq \overline{s}_n \beta + \overline{\mu} = \frac{2 + \sqrt{2}}{2}p - \frac{1}{\sqrt{2}}, & \text{Upper bound}.
    \end{cases}
  \end{aligned}
\end{equation*}
A visual comparison of these bounds is given in Fig.~\ref{fig:robust_noise}.

\begin{figure}
  \begin{minipage}{0.49\textwidth}
    \includegraphics[width=\textwidth]{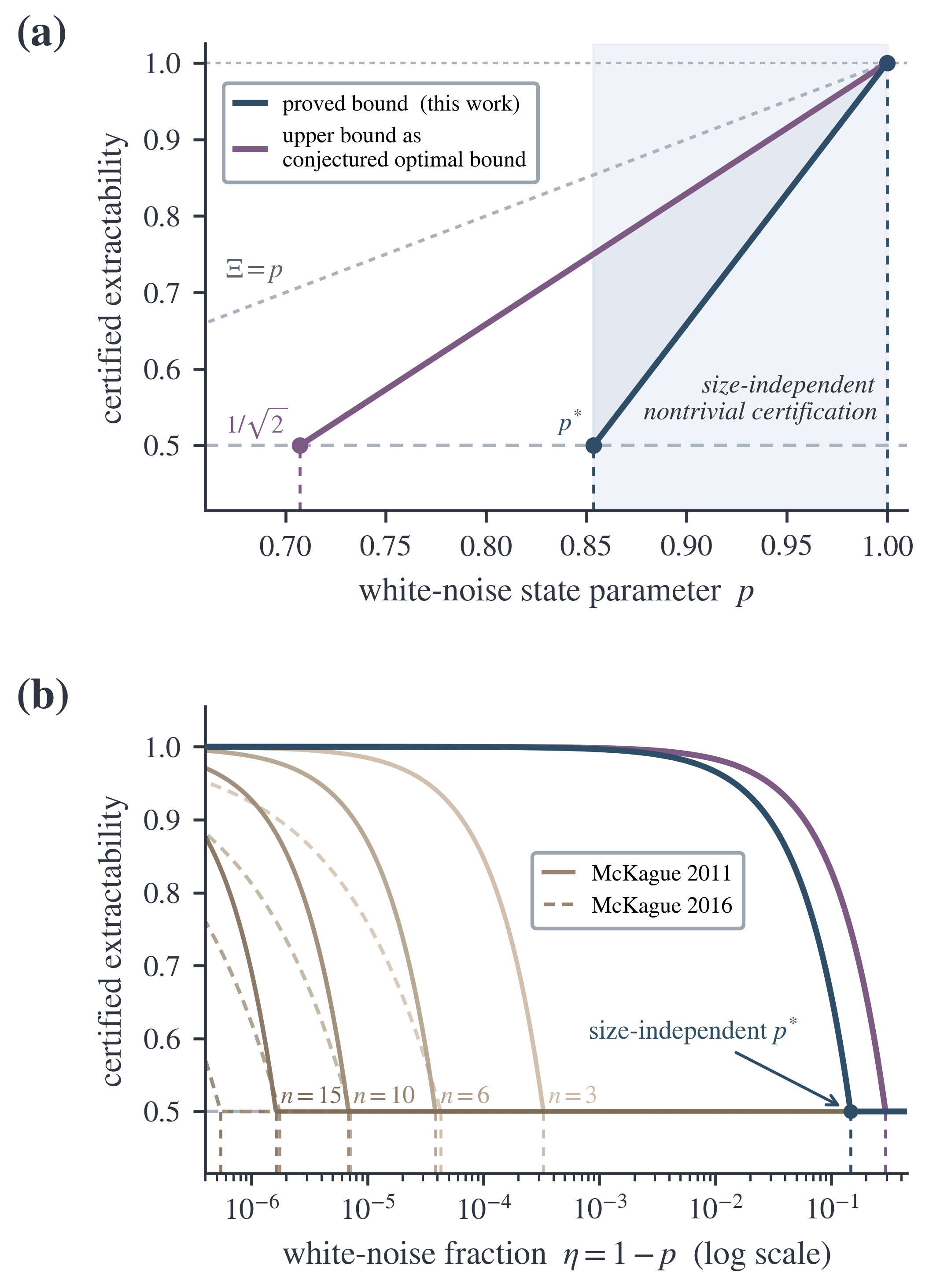}
  \end{minipage}
  \caption{\label{fig:robust_noise}White-noise robustness. (a) Certified extractability for $\rho_p = p\,|\text{GHZ}_n\rangle\langle\text{GHZ}_n|+(1-p)I/2^n$: the proved bound becomes nontrivial at $p^{*}=(2+\sqrt{2})/4$, while the upper bound of Eq.~\eqref{eq:upper-bound} crosses the trivial threshold $\Xi=1/2$ at $p=1/\sqrt{2}$; both curves are independent of $n$. (b) The same bounds versus the deficiency $\eta=1-p$ together with McKague's complete-graph tests ($n=3,6,10,15$), whose tolerance deteriorates with $n$; the vertical lines mark the crossings of $\Xi=1/2$.}
\end{figure}

For comparison, the previously known analytic robust bounds valid for arbitrary $n$ require a violation error that shrinks rapidly with $n$ \cite{McKague_2011,McKague_2016,Singh_Sasmal_Pan_2025}, as illustrated in Fig.~\ref{fig:tolerance_bound_qubits}. A formal derivation of the two McKague bounds displayed there, including the identification of the deviation parameters, is given in Section~\ref{sec:comparison} of the Supplemental Material~\cite{SM_key}.

\begin{figure}[htbp]
  \begin{minipage}{0.49\textwidth}
    \includegraphics[width=\textwidth]{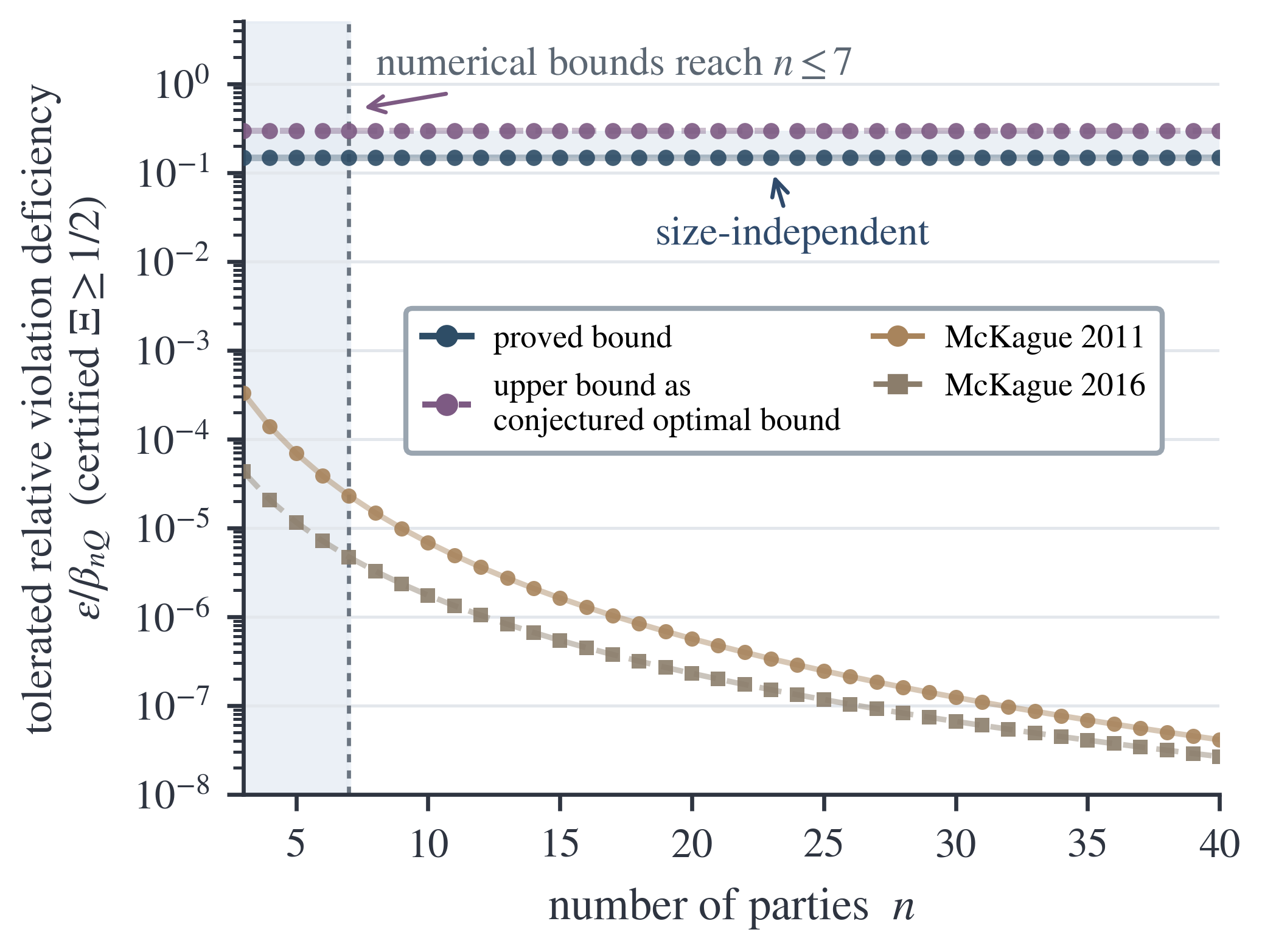}
  \end{minipage}
  \caption{\label{fig:tolerance_bound_qubits}Tolerance versus number of parties: largest relative violation deficiency $\eta=\varepsilon/\beta_{nQ}$ certifying $\Xi\geq 1/2$. The proved bound and the upper bound of Eq.~\eqref{eq:upper-bound} are flat in $n$ (a constant factor-of-two gap), whereas the previously known analytic tests for all $n$ (McKague) decay with the number of parties; the shaded band at $n\leq 7$ marks the reach of numerical operator-inequality bounds.}
\end{figure}

Beyond entanglement certification, size-independent robustness and the analytic nature of our bound also enable device-independent randomness analysis in large-scale quantum networks.
As a second application, we derive a device-independent lower bound on the conditional entropy certified by the $n$-qubit MABK inequality in a device-independent randomness generation (DIRG) protocol~\cite{colbeck3814quantum,Pironio_Acin_Massar_de_la_Giroday_Matsukevich_Maunz_Olmschenk_Hayes_Luo_Manning_et_al_2010,Colbeck_Kent_2011,Wooltorton_Brown_Colbeck_2025}.
At maximal violation $\beta_{nQ} = 2^{(n+1)/2}$, the protocol certifies $H_{\rm ideal}(n)$ bits of private randomness, where $H_{\rm ideal}(n) = n$ for odd $n$ (on input $\mathbf{0}$ for $n \equiv 3 \pmod 4$ and on input $\mathbf{1}$ for $n \equiv 1 \pmod 4$; input $\mathbf{0}$ alone certifies only $n-1$ bits when $n \equiv 1 \pmod 4$) and $H_{\rm ideal}(n) = n + \frac{1}{2} - \frac{\log_2(1+\sqrt{2})}{\sqrt{2}} \approx n - 0.4$ for even $n$~\cite{Wooltorton_Brown_Colbeck_2025}.
The following result bounds the randomness deficit when the observed violation falls short of the maximum.

\paragraph{Result}\hspace{-0.2cm}\refstepcounter{mainresult}\themainresult\label{result:dirg}.
  For $n \geq 3$, let $\beta = \beta_{nQ} - \varepsilon$ be the observed MABK violation
  with $\varepsilon \in [0,\, 2^{n/2}(\sqrt{2}-1)]$.
  Define $\eta \coloneqq \varepsilon\cdot 2^{-(n+1)/2}$,
  $s_n \coloneqq (1+\sqrt{2})/2^{n/2}$, and
  $w \coloneqq 4\eta - 2\eta^{2}$.
  The conditional entropy of the measurement outcomes given Eve's side information\footnote{The simplified bound is derived from the conditional-entropy continuity bound of Winter~\cite{Winter_2016} under the assumption $\delta_{\rm total} \coloneqq 2\sqrt{s_n\varepsilon} + n\sqrt{2w} + \frac{\sqrt{2}}{2}\,n\,w \leq 1$; for parameter regimes where this condition is violated, the bound $H(R|E) \geq H_{\rm ideal}(n) - n$ holds instead via the trivial classical-quantum entropy bound (valid because the post-measurement states are classical-quantum), see Eq.~\eqref{sm:dirg:complete_explicit} of the Supplemental Material.} satisfies
  \begin{equation*}
    \begin{aligned}
      H(R|E)(\beta) \geq H_{\rm ideal}(n) &- n\sqrt{s_n\varepsilon}
      - n^{2}\sqrt{w/2} - \frac{n^{2}w}{2\sqrt{2}} \\
      &- 2\,h\!\left(\sqrt{s_n\varepsilon} + n\sqrt{w/2} + \frac{n w}{2\sqrt{2}}\right),
    \end{aligned}
  \end{equation*}
  where $h(\cdot)$ is the binary entropy.
  $H_{\rm ideal}(n) = n$ for odd $n$ (input $\mathbf{0}$ for $n \equiv 3 \pmod 4$, input $\mathbf{1}$ for $n \equiv 1 \pmod 4$);
  $H_{\rm ideal}(n) = n + \tfrac{1}{2} - \log_2(1+\sqrt{2})/\sqrt{2} \approx n - 0.4$ for even $n$.
  Figure~\ref{fig:dirg} shows this certified rate as a function of the relative violation deficiency $\varepsilon/\beta_{nQ}$ for several numbers of parties.
  For $n = 11, 15$, we can still certify nontrivial randomness with relative violation deficiency $\eta \approx 10^{-4}$.
  While under the same noise level the analytic robustness bounds in Refs.~\cite{McKague_2011,McKague_2016} cannot certify any entanglement for $n \ge 10$ (shaded region in Fig.~\ref{fig:dirg}).

\begin{figure}[htbp]
  \centering
  \includegraphics[width=0.45\textwidth]{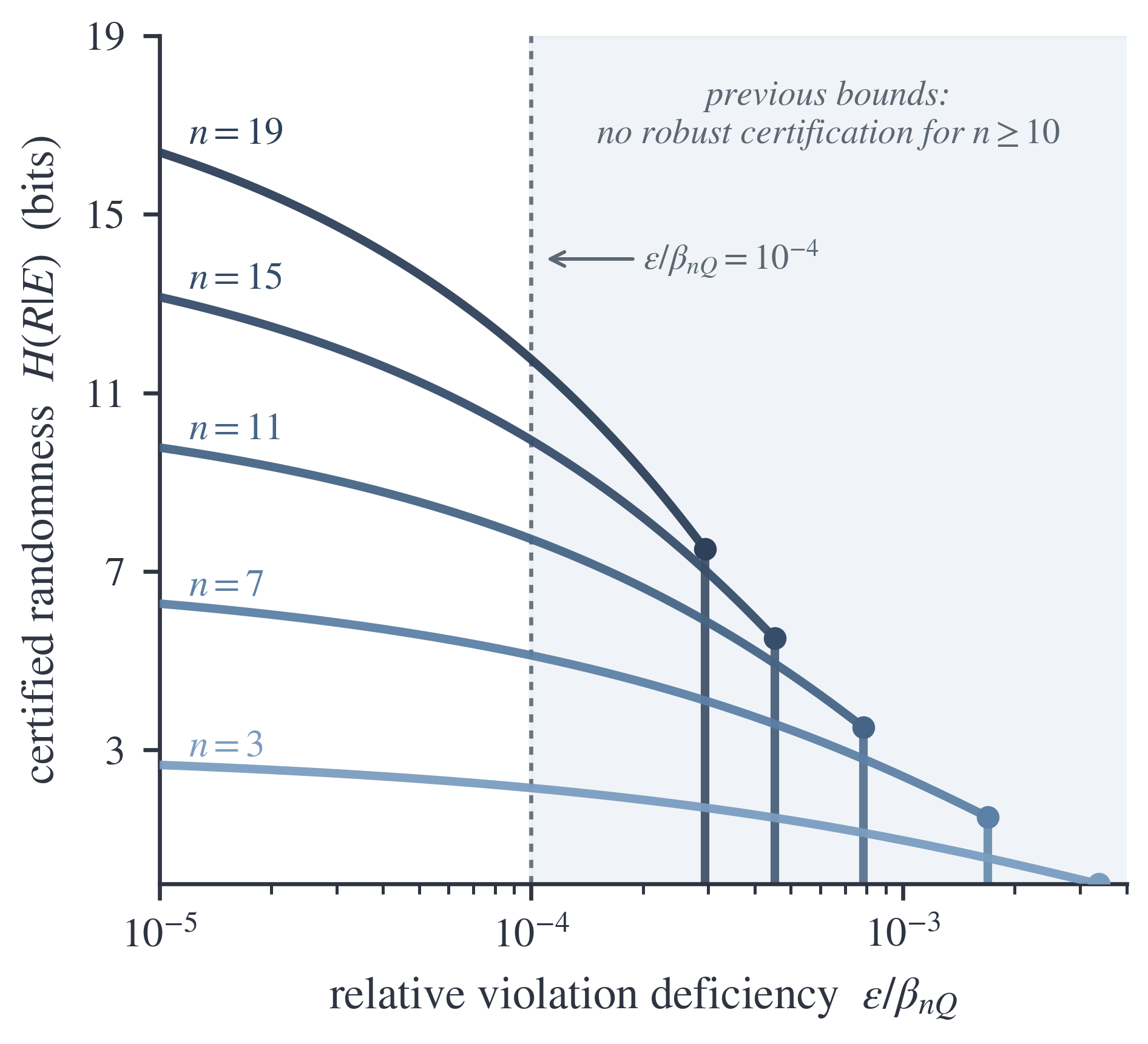}
  \caption{\label{fig:dirg}Certified randomness versus the relative violation deficiency $\varepsilon/\beta_{nQ}$ for $n=3,7,11,15,19$ (all $n\equiv3 \pmod 4$, for which $H_{\mathrm{ideal}}(n)=n$ on input $\mathbf{0}$): each curve approaches its asymptotic limit, the $H_{\mathrm{ideal}}(n)=n$ ticks on the vertical axis, and terminates at its own nontriviality threshold, marked by the solid vertical segments. At the reference noise $\varepsilon/\beta_{nQ}=10^{-4}$ (dashed line), the certified randomness ranges from $\approx 2$ bits ($n=3$) to $\approx 12$ bits ($n=19$).}
\end{figure}

\paragraph{Proof Sketch.}
The state-imperfection contribution follows from the extractability bound
of Result~\ref{result:main} in combination with the Fuchs--van~de~Graaf
inequality~\cite{Fuchs_van_de_Graaf_1999} and Winter's tight conditional-entropy
continuity bound~\cite{Winter_2016,Alicki_Fannes_2004}.
The measurement-imperfection term is obtained by combining
commutator-based self-testing~\cite{Kaniewski_2017} with a diamond-norm
analysis of the $n$-party measurement channel.
A measurement-translation residual, arising from the non-commutation of the
physical measurement with the dephasing part of the extraction channel, is
bounded by an elementary one-variable inequality
(Lemma~\ref{lem:sm:dirg:translation} of the Supplemental Material).
See Supplemental Material~\cite{SM_key} for the full derivation.

A finite-sampling analysis determines the measurement budget needed to certify these bounds with prescribed statistical confidence.
Combining Hoeffding's concentration inequality with the linear bound of Result~\ref{result:main}, the required number of experimental rounds scales as $N \propto 2^{n}$ (up to logarithmic factors), reflecting the $2^{n}$ measurement settings of the MABK inequality.

\paragraph{Discussion.}

In this Letter, we have established a size-independent, linearly robust self-testing framework for $n$-qubit GHZ states.
By advancing beyond per-party analytical constructions \cite{Kaniewski_2016,Jha_Singh_Pan_2026} to an $n$-uniform operator inequality reduction, our work provides the first fully analytical bound that does not degrade as the quantum network scales.

Beyond the immediate analytical results, our approach resolves a computational bottleneck in multipartite nonlocality.
The explicit reduction of the optimality verification from an exponentially hard grid search ($m^n$) to a tractable polynomial space ($n^m$) demonstrates how analytical symmetries can bypass the prohibitive costs that currently paralyze standard numerical hierarchies for multipartite systems \cite{Navascues_Singh_Acin_2020,Kempe_Kobayashi_Matsumoto_Toner_Vidick_2011}.
This paradigm shift is essential for analyzing macroscopic quantum correlations where conventional numerical methods fail.

Physically, the analytical and size-independent nature of our bounds provides the missing theoretical primitive required for scalable device-independent protocols.
As demonstrated, the bound provides device-independent certification of GHZ states under white noise for an arbitrary number of parties, whereas previously available analytical and numerical methods would require data of polynomially or exponentially increasing accuracy as the network grows \cite{McKague_2011,McKague_2016,Singh_Sasmal_Pan_2025}.
Further, combined with observable self-testing \cite{Kaniewski_2017}, the analytical bound yields a fully device-independent lower bound on the conditional entropy in DIRG with MABK inequality.
We also anticipate possible use in a broader spectrum of multiparty tasks, ranging from secure cryptography (e.g. device-independent quantum secret sharing (DIQSS) \cite{Hillery_Buzek_Berthiaume_1999,Cleve_Gottesman_Lo_1999,Roy_Mukhopadhyay_2019,Moreno_Brito_Nery_Chaves_2020} and anonymous communication~\cite{Broadbent_Tapp_2007,Unnikrishnan_MacFarlane_Yi_Diamanti_Markham_Kerenidis_2019,Thalacker_Hahn_de_Jong_Pappa_Barz_2021}) to distributed quantum sensing~\cite{Eldredge_Foss-Feig_Gross_Rolston_Gorshkov_2018,Huang_Macchiavello_Maccone_2019,Ho_Webb_Brooks_Grasselli_Gauger_Fedrizzi_2026}, where certified GHZ states constitute a critical resource.

Looking forward, the strong numerical evidence up to $n=100$ invites a complete analytical proof for the optimal parameters for all $n \ge 6$.
To further enhance practical deployment, pruning the MABK inequality to reduce sample complexity while preserving robustness remains a crucial next step.
Finally, tightening the device-independent randomness bound derived from our robustness estimate in the DIRG application is likewise a meaningful direction for future investigation.

\begin{acknowledgments}
S.C and Q.Z. acknowledges funding from Quantum Science and Technology-National Science and Technology Major Project 2024ZD0301900, National Natural Science Foundation of China (NSFC) via Project No. 12347104 and No. 12305030, Hong Kong Research Grant Council (RGC) via No. 27300823, 17310926, N\_HKU718/23, and R6010-23.
X.Z. acknowledges the Chancellor's Research Fellowship program of the University of Technology Sydney.
F. S. acknowledges funding from the National Natural Science Foundation of China (Grant No. 62601907 and No. 12571493), and the 2026 Basic Start-up Fund of Sun Yat-sen University (Grant No. 67000-12266020).
\end{acknowledgments}

\bibliography{bibliography_aps}

\newpage
\clearpage
\onecolumngrid

\makeatletter
\setcounter{secnumdepth}{3}
\def\@seccntformat#1{\csname the#1\endcsname.\quad}
\@addtoreset{section}{part}
\makeatother
\renewcommand{\thesection}{\Alph{section}}
\renewcommand{\thesubsection}{\arabic{subsection}}
\makeatletter
\@addtoreset{subsection}{section}
\makeatother

\begin{center}
  \textbf{\large Supplemental Material}\\[.2cm]
\end{center}
\setcounter{equation}{0}
\renewcommand{\theHequation}{S\arabic{equation}}

\renewcommand{\theequation}{S\arabic{equation}}

\section{Proof of Result~\protect\ref{result:main}}
\label{sec:proof-main}

\subsection{Robust Self-Testing from Operator Inequality}

Suppose a Bell inequality $\mathcal{B}$ self-tests the pure state $\Psi$ when its quantum maximum $\beta_Q$ is attained.
Our aim is to bound the extractability from below by a linear function of the observed violation value $\beta$:
\begin{equation}
  \label{eq:self-test}
  \inf_{\rho \in \mathcal{R}_{\mathcal{B}(\beta)}} \Xi \left( \rho \to \Psi \right) \geq s\beta+ \mu,
\end{equation}
where $s, \mu \in \mathbb{R}$ are constants and $\mathcal{R}_{\mathcal{B}(\beta)}$ denotes the set of states compatible with violation $\beta$.

To translate this into an operator inequality, we write the Bell operator of $\mathcal{B}$ as
\begin{equation*}
  W_n \coloneqq \sum_{k=1}^n\sum_{\substack{x_k \in \Sigma_k \\ o_k \in \Omega_k}} 
  c\bigl(\{o_k, x_k\}_{k=1}^n\bigr) \,
  O\bigl(\{o_k, x_k\}_{k=1}^n\bigr),
\end{equation*}
where $c(\{o_k, x_k\}) \in \mathbb{R}$ are real coefficients, $x_k$ and $o_k$ are input and output of the $k$-th party, and
\begin{equation*}
  O\bigl( \{o_k, x_k\}_{k=1}^{n} \bigr) = \bigotimes_{k=1}^n O_{o_k}^{x_k}
\end{equation*}
with $\{O_{o_k}^{x_k}\}_{o_k}$ being the measurement conducted for input $x_k$.
For any state $\rho \in \mathcal{R}_{\mathcal{B}(\beta)}$, we have $\beta \leq \operatorname{tr}[\rho W]$.

Because $\Psi$ is pure, the fidelity simplifies to
\begin{align*}
  F\bigl(\Lambda(\rho), \Psi\bigr) 
    &= \operatorname{tr}\bigl[\Lambda(\rho) \Psi\bigr] \\
    &= \langle \Lambda(\rho), \Psi \rangle 
     = \langle \rho, \Lambda^\dagger(\Psi) \rangle \\
    &= \operatorname{tr}\bigl[\rho \Lambda^\dagger(\Psi)\bigr],
\end{align*}
where $\langle \cdot, \cdot\rangle$ is the Hilbert-Schmidt inner product and $\Lambda^\dagger$ is the dual map of $\Lambda$.

Define $K \coloneqq \Lambda^\dagger(\Psi)$.
If, for a suitable local channel $\Lambda$, we can establish the operator inequality (in the L\"owner order)
\begin{equation}
  \label{eq:operator-ineq}
  K \geq sW + \mu I,
\end{equation}
then for any $\rho \in \mathcal{R}_{\mathcal{B}}(\beta)$,
\begin{align*}
  \Xi (\rho \to \Psi) 
    &= \max_{\Lambda} F\bigl(\Lambda(\rho), \Psi\bigr) \geq \operatorname{tr}[\rho K] \\
    &\geq \operatorname{tr}\bigl[\rho (sW + \mu I)\bigr] \geq s\beta + \mu,
\end{align*}
which is precisely the desired bound Eq.~\eqref{eq:self-test}.
For the bound to be device-independent, Eq.~\eqref{eq:operator-ineq} must hold for arbitrary measurement operators.

Proving such an inequality is generally difficult because the measurement operators can act on arbitrarily high-dimensional spaces and the optimization over local channels is non-convex.
A tractable route was introduced in \cite{Kaniewski_2016} for the family of Werner-Wolf-\.Zukowski-Brukner inequality \cite{Werner_Wolf_2001,Zukowski_Brukner_2002}, which includes the $n$-qubit Mermin-Ardehali-Belinsk\u{i}-Klyshko (MABK) inequality \cite{Ardehali_1992,Belinskii_Klyshko_1993} that we focus on in this Letter.

\subsection{Multi-qubit MABK Inequality.}

$n$-qubit MABK inequality has two inputs and two outputs per party.
Denote the two observables of party $k$ by $\{O_{r_k}, O_{r_k \oplus 1}\}$, where $\oplus$ denotes addition modulo~2 and
\begin{equation*}
  r_k \coloneqq \frac{1}{2} \bigl( 1 - (-1)^{k-1} \bigr) \in \{0, 1\}
\end{equation*}
identifies the observable associated with input $x=0$.
The Bell operator $W_n$ for the $n$-party MABK inequality admits the recursive construction \cite{Fabritiis_Roditi_Sorella_2023}:
\begin{equation*}
  \begin{aligned}
    W_n = &\frac{1}{2} W_{n-1} \otimes \bigl( O_{r_n} + O_{r_n \oplus 1} \bigr) \\
          &- \frac{1}{2} \overline{W}_{n-1} \otimes \bigl( O_{r_n} - O_{r_n \oplus 1} \bigr),
  \end{aligned}
\end{equation*}
with $W_1 = 2O_{r_1}$ and $\overline{W}_{n}$ obtained from $W_n$ by replacing every $O_{r_k}$ with $O_{r_k \oplus 1}$.
The familiar three-qubit Mermin inequality corresponds to the case $n=3$.

By Naimark's dilation theorem \cite{Neumark1943}, we may assume without loss of generality that $O_0$ and $O_1$ are Hermitian projectors for all $k$.
Applying Jordan's lemma followed by a unitary rotation on each block (the first part of the extraction map \cite{Kaniewski_2016}), the observables are reduced to
\begin{equation*}
  O_{r_k} = \cos\theta_k \,\sigma_x + (-1)^{r_k} \sin\theta_k \,\sigma_z,
\end{equation*}
with angles $\theta_k \in [0, \pi/2]$.
This parametrization encompasses \emph{all possible choices} of observables \cite{Kaniewski_2016}.

The second part of the extraction map is a dephasing channel
\begin{equation*}
  [\Lambda(\theta)](\rho) = 
  \frac{1 + g(\theta)}{2}\,\rho + 
  \frac{1 - g(\theta)}{2}\,\Gamma(\theta)\rho\Gamma(\theta),
\end{equation*}
where $\theta$ is the angle,
\begin{equation*}
  \begin{aligned}
  g(\theta) \coloneqq& \bigl(1+\sqrt{2}\bigr)(\sin\theta + \cos\theta - 1),
  \\
  \Gamma(\theta) \coloneqq& \begin{cases}
    \sigma_x, & \theta \in [0,\pi/4], \\[2pt]
    \sigma_z, & \theta \in (\pi/4,\pi/2].
  \end{cases}
  \end{aligned}
\end{equation*}

The maximal quantum violation of the $n$-qubit MABK inequality is $\beta_{nQ} = 2^{(n+1)/2}$, attainable only when all angles satisfy $\theta_1 = \theta_2 = \cdots = \theta_n = \pi/4$.
The corresponding optimal state, denoted $|\Phi_n\rangle$, is locally equivalent to the $n$-qubit GHZ state
\begin{equation*}
  |\text{GHZ}_n\rangle \coloneqq \frac{1}{\sqrt{2}} \bigl( |0\rangle^{\otimes n} + |1\rangle^{\otimes n} \bigr).
\end{equation*}

\subsection{Multi-qubit Maximally Entangled States}

The optimal state $|\Phi_n\rangle$ of the $n$-qubit MABK inequality can be recursively constructed by
\begin{equation*}
  |\Phi_n\rangle = \cos\left( \frac{\pi}{8} \right) |\Psi^+_n\rangle + \sin \left( \frac{\pi}{8} \right) |\Phi^-_n\rangle,
\end{equation*}
where
\begin{equation*}
    \begin{aligned}
        &|\Psi_{n}^+\rangle = \frac{1}{\sqrt{2}} \left( |\Psi_{n-1}^+\rangle \otimes |1\rangle + |\Phi_{n-1}^-\rangle \otimes |0\rangle \right), \\
        &|\Phi_n^-\rangle = \frac{1}{\sqrt{2}} \left( |\Psi_{n-1}^+\rangle \otimes |0\rangle - |\Phi_{n-1}^-\rangle \otimes |1\rangle \right), \\
        &|\Psi_1^+\rangle := |0\rangle, \qquad |\Phi_1^-\rangle := |1\rangle.
    \end{aligned}
\end{equation*}
It is a pure state locally equivalent to $|\text{GHZ}_n\rangle$.

\begin{proposition}[Result~\protect\ref{result:main}, restated]
  \label{prop:main}
  Let $s_n = \frac{\sqrt{2} + 1}{2^\frac{n}{2}}$ and $\, \mu = - \left( 1 + \sqrt{2} \right)$, then for all integer $n \geq 3$, the following operator inequality
  \begin{equation}
    \label{eq:operator-ineq-main-sm}
    K \left( \{\theta_k\}_{k=1}^n \right) \geq s_n W_n \left( \{\theta_k\}_{k=1}^n \right) + \mu I
  \end{equation}
  holds for all $\{\theta_k\}_{k=1}^n \in [0, \pi/2]^n$.
  Here $K \left( \{\theta_k\}_{k=1}^n \right) \coloneqq \left[\Lambda \left( \{\theta_k\}_{k=1}^n \right)\right](|\Phi_n\rangle \langle\Phi_n|)$ is the dephased target state.
\end{proposition}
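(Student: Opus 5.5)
We follow the sketch in the main text: block-diagonalize the difference operator $D \coloneqq K - s_n W_n - \mu I$ by a commuting family of Pauli strings, then prove positivity on each two-dimensional block.

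\emph{Step 1: the stabilizer-type symmetry.} In the Jordan parametrization each $O_{r_k}$ lies in the real span of $\sigma_x,\sigma_z$. The dephasing channel only mixes $\rho$ with $\Gamma\rho\Gamma$, where $\Gamma\in\{\sigma_x,\sigma_z\}$. Hence both $W_n$ and $K$ are real linear combinations of tensor products of $I,\sigma_x,\sigma_z$ and $\sigma_x\sigma_z\propto\sigma_y$.

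Expanding the MABK recursion shows that $W_n$ is a combination of $n$-fold products with an even/odd-type parity constraint. Concretely, $W_n$ commutes with the $n-1$ independent Pauli strings that stabilize the ideal two-dimensional "GHZ-type" structure. For $|\Phi_n\rangle$ (locally a GHZ state in a rotated basis) these are the analogues of $Z_kZ_{k+1}$, written in the $x/z$ frame determined by $r_k$.

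I would verify by induction on the recursion that these strings commute with $W_n$, with $|\Phi_n\rangle\langle\Phi_n|$, and with each $\Gamma(\theta_k)\otimes\cdots$ conjugation up to sign. Conjugation by a single-site $\sigma_x$ or $\sigma_z$ maps the symmetry group to itself and permutes its joint eigenspaces. Consequently $K$ is a sum over subsets $S$ of dephased sites of $\prod_{k\in S}\frac{1-g_k}{2}\prod_{k\notin S}\frac{1+g_k}{2}$ times a projector onto some joint eigenspace. This gives a decomposition of $(\mathbb{C}^2)^{\otimes n}$ into $2^{n-1}$ two-dimensional blocks labelled by sign vectors $\mathbf{a}\in\{\pm1\}^{n-1}$.

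\emph{Step 2: the $2\times 2$ blocks.} On each block, $W_n$ restricts to $\beta_{\mathbf a}(\theta)\,\tau$ for a traceless Hermitian $2\times2$ matrix $\tau$ with eigenvalues $\pm1$. Using the recursion, the amplitude takes the closed form
\[
\beta_{\mathbf a}=2^{(n+1)/2}\cos\Bigl(\sum_k \epsilon_k(\theta_k-\tfrac{\pi}{4})\Bigr),
\]
with signs $\epsilon_k$ determined by $\mathbf a$, or a similar product form. $K$ restricts to $c_{\mathbf a}\,|\phi_{\mathbf a}\rangle\langle\phi_{\mathbf a}|$, a weight built from products of $(1\pm g_k)/2$ times a rank-one projector, which is generally not aligned with $\tau$.

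Positivity of the block of $D$ then becomes two scalar conditions:
\begin{itemize}
\item the trace condition $c_{\mathbf a}-2\mu\ge 0$;
\item the determinant condition $(\tfrac{c_{\mathbf a}}{2}-\mu)^2\ge \|\tfrac{c_{\mathbf a}}{2}\hat n_{\mathbf a}-s_n\beta_{\mathbf a}\hat\tau\|^2$, written in Bloch-vector form.
\end{itemize}
Since $-\mu=1+\sqrt2$ and $s_n\beta_{\mathbf a}\le(1+\sqrt2)\sqrt2$, the trace condition is immediate. The real content is the determinant inequality, a smooth function of $(\theta_1,\dots,\theta_n)$.

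\emph{Step 3: the main obstacle.} The hardest step is proving the determinant inequality uniformly in $n$ and in all $\theta$. I would first reduce to the block with $\mathbf a$ trivial, where $K$ has its largest weight and $\beta_{\mathbf a}$ can reach $\beta_{nQ}$. On all other blocks, $\beta_{\mathbf a}\le 2^{n/2}$, so $s_n\beta_{\mathbf a}\le 1+\sqrt2=-\mu$, and positivity follows simply from $K\ge0$.

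For the dominant block I would use single-party bounds. The slack in the factor 2 compared with $\bar s_n$ should let me decouple the parties via $\cos(\sum\phi_k)\le 1-\tfrac{c}{n}\ldots$ type estimates. A cleaner route is a product bound: $1\pm g(\theta)$ relates to $\cos(\theta-\pi/4)$ by the elementary inequality
\[
\tfrac{1+g(\theta)}{2}\ge \sqrt2\cos(\theta-\tfrac{\pi}{4})-(\sqrt2-1)\cdot(\text{something}),
\]
which is exactly Kaniewski's single-qubit inequality. The determinant condition would then follow by induction on $n$: peel off one party using the recursion for $W_n$ and the tensor form of $K$, and check the resulting inequality in one new variable $\theta_n$. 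If the induction does not close directly, I would fall back on bounding the off-diagonal Bloch term crudely by $c_{\mathbf a}/2$. This is where the factor of 2 is lost relative to $\bar s_n$, and it should then reduce everything to
\[
1+\sqrt2-s_n\beta_{\mathbf a}\ge 0\quad\text{plus}\quad c_{\mathbf a}\ge s_n\beta_{\mathbf a}-(1+\sqrt2),
\]
the latter provable by the one-variable estimate applied party by party.
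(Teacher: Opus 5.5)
\textbf{There is a genuine gap, and it sits in the one block that matters: the dominant block.} This is the unique block containing $|\Phi_n\rangle$ ($h=0$ in the paper's notation), and it carries the entire difficulty; your proposal has no working argument for it. At $\theta_k=\pi/4$ one has $g=1$ and $K=|\Phi_n\rangle\langle\Phi_n|$. Also $s_n\beta_{nQ}=2+\sqrt2>1+\sqrt2=-\mu$, and $(K-s_nW_n-\mu I)|\Phi_n\rangle=0$. So the block inequality is exactly tight there.

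Your fallback bounds the $K$--$W$ cross term by the sum of norms and then requires $1+\sqrt2-s_n\beta_{\mathbf a}\ge0$. That condition is false near the optimum. Any valid argument must keep the alignment term $\mathrm{tr}[P_{\vec a}KW_n]$ exactly.

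Your main route (peel off one party, tensorize Kaniewski's single-qubit inequality) is not specified, and it has no evident inductive structure. The state $\Lambda(|\Phi_n\rangle\langle\Phi_n|)$ does not factor across parties. The MABK recursion also couples $W_{n-1}$ with $\overline W_{n-1}$, which an inequality for $(K_{n-1},W_{n-1})$ does not control.

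Your intuition about the factor-two slack is also misplaced. The paper's $h=0$ argument works for both $(s_n,\mu)$ and $(\bar s_n,\bar\mu)$. The slack is therefore not what rescues this block.

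\textbf{Your treatment of the other $2^{n-1}-1$ blocks is correct, and cleaner than the paper's Case~1.} The identity $s_n2^{n/2}=-\mu$ holds exactly. So $\|P_{\vec a}W_nP_{\vec a}\|\le 2^{n/2}$ gives $-\mu I-s_nW_n\ge0$ on such a block, and $K\ge0$ finishes. The paper reaches the same inequality $2\mu^2\ge s_n^2\,\mathrm{tr}[P_{\vec a}W_n^2]$ by discarding nonnegative terms of $\lambda_{\vec a}$. This step, not the dominant block, is where the factor of two over $\bar s_n$ is spent: with $(\bar s_n,\bar\mu)$ the norm argument fails on the $h=1,2$ blocks.

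Three corrections to these blocks:
\begin{itemize}
\item The block amplitude is not a cosine of a sum. With $\phi_t=\theta_t+\pi/4$, it is $\beta_{\vec a}^2=2^{n+1}[\prod_{t\in\mathcal T_{\vec a}}\sin^2\phi_t\prod_{j\in\mathcal P_{\vec a}}\cos^2\phi_j+\prod_{t\in\mathcal T_{\vec a}}\cos^2\phi_t\prod_{j\in\mathcal P_{\vec a}}\sin^2\phi_j]$. This is at most $2^n$, by multilinearity, exactly when $\mathcal T_{\vec a}\neq\emptyset$.
\item The commuting family must consist of even $\sigma_y$-strings, e.g.\ $\sigma_y^{(1)}\sigma_y^{(k+1)}$, not strings in an $x/z$ frame. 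Only these commute with $W_n$ for all angles.
\item $K$ is rank two on each block, not rank one. The dephasing patterns $S$ and $S^c$ land in the same block with different vectors. This is the term $4^{-n}\prod_k(1-g^2(\theta_k))$ that appears in every $\lambda_{\vec a}$.
\end{itemize}

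\textbf{What is missing for the dominant block is the paper's computation.}
\begin{itemize}
\item Reduce to $[0,\pi/4]^n$ using the stabilizer reflections $S_k$, so that $\Gamma=\sigma_x$ throughout and $K$'s Bloch coefficients are explicit.
\item Compute $\mathrm{tr}[P_{\vec a}K^2]$, $\mathrm{tr}[P_{\vec a}KW_n]$ and $\mathrm{tr}[P_{\vec a}W_n^2]$ in closed form.
\item Substitute $v_t=1-\sin(\theta_t+\pi/4)\in[0,\kappa]$.
\item For $h=0$, drop the nonnegative square-root term and use $v(2-v)\le 2v$. This gives a symmetric polynomial that is concave in each variable separately.
\item Its minimum over the box is then attained at the vertices $\{0,\kappa\}^n$. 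Check these directly: the value is monotone in the number of $\kappa$'s, and the all-$\kappa$ vertex is handled separately.
\end{itemize}
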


\begin{remark}
    \label{remark:s:1} Consider the following local unitary operator:
    \begin{equation*}
        \begin{aligned}
            &U \coloneqq U_1 \otimes U_2 \otimes \cdots \otimes U_n \\
            &U_1 \coloneqq \sigma_z R_z(-\pi/2) R_x(\pi/2) R_y(-\pi/4), \quad U_k \coloneqq R_x \left( \omega_k \right), \qquad \omega_k = (-1)^k \frac{\pi}{2}.
        \end{aligned}
    \end{equation*}
    One can verify that
    \begin{equation*}
        U|\Phi_n\rangle = \frac{1}{\sqrt{2}} \left[ |0\rangle^{\otimes n} + |1\rangle^{\otimes n} \right],
    \end{equation*}
    which is exactly the $n$-qubit GHZ state.
\end{remark}

  \subsection{Generalized Bloch Representation of Operator Inequality}

First, we reduce the domain of angles of observables from $[0, \pi/2]$ to $[0, \pi/4]$.

\begin{observation}[Symmetry of angle of qubit observables]
  \label{obv:angle-symmetry}
  On $n$-qubit quantum system, there exists $n$ unitary operators $\mathcal{S}_n = \{S_n\}_{k=1}^n$ satisfies
  \begin{equation*}
    S_k T_n \left( \{\theta_k\}_{k=1}^n \right) S_k = T_n \left( \left\{ \theta_l \right\}_{l=1}^{k-1} \cup \left\{ \frac{\pi}{2} - \theta_k \right\} \cup \{ \theta_l \}_{l=k+1}^{n} \right), \quad\quad \forall k \in [n].
  \end{equation*}
  Specifically, $\mathcal{S}_n$ can be recursively constructed by
  \begin{equation}
    \label{eq:stabilizers}
    \begin{aligned}
      &\mathcal{S}_n = \left\{ S \otimes \sigma_x \middle| S \in \mathcal{S}_{n-1} \right\} \cup \left\{ \sigma_x^{\otimes n-1} \otimes \frac{1}{\sqrt{2}}  \left( \sigma_x + (-1)^n \sigma_z \right) \right\}, \\
      &\mathcal{S}_1 = \{H\}.
    \end{aligned}
  \end{equation}
\end{observation}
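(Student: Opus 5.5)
The plan is to take $T_n(\{\theta_k\}) \coloneqq K(\{\theta_k\}) - s_n W_n(\{\theta_k\}) - \mu I$ and show separately that conjugation by each $S_k$ implements the reflection $\theta_k \mapsto \pi/2-\theta_k$ on $W_n$ and on $K$. The identity term is invariant trivially. Each $S_k$ is a tensor product of single-qubit factors from $\{\sigma_x, H_\pm\}$, with $H_\pm \coloneqq (\sigma_x \pm \sigma_z)/\sqrt{2}$. These are Hermitian unitaries, so $S_k T_n S_k$ is a unitary conjugation.

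\emph{Single-qubit action.} From $O_{r}(\theta) = \cos\theta\,\sigma_x + (-1)^{r}\sin\theta\,\sigma_z$ we get:
\begin{itemize}
\item $H_+$ swaps $\sigma_x \leftrightarrow \sigma_z$, so $H_+ O_0(\theta) H_+ = O_0(\pi/2-\theta)$ and $H_+ O_1(\theta) H_+ = -O_1(\pi/2-\theta)$.
\item $H_-$ does the same with the roles of $O_0$ and $O_1$ exchanged. Hence the factor $(\sigma_x + (-1)^n\sigma_z)/\sqrt{2}$ on party $n$ fixes the input-$0$ observable $O_{r_n}$ after reflecting the angle, and negates the input-$1$ observable $O_{r_n\oplus 1}$.
\item $\sigma_x$ leaves $\theta$ unchanged and exchanges $O_0 \leftrightarrow O_1$, i.e. it swaps the two inputs of that party.
\end{itemize}
So conjugating $W_n$ by $\sigma_x^{\otimes n-1}\otimes H_{(-1)^n}$ yields the MABK expression, at the reflected angles, with inputs swapped on parties $1,\dots,n-1$ and a sign flip on party $n$'s second observable.

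\emph{MABK symmetry.} Using the recursion $W_n = \tfrac12 W_{n-1}\otimes(O_{r_n}+O_{r_n\oplus1}) - \tfrac12 \overline W_{n-1}\otimes(O_{r_n}-O_{r_n\oplus1})$, note that swapping all inputs of parties $1,\dots,n-1$ sends $W_{n-1}\leftrightarrow \overline W_{n-1}$. Negating $O_{r_n\oplus1}$ exchanges $O_{r_n}+O_{r_n\oplus1}$ with $O_{r_n}-O_{r_n\oplus1}$. Combining the two maps gives $\tfrac12 \overline W_{n-1}\otimes(O_{r_n}-O_{r_n\oplus1}) - \tfrac12 W_{n-1}\otimes(O_{r_n}+O_{r_n\oplus1}) = -W_n$.

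Hence a global sign can appear. I expect to absorb it via the parity convention (the $(-1)^n$ choice of $H_\pm$), or else to show that $\overline W_n = \pm W_n$ under a full input swap. This sign bookkeeping is checked by induction on $n$, with base case $\mathcal S_1=\{H\}$ and $W_1 = 2O_{r_1}$.

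For $k<n$, $S_k = S'\otimes\sigma_x$ with $S'\in\mathcal S_{n-1}$. The induction hypothesis gives the behaviour of $W_{n-1}$ and $\overline W_{n-1}$ under $S'$. The extra $\sigma_x$ on party $n$ swaps $O_{r_n}\leftrightarrow O_{r_n\oplus1}$, which is exactly what the recursion needs to recombine to $W_n$ at the reflected angle.

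\emph{The $K$ part.} This has two ingredients.
\begin{enumerate}
\item Each $S_k$ stabilizes $|\Phi_n\rangle$ up to a phase. I would prove this by induction from the recursion for $|\Psi_n^+\rangle$ and $|\Phi_n^-\rangle$: $\sigma_x$ on the last qubit together with the inductive action on the first $n-1$ qubits permutes $\{|\Psi_n^+\rangle,|\Phi_n^-\rangle\}$ with signs compatible with the $\cos(\pi/8),\sin(\pi/8)$ weights. For the last element, $H_\pm$ rotates the pair by the $\pi/8$ angle. Alternatively, one can transport everything to $|\mathrm{GHZ}_n\rangle$ via Remark~\ref{remark:s:1}.
\item The dephasing channel is covariant. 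We have $g(\pi/2-\theta)=g(\theta)$, and $H_\pm \sigma_x H_\pm = \pm\sigma_z$, so $H_\pm\Gamma(\theta)\,\cdot\,\Gamma(\theta)H_\pm = \Gamma(\pi/2-\theta)\,\cdot\,\Gamma(\pi/2-\theta)$. The sign cancels, and at $\theta=\pi/4$ we have $g=1$, so the boundary convention is irrelevant. Moreover $\sigma_x$ commutes or anticommutes with $\Gamma\in\{\sigma_x,\sigma_z\}$, so the dephasing on the other parties is unchanged.
\end{enumerate}
Together these give $S_k K(\{\theta\}) S_k = K(\{\theta\}$ with $\theta_k$ reflected$)$.

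The main obstacle is the sign consistency in the $W_n$ part. Each $S_k$ must return $+W_n$, not $-W_n$, while simultaneously stabilizing $|\Phi_n\rangle$ with the correct phase. I would first verify both properties for $n=2,3$ by direct computation to fix the parity conventions, then run the joint induction on $n$.
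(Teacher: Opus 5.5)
Your overall route is the same as the paper's. For $W_n$ you induct along the MABK recursion, carrying both $W$ and $\overline W$. For $K$ you use covariance of the dephasing channel together with $S_k|\Phi_n\rangle=|\Phi_n\rangle$.

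The gap is in the $W_n$ step for the new element $S_n$. It comes from a bookkeeping error, not a real obstruction. Your claim that $(\sigma_x+(-1)^n\sigma_z)/\sqrt2$ fixes the input-$0$ observable $O_{r_n}$ and negates $O_{r_n\oplus1}$ is backwards:
\begin{itemize}
\item For odd $n$, $r_n=0$, so the input-$0$ observable is $O_0$, and the factor is $H_-$.
\item For even $n$, $r_n=1$, so the input-$0$ observable is $O_1$, and the factor is $H_+$.
\end{itemize}
By your own correct single-qubit rules, $H_-$ negates $O_0$ and $H_+$ negates $O_1$. So in both cases the factor negates $O_{r_n}$ and fixes $O_{r_n\oplus1}$, at angle $\pi/2-\theta_n$. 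Hence $O_{r_n}\pm O_{r_n\oplus1}\mapsto-(O_{r_n}\mp O_{r_n\oplus1})$. Combined with $W_{n-1}\leftrightarrow\overline W_{n-1}$, the two minus signs cancel:
\[
W_n\mapsto -\tfrac12\,\overline W_{n-1}\otimes\bigl(O_{r_n}-O_{r_n\oplus1}\bigr)+\tfrac12\,W_{n-1}\otimes\bigl(O_{r_n}+O_{r_n\oplus1}\bigr)=W_n(\theta_1,\dots,\theta_{n-1},\pi/2-\theta_n),
\]
with the party-$n$ observables now at angle $\pi/2-\theta_n$. Equivalently, write $W_n=W_{n-1}\otimes\cos\theta_n\,\sigma_x+(-1)^n\,\overline W_{n-1}\otimes\sin\theta_n\,\sigma_z$. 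The factor then sends $\sigma_x\mapsto(-1)^n\sigma_z$ and $\sigma_z\mapsto(-1)^n\sigma_x$. Concretely, for $n=2$ the operator $S_2=\sigma_x\otimes H$ maps $2\bigl[\cos\theta_2\,O^{(1)}_0\otimes\sigma_x+\sin\theta_2\,O^{(1)}_1\otimes\sigma_z\bigr]$ to $2\bigl[\sin\theta_2\,O^{(1)}_0\otimes\sigma_x+\cos\theta_2\,O^{(1)}_1\otimes\sigma_z\bigr]$. There is no sign change.

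As written, your proposal stops at a claimed $-W_n$, and neither remedy you list would work. The parity in $S_n$ is fixed by the statement, and the opposite choice $H_{-(-1)^n}$ is exactly the one that produces $-W_n$. Also $\overline W_n\neq\pm W_n$, already at $n=1$. A genuine minus sign would falsify the statement, since $K$ transforms covariantly. A quick check rules it out: at $\theta_n=\pi/4$ the reflection is trivial, and $S_n$ fixes $|\Phi_n\rangle$, whose $W_n$-expectation at all angles $\pi/4$ is $\beta_{nQ}\neq0$.

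To close the induction, proceed as follows:
\begin{itemize}
\item State the hypothesis as the pair $SW_{n-1}S=W_{n-1}'$ and $S\overline W_{n-1}S=-\overline W_{n-1}'$, where primes denote the reflected angle.
\item Check it at $n=1$ with $H$.
\item In the $k<n$ step, use the minus sign on $\overline W$ to cancel $\sigma_x\sigma_z\sigma_x=-\sigma_z$.
\item Also verify $S_n\overline W_nS_n=-\overline W_n'$, so that the pair propagates.
\end{itemize}

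A smaller point on the $K$ part. $S_k$ does not permute $|\Psi_n^+\rangle,|\Phi_n^-\rangle$ up to signs; it sends $|\Psi_n^+\rangle\mapsto(|\Psi_n^+\rangle+|\Phi_n^-\rangle)/\sqrt2$. Moreover, $|\Phi_n\rangle$ decomposes over the last qubit into vectors of $\mathrm{span}\{|\Psi^+_{n-1}\rangle,|\Phi^-_{n-1}\rangle\}$ that are rotated away from $|\Phi_{n-1}\rangle$. So the stabilizer induction needs a stronger hypothesis: each $S'\in\mathcal S_{n-1}$ acts on that span as the reflection fixing $|\Phi_{n-1}\rangle$. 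You also need the action of $\sigma_x^{\otimes(n-1)}$ on the span, as in Statements 2 and 3 of the paper's proof of Lemma~\ref{lemma:angle-symmetry}. Alternatively, use your transport to $|\mathrm{GHZ}_n\rangle$.
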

To prove this Observation, we first identify $\mathcal{S}_n$ as a set of stabilizers of $|\Phi_n\rangle$.

\begin{lemma}
  \label{lemma:angle-symmetry}
  Let $\mathcal{S}_n$ be the set of operators defined in Eq.~\eqref{eq:stabilizers}. Then we have
  \begin{equation*}
    \forall S_k \in \mathcal{S}_n, \qquad S_k |\Phi_n\rangle = |\Phi_n\rangle.
  \end{equation*}
\end{lemma}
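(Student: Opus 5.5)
The plan is to prove Lemma~\ref{lemma:angle-symmetry} by induction on $n$. The induction hypothesis will be strengthened so that it describes the action of the operators in $\mathcal{S}_n$ on the two-dimensional real subspace
\begin{equation*}
  V_n \coloneqq \operatorname{span}\{|\Psi^+_n\rangle, |\Phi^-_n\rangle\}.
\end{equation*}
Knowing only that $S|\Phi_{n-1}\rangle=|\Phi_{n-1}\rangle$ is not enough to conclude $(S\otimes\sigma_x)|\Phi_n\rangle=|\Phi_n\rangle$: one also needs $S|\Psi^+_{n-1}\rangle$ and $S|\Phi^-_{n-1}\rangle$. Every element of $\mathcal{S}_n$ is a tensor product of the Hermitian involutions $H$, $\sigma_x$ and $(\sigma_x\pm\sigma_z)/\sqrt2$, so it is itself a Hermitian involution with real matrix entries in the computational basis. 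The strengthened claim is therefore: \emph{every $S\in\mathcal{S}_n$ leaves $V_n$ invariant and acts on it as the reflection $R_n$ whose $+1$ eigenvector is $|\Phi_n\rangle$ and whose $-1$ eigenvector is $|\Phi_n^\perp\rangle \coloneqq -\sin(\pi/8)|\Psi^+_n\rangle+\cos(\pi/8)|\Phi^-_n\rangle$.} The lemma follows immediately from this claim.

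\textbf{Step 1: an auxiliary parity lemma.} First I determine how $X_n\coloneqq\sigma_x^{\otimes n}$ acts on $V_n$. For $n=1$ it swaps $|\Psi^+_1\rangle=|0\rangle$ and $|\Phi^-_1\rangle=|1\rangle$. The recursion gives two implications:
\begin{itemize}
  \item If $X_{n-1}$ swaps $|\Psi^+_{n-1}\rangle$ and $|\Phi^-_{n-1}\rangle$, then $X_n|\Psi^+_n\rangle=|\Psi^+_n\rangle$ and $X_n|\Phi^-_n\rangle=-|\Phi^-_n\rangle$.
  \item If $X_{n-1}$ fixes $|\Psi^+_{n-1}\rangle$ and negates $|\Phi^-_{n-1}\rangle$, then $X_n$ swaps $|\Psi^+_n\rangle$ and $|\Phi^-_n\rangle$.
\end{itemize}
Hence $X_n|_{V_n}$ equals $\sigma_x$ for odd $n$ and $\sigma_z$ for even $n$, written in the basis $(|\Psi^+_n\rangle,|\Phi^-_n\rangle)$. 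This parity dependence is exactly what the sign $(-1)^n$ in Eq.~\eqref{eq:stabilizers} should compensate.

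\textbf{Step 2: the new generator.} Next I verify the claim for $T_n\coloneqq X_{n-1}\otimes\frac{1}{\sqrt2}(\sigma_x+(-1)^n\sigma_z)$. I write $|\Phi_n\rangle=|\Psi^+_{n-1}\rangle\otimes|a\rangle+|\Phi^-_{n-1}\rangle\otimes|b\rangle$, where $|a\rangle,|b\rangle$ are explicit real single-qubit vectors with entries built from $\cos(\pi/8)$ and $\sin(\pi/8)$. Step~1 gives $X_{n-1}$ on $V_{n-1}$. The action of $T_n$ on $V_{n-1}\otimes\mathbb{C}^2$ is then an explicit real $4\times4$ matrix, and I check directly, separately for even and odd $n$, that
\begin{itemize}
  \item it maps $V_n$ into itself,
  \item it fixes $|\Phi_n\rangle$,
  \item it negates $|\Phi_n^\perp\rangle$.
\end{itemize}
The base case $n=1$ (and $n=2$ as a sanity check) is a direct computation: $H|\Phi_1\rangle=|\Phi_1\rangle$, since $|\Phi_1\rangle=\cos(\pi/8)|0\rangle+\sin(\pi/8)|1\rangle$ is the $+1$ eigenvector of $H$.

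\textbf{Step 3: the inherited generators.} Take $S\otimes\sigma_x$ with $S\in\mathcal{S}_{n-1}$. By the induction hypothesis, $S|_{V_{n-1}}=R_{n-1}$ is the same reflection for every such $S$. So $(S\otimes\sigma_x)|_{V_{n-1}\otimes\mathbb{C}^2}=R_{n-1}\otimes\sigma_x$ does not depend on which $S$ is chosen, and a single $4\times4$ computation, again split by parity, shows that it preserves $V_n$ and acts there as $R_n$.

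The main obstacle is the bookkeeping in Steps~2--3. One has to track the signs coming from the parity-dependent form of $X_{n-1}$ and from the minus sign in the definition of $|\Phi^-_n\rangle$. One also has to make sure the restricted operator is the nontrivial reflection rather than the identity, which requires checking the $-1$ eigenvector $|\Phi_n^\perp\rangle$ explicitly. I would organize the computation by expressing everything in the $2\times2$ basis $(|\Psi^+\rangle,|\Phi^-\rangle)$ tensored with $(|0\rangle,|1\rangle)$. I would then use the half-angle identities $2\cos(\pi/8)\sin(\pi/8)=\cos^2(\pi/8)-\sin^2(\pi/8)=1/\sqrt2$ to collapse the entries.
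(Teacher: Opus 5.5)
Your proposal is correct and follows essentially the same route as the paper: your reflection claim on $\operatorname{span}\{|\Psi_n^+\rangle,|\Phi_n^-\rangle\}$ is the paper's Statement~2 (with $|\Xi_n^+\rangle=|\Phi_n\rangle$, $|\Xi_n^-\rangle=|\Phi_n^\perp\rangle$, and the $|\Upsilon_n^\pm\rangle$ swap implied by the reflection), your parity lemma for $\sigma_x^{\otimes n}$ is its Statement~3, and you use the same split into the new generator and the inherited generators $S\otimes\sigma_x$. The only difference is cosmetic: you start the induction at $n=1$ rather than $n=2$. This is valid, since $\cos(\pi/8)|0\rangle+\sin(\pi/8)|1\rangle$ is the $+1$ eigenvector of $H$ and the $n=1\to2$ step goes through with your parity lemma.
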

\begin{proof}
  Prove by induction on $n$.
  Start from the 2-qubit case, we have 
\begin{enumerate}
    \item Operator $H\otimes \sigma_x$ and $\sigma_x \otimes H$ stabilize $|\Phi_2\rangle$, that is,
    \begin{equation*}
        \begin{aligned}        
            &(H \otimes \sigma_x) |\Phi_2 \rangle 
            = |\Phi_2\rangle, \quad
            &(\sigma_x \otimes H) |\Phi_2 \rangle 
            = |\Phi_2\rangle. 
        \end{aligned}
    \end{equation*}
    Note that two states
    \begin{equation*}
        \sin \left( \frac{\pi}{8} \right) |0\rangle + \cos \left( \frac{\pi}{8} \right) |1\rangle, \quad\quad \cos \left( \frac{\pi}{8} \right) |0\rangle - \sin \left( \frac{\pi}{8} \right) |1\rangle
    \end{equation*}
    are eigenvectors of unitary operator $V = (\sigma_x - \sigma_z) / \sqrt{2}$.
    \item Define two pairs of vectors by:
    \begin{equation*}
      \begin{aligned}
        &|\Upsilon_2^+\rangle \coloneqq \sin \left( \frac{\pi}{8} \right) |\Psi_2^+\rangle + \cos \left( \frac{\pi}{8} \right) |\Phi_2^-\rangle, &\quad&
        |\Xi_2^+\rangle \coloneqq \cos \left( \frac{\pi}{8} \right) |\Psi_2^+\rangle + \sin \left( \frac{\pi}{8} \right) |\Phi_2^-\rangle \\
        &|\Upsilon_2^-\rangle \coloneqq \cos \left( \frac{\pi}{8} \right) |\Psi_2^+\rangle - \sin \left( \frac{\pi}{8} \right) |\Phi_2^-\rangle, &\quad&
        |\Xi_2^-\rangle \coloneqq - \sin \left( \frac{\pi}{8} \right) |\Psi_2^+\rangle + \cos \left( \frac{\pi}{8} \right) |\Phi_2^-\rangle
      \end{aligned}
    \end{equation*}
    We have the following equalities:
    \begin{equation*}
        \begin{aligned}
            &(H \otimes \sigma_x) |\Upsilon_2^+\rangle 
            = |\Upsilon_2^-\rangle, 
            \quad
            (\sigma_x \otimes H) |\Upsilon_2^+\rangle 
            = |\Upsilon_2^-\rangle, 
            \\
            &(H \otimes \sigma_x) |\Xi_2^+\rangle 
            = |\Xi_2^+\rangle, 
            \quad
            (\sigma_x \otimes H) |\Xi_2^+\rangle 
            = |\Xi_2^+\rangle, 
            \\
            &(H \otimes \sigma_x) |\Xi_2^-\rangle 
            = - |\Xi_2^-\rangle, 
            \quad
            (\sigma_x \otimes H) |\Xi_2^-\rangle 
            = - |\Xi_2^-\rangle. 
        \end{aligned}
    \end{equation*}
    \item The action of $\sigma_x \otimes \sigma_x$ on $|\Psi_2^+\rangle$ and $|\Phi^-_2\rangle$: 
    \begin{equation*}
        \begin{aligned}
            (\sigma_x \otimes \sigma_x) |\Psi_2^+\rangle 
            = |\Psi_2^+\rangle, 
            \quad
            (\sigma_x \otimes \sigma_x) |\Phi_2^-\rangle 
            = - |\Phi_2^-\rangle. 
        \end{aligned}
    \end{equation*}
\end{enumerate}

For simplicity, define the following vectors recursively:
\begin{equation*}
  \begin{aligned}
        &|\Upsilon_n^+\rangle \coloneqq \sin \left( \frac{\pi}{8} \right) |\Psi_n^+\rangle + \cos \left( \frac{\pi}{8} \right) |\Phi_n^-\rangle, &\quad&
        |\Xi_n^+\rangle \coloneqq \cos \left( \frac{\pi}{8} \right) |\Psi_n^+\rangle + \sin \left( \frac{\pi}{8} \right) |\Phi_n^-\rangle \\
        &|\Upsilon_n^-\rangle \coloneqq \cos \left( \frac{\pi}{8} \right) |\Psi_n^+\rangle - \sin \left( \frac{\pi}{8} \right) |\Phi_n^-\rangle, &\quad&
        |\Xi_n^-\rangle \coloneqq - \sin \left( \frac{\pi}{8} \right) |\Psi_n^+\rangle + \cos \left( \frac{\pi}{8} \right) |\Phi_n^-\rangle
  \end{aligned}
\end{equation*}
Then we prove the following statements by induction on $n$: 

    \textbf{Statement 1}. Operator $S_k$ stabilizes $|\Phi_{n}\rangle$ for all $S_k \in \mathcal{S}_n$. 

    \textbf{Statement 2}. For arbitrary $S_k \in \mathcal{S}_{n}$, we have $S_k |\Upsilon_n^+\rangle = |\Upsilon_n^-\rangle$,
    $S_k |\Xi_n^+\rangle = |\Xi_n^+\rangle$, and $S_k|\Xi_n^-\rangle = - |\Xi_n^-\rangle$.

    \textbf{Statement 3}. The following equalities holds:
    \begin{equation*}
        \begin{aligned}
            \sigma_x^{\otimes n} |\Psi_n^+\rangle =& \frac{1}{2} \left( |\Psi_n^+\rangle + |\Phi_n^-\rangle \right) + \frac{(-1)^{n}}{2} \left( |\Psi_n^+\rangle - |\Phi_n^-\rangle \right), \\
            \sigma_x^{\otimes n} |\Phi_n^-\rangle =& \frac{1}{2} \left( |\Psi_n^+\rangle - |\Phi_n^-\rangle \right) - \frac{(-1)^{n}}{2} \left( |\Psi_n^+\rangle + |\Phi_n^-\rangle \right). 
        \end{aligned}
    \end{equation*}

First, we prove \textbf{Statement 3}. Suppose the statement holds for $(n-1)$-qubit system, then by definition of $|\Psi_n^+\rangle$ and $|\Phi_n^-\rangle$, we have
\begin{equation*}
    \begin{aligned}
        \sigma_x^{\otimes n} |\Psi_n^+\rangle =& \frac{1}{\sqrt{2}} \left( \sigma_x^{\otimes (n-1)} |\Psi^+_{n-1}\rangle \otimes \sigma_x|1\rangle + \sigma_x^{\otimes (n-1)} |\Phi_{n-1}^-\rangle \otimes \sigma_x|0\rangle \right) \\
        =& \left\{ \begin{matrix*}[l]
            \frac{1}{\sqrt{2}} \left( |\Phi_{n-1}^-\rangle \otimes |0\rangle + |\Psi_{n-1}^+\rangle \otimes |1\rangle \right) = |\Psi_n^+\rangle &\quad (n \mathrm{\ even})\\
            \frac{1}{\sqrt{2}} \left( |\Psi_{n-1}^+\rangle \otimes |0\rangle - |\Phi_{n-1}^-\rangle \otimes |1\rangle \right) = |\Phi_n^-\rangle &\quad (n \mathrm{\ odd})
        \end{matrix*} \right. \\
        =& \frac{1}{2} \left( |\Psi_n^+\rangle + |\Phi_n^-\rangle \right) + \frac{(-1)^{n}}{2} \left( |\Psi_n^+\rangle - |\Phi_n^-\rangle \right), \\
        \sigma_x^{\otimes n} |\Phi_n^-\rangle =& \frac{1}{\sqrt{2}} \left( \sigma_x^{\otimes (n-1)} |\Psi^+_{n-1}\rangle \otimes \sigma_x|0\rangle - \sigma_x^{\otimes (n-1)} |\Phi_{n-1}^-\rangle \otimes \sigma_x|1\rangle \right) \\
        =& \left\{ \begin{matrix*}[c]
            \frac{1}{\sqrt{2}} \left( |\Phi_{n-1}^-\rangle \otimes |1\rangle - |\Psi_{n-1}^+\rangle \otimes |0\rangle \right) = - |\Phi_{n}^-\rangle &\quad (n \mathrm{\ even}) \\
            \frac{1}{\sqrt{2}} \left( |\Psi_{n-1}^+\rangle \otimes |1\rangle + |\Phi_{n-1}^-\rangle \otimes |0\rangle \right) = |\Psi_n^+\rangle &\quad (n \mathrm{\ odd})
        \end{matrix*} \right. \\
        =& \frac{1}{2} \left( |\Psi_n^+\rangle - |\Phi_n^-\rangle \right) - \frac{(-1)^{n}}{2} \left( |\Psi_n^+\rangle + |\Phi_n^-\rangle \right). 
    \end{aligned}
\end{equation*} 

Next, we prove \textbf{Statement 2}. Suppose the equalities hold for $(n-1)$-qubit system, then for some $S_k \in \mathcal{S}_{n-1}$, we have
\begin{equation*}
    \begin{aligned}
        &(S_k \otimes \sigma_x) |\Upsilon_{n}^+\rangle = (S_k \otimes \sigma_x) \left( \sin \left( \frac{\pi}{8} \right) |\Psi_n^+\rangle + \cos \left( \frac{\pi}{8} \right) |\Phi_n^-\rangle \right) \\
        =& \frac{1}{\sqrt{2}} \left( S_k |\Xi_{n-1}^+\rangle \otimes \sigma_x|0\rangle - S_k |\Xi_{n-1}^-\rangle \otimes \sigma_x|1\rangle \right) \\
        =& \frac{1}{\sqrt{2}} \left( \cos \left( \frac{\pi}{8} \right) \left( |\Psi_{n-1}^+\rangle \otimes |1\rangle + |\Phi_{n-1}^-\rangle \otimes |0\rangle \right) + \sin \left( \frac{\pi}{8} \right) \left( |\Phi_{n-1}^-\rangle \otimes |1\rangle - |\Psi_{n-1}^+\rangle \otimes |0\rangle \right) \right) 
        = |\Upsilon_{n}^-\rangle. 
    \end{aligned}
\end{equation*}
Regarding $|\Xi_n^+\rangle$ and $|\Xi_n^-\rangle$, we have 
\begin{equation*}
    \begin{aligned}
        &(S_k \otimes \sigma_x) |\Xi_n^+\rangle = (S_k \otimes \sigma_x) \left( \cos \left( \frac{\pi}{8} \right) |\Psi_n^+\rangle + \sin \left( \frac{\pi}{8} \right) |\Phi_n^-\rangle \right) \\
        =& \frac{1}{\sqrt{2}} \left( S_k |\Upsilon_{n-1}^+\rangle \otimes \sigma_x|0\rangle + S_k |\Upsilon_{n-1}^-\rangle \otimes \sigma_x|1\rangle \right) \\
        =& \frac{1}{\sqrt{2}} \left( \cos \left( \frac{\pi}{8} \right) \left( |\Psi_{n-1}^+\rangle \otimes |1\rangle + |\Phi_{n-1}^-\rangle \otimes |0\rangle \right) + \sin \left( \frac{\pi}{8} \right) \left( |\Psi_{n-1}^+\rangle \otimes |0\rangle - |\Phi_{n-1}^-\rangle \otimes |1\rangle \right) \right) 
        = |\Xi_n^+\rangle, 
    \end{aligned}
\end{equation*}
\begin{equation*}
    \begin{aligned}
        &(S_k \otimes \sigma_x) |\Xi_n^-\rangle = (S_k \otimes \sigma_x) \left( - \sin \left( \frac{\pi}{8} \right) |\Psi_n^+\rangle + \cos \left( \frac{\pi}{8} \right) |\Phi_n^-\rangle \right) \\
        =& \frac{1}{\sqrt{2}} \left( S_k |\Upsilon_{n-1}^-\rangle \otimes \sigma_x|0\rangle - S_k |\Upsilon_{n-1}^+\rangle \otimes \sigma_x|1\rangle \right) \\
        =& \frac{1}{\sqrt{2}} \left( \sin \left( \frac{\pi}{8} \right) \left( |\Psi_{n-1}^+\rangle \otimes |1\rangle + |\Phi_{n-1}^-\rangle \otimes |0\rangle \right) - \cos \left( \frac{\pi}{8} \right) \left( |\Psi_{n-1}^+\rangle \otimes |0\rangle - |\Phi_{n-1}^-\rangle \otimes |1\rangle \right) \right) 
        = - |\Xi_n^-\rangle.
    \end{aligned}
\end{equation*}
Now consider operator $S_n = (1/\sqrt{2}) \sigma_x^{\otimes (n-1)} \otimes (\sigma_x + (-1)^n \sigma_z) \in \mathcal{S}_n$ which is not constructed from operators in $\mathcal{S}_{n-1}$, by \textbf{Statement 3} we have
\begin{equation*}
    \begin{aligned}
        &S_n |\Upsilon_n^+\rangle = \frac{1}{\sqrt{2}} \left( \sigma_x^{\otimes (n-1)} \otimes \left( \sigma_x + (-1)^{n} \sigma_z \right) \right) |\Upsilon_n^+\rangle \\
        =& \frac{1}{\sqrt{2}} \left[ \begin{matrix*}[l]
            + & \left( \frac{1}{2} \left( |\Psi_{n-1}^+\rangle + |\Phi_{n-1}^-\rangle \right) + \frac{(-1)^{n-1}}{2} \left( |\Psi_{n-1}^+\rangle - |\Phi_{n-1}^-\rangle \right) \right) \\
            & \otimes \frac{1}{2} \left( \left( |\Xi_1^+\rangle + |\Xi_1^-\rangle \right) + (-1)^n \left( |\Xi_1^+\rangle - |\Xi_1^-\rangle \right) \right) \\
            + & \frac{1}{2} \left( |\Psi_{n-1}^+\rangle - |\Phi_{n-1}^-\rangle \right) - \frac{(-1)^{n-1}}{2} \left( |\Psi_{n-1}^+\rangle + |\Phi_{n-1}^-\rangle \right) \\
            & \otimes \frac{1}{2} \left( (-1)^n \left( |\Xi_1^+\rangle + |\Xi_1^-\rangle \right) - \left( |\Xi_1^+\rangle - |\Xi_1^-\rangle \right) \right) 
        \end{matrix*}\right] \\
        =& \frac{1}{\sqrt{2}} \left( |\Psi_{n-1}^+\rangle \otimes \left( -\sin \left( \frac{\pi}{8} \right) |0\rangle + \cos \left( \frac{\pi}{8} \right) |1\rangle \right) + |\Phi_{n-1}^-\rangle \otimes \left( \cos \left( \frac{\pi}{8} \right) |0\rangle + \sin \left( \frac{\pi}{8} \right) |1\rangle \right) \right) \\
        =& \cos \left( \frac{\pi}{8} \right) |\Psi_{n}^+\rangle - \sin \left( \frac{\pi}{8} \right) |\Phi_{n}^-\rangle = |\Upsilon_n^-\rangle.  
    \end{aligned}
\end{equation*}
For $|\Xi_n^+\rangle$ and $|\Xi_n^-\rangle$, we have 
\begin{equation*}
    \begin{aligned}
        &S_n |\Xi_n^+\rangle = \frac{1}{\sqrt{2}} \left( \sigma_x^{\otimes (n-1)} \otimes \left( \sigma_x + (-1)^{n} \sigma_z \right) \right) |\Xi_n^+\rangle \\
        =& \frac{1}{\sqrt{2}} \left[ \begin{matrix*}[l]
            + & \left( \frac{1}{2} \left( |\Psi_{n-1}^+\rangle + |\Phi_{n-1}^-\rangle \right) + \frac{(-1)^{n-1}}{2} \left( |\Psi_{n-1}^+\rangle - |\Phi_{n-1}^-\rangle \right) \right) \\
            & \otimes \frac{1}{2} \left( \left( |\Upsilon_1^+\rangle + |\Upsilon_1^-\rangle \right) - (-1)^n \left( |\Upsilon_1^+\rangle - |\Upsilon_1^-\rangle \right) \right) \\
            + & \frac{1}{2} \left( |\Psi_{n-1}^+\rangle - |\Phi_{n-1}^-\rangle \right) - \frac{(-1)^{n-1}}{2} \left( |\Psi_{n-1}^+\rangle + |\Phi_{n-1}^-\rangle \right) \\
            & \otimes \frac{1}{2} \left( \left( |\Upsilon_1^+\rangle - |\Upsilon_1^-\rangle \right) + (-1)^n \left( |\Upsilon_1^+\rangle + |\Upsilon_1^-\rangle \right) \right)
        \end{matrix*} \right] \\
        =& \frac{1}{\sqrt{2}} \left( |\Psi_{n-1}^+\rangle \otimes \left( \sin \left( \frac{\pi}{8} \right) |0\rangle + \cos \left( \frac{\pi}{8} \right) |1\rangle \right) + |\Phi_{n-1}^-\rangle \otimes \left( \cos \left( \frac{\pi}{8} \right) |0\rangle - \sin \left( \frac{\pi}{8} \right) |1\rangle \right) \right) \\
        =& \cos \left( \frac{\pi}{8} \right) |\Psi_{n}^+\rangle + \sin \left( \frac{\pi}{8} \right) |\Phi_n^-\rangle = |\Xi_n^+\rangle. 
    \end{aligned}
\end{equation*}
\begin{equation*}
    \begin{aligned}
        &\frac{1}{\sqrt{2}} \left( \sigma_x^{\otimes (n-1)} \otimes \left( \sigma_x + (-1)^{n} \sigma_z \right) \right) |\Xi_n^-\rangle \\
        =& \frac{1}{\sqrt{2}} \left[ \begin{matrix*}[l]
            + & \left( \frac{1}{2} \left( |\Psi_{n-1}^+\rangle + |\Phi_{n-1}^-\rangle \right) + \frac{(-1)^{n-1}}{2} \left( |\Psi_{n-1}^+\rangle - |\Phi_{n-1}^-\rangle \right) \right) \\
            & \otimes \frac{1}{2} \left( \left( |\Upsilon_1^+\rangle - |\Upsilon_1^-\rangle \right) + (-1)^n \left( |\Upsilon_1^+\rangle + |\Upsilon_1^-\rangle \right) \right) \\
            - & \frac{1}{2} \left( |\Psi_{n-1}^+\rangle - |\Phi_{n-1}^-\rangle \right) - \frac{(-1)^{n-1}}{2} \left( |\Psi_{n-1}^+\rangle + |\Phi_{n-1}^-\rangle \right) \\
            & \otimes \frac{1}{2} \left( \left( |\Upsilon_1^+\rangle + |\Upsilon_1^-\rangle \right) - (-1)^n \left( |\Upsilon_1^+\rangle - |\Upsilon_1^-\rangle \right) \right) 
        \end{matrix*} \right] \\
        =& \frac{1}{\sqrt{2}} \left( - |\Psi_{n-1}^+\rangle \otimes \left( \cos \left( \frac{\pi}{8} \right) |0\rangle - \sin \left( \frac{\pi}{8} \right) |1\rangle \right) + |\Phi_{n-1}^-\rangle \otimes \left( \sin \left( \frac{\pi}{8} \right) |0\rangle + \cos \left( \frac{\pi}{8} \right) |1\rangle \right) \right) \\
        =& \sin \left( \frac{\pi}{8} \right) |\Psi_n^+\rangle - \cos \left( \frac{\pi}{8} \right) |\Phi_n^-\rangle = - |\Xi_n^-\rangle. 
    \end{aligned}
\end{equation*}

We are ready to inductively prove \textbf{Statement 1} now, by definition of $|\Phi_n\rangle$, we have:
\begin{equation*}
    \begin{aligned}
        |\Phi_n\rangle =& \cos \left( \frac{\pi}{8} \right) |\Psi_n^+\rangle + \sin \left( \frac{\pi}{8} \right) |\Phi_n^-\rangle \\
        =& \frac{1}{\sqrt{2}} \cos \left( \frac{\pi}{8} \right) \left( |\Psi_{n-1}^+\rangle \otimes |1\rangle + |\Phi_{n-1}^-\rangle \otimes |0\rangle \right) + \frac{1}{\sqrt{2}} \sin \left( \frac{\pi}{8} \right) \left( |\Psi_{n-1}^+\rangle \otimes |0\rangle - |\Phi_{n-1}^-\rangle \otimes |1\rangle \right) \\
        =& \frac{1}{\sqrt{2}} |\Psi_{n-1}^+\rangle \otimes \left( \sin \left( \frac{\pi}{8} \right) |0\rangle + \cos \left( \frac{\pi}{8} \right) |1\rangle \right) + \frac{1}{\sqrt{2}} |\Phi_{n-1}^-\rangle \otimes \left( \cos \left( \frac{\pi}{8} \right) |0\rangle - \sin \left( \frac{\pi}{8} \right) |1\rangle \right).
    \end{aligned}
\end{equation*}
That is,
    \begin{align}
        |\Phi_n\rangle =& \frac{1}{\sqrt{2}} \left( |\Upsilon_{n-1}^+\rangle \otimes |0\rangle + |\Upsilon_{n-1}^-\rangle \otimes |1\rangle \right) \label{eq:87}\\
        =& \frac{1}{\sqrt{2}} \left( |\Psi_{n-1}^+\rangle \otimes |\Upsilon_1^+\rangle + |\Phi_{n-1}^-\rangle \otimes |\Upsilon_1^-\rangle \right). \label{eq:88}
    \end{align}
Suppose \textbf{Statement 1} holds for $(n-1)$-qubit system, then, for arbitrary $S_{k} \in \mathcal{S}_{n-1}$, using Eq.~\eqref{eq:87} and \textbf{Statement 2}, we have 
\begin{equation*}
    \begin{aligned}
        (S_{k} \otimes \sigma_x) |\Phi_{n}\rangle 
        =& \frac{1}{\sqrt{2}} \left( S_k |\Upsilon_{n-1}^+\rangle \otimes \sigma_x|0\rangle + S_k |\Upsilon_{n-1}^-\rangle \otimes \sigma_x|1\rangle \right) \\
        =& \frac{1}{\sqrt{2}} \left( |\Upsilon_{n-1}^-\rangle \otimes |1\rangle + |\Upsilon_{n-1}^+\rangle \otimes |0\rangle \right) = |\Phi_{n}\rangle. 
    \end{aligned}
\end{equation*}
For $S_n = (1/\sqrt{2}) \sigma_x^{\otimes (n-1)} \otimes (\sigma_x + (-1)^n \sigma_z)$, using Eq.~\eqref{eq:88} and \textbf{Statement 3}, we have:
\begin{equation*}
    \begin{aligned}
        &\frac{1}{\sqrt{2}} \left( \sigma_x^{\otimes (n-1)} \otimes (\sigma_x + (-1)^n \sigma_z) \right) |\Phi_n\rangle \\
        =& \frac{1}{2} \left( \begin{matrix*}[l]
            & \sigma_x^{\otimes (n-1)} |\Psi_{n-1}^+\rangle \otimes (\sigma_x + (-1)^n \sigma_z) \left( \sin \left( \frac{\pi}{8} \right) |0\rangle + \cos \left( \frac{\pi}{8} \right) \right) \\
            + & \sigma_x^{\otimes (n-1)} |\Phi_{n-1}^-\rangle \otimes (\sigma_x + (-1)^n \sigma_z) \left( \cos \left( \frac{\pi}{8} \right) |0\rangle - \sin \left( \frac{\pi}{8} \right) |1\rangle \right)
        \end{matrix*} \right) \\
        =& \frac{1}{\sqrt{2}} \left( \begin{matrix*}[l]
            & \left(\frac{1}{2} \left( |\Psi_{n-1}^+\rangle + |\Phi_{n-1}^-\rangle \right) + \frac{(-1)^{n-1}}{2} \left( |\Psi_{n-1}^+\rangle - |\Phi_{n-1}^-\rangle \right) \right) \\
            & \otimes \frac{1}{2} \left( \left( |\Upsilon_1^+\rangle + |\Upsilon_1^-\rangle \right) - (-1)^n \left( |\Upsilon_1^+\rangle - |\Upsilon_1^-\rangle \right) \right) \\
            + & \left( \frac{1}{2} \left( |\Psi_{n-1}^+\rangle - |\Phi_{n-1}^-\rangle \right) - \frac{(-1)^{n-1}}{2} \left( |\Psi_{n-1}^+\rangle + |\Phi_{n-1}^-\rangle \right) \right) \\
            & \otimes \frac{1}{2} \left( \left( |\Upsilon_1^+\rangle - |\Upsilon_1^-\rangle \right) + (-1)^n \left( |\Upsilon_1^+\rangle + |\Upsilon_1^-\rangle \right) \right)
        \end{matrix*} \right) \\
        =& \frac{1}{\sqrt{2}} \left( |\Psi_{n-1}^+\rangle \otimes |\Upsilon_1^+\rangle + |\Phi_{n-1}^-\rangle \otimes |\Upsilon_1^-\rangle \right) = |\Phi_n\rangle. 
    \end{aligned}
\end{equation*}
This completes the proof of \textbf{Statement 1}, which is exactly the expected Lemma.
\end{proof}

Now, we can prove Observation~\ref{obv:angle-symmetry}.

\begin{proof}[Proof of Observation~\ref{obv:angle-symmetry}]
This proof is divided into two stages, in the first one we prove the symmetry of angles of Bell operator $W_n \left( \{\theta_k\}_{k=1}^n \right)$, and in the second stage we consider angles of the dephased state $K \left( \{\theta_k\}_{k=1}^n \right)$. 

\noindent \textbf{Stage 1}. 
First, we inductively prove that $\sigma_x^{\otimes n} W_n \sigma_x^{\otimes n} = \overline{W}_n$ for all $n \in \mathbb{N}$.
Here $\overline{W}_n$ is obtained from $W_n$ by replacing every $O_{r_k}$ with $O_{r_k \oplus 1}$.
We have 
\begin{equation}
    \begin{aligned}
        \sigma_x W_1(\theta_1) \sigma_x = &2 \left( \cos \theta_1 \sigma_x\sigma_x\sigma_x + \sin \theta_1 \sigma_x\sigma_z\sigma_x \right) = 2 \left( \cos \theta_1 \sigma_x - \sin \theta_1 \sigma_z \right) = \overline{W}_1 (\theta_1), \\
        \sigma_x^{\otimes n} W_n \left( \{\theta_l\}_{l=1}^n \right) \sigma_x^{\otimes n} = &\sigma_x^{\otimes (n-1)} W_{n-1} \left( \{\theta_l\}_{l=1}^{n-1} \right) \sigma_x^{\otimes (n-1)} \otimes \cos \theta_n \sigma_x\sigma_x\sigma_x \\
        &+ (-1)^n \sigma_x^{\otimes (n-1)} \overline{W}_{n-1} \left( \{\theta_l\}_{l=1}^{n-1} \right) \sigma_x^{\otimes (n-1)} \otimes \sin \theta_n \sigma_x\sigma_z\sigma_x \\
        = &\overline{W}_{n-1} \left( \{\theta_l\}_{l=1}^{n-1} \right) \otimes \cos \theta_n \sigma_x + (-1)^{n-1} {W}_{n-1} \left( \{\theta_l\}_{l=1}^{n-1} \right) \otimes \sin \theta_n \sigma_z = \overline{W}_{n}. 
    \end{aligned}
    \label{eq:85}
\end{equation}

Next, we prove that operator $S_k \in \mathcal{S}_n$ satisfies
\begin{equation}
    \label{eq:sym-induction}
    \begin{aligned}
        S_k W \left( \left\{ \theta_l \right\}_{l=1}^n \right) S_k =& W \left( \{ \theta_k \}_{l=1}^{k-1} \cup \left\{ \frac{\pi}{2} - \theta_k \right\} \cup \{ \theta_l \}_{l=k+1}^n \right), \\
        S_k \overline{W} \left( \left\{ \theta_l \right\}_{l=1}^n \right) S_k =& - \overline{W} \left( \{ \theta_k \}_{l=1}^{k-1} \cup \left\{ \frac{\pi}{2} - \theta_k \right\} \cup \{ \theta_l \}_{l=k+1}^n \right).
    \end{aligned}
\end{equation}

Start from single qubit system, we have 
\begin{equation*}
    \begin{aligned}
        H W_1 (\theta_1) H = &2 \left( \cos \theta_1 H \sigma_x H + \sin \theta_1 H \sigma_z H \right) = 2 \left( \sin \left( \frac{\pi}{2} - \theta_1 \right) \sigma_z + \cos \left( \frac{\pi}{2} - \theta_1 \right) \sigma_x \right) = W_1 \left( \frac{\pi}{2} - \theta_1 \right), \\
        H \overline{W}_1 (\theta_1) H = &2 \left( \cos \theta_1 H \sigma_x H - \sin \theta_1 H \sigma_z H \right) = 2 \left( \sin \left( \frac{\pi}{2} - \theta_1 \right) \sigma_z - \cos \left( \frac{\pi}{2} - \theta_1 \right) \sigma_x \right) = - \overline{W}_1 \left( \frac{\pi}{2} - \theta_1 \right). 
    \end{aligned}
\end{equation*}
On 2-qubit system, it can also be easily verified that 
\begin{equation*}
    \begin{aligned}
        &(H \otimes \sigma_x) W_2(\theta_1, \theta_2) (H \otimes \sigma_x) = W_2 \left( \frac{\pi}{2} - \theta_1, \theta_2 \right), &\quad& (\sigma_x \otimes H) W_2(\theta_1, \theta_2) (\sigma_x \otimes H) = W_2 \left( \theta_1, \frac{\pi}{2} - \theta_2 \right), \\ 
        &(H \otimes \sigma_x) \overline{W}_2(\theta_1, \theta_2) (H \otimes \sigma_x) = - \overline{W}_2 \left( \frac{\pi}{2} - \theta_1, \theta_2 \right), &\quad& (\sigma_x \otimes H) \overline{W}_2(\theta_1, \theta_2) (\sigma_x \otimes H) = - \overline{W}_2 \left( \theta_1, \frac{\pi}{2} - \theta_2 \right). 
    \end{aligned}
\end{equation*}

Suppose Eq.~\eqref{eq:sym-induction} holds for $n-1$.
Then, by straightforward calculation (the angles of $W_n$ are omitted in expansions for simplicity),
\begin{equation*}
    \begin{aligned}
        &\left( S_k \otimes \sigma_x \right) W_n \left( \{ \theta_l \}_{l=1}^{n-1} \right) \left( S_k \otimes \sigma_x \right) \\
        =& \left[ \begin{matrix*}[l]
            +1 & \left( S_k \otimes \sigma_x \right) \left( W_{n-1} \otimes \cos \theta_k \sigma_x \right) \left( S_k \otimes \sigma_x \right) \\
            +(-1)^n &\left( S_k \otimes \sigma_x \right) \left( \overline{W}_{n-1} \otimes \sin \theta_k \sigma_z \right) \left( S_k \otimes \sigma_x \right)
        \end{matrix*} \right] \\
        =& \left[ \begin{matrix*}[l]
            +1 & W \left( \{ \theta_l \}_{l=1}^{k-1} \cup \left\{ \frac{\pi}{2} - \theta_k \right\} \cup \{\theta_l\}_{l=k+1}^{n-1} \right) \otimes \cos \theta_k \sigma_x \\
            +(-1)^n & (-1) \overline{W} \left( \{ \theta_l \}_{l=1}^{k-1} \cup \left\{ \frac{\pi}{2} - \theta_k \right\} \cup \{\theta_l\}_{l=k+1}^{n-1} \right) \otimes (-1) \sin \theta_k \sigma_z
        \end{matrix*} \right] \\
        = &W_n \left( \{ \theta_l \}_{l=1}^{k-1} \cup \left\{ \frac{\pi}{2} - \theta_k \right\} \cup \{\theta_l\}_{l=k+1}^{n} \right). 
    \end{aligned}
\end{equation*}
Similarly, 
\begin{equation*}
    \begin{aligned}
        &\left( S_k \otimes \sigma_x \right) \overline{W}_n \left( \{ \theta_l \}_{l=1}^{n-1} \right) \left( S_k \otimes \sigma_x \right) \\
        =& \left[ \begin{matrix*}[l]
            +1 &\left( S_k \otimes \sigma_x \right) \left( \overline{W}_{n-1} \otimes \cos \theta_k \sigma_x \right) \left( S_k \otimes \sigma_x \right) \\
            +(-1)^{n-1} &\left( S_k \otimes \sigma_x \right) \left( {W}_{n-1} \otimes \sin \theta_k \sigma_z \right) \left( S_k \otimes \sigma_x \right)
        \end{matrix*} \right] \\
        =& \left[ \begin{matrix*}[l]
            +1 & (-1) \overline{W} \left( \{ \theta_l \}_{l=1}^{k-1} \cup \left\{ \frac{\pi}{2} - \theta_k \right\} \cup \{\theta_l\}_{l=k+1}^{n-1} \right) \otimes \cos \theta_k \sigma_x \\
            +(-1)^{n-1} & {W} \left( \{ \theta_l \}_{l=1}^{k-1} \cup \left\{ \frac{\pi}{2} - \theta_k \right\} \cup \{\theta_l\}_{l=k+1}^{n-1} \right) \otimes (-1) \sin \theta_k \sigma_z
        \end{matrix*} \right] \\
        = &- \overline{W}_n \left( \{ \theta_l \}_{l=1}^{k-1} \cup \left\{ \frac{\pi}{2} - \theta_k \right\} \cup \{\theta_l\}_{l=k+1}^{n} \right). 
    \end{aligned}
\end{equation*}

With Eq.~\eqref{eq:85} established at the begining of this stage, the following equality can be easily verified: 
\begin{equation*}
    \begin{aligned}
        &S_n W_n \left( \{\theta_k\}_{k=1}^n \right) S_n = \left( \sigma_x^{\otimes (n-1)} \otimes \frac{1}{\sqrt{2}} \left( \sigma_x + (-1)^n \sigma_z \right) \right) W_n \left( \{ \theta_k \}_{k=1}^{n} \right) \left( \sigma_x^{\otimes (n-1)} \otimes \frac{1}{\sqrt{2}} \left( \sigma_x + (-1)^n \sigma_z \right) \right) \\
        =&\left[ \begin{matrix*}[l]
            +1 & \frac{1}{2} \left( \sigma_x^{\otimes (n-1)} \otimes \left( \sigma_x + (-1)^n \sigma_z \right) \right) \left( W_{n-1} \otimes \cos \theta_n \sigma_x \right) \left( \sigma_x^{\otimes (n-1)} \otimes \left( \sigma_x + (-1)^n \sigma_z \right) \right) \\
            +(-1)^{n} & \frac{1}{2} \left( \sigma_x^{\otimes (n-1)} \otimes \left( \sigma_x + (-1)^n \sigma_z \right) \right) \left( \overline{W}_{n-1} \otimes \sin \theta_n \sigma_z \right) \left( \sigma_x^{\otimes (n-1)} \otimes \left( \sigma_x + (-1)^n \sigma_z \right) \right)
        \end{matrix*}\right] \\
        =&\left[ \begin{matrix*}[l]
            +1 & \overline{W}_{n-1} \otimes \cos\theta_n (-1)^{n} \sigma_z \\
            +(-1)^n & W_{n-1} \otimes \sin \theta_n (-1)^n \sigma_x 
        \end{matrix*}\right] = \left[ \begin{matrix*}[l]
            +1 & \overline{W}_{n-1} \otimes \sin \left( \frac{\pi}{2} - \theta_n \right) (-1)^{n} \sigma_z \\
            +(-1)^n & W_{n-1} \otimes \cos \left( \frac{\pi}{2} - \theta_n \right) (-1)^n \sigma_x 
        \end{matrix*}\right] \\
        =& W_{n-1} \otimes \cos \left( \frac{\pi}{2} - \theta_n \right) \sigma_x + (-1)^n \overline{W}_{n-1} \otimes \sin \left( \frac{\pi}{2} - \theta_n \right) \sigma_z = W_n \left( \{\theta_k\}_{k=1}^{n-1} \cup \left\{ \frac{\pi}{2} - \theta_n \right\} \right). 
    \end{aligned}
\end{equation*}
Similarly, 
\begin{equation*}
    \begin{aligned}
        &S_n \overline{W}_n \left( \{ \theta_k \}_{k=1}^{n} \right) S_n = \left( \sigma_x^{\otimes (n-1)} \otimes \frac{1}{\sqrt{2}} \left( \sigma_x + (-1)^n \sigma_z \right) \right) \overline{W}_n \left( \{ \theta_k \}_{k=1}^{n} \right) \left( \sigma_x^{\otimes (n-1)} \otimes \frac{1}{\sqrt{2}} \left( \sigma_x + (-1)^n \sigma_z \right) \right) \\
        =&\left[ \begin{matrix*}[l]
            +1 & \frac{1}{2} \left( \sigma_x^{\otimes (n-1)} \otimes \left( \sigma_x + (-1)^n \sigma_z \right) \right) \left( \overline{W}_{n-1} \otimes \cos \theta_n \sigma_x \right) \left( \sigma_x^{\otimes (n-1)} \otimes \left( \sigma_x + (-1)^n \sigma_z \right) \right) \\
            +(-1)^{n-1} & \frac{1}{2} \left( \sigma_x^{\otimes (n-1)} \otimes \left( \sigma_x + (-1)^n \sigma_z \right) \right) \left( {W}_{n-1} \otimes \sin \theta_n \sigma_z \right) \left( \sigma_x^{\otimes (n-1)} \otimes \left( \sigma_x + (-1)^n \sigma_z \right) \right)
        \end{matrix*}\right] \\
        =&\left[ \begin{matrix*}[l]
            +1 & {W}_{n-1} \otimes \cos\theta_n (-1)^{n} \sigma_z \\
            +(-1)^{n-1} & \overline{W}_{n-1} \otimes \sin \theta_n (-1)^n \sigma_x 
        \end{matrix*}\right] = \left[ \begin{matrix*}[l]
            -1 & {W}_{n-1} \otimes \sin \left( \frac{\pi}{2} - \theta_n \right) (-1)^{n-1} \sigma_z \\
            -(-1)^{n} & \overline{W}_{n-1} \otimes \cos \left( \frac{\pi}{2} - \theta_n \right) (-1)^n \sigma_x 
        \end{matrix*}\right] \\
        =& - \overline{W}_{n-1} \otimes \cos \left( \frac{\pi}{2} - \theta_n \right) \sigma_x + (-1)^n {W}_{n-1} \otimes \sin \left( \frac{\pi}{2} - \theta_n \right) \sigma_z = - \overline{W}_n \left( \{\theta_k\}_{k=1}^{n-1} \cup \left\{ \frac{\pi}{2} - \theta_n \right\} \right)= - \overline{W}_n. 
    \end{aligned}
\end{equation*}

\noindent \textbf{Stage 2}. Since $\mathcal{S}_n$ has been constructed in \textbf{Stage 1}, in this stage we only need to prove that each operator $S_k \in \mathcal{S}_n$ works as an reflection on the $k$-th angle centering at $\pi / 2$. By construction, for $k \geq 2$ $S_k$ can be analytically expressed as 
\begin{equation}
    S_k = \sigma_x^{\otimes (k-1)} \otimes \frac{1}{\sqrt{2}} \left( \sigma_x + (-1)^k \sigma_z \right) \otimes \sigma_x^{\otimes (n-k)},
    \label{eq:86}
\end{equation}
and $S_1 = H \otimes \sigma_x^{\otimes (n-1)}$.
Then we have (unless otherwise specified, we define channel $\Gamma^0$ to be the identity channel, and $\Gamma^1 = \Gamma$): 
\begin{equation*}
    \begin{aligned}
        &S_k \left[ \Lambda_n \left( \{\theta_l\}_{l=1}^n \right) \right] (\rho) S_k = \sum_{e_1\cdots e_n \in \{0, 1\}^n} \left( \left( \prod_{l=1}^{n} \frac{1 + (-1)^{e_l} g(\theta_l)}{2} \right) S_k \left( \bigotimes_{l=1}^n \Gamma^{e_l} (\theta_l) \right) \rho \left( \bigotimes_{l=1}^n \Gamma^{e_l} (\theta_l) \right) S_k \right). 
    \end{aligned}
\end{equation*}

By definition, real function $g$ is symmetric about its input centering at $\pi/2$, i.e. $g(x) = g(\pi/2 - x)$ for all $x \in \mathbb{R}$. Therefore, it suffices to prove the following equality:
\begin{equation*}
    \begin{aligned}
        &S_k \left( \bigotimes_{l=1}^n \Gamma^{e_l} (\theta_l) \right) |\Phi\rangle_n \langle\Phi| \left( \bigotimes_{l=1}^n \Gamma^{e_l} (\theta_l) \right) S_k = D_k \left( \{e_l\}_{l=1}^n, \{\theta_l\}_{l=1}^n \right) |\Phi\rangle_n \langle\Phi| D_k \left( \{e_l\}_{l=1}^n, \{\theta_l\}_{l=1}^n \right)
    \end{aligned}    
\end{equation*}
holds for all $(e_1 \cdots e_n) \in \{0, 1\}^n$, where 
\begin{equation*}
    D_k \left( \{e_l\}_{l=1}^n, \{\theta_l\}_{l=1}^n \right) := \left[ \left( \bigotimes_{l=1}^{k-1} \Gamma^{e_l} (\theta_l) \right) \otimes \Gamma^{e_k} \left( \frac{\pi}{2} - \theta_k \right) \otimes \left( \bigotimes_{l=k+1}^{n} \Gamma^{e_l} (\theta_l) \right) \right].
\end{equation*}

For this, we can ``switch'' $S_k$ and the Kraus operator by adding additional $S_k$ by its unitary and Hermitian properties: 
\begin{equation*}
    S_k \left( \bigotimes_{l=1}^n \Gamma^{e_l} (\theta_l) \right) |\Phi\rangle_n \langle\Phi| \left( \bigotimes_{l=1}^n \Gamma^{e_l} (\theta_l) \right) S_k = S_k \left( \bigotimes_{l=1}^n \Gamma^{e_l} (\theta_l) \right) S_k S_k |\Phi\rangle_n \langle\Phi| S_k S_k \left( \bigotimes_{l=1}^n \Gamma^{e_l} (\theta_l) \right) S_k. 
\end{equation*}
Note that, alternation of sign of $\Gamma^{e_k}(\theta_k)$ for possible $V := (1/\sqrt{2})(\sigma_x - \sigma_z)$ contained in $S_k$ for even $k$ can be counterpated since there are two Kraus operator on two sides of $S_k |\Phi\rangle_n \langle \Phi| S_k$, i.e. 
\begin{equation*}
    S_k \left( \bigotimes_{l=1}^n \Gamma^{e_l} (\theta_l) \right) S_k \rho S_k \left( \bigotimes_{l=1}^n \Gamma^{e_l} (\theta_l) \right) S_k = D_k \left( \{e_l\}_{l=1}^n, \{\theta_l\}_{l=1}^n \right) \rho D_k \left( \{e_l\}_{l=1}^n, \{\theta_l\}_{l=1}^n \right)
\end{equation*}
for arbitrary density operator $\rho$. Therefore the problem has been reduced to, proving $S_k$ stabilizes $|\Phi\rangle_n \langle\Phi|$ for all $k \in [n]$ and $n$, which has been proven by Lemma~\ref{lemma:angle-symmetry}.
\end{proof}

Then, we find a set of $2^{n-1}$ mutually independent Pauli strings.
All elements commute with operator $T_n \left( \{\theta_k\}_{k=1}^n \right)$, so that we can construct projectors from it.

\begin{observation}
    \label{obv:projectors}
  On $n$-qubit quantum system ($n \geq 2$), consider the following set of $n-1$ Pauli operators:
  \begin{equation*}
    \mathcal{Y}_n = \left\{ Q_{k} \coloneqq \sigma_y \otimes I^{\otimes (k-1)} \otimes \sigma_y \otimes I^{\otimes (n-1-k)} \middle| k=1, \cdots, (n-1)\right\}.
  \end{equation*}
  This set generates a subgroup of $n$-qubit Pauli group, which contains $2^{n-1}$ elements.
  Denote this subgroup by $\mathcal{G}_Y$, we have
  \begin{equation*}
    \left[ K \left( \{\theta_k\}_{k=1}^n \right), Q \right] = \left[ W_n \left( \{\theta_k\}_{k=1}^n \right), Q \right] = 0, \qquad \forall Q \in \mathcal{G}_Y.
  \end{equation*}
\end{observation}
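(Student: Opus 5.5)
The plan is to handle the group structure first, then $W_n$, then $K$. The group structure is immediate. Each $Q_k$ is a Hermitian Pauli string with $Q_k^2=I$. Any two $Q_k, Q_l$ commute, since on every qubit they act as either $I$ or $\sigma_y$, and $\sigma_y$ commutes with itself. The generators are independent because $Q_k$ is the only one carrying $\sigma_y$ on qubit $k+1$. Hence $\mathcal{G}_Y=\{\prod_{k\in A}Q_k : A\subseteq[n-1]\}$ has exactly $2^{n-1}$ elements. A general element carries $\sigma_y$ on qubit $1$ exactly when $|A|$ is odd, and in that case on the $|A|$ qubits $k+1$ with $k\in A$. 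So every $Q\in\mathcal{G}_Y$ has $\sigma_y$ on an even number of sites and $I$ elsewhere. It therefore suffices to prove the commutation claims for an arbitrary Pauli string $Q$ with an even number of $\sigma_y$ factors.

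For $W_n$ I would use the recursive construction together with the qubit parametrization $O_{r_k}=\cos\theta_k\sigma_x+(-1)^{r_k}\sin\theta_k\sigma_z$. Unrolling the recursion shows that $W_n$ is a real linear combination of tensor products $P_1\otimes\cdots\otimes P_n$ with every $P_j\in\{\sigma_x,\sigma_z\}$; no identity or $\sigma_y$ factor appears, since each level multiplies by a single-qubit observable. Since $\sigma_y$ anticommutes with both $\sigma_x$ and $\sigma_z$, conjugating such a product by $Q$ produces the sign $(-1)^{\#\{\sigma_y\text{ in }Q\}}=+1$. Thus $QW_nQ=W_n$, i.e.\ $[W_n,Q]=0$, for all angles.

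For $K=\Lambda(|\Phi_n\rangle\langle\Phi_n|)$, I would first show that $Q\,\Lambda(\rho)\,Q=\Lambda(Q\rho Q)$ for every $\rho$. Expanding $\Lambda$ as in the proof of Observation~\ref{obv:angle-symmetry}, its Kraus operators are proportional to $\bigotimes_l\Gamma^{e_l}(\theta_l)$. Each of these is a Pauli string with factors in $\{I,\sigma_x,\sigma_z\}$, so it commutes or anticommutes with $Q$. Because the Kraus operator appears on both sides of $\rho$, the sign cancels. It then remains to show that $Q|\Phi_n\rangle\langle\Phi_n|Q=|\Phi_n\rangle\langle\Phi_n|$, i.e.\ that $|\Phi_n\rangle$ is an eigenvector of $Q$.

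I expect this last step to be the main obstacle. My first attempt would be spectral. At $\theta_1=\cdots=\theta_n=\pi/4$, the state $|\Phi_n\rangle$ is the maximizer of $W_n$ with eigenvalue $2^{(n+1)/2}$, and this top eigenspace is one-dimensional (the exact self-testing of GHZ by MABK). Since $Q$ commutes with $W_n(\pi/4,\dots,\pi/4)$ by the previous paragraph, it maps that eigenspace to itself, so $Q|\Phi_n\rangle=\lambda|\Phi_n\rangle$ with $|\lambda|=1$. If I prefer not to rely on nondegeneracy, the fallback is a direct induction mirroring Lemma~\ref{lemma:angle-symmetry}. It suffices to treat the generators $Q_k=\sigma_y^{(1)}\sigma_y^{(k+1)}$. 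Using $\sigma_y|0\rangle=i|1\rangle$ and $\sigma_y|1\rangle=-i|0\rangle$, one tracks the action of $\sigma_y$ on the first and on the last qubit of $|\Psi^+_n\rangle$ and $|\Phi^-_n\rangle$ via their recursions, proving a statement analogous to Statement~3. From this one obtains $Q_k|\Phi_n\rangle=\pm|\Phi_n\rangle$. A quick alternative check is through Remark~\ref{remark:s:1}: $U$ maps each $Q_k$ to a Pauli string, and it is enough to verify that this image stabilizes $|\mathrm{GHZ}_n\rangle$ up to sign, e.g.\ by lying in the group generated by $\sigma_z\sigma_z$ and $\sigma_x^{\otimes n}$-type stabilizers.
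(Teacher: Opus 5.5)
Your treatment of the group structure, of $W_n$, and of $K$ is the paper's proof. For $W_n$, only $\sigma_x/\sigma_z$ strings appear, so an even number of $\sigma_y$ factors gives sign $+1$. For $K$, the Pauli Kraus operators' signs cancel, reducing everything to $Q|\Phi_n\rangle=\pm|\Phi_n\rangle$. Your count $|\mathcal{G}_Y|=2^{n-1}$ is more explicit than the paper's.

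The genuine difference is the eigenvector step. The paper isolates it as Lemma~\ref{lemma:2} and proves it by induction along the recursion for $|\Psi_n^+\rangle$ and $|\Phi_n^-\rangle$. It tracks how $\sigma_y$ on the first qubit exchanges $|\Phi^-_{n-1}\rangle$ and $|\Psi^+_{n-1}\rangle$ up to phases. This is your fallback, and it yields the explicit eigenvalue $Q_k|\Phi_n\rangle=(-1)^{k-1}|\Phi_n\rangle$.

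Your primary spectral argument is shorter and neatly reuses $[W_n,Q]=0$ at $\theta_k=\pi/4$. However, it imports two facts the paper only asserts:
\begin{itemize}
\item the recursively defined $|\Phi_n\rangle$ is an eigenvector of $W_n(\pi/4,\dots,\pi/4)$ with eigenvalue $2^{(n+1)/2}$;
\item this eigenvalue is simple.
\end{itemize}
Both are true, since at the optimal settings the MABK operator has rank two with eigenvalues $\pm 2^{(n+1)/2}$. But checking the first for the paper's particular sign conventions is about as much work as the induction itself.

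The spectral route also gives only $\lambda=\pm1$, without determining the sign. That suffices for this Observation, but the later steps need the signs. They fix the signed stabilizers in $\mathcal{Y}_n^{\mathrm{AL}}$ of Observation~\ref{obv:s:proj-based} and the Bloch coefficients $r_{\vec y}=(-1)^{\sum_{t=2}^n y_t t}\prod_t g^{y_t}(\theta_t)$ of Observation~\ref{obv:s:bloch-coef}. Through these they fix $\operatorname{tr}[M_{\vec a}]$ and the sign pattern $\mathcal{T}_{\vec a}$ entering $\lambda_{\vec a}$. So the explicit induction is the version the rest of the argument actually relies on.
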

To prove this observation, we first establish the following lemma:
\begin{lemma}
    \label{lemma:2}
    {Let $\mathcal{Y}_n$ be the set of Pauli operators defined in Observation~\ref{obv:projectors}, then for all $Q_k \in \mathcal{Y}_n$, we have} 
    \begin{equation*}
        Q_k |\Phi_n^-\rangle = (-1)^{k-1} |\Phi_n^-\rangle, \quad Q_k |\Psi^+_n\rangle = (-1)^{k-1} |\Psi_n^+\rangle,  
    \end{equation*}
    and therefore we have 
    \begin{equation}
        \label{eq:lemma:2}
        Q_k |\Phi_n\rangle = (-1)^{k-1} |\Phi_n\rangle. 
    \end{equation}
\end{lemma}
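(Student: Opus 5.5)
Since the statement is about two recursively defined families of vectors, I will prove it by induction on $n$, using the recursions
\begin{equation*}
|\Psi_n^+\rangle = \tfrac{1}{\sqrt{2}}\bigl(|\Psi_{n-1}^+\rangle\otimes|1\rangle + |\Phi_{n-1}^-\rangle\otimes|0\rangle\bigr),\qquad |\Phi_n^-\rangle = \tfrac{1}{\sqrt{2}}\bigl(|\Psi_{n-1}^+\rangle\otimes|0\rangle - |\Phi_{n-1}^-\rangle\otimes|1\rangle\bigr).
\end{equation*}
Equation~\eqref{eq:lemma:2} then follows at once by linearity, because $|\Phi_n\rangle = \cos(\pi/8)|\Psi_n^+\rangle + \sin(\pi/8)|\Phi_n^-\rangle$ and both components carry the same eigenvalue $(-1)^{k-1}$.

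\textbf{Base case $n=2$.} Here $\mathcal{Y}_2=\{Q_1 = \sigma_y\otimes\sigma_y\}$, with
\begin{equation*}
|\Psi_2^+\rangle = \tfrac{1}{\sqrt{2}}(|01\rangle + |10\rangle),\qquad |\Phi_2^-\rangle = \tfrac{1}{\sqrt{2}}(|00\rangle - |11\rangle).
\end{equation*}
Using $\sigma_y\otimes\sigma_y\,|ab\rangle = -(-1)^{a+b}|\bar a\bar b\rangle$, a direct check gives eigenvalue $+1$ on both states, which matches $(-1)^{0}$.

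\textbf{Inductive step.} Assume the claim for $n-1$. Only indices $k\le n-2$ can come from the previous level. For these, $Q_k^{(n)} = Q_k^{(n-1)}\otimes I$, so applying the hypothesis to each term of the recursion gives $(-1)^{k-1}$ on both $|\Psi_n^+\rangle$ and $|\Phi_n^-\rangle$.

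The new generator $Q_{n-1}=\sigma_y\otimes I^{\otimes(n-2)}\otimes\sigma_y$ is not of the form $Q_k^{(n-1)}\otimes I$, and this is the step I expect to be the main obstacle. I would handle it as follows.
\begin{itemize}[label={}]
\item First, $\sigma_y|1\rangle = -i|0\rangle$ and $\sigma_y|0\rangle = i|1\rangle$ on the last qubit.
\item The first factor $A\coloneqq \sigma_y\otimes I^{\otimes(n-2)}$ acts on the $(n-1)$-qubit states. Its action is fixed by an auxiliary induction: $A|\Psi_{m}^+\rangle = \pm i|\Phi_{m}^-\rangle$ and $A|\Phi_{m}^-\rangle=\mp i|\Psi_{m}^+\rangle$, with an $m$-dependent sign.
\item This auxiliary claim is proved from the same recursion, since $\sigma_y$ on qubit 1 commutes past the appended last qubit. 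Its base case is $A|\Psi_1^+\rangle = \sigma_y|0\rangle = i|1\rangle = i|\Phi_1^-\rangle$.
\end{itemize}
Substituting both actions into the recursion then gives $Q_{n-1}|\Psi_n^+\rangle = \epsilon_n|\Psi_n^+\rangle$ and $Q_{n-1}|\Phi_n^-\rangle = \epsilon_n|\Phi_n^-\rangle$, with a common sign $\epsilon_n$.

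The delicate part is the sign bookkeeping. I must show that $\epsilon_n = (-1)^{n-2}=(-1)^{n}$, i.e. that the sign alternates with the position of the second $\sigma_y$, and that the mixed terms $|\Psi\rangle\leftrightarrow|\Phi\rangle$ recombine into the original vector rather than into its partner. To pin this down I would first verify $n=3$ explicitly. Then I would fix the auxiliary statement in the form $A|\Psi_m^+\rangle = i(-1)^{m-1}|\Phi_m^-\rangle$ and $A|\Phi_m^-\rangle = -i(-1)^{m-1}|\Psi_m^+\rangle$, and check that the phases $i\cdot(\pm i)$ produce exactly $(-1)^{n}$.

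As a fallback, each $Q_k$ should equal a product of stabilizers from Lemma~\ref{lemma:angle-symmetry} together with $\sigma_x^{\otimes n}$, where Statement~3 gives the action of $\sigma_x^{\otimes n}$. That route would give an independent sign check.
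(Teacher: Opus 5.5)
Your proposal is correct and follows essentially the same route as the paper. Both arguments run an induction from the base case $n=2$, dispatch the inherited generators $Q_k^{(n-1)}\otimes I$ by the hypothesis, and handle the new generator $Q_{n-1}=\sigma_y\otimes I^{\otimes(n-2)}\otimes\sigma_y$ through an auxiliary induction on the action of $\sigma_y$ on the first qubit. Your auxiliary formulas $A|\Psi_m^+\rangle=i(-1)^{m-1}|\Phi_m^-\rangle$ and $A|\Phi_m^-\rangle=-i(-1)^{m-1}|\Psi_m^+\rangle$ coincide with the paper's. Combining them with $\sigma_y|0\rangle=i|1\rangle$ and $\sigma_y|1\rangle=-i|0\rangle$ on the last qubit gives the common eigenvalue $-i\cdot i(-1)^{n-2}=(-1)^n$ on both $|\Psi_n^+\rangle$ and $|\Phi_n^-\rangle$, as required.
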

\begin{proof}
Prove by induction.
For 2-qubit system, we have 
\begin{equation*}
    \begin{aligned}
        &Q_1 |\Phi_n^-\rangle = (\sigma_y \otimes \sigma_y) \frac{1}{\sqrt{2}} (|00\rangle - |11\rangle) = \frac{1}{\sqrt{2}} ( |00\rangle - |11\rangle), \\
        &Q_1 |\Psi_n^+\rangle = (\sigma_y \otimes \sigma_y) \frac{1}{\sqrt{2}} (|01\rangle + |10\rangle) = \frac{1}{\sqrt{2}} (|01\rangle + |10\rangle).
    \end{aligned}
\end{equation*}
Suppose for $(n-1)$-qubit system we have
\begin{equation*}
    Q_k |\Phi_{n-1}^-\rangle = (-1)^{k-1} |\Phi_{n-1}^-\rangle, \quad Q_k |\Psi^+_{n-1}\rangle = (-1)^{k-1} |\Psi_{n-1}^+\rangle
\end{equation*}
for all $Q_k \in \mathcal{Y}_{n-1}$. Then according to the definition of $|\Phi_n^-\rangle$ and $|\Psi_n^+\rangle$, for $Q_k \in \mathcal{Y}_{n-1}$ we have 
\begin{equation*}
    \begin{aligned}
        (Q_k \otimes I) |\Phi_n^-\rangle = &(Q_k \otimes I) \frac{1}{\sqrt{2}} \left( |\Psi_{n-1}^+\rangle \otimes |0\rangle - |\Phi_{n-1}^-\rangle \otimes |1\rangle \right) = \frac{1}{\sqrt{2}} \left( Q_k|\Psi_{n-1}^+\rangle \otimes I|0\rangle - Q_k|\Phi_{n-1}^-\rangle \otimes I|1\rangle \right) \\
        = &\frac{1}{\sqrt{2}} \left( (-1)^{k-1}|\Psi_{n-1}^+\rangle \otimes |0\rangle - (-1)^{k-1}|\Phi_{n-1}^-\rangle \otimes |1\rangle \right) \\
        = &(-1)^{k-1}|\Phi_n^-\rangle, \\
        (Q_k \otimes I) |\Psi_n^+\rangle = &(Q_k \otimes I) \frac{1}{\sqrt{2}} \left( |\Psi_{n-1}^+\rangle \otimes |1\rangle + |\Phi_{n-1}^-\rangle \otimes |0\rangle \right) = \frac{1}{\sqrt{2}} \left( Q_k|\Psi_{n-1}^+\rangle \otimes I|1\rangle + Q_k|\Phi_{n-1}^-\rangle \otimes I|0\rangle \right) \\
        = &\frac{1}{\sqrt{2}} \left( (-1)^{k-1}|\Psi_{n-1}^+\rangle \otimes |1\rangle + (-1)^{k-1}|\Phi_{n-1}^-\rangle \otimes |0\rangle \right) \\
        = &(-1)^{k-1}|\Psi_n^+\rangle.
    \end{aligned}
\end{equation*}
Hence Eq.~\eqref{eq:lemma:2} holds for all $k =1, 2, \cdots, n-1$.  
For $Q_{n-1} = \sigma_y \otimes I^{\otimes (n-2)} \otimes \sigma_y$, first note that 
\begin{equation*}
    \begin{aligned}
        (\sigma_y \otimes I) |\Phi_2^-\rangle =& \frac{1}{\sqrt{2}} (\sigma_y|0\rangle \otimes |0\rangle - \sigma_y|1\rangle \otimes |1\rangle) = \frac{\text{i}}{\sqrt{2}} (|10\rangle + |01\rangle) = \text{i} |\Psi_2^+\rangle, \\
        (\sigma_y \otimes I) |\Psi_2^+\rangle =& \frac{1}{\sqrt{2}} (\sigma_y|0\rangle \otimes |1\rangle + \sigma_y|1\rangle \otimes |0\rangle) = -\frac{\text{i}}{\sqrt{2}} (|00\rangle - |11\rangle) = -\text{i} |\Phi_2^-\rangle. 
    \end{aligned}
\end{equation*}
This implies that 
\begin{equation*}
    \begin{aligned}
        (\sigma_y \otimes I \otimes \sigma_y) |\Phi_3^-\rangle =& \frac{1}{\sqrt{2}} ((\sigma_y \otimes I)|\Psi_2^+\rangle \otimes \sigma_y|0\rangle - (\sigma_y \otimes I)|\Phi_2^-\rangle \otimes \sigma_y|1\rangle) \\
        =& \frac{1}{\sqrt{2}} (-\text{i}|\Phi_2^-\rangle \otimes \text{i}|1\rangle - \text{i}|\Psi_2^+\rangle \otimes (-\text{i})|0\rangle) = -\frac{1}{\sqrt{2}} (|\Psi_2^+\rangle \otimes |0\rangle - |\Phi_2^-\rangle \otimes |1\rangle) = (-1)^{2-1} |\Phi_3^-\rangle \\
        =& -|\Phi_3^-\rangle, \\
        (\sigma_y \otimes I \otimes \sigma_y) |\Psi_3^+\rangle =& \frac{1}{\sqrt{2}} ((\sigma_y \otimes I)|\Psi_2^+\rangle \otimes \sigma_y|1\rangle + (\sigma_y \otimes I)|\Phi_2^-\rangle \otimes \sigma_y|0\rangle) \\
        =& \frac{1}{\sqrt{2}} (-\text{i}|\Phi_2^-\rangle \otimes (-\text{i})|0\rangle + \text{i}|\Psi_2^+\rangle \otimes \text{i}|1\rangle) = -\frac{1}{\sqrt{2}} (|\Psi_2^+\rangle \otimes |1\rangle + |\Phi_2^-\rangle \otimes |0\rangle) = (-1)^{2-1} |\Psi_3^+\rangle \\
        =& -|\Psi_3^+\rangle.
    \end{aligned}
\end{equation*}
Suppose for $(n-1)$-qubit system, we have 
\begin{equation*}
    (\sigma_y \otimes I^{\otimes (n-2)}) |\Phi_{n-1}^-\rangle =  (-1)^{n-1} \text{i}|\Psi^+_{n-1}\rangle, \quad
    (\sigma_y \otimes I^{\otimes (n-2)}) |\Psi_{n-1}^+\rangle =  (-1)^{n} \text{i}|\Phi^-_{n-1}\rangle.
\end{equation*}
Then we have 
\begin{equation*}
    \begin{aligned}
        (\sigma_y \otimes I^{n-1}) |\Phi_{n}^-\rangle =& \frac{1}{\sqrt{2}} ((\sigma_y \otimes I^{n-2})|\Psi_{n-1}^+\rangle \otimes |0\rangle - (\sigma_y \otimes I^{n-2})|\Phi_{n-1}^-\rangle \otimes |1\rangle) \\
        =& \frac{1}{\sqrt{2}} ((-1)^n \text{i}|\Phi_{n-1}^-\rangle \otimes |0\rangle + (-1)^n \text{i} |\Psi_{n-1}^+\rangle \otimes |1\rangle) \\
        =& (-1)^{n} \text{i}|\Psi_n^+\rangle, \\
        (\sigma_y \otimes I^{n-1}) |\Psi_{n}^+\rangle =& \frac{1}{\sqrt{2}} ((\sigma_y \otimes I^{n-2})|\Psi_{n-1}^+\rangle \otimes |1\rangle + (\sigma_y \otimes I^{n-2})|\Phi_{n-1}^-\rangle \otimes |0\rangle) \\
        =& -\frac{1}{\sqrt{2}} (-(-1)^{n+1} \text{i}|\Phi_{n-1}^-\rangle \otimes |1\rangle + (-1)^{n+1} \text{i} |\Psi_{n-1}^+\rangle \otimes |0\rangle) \\
        =& (-1)^{n+1} \text{i}|\Phi_n^-\rangle, 
    \end{aligned}
\end{equation*}
Hence for arbitrary $n$, the following equality holds:
\begin{equation*}
    \begin{aligned}
        (\sigma_y \otimes I^{\otimes (n-2)} \otimes \sigma_y) |\Phi_n^-\rangle =& \frac{1}{\sqrt{2}} ((\sigma_y \otimes I^{\otimes (n-2)})|\Psi_{n-1}^+\rangle \otimes \sigma_y|0\rangle - (\sigma_y \otimes I^{\otimes (n-2)})|\Phi_{n-1}^-\rangle \otimes \sigma_y|1\rangle) \\
        =& \frac{1}{\sqrt{2}} ((-1)^n \text{i} |\Phi_{n-1}^-\rangle \otimes \text{i}|1\rangle - (-1)^{n-1} \text{i} |\Psi_{n-1}^+\rangle \otimes (-\text{i})|0\rangle) \\
        =& \frac{(-1)^n}{\sqrt{2}} ( |\Psi_{n-1}^+\rangle \otimes |0\rangle - |\Phi_{n-1}^-\rangle \otimes |1\rangle ) = (-1)^{n} |\Phi_n^-\rangle \\
        =& (-1)^{k-1} |\Phi_n^-\rangle, \\
        (\sigma_y \otimes I^{\otimes (n-2)} \otimes \sigma_y) |\Psi_n^+\rangle =& \frac{1}{\sqrt{2}} ((\sigma_y \otimes I^{\otimes (n-2)})|\Psi_{n-1}^+\rangle \otimes \sigma_y|1\rangle + (\sigma_y \otimes I^{\otimes (n-2)})|\Phi_{n-1}^-\rangle \otimes |0\rangle) \\
        =& \frac{1}{\sqrt{2}} ((-1)^n \text{i}|\Phi_{n-1}^-\rangle \otimes (-\text{i})|0\rangle + (-1)^{n-1} \text{i}|\Psi_{n-1}^+\rangle \otimes \text{i}|0\rangle) \\
        =& \frac{(-1)^n}{\sqrt{2}} (|\Psi_{n-1}^+\rangle \otimes |0\rangle + |\Phi_{n-1}^-\rangle \otimes |0\rangle) = (-1)^n |\Psi_{n}^+\rangle \\
        =& (-1)^{k-1} |\Psi_{n}^+\rangle. 
    \end{aligned}
\end{equation*}

Therefore, according to the definition of $|\Phi_n\rangle$, Eq.~\eqref{eq:lemma:2} also holds for $k=n-1$.
\end{proof}

\begin{proof}[Proof of Observation~\ref{obv:projectors}]
In \cite{Kaniewski_2016}, it has been proven on 2-qubit and 3-qubit systems that $\left[ T_n (\{\theta_k\}_{k=1}^n), Q \right] = 0$ for all $Q \in \mathcal{G}_{Y_n}$ and $\{\theta_{k}\}_{k=1}^n \in [0, \pi/4]$.
Here we prove the general case by induction. We first prove that $\left[ K(\{\theta_k\}_{k=1}^n), Q \right] = 0$ for all $Q \in \mathcal{G}_{Y_n}$. Since $Q$ is unitary and Hermitian, it suffices to prove $QK(\{\theta_k\}_{k=1}^n)Q = K(\{\theta_k\}_{k=1}^n)$. Now consider the Kraus representation of the $n$-qubit dephasing channel $\Lambda_n \left( \{ \theta_{k} \}_{k=1}^n \right)$: 
\begin{equation*}
    \left[ \Lambda_n \left( \{ \theta_{k} \}_{k=1}^n \right) \right] (\rho) = \sum_{e_1 \cdots e_n \in \{0, 1\}^{n}} \left( \left( \prod_{k=1}^n \frac{1 + (-1)^{e_k} g(\theta_k)}{2} \right) \left( \bigotimes_{k=1}^n \Gamma^{e_k} (\theta_k) \right) \rho \left( \bigotimes_{k=1}^n \Gamma^{e_k} (\theta_k) \right) \right),  
\end{equation*}
where each Kraus operator $\otimes_{k=1}^n \left( \Gamma^{e_k} (\theta_k) \right)$ is a Pauli operator, and hence either commutes or anti-commutes with $Q$. This implies that
\begin{equation*}
    \begin{aligned}
        &Q \left[ \Lambda_n \left( \{ \theta_{k} \}_{k=1}^n \right) \right] (\rho) Q \\
        =&  \sum_{e_1 \cdots e_n \in \{0, 1\}^{n}} Q\left( \left( \prod_{k=1}^n \frac{1 + (-1)^{e_k} g(\theta_k)}{2} \right) \left( \bigotimes_{k=1}^n \Gamma^{e_k} (\theta_k) \right) \rho \left( \bigotimes_{k=1}^n \Gamma^{e_k} (\theta_k) \right) \right) Q \\
        =& \sum_{e_1 \cdots e_n \in \{0, 1\}^{n}} \left( \left( \prod_{k=1}^n \frac{1 + (-1)^{e_k} g(\theta_k)}{2} \right) Q \left( \bigotimes_{k=1}^n \Gamma^{e_k} (\theta_k) \right) \rho \left( \bigotimes_{k=1}^n \Gamma^{e_k} (\theta_k) \right) Q \right) \\
        =& \sum_{e_1 \cdots e_n \in \{0, 1\}^{n}} \left( \left( \prod_{k=1}^n \frac{1 + (-1)^{e_k} g(\theta_k)}{2} \right) \left( \bigotimes_{k=1}^n \Gamma^{e_k} (\theta_k) \right) Q \rho Q \left( \bigotimes_{k=1}^n \Gamma^{e_k} (\theta_k) \right) \right) \\
        =& \left[ \Lambda_n \left( \{\theta_k\}_{k=1}^n \right) \right] (Q \rho Q).
    \end{aligned}
\end{equation*}
Therefore, we only need to prove $[|\Phi\rangle_n \langle\Phi|, Q] = 0$, which is supported by Lemma~\ref{lemma:2}.

Lastly we prove that $\left[ W_n \left( \{ \theta_k \}_{k=1}^n \right), Q \right] = 0$. This can be similarly proven by the definition of $W_n$. $W_n$ is linear combination of multiple operators with each position being either Pauli-$X$ or Pauli-$Z$, which implies that it commutes with a Pauli operator with even number of Pauli-$Y$s (and the remaining positions being all identity $I$s.). Therefore, $W_n$ also commutes with $Q$ for all $Q \in \mathcal{G}_{Y_n}$, which completes the proof.
\end{proof}

We'll use the following projectors to project the inequality onto $2^{n-1}$ disjoint 2-dimensional subspaces:

\begin{equation}
    \label{eq:s:2d-projector}
  P_{\vec{a}} \coloneqq \prod_{k=1}^{n-1} \frac{I^{\otimes n} + (-1)^{a_k} Q_k}{2},
\end{equation}
where $\vec{a} = \{a_k\}_{k=1}^{n-1} \in \{0, 1\}^{n-1}$ is a binary string as index of a projector.
These projectors are $2$-dimensional, because $Q_k$ are independent and orthogonal.
Also, since all operators in $\mathcal{Y}_n$ commute with $T_n\left( \{\theta_k\}_{k=1}^n \right)$, we can prove $T_n\left( \{\theta_k\}_{k=1}^n \right) \geq 0$ by establishing the following operator inequality:
\begin{equation*}
  M_{\vec{a}} \left( \{\theta_k\}_{k=1}^n \right) \coloneqq P_{\vec{a}} T_n \left( \{\theta_k\}_{k=1}^n \right) P_{\vec{a}} \geq 0
\end{equation*}
for all $\vec{a} \in \{0, 1\}^{n-1}$ and $\{\theta_k\}_{k=1}^n \in \left[ 0, \pi/4 \right]^n$.

To compute analytic expression of $M_{\vec{a}} \left( \{\theta_k\}_{k=1}^n \right)$, we use generalized Bloch representation of the dephased state $K\left( \{\theta_k\}_{k=1}^n \right)$, which is derived from the generalized Bloch representation of $|\Phi\rangle_n$.

\begin{observation}[Projector-based representation of $n$-qubit maximally entangled state]
  \label{obv:s:proj-based}
  For integer $n \geq 3$, density operator $|\Phi\rangle_n\langle\Phi|$ can be written
  \begin{equation}
    \label{eq:proj-state}
    |\Phi\rangle_n \langle\Phi| = \prod_{O_k \in \mathcal{O}_n} \frac{I^{\otimes n} + O_k}{2}.
  \end{equation}
  Here $\mathcal{O}_n$ is a set of $n$ Hermitian unitary operators:
  \begin{equation*}
    \mathcal{O}_n \coloneqq \mathcal{Y}_{n}^\mathrm{AL} \cup \left\{ O_1 \coloneqq H \otimes \sigma_x^{\otimes (n-1)} \right\},
  \end{equation*}
  where
  \begin{equation*}
    \mathcal{Y}_{n}^\mathrm{AL} = \left\{ O_k \coloneqq (-1)^k \otimes I^{\otimes (k-2)} \otimes \sigma_y \otimes I^{\otimes (n-k)} \middle| k=2, \cdots, n \right\}.
  \end{equation*}
  Also, the set $\mathcal{O}_n$ satisfies:
  \begin{enumerate}
    \item $\forall O_k \in \mathcal{O}_n, \text{tr}[O_k] = 0$;
    \item $\forall O_k \in \mathcal{O}_n, O_k |\Phi\rangle_n = |\Phi\rangle_n$;
    \item $\forall O_k, O_l \in \mathcal{O}_n$ and $O_k \neq O_l$, we have $[O_k, O_l] = 0$.
  \end{enumerate}
\end{observation}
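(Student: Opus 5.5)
The plan is to show that the right-hand side of Eq.~\eqref{eq:proj-state} is a rank-one projector whose range contains $|\Phi\rangle_n$; it then must equal $|\Phi\rangle_n\langle\Phi|$. I read the definition of $\mathcal{Y}_n^{\mathrm{AL}}$ as $O_k = (-1)^k\,\sigma_y \otimes I^{\otimes(k-2)} \otimes \sigma_y \otimes I^{\otimes(n-k)}$, which is the only reading that makes $O_k$ Hermitian and unitary. With this reading $O_k = (-1)^k Q_{k-1}$, where $Q_{k-1}\in\mathcal{Y}_n$ is from Observation~\ref{obv:projectors}. I will check properties 1--3 first and then derive the product formula from them.

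\emph{Properties 1--3.} Property 1 is immediate, since each $O_k$ is a tensor product with at least one traceless factor ($\sigma_y$ or $H$). For property 2, Lemma~\ref{lemma:2} gives $Q_{k-1}|\Phi_n\rangle = (-1)^{k-2}|\Phi_n\rangle$. Hence $O_k|\Phi_n\rangle = (-1)^{k}(-1)^{k-2}|\Phi_n\rangle = |\Phi_n\rangle$. For $O_1 = H\otimes\sigma_x^{\otimes(n-1)}$, this operator is exactly $S_1\in\mathcal{S}_n$, see Eq.~\eqref{eq:86}, and it stabilizes $|\Phi_n\rangle$ by Lemma~\ref{lemma:angle-symmetry}. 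For property 3, take two $Q$'s. They overlap only at site $1$ with the same factor $\sigma_y$, so they commute. Now take $O_1$ and $Q_{k-1}$. They anticommute at site $1$, because $H\sigma_yH = -\sigma_y$. They also anticommute at site $k$, because $\sigma_x$ anticommutes with $\sigma_y$. All other sites commute. Two sign flips cancel, so $[O_1,O_k]=0$.

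\emph{Product formula.} Since the $O_k$ are commuting Hermitian unitaries, each $(I+O_k)/2$ is a projector. The projectors commute, so their product $P$ is the projector onto the joint $+1$ eigenspace. By property 2 this eigenspace contains $|\Phi_n\rangle$. It remains to show $\operatorname{tr}P = 1$. Expanding gives $\operatorname{tr}P = 2^{-n}\sum_{\vec e\in\{0,1\}^n}\operatorname{tr}\prod_k O_k^{e_k}$. So I need every nonempty product to be traceless. There are two cases.
\begin{itemize}
\item Products of $Q$'s alone are, up to sign, nontrivial Pauli strings. The site-$1$ factor is $\sigma_y$ or $I$, and at least one site $k\ge 2$ carries $\sigma_y$. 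These are therefore traceless.
\item Products containing $O_1$ carry at site $1$ either $H$ or $H\sigma_y\propto \sigma_z-\sigma_x$. At every other site they carry $\sigma_x$ or $\sigma_x\sigma_y = i\sigma_z$. Every factor is traceless, so the product is traceless.
\end{itemize}
Thus only the identity term contributes, and $\operatorname{tr}P = 2^{-n}\cdot 2^n = 1$. So $P$ is a rank-one projector containing $|\Phi_n\rangle$, hence $P = |\Phi\rangle_n\langle\Phi|$.

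The main obstacle is the sign bookkeeping. First, the sign $(-1)^k$ in $O_k$ must match the eigenvalue $(-1)^{k-2}$ supplied by Lemma~\ref{lemma:2}. Second, the independence and trace argument must correctly account for the non-Pauli factor $H$ in the products. The rest is routine.
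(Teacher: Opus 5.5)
Your proposal is correct and follows the paper's route. Like the paper, you read $O_k=(-1)^kQ_{k-1}$, check tracelessness and pairwise commutation, and obtain stabilization of $|\Phi_n\rangle$ from Lemmas~\ref{lemma:angle-symmetry} and~\ref{lemma:2}, using $O_1=S_1$. Your explicit check that every nonempty product $\prod_k O_k^{e_k}$ is traceless, so that the product of projectors has trace one and is therefore rank one, supplies the justification that the paper's proof leaves implicit behind its ``independence'' and pairwise-orthogonality ($\operatorname{tr}[O_1^\dagger O_k]=0$) remarks.
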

\begin{proof}
    It is easy to verify that $\mathcal{Y}_{n}^{\mathrm{AL}}$ is a set of independent Pauli operators, since each $O_k$ completely decides the exponent of Pauli-$Y$ on $k$-th qubit.
    Also, note that $\mathrm{tr}[O_1^\dagger O_k] = 0$ for any $k=2, \cdots, n$, which can be simply derived from $\mathrm{tr}[YH]=0$. Also, $[O_1^\dagger, O_k] = 0$ for any $k=2, \cdots, n$ since Pauli-$Y$ anti-commutes with $H$ and Pauli-$X$.
    Lastly, according to Lemma~\ref{lemma:angle-symmetry} and \ref{lemma:2}, $O_k$ stabilizes $|\Phi_n\rangle$ for all $k=1, \cdots, n$.
    Therefore, we can use Eq.~\eqref{eq:proj-state} to determine the density operator $|\Phi\rangle_n \langle \Phi|$.
\end{proof}

Now we can calculate the generalized Bloch representation of $|\Phi\rangle_n$.

\begin{observation}
  \label{obv:s:bloch-coef}
  Denote the generalized Bloch representation of $|\Phi\rangle_n$ by
  \begin{equation*}
    |\Phi\rangle_n\langle\Phi| = \frac{1}{2^n} \sum_{k=0}^{4^n - 1} r_k P_k.
  \end{equation*}
  Denote the set of indices of those $r_k \neq 0$ by
  \begin{equation*}
    \mathfrak{K} \coloneqq \left\{ k \in \{0, 1, 2, \cdots, 4^n - 1\} \middle| r_k \neq 0 \right\}.
  \end{equation*}
  Then we have
  \begin{equation*}
    P_{\mathfrak{K}} \coloneqq \{P_k|k \in \mathfrak{K}\} = \mathcal{G}_Y \cup \mathcal{P}_{XZ}.
  \end{equation*}
  Here, $\mathcal{P}_XZ$ is the set of Pauli operators that are tensor products of $\sigma_x$ and $\sigma_z$ only.
  That is,
  \begin{equation*}
    \mathcal{P}_{XZ} = \left\{ P = \bigotimes_{k=1}^n \left( \frac{\sigma_x + \sigma_z}{2} + (-1)^{s_k} \frac{\sigma_x - \sigma_z}{2} \right) \middle| \vec{s} \coloneqq \{s_k\}_{k=1}^n \in \{0, 1\}^n \right\}.
  \end{equation*}
  The analytic expression of Bloch coefficients are presented in the full proof.
\end{observation}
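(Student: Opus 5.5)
Starting from Observation~\ref{obv:s:proj-based}, I would expand the product $\prod_{O_k\in\mathcal{O}_n}\frac{1}{2}(I+O_k)$ as a sum over subsets $A\subseteq[n]$ of $2^{-n}\prod_{k\in A}O_k$. The operators commute pairwise, so the order of the factors is irrelevant.

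\textbf{Subsets without $O_1$.} If $1\notin A$, the product $\prod_{k\in A}O_k$ is, up to sign, a Pauli string. Its $\sigma_y$'s sit exactly on the qubits selected by the generators $\sigma_y\otimes I\otimes\cdots\otimes\sigma_y\otimes\cdots$. It is therefore an element of the group generated by $\mathcal{Y}_n$, i.e.\ of $\mathcal{G}_Y$. Distinct subsets give distinct elements, since each $O_k$ ($k\ge2$) alone fixes the $\sigma_y$ on qubit $k$. These $2^{n-1}$ terms thus produce every element of $\mathcal{G}_Y$ exactly once. The coefficient is $\pm1$, with the sign read off from the factors $(-1)^k$, and it is nonzero. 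None of these terms can cancel against terms from the other family, because their Pauli supports differ.

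\textbf{Subsets containing $O_1$.} If $1\in A$, I would write $O_1=\frac{1}{\sqrt2}(\sigma_x+\sigma_z)\otimes\sigma_x^{\otimes(n-1)}$ and multiply by $\prod_{k\in A\setminus\{1\}}O_k$. On qubit $k\ge2$, the factor is $\sigma_x$ if $k\notin A$ and $\sigma_x\sigma_y=i\sigma_z$ if $k\in A$. On qubit $1$, it is $(\sigma_x+\sigma_z)$ multiplied by $\sigma_y^{|A|-1}$, which becomes either $\sigma_x+\sigma_z$ or $\pm i(\sigma_z-\sigma_x)$ depending on the parity of $|A|-1$.

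The resulting product is a combination of two strings in $\mathcal{P}_{XZ}$. Qubits $2,\dots,n$ carry the pattern fixed by $A\setminus\{1\}$, and qubit $1$ carries either $\sigma_x$ or $\sigma_z$. This gives a bijection between the $2^{n-1}$ subsets and the $2^{n-1}$ patterns on qubits $2,\dots,n$, so each string in $\mathcal{P}_{XZ}$ arises from exactly one subset.

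The coefficient of each string is $\pm 2^{-n}/\sqrt2$ times a phase. The phases $i^{|A|-1}$ must combine into real values by Hermiticity; the parity constraint on $|A|$ enforces this. Every coefficient then has modulus $2^{-n}\cdot 2^{1/2}\cdot 2^{-1}\cdot 2^{n}/2^n$, appropriately normalized to $r_k=\pm\sqrt2^{\,}$-type values, and in particular is nonzero.

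\textbf{Conclusion.} Together the two families account for all $2^{n-1}+2^n$ nonzero coefficients, and every other Pauli coefficient vanishes. Hence $P_{\mathfrak K}=\mathcal{G}_Y\cup\mathcal{P}_{XZ}$.

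\textbf{Main obstacle.} The delicate step is the sign and phase bookkeeping in the second family. I would handle it by induction on $n$ using the recursion of $|\Phi_n\rangle$. As a check, $\operatorname{tr}[|\Phi_n\rangle\langle\Phi_n|W_n]=2^{(n+1)/2}$ pins the $\sigma_x/\sigma_z$ coefficients against the MABK coefficients, which are themselves explicit in the recursive definition of $W_n$.
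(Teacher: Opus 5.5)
Your proposal is correct and is essentially the paper's argument. Both expand the stabilizer product of Observation~\ref{obv:s:proj-based} over subsets, split on whether $O_1$ is included, and use the independence of the $\sigma_y$-generators to show that each string of $\mathcal{G}_Y\cup\mathcal{P}_{XZ}$ comes from exactly one subset, so nothing cancels; the paper phrases this dually, computing $\operatorname{tr}[P_k\prod_t O_t^{s_t}]$ and showing that a unique $\vec{s}$ survives.

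Two small corrections, neither affecting the support claim: the coefficients on $\mathcal{P}_{XZ}$ are $r_k=\pm1/\sqrt2$ rather than $\pm\sqrt2$ (purity gives $2^{-n}(2^{n-1}+2^n\cdot\tfrac12)=1$), and the signs you flag as the main obstacle are not needed here. In any case those signs follow directly from $(-1)^{\sum_{k\in A\setminus\{1\}}k}$ and the phases in $\sigma_x\sigma_y=i\sigma_z$ and $H\sigma_y=\tfrac{i}{\sqrt2}(\sigma_z-\sigma_x)$, with no induction required.
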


\begin{figure}[htbp]
    \centering
    \includegraphics[width=0.8\textwidth]{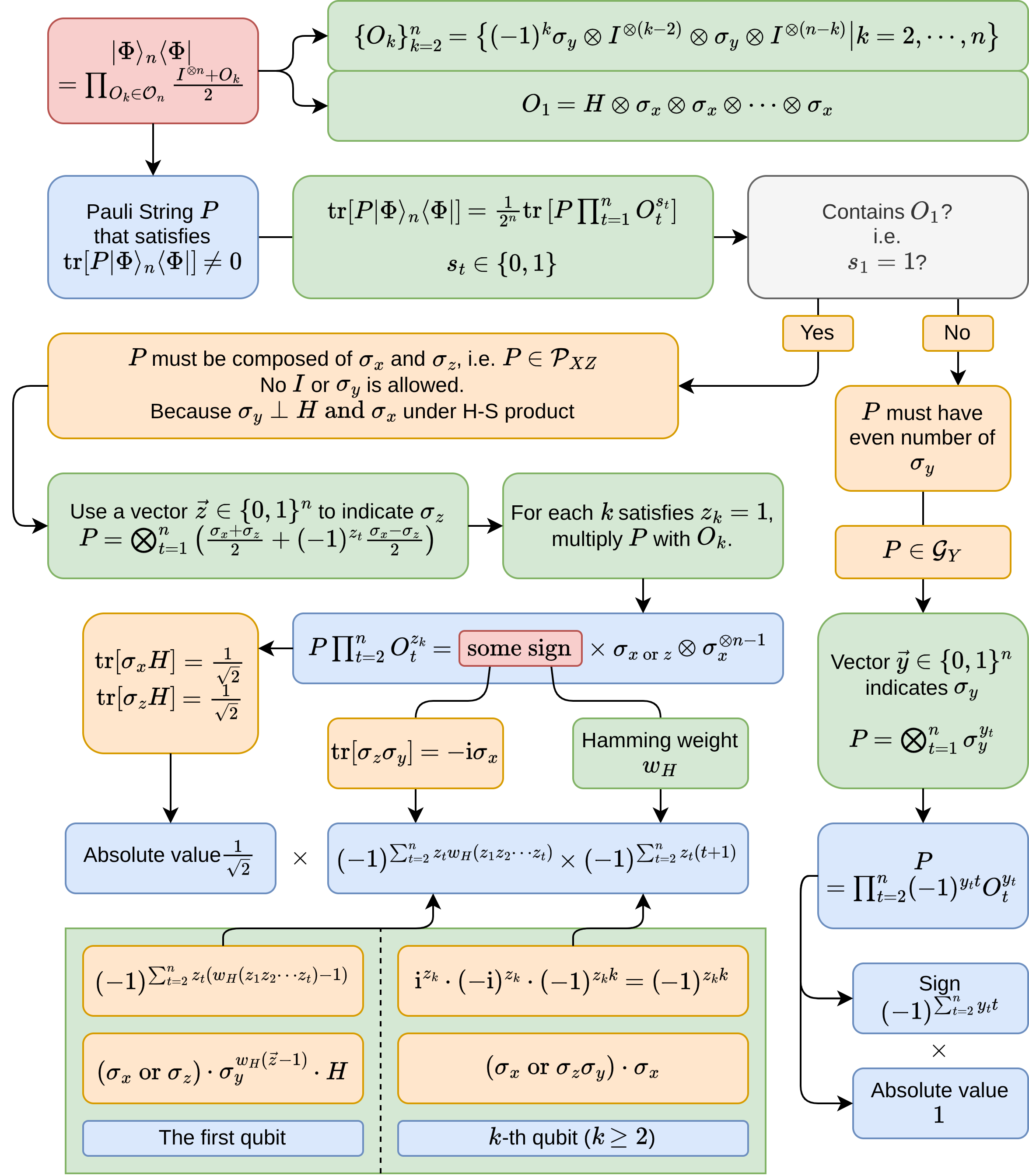}
    \caption{Sketch of proof of Observation~\protect\ref{obv:s:bloch-coef}.}
    \label{fig:sketch-bloch-coef}
\end{figure}

\begin{proof}
Let $P_k \in \mathcal{P}_n$ be an arbitrary Pauli operator.
Then the proposition is equivalent to: Case 1, if $P_k$ takes Pauli-$X$ or Pauli-$Z$ being at some qubit, then the rest of all position should also be Pauli-$X$ or Pauli-$Z$, otherwise $r_k = 0$;
Case 2, if $P_k$ takes Pauli-$Y$ on some qubit, then it must has even number Pauli-$Y$s and the rest of all positions are all identities, otherwise $r_k = 0$.
We start from the former case.

Suppose $P_k$ takes Pauli-$X$ on $l$-th qubit.
Consider the expansion of the HS-inner product: 
\begin{equation*}
    \mathrm{tr}[P_k |\Phi\rangle_n \langle\Phi|] = \sum_{\vec{s} \in \{0, 1\}^n} \mathrm{tr} \left[ P_k \prod_{t=1}^n O_t^{s_t} \right]. 
\end{equation*}
Here $\vec{s} = (s_1 s_2 \cdots s_n)$, and for some operator $O$ we define $O^0 = I$ and $O^1 = O$.
Use the following set to filter the product with non-zero trace: 
\begin{equation*}
    \mathscr{S}_k := \left\{ \vec{s} \in \{0, 1\}^n \middle| \mathrm{tr} \left[ P_k \prod_{t=1}^n O_t^{s_t} \right] \neq 0 \right\}.
\end{equation*}
Assert: If $\mathscr{S}_k \neq \emptyset$, then for all $\vec{s} \in \mathscr{S}_k, s_1 = 1$.
Otherwise, consider the $l$-th position of the product $P_k \prod_{t=1}^{n} O_t^{s_t}$, it must be the product of multiple Pauli-$Y$ and one Pauli-$X$, which is traceless.
This contradicts to the definition of $\mathscr{S}_k$.
Now assume on some qubit $l_0 \neq l$, $P_k$ takes Pauli-$Y$ or identity $I$ on corresponding position, then the product $P_k \prod_{t=1}^{n} O_t^{s_t}$ is either product of multiple Pauli-$Y$s and a Hadamard gate $H$ ($l_0 = 1$) or product of multiple Pauli-$Y$s and a Pauli-$X$ ($l_0 \neq 1$).
In both cases the product is traceless, which contradicts to the definition of $\mathscr{S}_k$.
Hence on every position $P_k$ must take Pauli-$X$ or Pauli-$Z$, i.e. $P_k \in \mathcal{P}_{XZ}$.
Similarly we can prove the case that $P_k$ has Pauli-$Z$ on its $l$-th qubit.

To actually compute $r_k$ for $P_k \in \mathcal{P}_{XZ}$, we establish a deterministic construction of $\vec{s}$, and prove that $\mathscr{S}_k = \{\vec{s}\}$.
Note that $\mathrm{tr}[\sigma_x H] = \mathrm{tr}[\sigma_z H] = \sqrt{2}$, and the idea is to use operators in $\mathcal{Y}_{n}^{\mathrm{AL}}$ to transform each Pauli-$Z$ to Pauli-$X$ on each qubit except the first one.
Since each element in $\mathcal{Y}_n$ is independent, there will be one unique subset of $\mathcal{Y}_n$ such that the product $P_k \prod_{t=2}^{n} O_t^{s_t}$ takes Pauli-$X$ on all qubits except the first one, which implies the final product $P_k \prod_{t=1}^{n} O_t^{s_t}$ has trace of absolute value $2^{n-\frac{1}{2}}$. 

To compute the sign, for $P_k \in \mathcal{P}_{XZ}$, define a bit string $\vec{z} = (z_1 \cdots z_n) \in \{0, 1\}^n$, where each bit $z_t$ takes $1$ if and only if $P_k$ takes Pauli-$Z$ on the $t$-th qubit. 
For two natural numbers $a \leq b$, denote the slice of $\vec{z}$ from $z_a$ to $z_b$ by $\vec{z}_{a,b} = (z_a z_{a+1} \cdots z_b)$.
Let $w_H(\vec{z}_{a,b})$ be the Hamming weight (number of $1$s in this string) of $\vec{z}_{a, b}$, then we have 
\begin{equation*}
    \begin{aligned}
        &P_k \prod_{t=2}^{n} O_t^{z_t} \\
        =& \left( \prod_{t=2}^n (-1)^{z_t z_1} (-1)^{z_t (w_H(\vec{z}_{2, t}) - 1)} i^{z_t} (-i)^{z_t} (-1)^{z_t t} \right) \left( \frac{\sigma_x + \sigma_z}{2} + (-1)^{z_1 + w_H(\vec{z}_{2, n})} \frac{\sigma_x - \sigma_z}{2} \right) \otimes \sigma_x^{\otimes (n-1)} \\
        =& \left( \prod_{t=2}^n (-1)^{z_t (w_H(\vec{z}_{1, t}) - 1)} (-1)^{z_t t} \right) \left( \frac{\sigma_x + \sigma_z}{2} + (-1)^{w_H(\vec{z})} \frac{\sigma_x - \sigma_z}{2} \right) \otimes \sigma_x^{\otimes (n-1)} \\
        =& \left( (-1)^{\sum_{t=2}^n z_t (w_H(\vec{z}_{1, t})-1)} (-1)^{\sum_{t=2}^n z_t t} \right) \left( \frac{\sigma_x + \sigma_z}{2} + (-1)^{w_H(\vec{z})} \frac{\sigma_x - \sigma_z}{2} \right) \otimes \sigma_x^{\otimes (n-1)} \\ 
        =& \left\{ \begin{matrix*}[l]
            \left( (-1)^{\# \{t \in [n] - \{1\} | t=1\ \mathrm{mod}\ 2, z_t = 1\}} (-1)^{\left\lceil \frac{w_{H}(\vec{z}_{2, n})}{2} \right\rceil} \right) \left( \frac{\sigma_x + \sigma_z}{2} + (-1)^{w_H(\vec{z})} \frac{\sigma_x - \sigma_z}{2} \right) \otimes \sigma_x^{\otimes (n-1)}, \quad & n\ \mathrm{odd}, \\
            \left( (-1)^{\# \{t \in [n] - \{1\} | t=1\ \mathrm{mod}\ 2, z_t = 1\}} (-1)^{\left\lfloor \frac{w_{H}(\vec{z}_{2, n})}{2} \right\rfloor} \right) \left( \frac{\sigma_x + \sigma_z}{2} + (-1)^{w_H(\vec{z})} \frac{\sigma_x - \sigma_z}{2} \right) \otimes \sigma_x^{\otimes (n-1)}, \quad & n\ \mathrm{even}. 
        \end{matrix*} \right.
    \end{aligned}
\end{equation*}
Here $\#$ counts the number of elements in a set.
It is not hard to see that $\vec{s} = 1z_2z_3\cdots$ is the only one element in $\mathscr{S}_k$.
The uniqueness comes from the independency of $\mathcal{O}_n$. 
Therefore, we have 
\begin{equation}
    \mathrm{tr} \left[ P_k O_1 \prod_{t=2}^{n} O_t^{z_t} \right] = \frac{1}{\sqrt{2}} (-1)^{\sum_{t=2}^n z_t (w_H(\vec{z}_{1, t})-1)} (-1)^{\sum_{t=2}^n z_t t} = \frac{1}{\sqrt{2}} (-1)^{\sum_{t=2}^n z_t w_H(\vec{z}_{1, t})} (-1)^{\sum_{t=2}^n z_t (t + 1)}. 
    \label{eq:90}
\end{equation}

Next, we consider $P_k$ being a Pauli operator takes Pauli-$Y$ on some position $l$.
Then obviously it cannot take Pauli-$X$ or Pauli-$Z$ on any other positions, otherwise it should be in $\mathcal{P}_{XZ}$ according to discussions above, contradiction.
Hence $P_k$ must be tensor product of multiple Pauli-$Y$s and identity $I$s.
Then for all $\vec{s} \in \mathscr{S}_k$, $s_0 = 0$. Otherwise we have demonstrated that the product will be traceless on qubits that $P_k$ takes Pauli-$Y$.
Now we have $\prod_{t=1}^{n} O_t^{s_t} \in \mathcal{G}_{Y}$, which implies that $P_k \in \mathcal{G}_{Y}$.
Otherwise there will be Pauli-$Y$ that cannot be canceled, which leads to zero trace.

To show $P_k \in P_{\mathfrak{K}}$, we also calculate the Bloch coefficient of each $P_k \in \mathcal{G}_Y$.
Similar to $\mathcal{P}_{XZ}$, define a bit string $\vec{y} \in \{0, 1\}^n$, where each bit $y_t$ takes $1$ if and only if $P_k$ takes Pauli-$Y$ on the $t$-th qubit.
Then obviously $\vec{s} = 0y_2y_3 \cdots y_n$ is the only one element in $\mathscr{S}_k$, since the representation of any $P_k \in \mathcal{G}_Y$ by $\mathcal{Y}_{n}^\mathrm{AL}$ is unique. In conclusion, we have 
\begin{equation*}
    \mathrm{tr}\left[ P_k \prod_{t=2}^n O_t^{y_t}\right] = \mathrm{tr} \left[ (-1)^{\sum_{t=2}^n y_t t} I^{\otimes n} \right] = (-1)^{\sum_{t=2}^n y_t t} 2^n. 
\end{equation*}
\end{proof}

The closed form of generalized Bloch representation of dephased state $K\left( \{\theta_k\}_{k=1}^n \right)$ can be easily established now.

\begin{observation}
  \label{obv:s:dephased-bloch}
  Let $|\Phi\rangle_n \langle\Phi| = 2^{-n} \sum_k r_k P_k$ be the generalized Bloch representation of $|\Phi\rangle_n$, the dephased state $K \left( \{\theta\}_{k=1}^n \right)$ defined in Proposition~\ref{prop:main} can be written
  \begin{equation*}
    K \left( \{\theta\}_{k=1}^n \right) = 2^{-n} \sum_{P = \bigotimes_{k=1}^n P_k \in \mathcal{P}_n} \left( \bigotimes_{k=1}^n r_k g(\theta_k)^{\mathbf{1}\{[\sigma_x, P_k] \neq 0\}} P_k \right),
  \end{equation*}
  where $\mathcal{P}_n$ is the set of all $n$-qubit Pauli strings, $\mathbf{1}\{\cdot\}$ is an indicator function defined by
  \begin{equation*}
    \mathbf{1}\{[\sigma_x, P_k] \neq 0\} = \begin{cases}
      1 &\text{\ if\ } [\sigma_x, P_k] \neq 0, \\
      0 &\text{\ if\ } [\sigma_x, P_k] = 0.
    \end{cases}
  \end{equation*}
\end{observation}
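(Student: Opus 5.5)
The plan is to exploit the product structure of the dephasing channel. By definition, $\Lambda(\{\theta_k\}) = \bigotimes_{k=1}^n \Lambda(\theta_k)$, and each single-qubit factor is
\[
[\Lambda(\theta)](\rho) = \tfrac{1+g(\theta)}{2}\rho + \tfrac{1-g(\theta)}{2}\Gamma(\theta)\rho\Gamma(\theta).
\]
Since the channel is linear and factorizes, it suffices to determine how a single factor $\Lambda(\theta_k)$ acts on each single-qubit Pauli $P_k \in \{I,\sigma_x,\sigma_y,\sigma_z\}$. We then apply this tensor-factor by tensor-factor to the Bloch expansion $|\Phi\rangle_n\langle\Phi| = 2^{-n}\sum r_P P$ from Observation~\ref{obv:s:bloch-coef}. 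This gives
\[
K = 2^{-n}\sum_P r_P \bigotimes_k [\Lambda(\theta_k)](P_k).
\]

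The single-qubit computation is elementary. For Paulis $\Gamma$ and $P$ we have $\Gamma P \Gamma = \pm P$, with the sign $+$ if they commute and $-$ if they anticommute. Hence
\[
[\Lambda(\theta)](P) = \tfrac{1+g}{2}P \pm \tfrac{1-g}{2}P.
\]
This equals $P$ when $[\Gamma(\theta),P]=0$, and $g(\theta)P$ otherwise. So each Pauli is an eigenvector of the single-qubit channel, with eigenvalue $1$ or $g(\theta_k)$.

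It remains to check that the commutation criterion can be phrased with $\sigma_x$, as the statement does, uniformly in $\theta$. By Observation~\ref{obv:angle-symmetry} together with the symmetry $g(\theta)=g(\pi/2-\theta)$, we may restrict to $\theta_k\in[0,\pi/4]$. On that range $\Gamma(\theta_k)=\sigma_x$, so the indicator $\mathbf{1}\{[\sigma_x,P_k]\neq 0\}$ is exactly the condition for picking up the factor $g(\theta_k)$. Multiplying these eigenvalues over the tensor factors and reinserting the coefficients $r_P$ yields the displayed formula. In it the product $\prod_k g(\theta_k)^{\mathbf{1}\{\cdot\}}$ multiplies $r_P P$; the paper's notation $\bigotimes_k r_k\, g(\theta_k)^{\cdots} P_k$ should be read this way, with $r_k$ standing for the global coefficient $r_P$.

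The only point needing care is the domain of $\theta_k$. For $\theta_k\in(\pi/4,\pi/2]$ the correct criterion would involve $\sigma_z$. I would therefore state explicitly that the formula is used on $[0,\pi/4]^n$, where the positivity check $M_{\vec a}\ge 0$ is performed, with the reduction justified by Observation~\ref{obv:angle-symmetry}. Alternatively, one can replace $\sigma_x$ by $\Gamma(\theta_k)$ in the indicator to obtain the formula on the full domain $[0,\pi/2]^n$. Apart from this bookkeeping, the proof is direct.
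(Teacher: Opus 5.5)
Your proof is correct and matches the paper's argument: the paper computes $[\Lambda(\theta)](P) = g(\theta)^{\mathbf{1}\{[\Gamma(\theta), P] \neq 0\}} P$ for single-qubit Paulis and concludes by linearity and the tensor-product structure of the channel. Your remark that the indicator may be written with $\sigma_x$ only on $[0,\pi/4]^n$, where $\Gamma(\theta_k)=\sigma_x$, and must use $\Gamma(\theta_k)$ in general, is a correct clarification, since the paper's proof actually derives the $\Gamma(\theta)$ version while the statement writes $\sigma_x$.
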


\begin{figure}[htbp]
    \centering
    \includegraphics[width=0.5\textwidth]{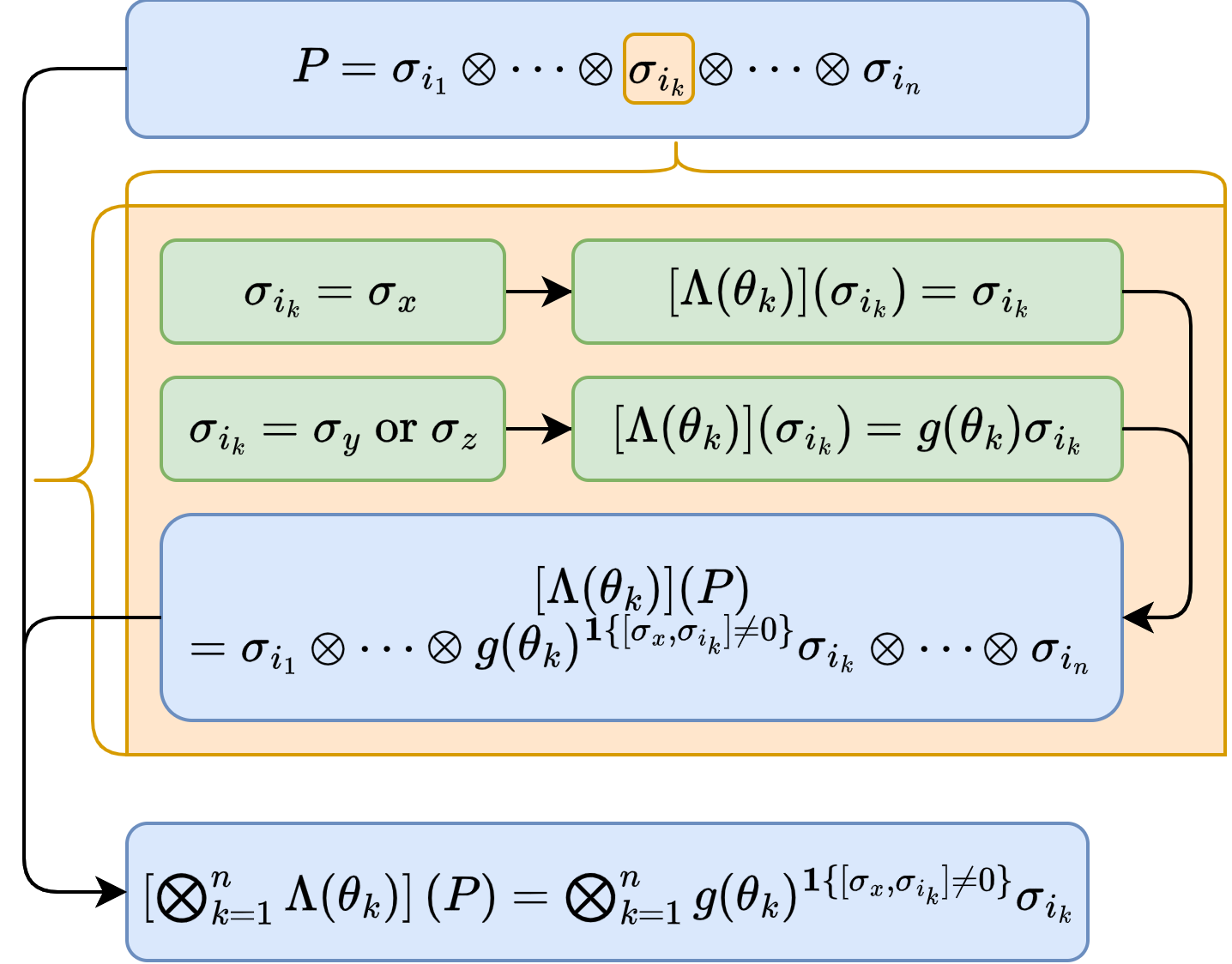}
    \caption{Sketch of proof of Observation~\protect\ref{obv:s:dephased-bloch}.}
    \label{fig:sketch-dephased-bloch}
\end{figure}

\begin{proof}
By definition of qubit dephasing channel, for single qubit Pauli operator $P \in \{\sigma_x, \sigma_y, \sigma_z\}$, we have 
\begin{equation*}
    \begin{aligned}
        [\Lambda(\theta)] (P) = &\frac{1 + g(\theta)}{2} P + \frac{1 - g(\theta)}{2} \Gamma(\theta) P \Gamma(\theta) \\
        =& \left( \frac{1 + g(\theta)}{2} P + (-1)^{\mathbf{1}\{[\Gamma(\theta), P] \neq 0\}} \frac{1 - g(\theta)}{2} \right) P = g(\theta)^{\mathbf{1}\{[\Gamma(\theta), P] \neq 0\}} P.
    \end{aligned}
\end{equation*}
By linearity of quantum channels and the completeness and orthogonality of Pauli basis, the proof is completed.
\end{proof}

\subsection{Calculation of Positivity}

Since $M_{\vec{a}} (\{\theta_k\}_{k=1}^n)$ is of rank at most $2$, we can prove it is PSD by the positivity:
\begin{equation}
  \label{eq:positivity}
  \lambda_{\vec{a}}\left( \{\theta_k\}_{k=1}^n \right) \coloneqq \left( \text{tr} \left[ M_{\vec{a}}\left( \{\theta_k\}_{k=1}^n \right) \right] \right)^2 - \text{tr} \left[ M_{\vec{a}}^2\left( \{\theta_k\}_{k=1}^n \right) \right].
\end{equation}
Then $\lambda_{\vec{a}}\left( \{\theta_k\}_{k=1}^n \right) \geq 0$ implies that $M_{\vec{a}} \left( \{\theta_k\}_{k=1}^n \right) \geq 0$, if additionally $\text{tr}\left[M_{\vec{a}}\left( \{\theta_k\}_{k=1}^n \right) \right] \geq 0$ for all $\{\theta_k\}_{k=1}^n \in [0, \pi/4]^n$.

\begin{lemma}
  \label{lemma:positive-trace}
    The following inequality holds for all $\mu \leq 2^{-(n-1)} - 2^{-1}$:
  \begin{equation*}
    \text{tr}\left[M_{\vec{a}}\left( \{\theta_k\}_{k=1}^n \right) \right] \geq 0, \qquad \forall \{\theta_k\}_{k=1}^n \in \left[0, \frac{\pi}{4}\right]^n.
  \end{equation*}
\end{lemma}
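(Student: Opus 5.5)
The plan is to compute $\text{tr}\bigl[M_{\vec{a}}\bigr]$ explicitly in the Pauli basis. Write $T_n = K - s_n W_n - \mu I$, so that
\[
\text{tr}[M_{\vec{a}}] = \text{tr}[P_{\vec{a}} K] - s_n\,\text{tr}[P_{\vec{a}} W_n] - \mu\,\text{tr}[P_{\vec{a}}],
\]
using $P_{\vec{a}}^2 = P_{\vec{a}}$ and cyclicity of the trace. I will evaluate the three terms separately.

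\textbf{Last two terms.} Expanding Eq.~\eqref{eq:s:2d-projector} gives
\[
P_{\vec{a}} = 2^{-(n-1)} \sum_{Q \in \mathcal{G}_Y} \chi_{\vec{a}}(Q)\, Q,
\]
with signs $\chi_{\vec{a}}(Q) \in \{\pm 1\}$ and $\chi_{\vec{a}}(I) = 1$. Hence $\text{tr}[P_{\vec{a}}] = 2$, since only the identity term has nonzero trace.

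For $W_n$, every term is a tensor product of $\sigma_x$ and $\sigma_z$ on all $n$ sites. Its product with any $Q \in \mathcal{G}_Y$ is therefore a non-identity Pauli string up to phase: sites where $Q$ has $\sigma_y$ give $\sigma_x\sigma_y$ or $\sigma_z\sigma_y$, and the other sites keep $\sigma_x$ or $\sigma_z$. So every such product is traceless, and $\text{tr}[P_{\vec{a}} W_n] = 0$. The slope $s_n$ thus drops out of the trace condition entirely.

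\textbf{The $K$ term.} By Observation~\ref{obv:s:bloch-coef}, the Bloch support of $|\Phi_n\rangle\langle\Phi_n|$ is $\mathcal{G}_Y \cup \mathcal{P}_{XZ}$. Only the components along $\mathcal{G}_Y$ can survive the trace against $P_{\vec{a}}$, because $\mathcal{P}_{XZ}$ strings times elements of $\mathcal{G}_Y$ are traceless by the same argument as above.

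By Observation~\ref{obv:s:dephased-bloch}, the dephasing multiplies each $\sigma_y$ factor by $g(\theta_k)$, since $\sigma_y$ anticommutes with both possible choices $\Gamma(\theta) \in \{\sigma_x, \sigma_z\}$. The Bloch coefficients $r_Q$ on $\mathcal{G}_Y$ are $\pm 1$, by the explicit formula $(-1)^{\sum_t y_t t}$ in that proof. Therefore
\[
\text{tr}[P_{\vec{a}} K] = 2^{-(n-1)}\Bigl(1 + \sum_{Q \in \mathcal{G}_Y \setminus \{I\}} \pm \prod_{k:\,y_k(Q)=1} g(\theta_k)\Bigr).
\]

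\textbf{Bounding the sum.} For $\theta \in [0,\pi/4]$, we have $\sin\theta + \cos\theta - 1 \in [0, \sqrt{2}-1]$, so $g(\theta) \in [0,1]$. Each nonidentity term then has modulus at most $1$, and there are $2^{n-1}-1$ of them. This gives
\[
\text{tr}[P_{\vec{a}} K] \geq 2^{-(n-1)}\bigl(1 - (2^{n-1}-1)\bigr) = 2^{-(n-2)} - 1.
\]
Combining the three terms,
\[
\text{tr}[M_{\vec{a}}] \geq 2^{-(n-2)} - 1 - 2\mu,
\]
which is nonnegative exactly when $\mu \leq 2^{-(n-1)} - 2^{-1}$. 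This is the claimed threshold.

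The only points needing care are the sign bookkeeping and confirming that the $\mathcal{G}_Y$ Bloch coefficients have unit modulus. The sign bookkeeping is harmless because I only use absolute values. The unit-modulus fact is exactly what Observation~\ref{obv:s:bloch-coef} provides. As an alternative check, $K \geq 0$ already gives $\text{tr}[P_{\vec{a}}K] \geq 0$, which yields the condition $\mu \leq 0$. The stated bound follows from the cruder term-by-term estimate above.
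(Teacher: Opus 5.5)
Your proof is correct and follows the paper's argument: you remove the $W_n$ term because $\mathcal{P}_{XZ}$ strings are trace-orthogonal to $\mathcal{G}_Y$, use $\operatorname{tr}[P_{\vec a}]=2$, and write $\operatorname{tr}[P_{\vec a}K]$ as $2^{-(n-1)}$ times $2^{n-1}$ signed products of $g(\theta_k)\in[0,1]$ whose identity term is $1$; bounding the other $2^{n-1}-1$ terms in absolute value then gives $\operatorname{tr}[M_{\vec a}]\geq 2^{-(n-2)}-1-2\mu$. Your ``alternative check'' is in fact a complete and shorter proof of a stronger statement: $K$ is a density operator under a product of dephasing channels, so $\operatorname{tr}[P_{\vec a}KP_{\vec a}]\geq 0$ and hence $\operatorname{tr}[M_{\vec a}]\geq -2\mu\geq 0$ for every $\mu\leq 0$, which covers the stated range because the threshold $2^{-(n-1)}-2^{-1}$ is nonpositive for every $n\geq 2$.
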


\begin{proof}
Only need to consider trace of $K\left(\left\{ \theta_k \right\}_{k=1}^n \right)$ under projection $P_{\vec{a}}$. We have
\begin{equation*}
    \begin{aligned}
        & \mathrm{tr} \left[ M_{\vec{a}} \right] = \mathrm{tr} \left[ P_{\vec{a}} T\left(\left\{ \theta_k \right\}_{k=1}^n \right) P_{\vec{a}} \right] \\ 
        =& \mathrm{tr} \left[ P_{\vec{a}} K\left(\left\{ \theta_k \right\}_{k=1}^n \right) P_{\vec{a}} \right] - \mathrm{tr} \left[ s_n W_n\left(\left\{ \theta_k \right\}_{k=1}^n \right) \right] - \mathrm{tr} \left[ P_{\vec{a}} \mu I^{\otimes n} P_{\vec{a}} \right] \\
        =& \mathrm{tr} \left[ P_{\vec{a}} K\left(\left\{ \theta_k \right\}_{k=1}^n \right) P_{\vec{a}} \right] - 0 - \mathrm{tr}[I^{\otimes n}] \cdot \frac{\mathrm{tr}[P_{\vec{a}}]}{\mathrm{tr}[I^{\otimes n}]} \cdot \mu.  
    \end{aligned}
\end{equation*}
To compute the first term, for each $Q \in \mathcal{G}_{Y_n}$,
we index it with an $n$-bit binary string $\vec{y} = y_1 y_2 \cdots y_n$ with even Hamming weight (parity $=0$), that is, $Q_{\vec{y}}$.
For each $k \geq 2$, $y_k = 1$ if and only if $Q$ takes Pauli-$Y$ on this position, and $y_1 = w_H(\vec{y}_{2, n})\text{\ mod\ }2$.
Then by definition of projector $P_{\vec{a}}$ in Eq.~\eqref{eq:s:2d-projector}, we have 
\begin{equation*}
    \begin{aligned}
        &\mathrm{tr} \left[ Q_{\vec{y}} P_{\vec{a}} \right] = \frac{1}{2^{n-1}} \mathrm{tr} \left[ \left( \prod_{l=2}^{n} Q_{l-1}^{y_l} \right) \left( \sum_{(b_1\cdots b_{n-1}) \in \{0, 1\}^{n-1}} \prod_{k=1}^{n-1} \left( (-1)^{a_k} Q_k \right)^{b_k} \right) \right] \\
        =& \frac{1}{2^{n-1}} \left( \prod_{k=1}^{n-1} \mathbf{1} \{y_{k+1} = b_k\} \right) \sum_{(b_1\cdots b_{n-1}) \in \{0, 1\}^{n-1}} \left( \mathrm{tr}[I^{\otimes n}] (-1)^{\sum_{k=1}^{n-1}a_k b_{k+1}} \right) \\
        =& 2 \cdot (-1)^{\sum_{k=1}^{n-1}a_k y_{k+1}}.
    \end{aligned}
\end{equation*}
Here the indicator function $\mathbf{1}\{\}$ is defined by
\begin{equation*}
    \mathbf{1}\{y_{k+1} = b_k\} = \begin{cases}
      1 &\text{\ if\ } y_{k+1} = b_k, \\
      0 &\text{\ if\ } y_{k+1} \neq b_k.
    \end{cases}
\end{equation*}
According to the projector-based representation of $|\Phi_n\rangle$ in Observation~\ref{obv:s:proj-based} and the Bloch representation of dephased state in Observation~\ref{obv:s:bloch-coef}, the sign of the Bloch coefficient of $|\Phi_n\rangle$ on $Q_{\vec{y}}$ is decided by the number of 1s on odd positions.
That is,
\begin{equation*}
    \mathrm{tr}\left[ |\Phi\rangle_n \langle \Phi_n| Q_{\vec{y}} \right] = (-1)^{\sum_{t=2}^n y_t t}.
\end{equation*}
Also, by Observation~\ref{obv:s:dephased-bloch}, after the dephasing channel, the absolute value of the Bloch coefficient will be multiplied with the corresponding $g^{y_{k}(\theta_k)}$ for all positions of Pauli-$Y$ in $Q_{\vec{y}}$:
\begin{equation*}
    \mathrm{tr}\left[ K\left(\{ \theta_k \}_{k=1}^n\right) Q_{\vec{y}} \right]
    = (-1)^{\sum_{t=2}^n y_t t} \prod_{k=1}^{n} g^{y_{k}} (\theta_k).
\end{equation*}
Then we have 
\begin{equation}
    \begin{aligned}
        \mathrm{tr} \left[ K\left(\{ \theta_k \}_{k=1}^n\right) P_{\vec{a}} \right] = &\frac{1}{2^{n-1}} \sum_{\substack{\vec{y} \in \{0, 1\}^n \\
        w_H(\vec{y}) = 0\text{\ mod\ 2}}} (-1)^{\sum_{k=2}^{n}a_{k-1} y_k} (-1)^{\sum_{k=2}^{n}y_k k} \prod_{k=1}^{n} g^{y_k} (\theta_{k}) \\
        = &\frac{1}{2^{n-1}} \sum_{\substack{\vec{y} \in \{0, 1\}^n \\
        w_H(\vec{y}) = 0\text{\ mod\ 2}}} (-1)^{\sum_{k=2}^{n} y_k (a_{k-1} + k)} \prod_{k=1}^{n} g^{y_k} (\theta_{k}).
    \end{aligned}
    \label{eq:91}
\end{equation}
Finally, 
\begin{align}
        \label{eq:s:linear-trace}
        \mathrm{tr} \left[ M_{\vec{a}} \right] = &\frac{1}{2^{n-1}} \sum_{\substack{\vec{y} \in \{0, 1\}^n \\
        w_H(\vec{y}) = 0\text{\ mod\ 2}}} (-1)^{\sum_{k=2}^{n} y_k (a_{k-1} + k)} \prod_{k=1}^{n} g^{y_k} (\theta_{k}) - 2\mu \\
        \geq &\frac{1}{2^{n-1}} - 2\mu - (1 - \frac{1}{2^{n-1}}) = \frac{1}{2^{n-2}} - 1 - 2\mu \geq 0. \nonumber
\end{align}
\end{proof}

Calculating positivity in Eq.~\eqref{eq:positivity} includes finding trace of cubic terms in $M_{\vec{a}}^2$, for which the following observation will be repeatedly used.

\begin{observation}
  \label{obv:s:cubic-trace}
  Let $\rho$ and $\sigma$ be two Hermitian operators with the following Bloch representation:
  \begin{equation*}
    \begin{aligned}
      \rho =& \frac{1}{2^n} \sum_{\substack{\vec{y} \in \{0, 1\}^n \\ w_H (\vec{y}) = 0\text{\ mod\ }2}} r_{\vec{y}} Q_{\vec{y}} + \frac{1}{2^n} \sum_{\vec{x} \in \{0, 1\}^n} r_{\vec{x}} R_{\vec{x}}, \\
      \sigma =& \frac{1}{2^n} \sum_{\substack{\vec{y} \in \{0, 1\}^n \\ w_H (\vec{y}) = 0\text{\ mod\ }2}} s_{\vec{y}} Q_{\vec{y}} + \frac{1}{2^n} \sum_{\vec{x} \in \{0, 1\}^n} s_{\vec{x}} R_{\vec{x}},
    \end{aligned}
  \end{equation*}
  where
  \begin{equation*}
    \begin{aligned}
      &\vec{y} = \left( y_1 y_2 \cdots y_n \right), \qquad \vec{x} = \left( x_1 x_2 \cdots x_n \right), \\
      &Q_{\vec{y}} = \prod_{k=2}^n Q_{k-1}^{y_k} \in \mathcal{Y}_n, \quad R_{\vec{x}} = \prod_{k=1}^n \left( \frac{\sigma_x + \sigma_z}{2} + (-1)^{x_k} \frac{\sigma_x - \sigma_z}{2} \right) \in \mathcal{P}_{XZ},
    \end{aligned}
  \end{equation*}
  and $w_H(\vec{y}) \coloneqq \sum_{k=1}^n y_k$ is the Hamming weight of $\vec{y}$, $w_H(\vec{y})=0$ mod 2 implies even parity.
  
  Then we have
  \begin{equation}
    \label{eq:s:cubic-trace}
    \text{tr} \left[ P_{\vec{a}} \rho \sigma \right] = \frac{1}{2^{2n-1}} \left\{ \begin{matrix*}[l]
      & \sum_{\substack{\vec{y} \in \{0, 1\}^n \\ w_H(\vec{y}) = 0 \text{\ mod\ }2}} \sum_{\substack{\vec{y}_1 \in \{0, 1\}^n \\ w_H(\vec{y}_1) = 0 \text{\ mod\ }2}} (-1)^{\sum_{k=1}^{n-1} a_k y_{k+1}} r_{\vec{y}_1} s_{\vec{y} \oplus \vec{y_1}} \\
      +& \sum_{\substack{\vec{y} \in \{0, 1\}^n \\ w_H(\vec{y}) = 0 \text{\ mod\ }2}} \sum_{\vec{x} \in \{0, 1\}^n} (-1)^{\sum_{k=1}^{n-1} a_k y_{k+1}} (-1)^{\sum_{k=1}^n y_k x_k} (-1)^{y_1} r_{\vec{x}} s_{\vec{y} \oplus \vec{x}}.
    \end{matrix*} \right.
  \end{equation}
  Here $\vec{y} \oplus \vec{x}$ is element-wise plus mod 2 (XOR) between two bit strings.
\end{observation}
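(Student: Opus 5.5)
The plan is to expand both operators in their Bloch representations, multiply, and take the trace against $P_{\vec{a}}$ term by term. Bilinearity splits $\text{tr}[P_{\vec a}\rho\sigma]$ into four families of products: $Q Q$, $Q R$, $R Q$ and $R R$. The projector itself is first expanded as in the proof of Lemma~\ref{lemma:positive-trace}:
\begin{equation*}
P_{\vec a} = 2^{-(n-1)}\sum_{\vec b\in\{0,1\}^{n-1}} (-1)^{\sum_k a_k b_k}\prod_k Q_k^{b_k}.
\end{equation*}
Each term of the projector is thus, up to sign, an element $Q_{\vec y}$ of $\mathcal{G}_Y$, labelled by an even-parity string $\vec y$ with $y_{k+1}=b_k$. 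Its sign is $(-1)^{\sum_k a_k y_{k+1}}$. Every contribution then reduces to a normalized trace of a product of three Pauli strings, which is nonzero only when the product is proportional to the identity. In that case it contributes $\pm 2^n$ times a phase.

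\textbf{Step 1: the $QQ$ terms.} The group $\mathcal{G}_Y$ is abelian and $Q_{\vec y_1}Q_{\vec y_2}=Q_{\vec y_1\oplus\vec y_2}$. Because the generators are $Y$-pairs, this holds with no extra phase, since $\sigma_y^2=I$ on each site. Hence $\text{tr}[Q_{\vec y}Q_{\vec y_1}Q_{\vec y_2}]=2^n\,\delta_{\vec y_2,\vec y\oplus\vec y_1}$. Summing over $\vec y$ with the projector sign gives the first line of Eq.~\eqref{eq:s:cubic-trace}. The prefactor is $2^{-(n-1)}\cdot 2^{-2n}\cdot 2^n = 2^{-(2n-1)}$, as claimed.

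\textbf{Step 2: the mixed terms.} A product $Q\,Q'\,R$ has $\sigma_y$ or $\sigma_y\sigma_x\propto\sigma_z$ on some sites, so it is never proportional to the identity. The mixed terms therefore vanish.

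\textbf{Step 3: the $RR$ terms.} I compute $Q_{\vec y}R_{\vec x}$ site by site. Where $y_k=1$, $\sigma_y$ times $\sigma_x$ or $\sigma_z$ flips the letter ($X\leftrightarrow Z$) and contributes a factor $\pm i$. So $Q_{\vec y}R_{\vec x}=c\,R_{\vec x\oplus\vec y}$, and the trace against $R_{\vec x'}$ forces $\vec x'=\vec x\oplus\vec y$.

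The main obstacle is the phase $c$. At a site with $y_k=1$, I use $\sigma_y\sigma_x=-i\sigma_z$ and $\sigma_y\sigma_z=i\sigma_x$, so the phase is $i(-1)^{x_k+1}$ with the convention that $x_k=0$ means $\sigma_x$ and $x_k=1$ means $\sigma_z$. The product over the $w_H(\vec y)$ flipped sites is then $i^{w_H(\vec y)}(-1)^{w_H(\vec y)}(-1)^{\sum_k y_kx_k}$. Since $w_H(\vec y)$ is even, $i^{w_H(\vec y)}=(-1)^{w_H(\vec y)/2}$.

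This must be reconciled with the stated factor $(-1)^{y_1}(-1)^{\sum y_kx_k}$. I would redo the count carefully, including the ordering $P_{\vec a}\rho\sigma$ versus $\rho P_{\vec a}\sigma$ and whether the $x$-convention is reversed. The check is to verify that $(-1)^{w_H(\vec y)/2}$, combined with the sign convention implicit in how $Q_{\vec y}$ is built from the $Q_{k-1}$, with site $1$ shared by all generators, reduces to $(-1)^{y_1}$. Taking each generator's phase separately, and noting that site $1$ carries $\sigma_y^{\,w_H(\vec y_{2,n})}$, should give exactly this $y_1$-dependence.

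Collecting the prefactors as in Step 1 then yields the second line of Eq.~\eqref{eq:s:cubic-trace}.
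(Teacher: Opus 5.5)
Your Steps 1 and 2 coincide with the paper's proof. You expand $P_{\vec a}=2^{-(n-1)}\sum_{\vec y}(-1)^{\sum_k a_k y_{k+1}}Q_{\vec y}$, discard the products containing a single $R_{\vec x}$ (these carry $\sigma_x$ or $\sigma_z$ up to a phase on every site), and read off the first line with prefactor $2^{-(2n-1)}$. Your site-by-site phase in Step 3 is also the paper's $\prod_k(-1)^{y_k x_{2,k}}i^{y_k}$.

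The gap is the final move. You correctly obtain $\operatorname{tr}[Q_{\vec y}R_{\vec x}R_{\vec x\oplus\vec y}]=2^n(-1)^{w_H(\vec y)/2}(-1)^{\sum_k y_kx_k}$, but then only assert that this should reduce to $2^n(-1)^{y_1}(-1)^{\sum_k y_kx_k}$. None of the repairs you list can do that:
\begin{itemize}
\item By your own Step 1, $Q_{\vec y}$ is the bare Pauli string with $\sigma_y^{y_1}$ on site $1$ and no phase, so tracking the generators separately adds nothing.
\item The trace is real, so reordering the three factors (cyclicity plus Hermitian conjugation) does not change it.
\item Swapping the $X/Z$ labels multiplies $(-1)^{\sum_k y_kx_k}$ by $(-1)^{w_H(\vec y)}=1$.
\end{itemize}

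The two signs genuinely differ whenever $w_H(\vec y)\not\equiv 2y_1 \pmod 4$. So the identity is false as printed, and no argument can close your Step 3. The smallest instance is $n=3$, $\rho=2^{-3}R_{\vec 0}$, $\sigma=2^{-3}R_{(0,1,1)}$. Then $\rho\sigma=2^{-6}\,I\otimes(\sigma_x\sigma_z)\otimes(\sigma_x\sigma_z)=-2^{-6}\,I\otimes\sigma_y\otimes\sigma_y$, hence $\operatorname{tr}[P_{\vec a}\rho\sigma]=-(-1)^{a_1+a_2}/32$. On the right-hand side only $\vec x=\vec 0$, $\vec y=(0,1,1)$ contributes, with $y_1=0$ and $\sum_k y_kx_k=0$, giving $+(-1)^{a_1+a_2}/32$.

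The paper's proof reaches $(-1)^{y_1}$ only through the step $i^{\sum_{k=1}^n y_k}=i^{2\sum_{k=2}^n y_k}$. That step requires $y_1\equiv\sum_{k\ge 2}y_k \pmod 4$, whereas even parity gives this only modulo $2$, and it fails at the same $\vec y$.

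So your computation is right, and what it proves is the observation with $(-1)^{y_1}$ replaced by $(-1)^{w_H(\vec y)/2}$. That is the version you should state and prove. It is also the version compatible with $K^2=K$ at $g\equiv 1$. With the printed factor, the $RR$ contribution to $\operatorname{tr}[Q_{\vec y}K^2]$ would cancel the $QQ$ contribution for such $\vec y$, instead of adding up to $\operatorname{tr}[Q_{\vec y}K]=\pm1$. Finally, the corrected factor is what produces the sign $(-1)^{\sum_t y_t t}$ that the paper subsequently uses for $\operatorname{tr}[P_{\vec a}K^2]$.
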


\begin{proof}
Operator $P_{\vec{a}} \rho \sigma$ can be expressed as linear combination of products of 3 Pauli operators: 
\begin{equation*}
    P_{\vec{a}} \rho \sigma = \sum_l s_l P_{1, l} P_{2, l} P_{3, l}. 
\end{equation*}
Since $P_1 \in \mathcal{G}_{Y}$ commutes with all operators in $\mathcal{G}_{Y}$ or $\mathcal{P}_{XZ}$, only need to consider the following combination of $P_{2, l} P_{3, l}$: 
\begin{enumerate}
    \item $P_{2, l} \in \mathcal{G}_Y, P_{3, l} \in \mathcal{G}_Y$; 
    \item $P_{2, l} \in \mathcal{G}_Y, P_{3, l} \in \mathcal{P}_{XZ}$; 
    \item $P_{2, l} \in \mathcal{P}_{XZ}, P_{3, l} \in \mathcal{P}_{XZ}$. 
\end{enumerate}
Note that in the second case, we have $P_{2, l} P_{3, l} \in \mathcal{P}_{XZ}$ up to a scalar multiplication.
As a result, $P_{1, l} P_{2, l} P_{3, l} \in \mathcal{P}_{XZ}$ up to a scalar multiplication.
Hence $\mathrm{tr}[P_{1, l} P_{2, l} P_{3, l}] = 0$.
So it suffices to consider case 1 and case 3. 

For case 1, with $\vec{y}_1, \vec{y}_2, \vec{y}_3 \in \{0, 1\}^{n}$ satisfies $w_{H}(\vec{y}_1) = w_{H}(\vec{y}_2) = w_{H}(\vec{y}_3) = 0\text{\ mod\ }2$, we have
\begin{equation*}
    \mathrm{tr} \left[ Q_{\vec{y}_1} Q_{\vec{y}_2} Q_{\vec{y}_3} \right] = \mathrm{tr} \left[ Q_{\vec{y}_1 \oplus \vec{y}_2 \oplus \vec{y}_3} \right] = \left\{ \begin{matrix*}[l]
        2^n & \mathrm{if\ } \vec{y}_1 \oplus \vec{y}_2 \oplus \vec{y}_3 = \vec{0}, \\
        0 & \mathrm{otherwise.}
    \end{matrix*} \right. 
\end{equation*}
Here, $\oplus$ is plus mod $2$, which means this trace is non-zero if and only if $\vec{y}_3 = \vec{y}_1 \oplus \vec{y}_2$.
This gives a convenient way of calculating the trace via enumerating $\vec{y}_1$ and $\vec{y}_2$.

For case 2, with $\vec{y}, \vec{x}_1, \vec{x}_2 \in \{0, 1\}^{n}$ satisfies $w_{H}(\vec{y}) = 0\text{\ mod\ }2$, we have 
\begin{equation*}
    \left| \mathrm{tr} \left[ Q_{\vec{y}} P_{\vec{x}_1} P_{\vec{x}_2} \right] \right| = 
    \left| \mathrm{tr} \left[ Q_{\vec{y}} Q_{\vec{x}_1 \oplus \vec{x}_2} \right] \right| = 
    \left\{ \begin{matrix*}[l]
        2^n & \mathrm{if\ } \vec{y} \oplus \vec{x}_1 \oplus \vec{x}_2 = \vec{0}, \\
        0 & \mathrm{otherwise.}
    \end{matrix*} \right. 
\end{equation*}
The sign of this trace depends on the number of $\sigma_z \sigma_x$ and $\sigma_x \sigma_z$ in the multiplication.
For bit string with subscripts, denote $\vec{x}_{1} = (x_{1, 1} x_{1, 2} \cdots x_{1, n})$ and $\vec{x}_{2} = (x_{2, 1} x_{2, 2} \cdots x_{2, n})$.
Assume $\left| \mathrm{tr} \left[ Q_{\vec{y}} P_{\vec{x}_1} P_{\vec{x}_2} \right] \right| \neq 0$ for some choice of $\vec{y}$, $\vec{x}_1$ and $\vec{x}_2$, we have
\begin{equation*}
    \begin{aligned}
        &\frac{1}{2^n} \mathrm{tr} \left[ Q_{\vec{y}} P_{\vec{x}_1} P_{\vec{x}_2} \right] = \prod_{k=1}^n (-1)^{y_{k} x_{2, k}} i^{y_k} = (-1)^{\sum_{k=1}^n y_k x_{2, k}} i^{\sum_{k=1}^n y_k}\\
        =& (-1)^{\sum_{k=1}^n y_k x_{2, k}} i^{2\sum_{k=2}^n y_k} = (-1)^{\sum_{k=1}^n y_k x_{2, k}} (-1)^{\sum_{k=2}^n y_k} = (-1)^{\sum_{k=2}^n y_k x_{2, k}} (-1)^{y_1 (x_{2, 1} + 1)} \\
        =& (-1)^{\sum_{k=2}^n y_k x_{2, k}} (-1)^{y_1 x_{1, 1}}.
    \end{aligned}
\end{equation*}
To enumerate all possible $\vec{x}_1$ and $\vec{x}_2$, we use the following equality: 
\begin{equation*}
    \vec{x}_2 = \vec{y} \oplus \vec{x}_1. 
\end{equation*}
Then we have 
\begin{equation*}
    (-1)^{\sum_{k=2}^n y_k x_{2, k}} (-1)^{y_1 x_{1, 1}} = (-1)^{\sum_{k=2}^n y_k (x_{1, k} + y_k)} (-1)^{y_1 x_{1, 1}} = (-1)^{\sum_{k=1}^n y_k x_{1, k}} (-1)^{y_1}.
\end{equation*}

Now we are ready to give an analytic form of  $\mathrm{tr} \left[ P_{\vec{a}} \rho \sigma \right]$:
\begin{equation*}
    \begin{aligned}
        &\frac{1}{2^n} \mathrm{tr} \left[ P_{\vec{a}} \rho \sigma \right] \\
        = &\frac{1}{2^n \cdot 2^n \cdot 2^{n-1}} \left\{ \begin{matrix*}[l]
            &\mathrm{tr} \left[ \sum_{\substack{\vec{y} \in \{0, 1\}^{n} \\ w_H(\vec{y}) = 0\text{\ mod\ }2}} \left( (-1)^{\sum_{k=1}^{n-1} a_k y_{k+1}} Q_{\vec{y}} \sum_{\substack{\vec{y}_1 \in \{0, 1\}^{n} \\ w_H(\vec{y}_1) = 0\text{\ mod\ }2}} r_{\vec{y}_1} s_{\vec{y} \oplus \vec{y}_1} Q_{\vec{y}_1} Q_{\vec{y} \oplus \vec{y}_1} \right) \right] \\
            + & \mathrm{tr} \left[ \sum_{\substack{\vec{y} \in \{0, 1\}^{n} \\ w_H(\vec{y}) = 0\text{\ mod\ }2}} \left( (-1)^{\sum_{k=1}^{n-1} a_k y_{k+1}} Q_{\vec{y}} \sum_{\vec{x} \in \{0, 1\}^n} r_{\vec{x}} s_{\vec{x} \oplus \vec{y}} P_{\vec{x}} P_{\vec{x} \oplus \vec{y}} \right) \right]
        \end{matrix*} \right. \\
        = &\frac{1}{2^{3n-1}} \left\{ \begin{matrix*}[l]
            & \sum_{\substack{\vec{y} \in \{0, 1\}^{n} \\ w_H(\vec{y}) = 0\text{\ mod\ }2}} \sum_{\substack{\vec{y}_1 \in \{0, 1\}^{n} \\ w_H(\vec{y}_1) = 0\text{\ mod\ }2}} (-1)^{\sum_{k=1}^{n-1}a_k y_{k+1}} r_{\vec{y}_1} s_{\vec{y} \oplus \vec{y}_1} \\
            + & \sum_{\substack{\vec{y} \in \{0, 1\}^{n} \\ w_H(\vec{y}) = 0\text{\ mod\ }2}} \sum_{\vec{x} \in \{0, 1\}^n} (-1)^{\sum_{k=1}^{n-1} a_k y_{k+1}} (-1)^{\sum_{k=1}^n y_k x_{k}} (-1)^{y_1} r_{\vec{x}} s_{\vec{y} \oplus \vec{x}}
        \end{matrix*} \right.
    \end{aligned}
\end{equation*}
which is exactly Eq.~\eqref{eq:s:cubic-trace}
\end{proof}

Detailed calculation of $\lambda_{\vec{a}}$ is given in the proof of the following summarized result.

\begin{observation}
  The positivity $\lambda_{\vec{a}} \left( \{\theta_k\}_{k=1}^n \right)$ in Eq.~\eqref{eq:positivity} has the following closed form:
  \begin{equation}
    \label{eq:closed-positivity}
    \begin{aligned}
      &\lambda_{\vec{a}} \left( \{\theta_k\}_{k=1}^n \right) = \left( \text{tr} \left[ M_{\vec{a}} \left( \{\theta_k\}_{k=1}^n \right) \right] \right)^2 - \text{tr} \left[ M_{\vec{a}}^2 \left( \{\theta_k\}_{k=1}^n \right) \right] \\
      =& 2\mu^2 + \frac{1}{2^{2n}} \prod_{k=1}^n \left( 1 - g^2 (\theta_k) \right) - \frac{2\mu}{2^{n-1}} \sum_{\substack{\vec{y}\in\{0, 1\}^n \\ w_H(\vec{y}) = 0\text{\ mod\ 2}}} (-1)^{\sum_{k=2}^n y_k (a_{k-1}+k)} \prod_{k=1}^n g^{y_k} (\theta_k) \\
      &- s_n^2 \sum_{\substack{\vec{y} \in \{0, 1\}^n \\ w_H(\vec{y}) = 0\text{\ mod\ }2}} 8 (-1)^{\sum_{k=1}^{n-1} a_k y_{k+1}} (-1)^{\sum_{t=2}^n y_t t} \prod_{t=1}^n \left( \frac{1 + \sin(2\theta_t)}{2} + (-1)^{y_t} \frac{1 - \sin(2\theta_t)}{2} \right) \\
      &+ \frac{2s_n}{2^{n-2}} \sum_{\substack{\vec{y} \in \{0, 1\}^n \\ w_H(\vec{y}) = 0\text{\ mod\ }2}} \frac{1}{\sqrt{2}} (-1)^{\sum_{k=1}^{n-1} a_k y_{k+1}} (-1)^{\sum_{t=2}^n y_t t} \\
      &\qquad \times \prod_{t=1}^n \frac{1}{2} \left( \begin{matrix*}[l]
        \left[ (\cos \theta_t + g(\theta_t)\sin \theta_t) + (g(\theta_t) \cos \theta_t + \sin \theta_t) \right] \\
        + (-1)^{y_t} \left[ (\cos \theta_t + g(\theta_t)\sin\theta_t) - (g(\theta_t) \cos \theta_t + \sin \theta_t) \right]
      \end{matrix*} \right)
    \end{aligned}
  \end{equation}
\end{observation}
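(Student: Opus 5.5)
The closed form follows by expanding $M_{\vec a}=P_{\vec a}(K-s_nW_n-\mu I)P_{\vec a}$ in the Pauli basis. We then compute $\operatorname{tr}M_{\vec a}$ and $\operatorname{tr}M_{\vec a}^2$ term by term, using that $P_{\vec a}$ commutes with $K$ and $W_n$ (Observation~\ref{obv:projectors}). Since $P_{\vec a}^2=P_{\vec a}$, this commutation gives $\operatorname{tr}M_{\vec a}^2=\operatorname{tr}[P_{\vec a}T_n^2]$. Writing $T_n=K-s_nW_n-\mu I$, the square splits into six pieces:
\[
\operatorname{tr}[P_{\vec a}K^2]+s_n^2\operatorname{tr}[P_{\vec a}W_n^2]+\mu^2\operatorname{tr}P_{\vec a}-2s_n\operatorname{tr}[P_{\vec a}KW_n]-2\mu\operatorname{tr}[P_{\vec a}K]+2\mu s_n\operatorname{tr}[P_{\vec a}W_n].
\]
The last term vanishes, since $W_n$ lies in the span of $\mathcal P_{XZ}$ and $P_{\vec a}$ lies in the span of $\mathcal G_Y$. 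The linear trace $\operatorname{tr}M_{\vec a}$ is already available as Eq.~\eqref{eq:s:linear-trace}, namely $A-2\mu$ with $A:=\operatorname{tr}[P_{\vec a}K]$ given by Eq.~\eqref{eq:91}. Hence
\[
(\operatorname{tr}M_{\vec a})^2=A^2-4\mu A+4\mu^2.
\]
Subtracting $\operatorname{tr}M_{\vec a}^2$ and using $\operatorname{tr}P_{\vec a}=2$ gives the constant $4\mu^2-2\mu^2=2\mu^2$. It also gives the cross term $-4\mu A+2\mu A=-2\mu A$, which is exactly the third term of Eq.~\eqref{eq:closed-positivity}. What remains is
\[
A^2-\operatorname{tr}[P_{\vec a}K^2]-s_n^2\operatorname{tr}[P_{\vec a}W_n^2]+2s_n\operatorname{tr}[P_{\vec a}KW_n],
\]
and each of these traces is evaluated with Observation~\ref{obv:s:cubic-trace}.

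\textbf{Bloch data.} By Observations~\ref{obv:s:bloch-coef} and \ref{obv:s:dephased-bloch}, the coefficients of $K$ are as follows.
\begin{itemize}
\item On $Q_{\vec y}\in\mathcal G_Y$ the coefficient is $(-1)^{\sum_t y_tt}\prod_k g^{y_k}(\theta_k)$.
\item On $R_{\vec x}\in\mathcal P_{XZ}$ the magnitude is $2^{n}/\sqrt2$, with the sign from Eq.~\eqref{eq:90}. Each factor is multiplied by $g(\theta_k)$ wherever $\Gamma(\theta_k)$ anticommutes with that Pauli; for $\theta_k\le\pi/4$ we have $\Gamma=\sigma_x$, so this happens on the $\sigma_z$ positions.
\end{itemize}
For $W_n$, unrolling the recursion with $O_{r_k}=\cos\theta_k\sigma_x\pm\sin\theta_k\sigma_z$ gives coefficients on $\mathcal P_{XZ}$ that are products of $\cos\theta_t$ and $\pm\sin\theta_t$. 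The MABK sign pattern on each string should match, up to a global factor, the GHZ-state sign pattern of Eq.~\eqref{eq:90}. I would verify this by induction on $n$ using the recursion and the construction of $|\Phi_n\rangle$.

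\textbf{Evaluating the four traces.}
\begin{itemize}
\item The $\mathcal G_Y$–$\mathcal G_Y$ part of $\operatorname{tr}[P_{\vec a}K^2]$ (first line of Eq.~\eqref{eq:s:cubic-trace}) should combine with $A^2$. Note that $A^2$ is a square of a sum over $\vec y_1$, whereas the cubic-trace formula gives a convolution over $\vec y_1$ and $\vec y\oplus\vec y_1$. I would expand both as convolutions, and after a character sum over $\vec y$ with weight $(-1)^{\sum a_ky_{k+1}}$ expect the difference to collapse.
\item The $\mathcal P_{XZ}$–$\mathcal P_{XZ}$ part of $\operatorname{tr}[P_{\vec a}K^2]$ (second line of Eq.~\eqref{eq:s:cubic-trace}) should be handled the same way. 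Together with the previous item, the result should be the single term $2^{-2n}\prod_k(1-g^2(\theta_k))$. This product form is the signature of the identity $\sum_{\vec y}\prod_k(\pm g_k)^{y_k}\cdots$ factorizing once the characters align.
\item For $\operatorname{tr}[P_{\vec a}W_n^2]$, only the $\mathcal P_{XZ}$–$\mathcal P_{XZ}$ line survives. Because the coefficients factor per qubit, the sum over $\vec x$ factors as a product over $t$ of $\sum_{x_t}(\text{coef})(\text{coef shifted by }y_t)(-1)^{y_tx_t}$. For $y_t=0$ this gives $\cos^2+\sin^2=1$, and for $y_t=1$ it gives $\pm2\sin\theta_t\cos\theta_t=\pm\sin2\theta_t$. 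This produces the factor $\frac{1+\sin2\theta_t}{2}+(-1)^{y_t}\frac{1-\sin2\theta_t}{2}$, with the $(-1)^{\sum y_tt}$ and $(-1)^{y_1}$ signs absorbed into the stated prefactor.
\item The cross term $\operatorname{tr}[P_{\vec a}KW_n]$ factorizes in the same way. The per-qubit coefficient of $K$ is $(1,g)$ on $(\sigma_x,\sigma_z)$, up to a swap fixed by the Hadamard on qubit $1$, and that of $W_n$ is $(\cos\theta_t,\sin\theta_t)$. The per-qubit sums are therefore $\cos\theta_t+g\sin\theta_t$ and $g\cos\theta_t+\sin\theta_t$. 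This yields the last product in Eq.~\eqref{eq:closed-positivity}, with the $1/\sqrt2$ coming from the magnitude of the state's coefficients.
\end{itemize}

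\textbf{Main obstacle.} I expect the hardest part to be the sign bookkeeping. Specifically, one must show that the GHZ sign function of Eq.~\eqref{eq:90} and the MABK sign function agree up to the simple character $(-1)^{\sum y_tt}$ after the shift $\vec x\mapsto\vec x\oplus\vec y$. Without this agreement the per-qubit factorization fails. My first attempt would be induction on $n$, using the recursions for $W_n$ and for $|\Psi_n^+\rangle,|\Phi_n^-\rangle$. Failing that, I would evaluate the sign of $\operatorname{tr}[W_nR_{\vec x}]$ at $\theta_k=\pi/4$ via $W_n|\Phi_n\rangle=\beta_{nQ}|\Phi_n\rangle$.
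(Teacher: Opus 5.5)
Your decomposition is the paper's own. The six-term expansion of $\operatorname{tr}[P_{\vec a}T_n^2]$, the reuse of Eq.~\eqref{eq:s:linear-trace} to get $2\mu^2-2\mu A$, the vanishing of $\operatorname{tr}[P_{\vec a}W_n]$, and the use of Observation~\ref{obv:s:cubic-trace} for the remaining traces all match, with the right per-qubit values and constants. However, the two identities that carry all the content are only announced, and the mechanisms you sketch for them do not work as written.

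\emph{The $K^2$ collapse.} The target $2^{-2n}\prod_k(1-g^2(\theta_k))$ does not depend on $\vec a$, and $\vec a$ is arbitrary. So the coefficient of every character $(-1)^{\sum_k a_k y_{k+1}}$ with $\vec y\neq\vec 0$ must vanish by itself; a ``character sum over $\vec y$'' cannot produce that. The paper merges $A^2$, the $\mathcal G_Y$--$\mathcal G_Y$ part and the $\mathcal P_{XZ}$--$\mathcal P_{XZ}$ part (split according to $x_1$), and then applies Observation~\ref{obv:s:simplify-positivity-1}. For $y_1=1$ the bracket is $g(\theta_1)-g(\theta_1)$. For $y_1=0$, flipping two bits of $\vec y_1$ pairs terms of opposite sign. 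Only $\vec y=\vec 0$ survives, and it factorizes.

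A shorter route uses $(\operatorname{tr}X)^2-\operatorname{tr}X^2=2\det X$ on the two-dimensional range of $P_{\vec a}$. For $\theta_k\le\pi/4$, $K=\sum_{\vec e}q(\vec e)\,\sigma_x^{\vec e}\Phi_n\sigma_x^{\vec e}$ with $q(\vec e)=\prod_k\frac{1+(-1)^{e_k}g(\theta_k)}{2}$. Each $\sigma_x^{\vec e}|\Phi_n\rangle$ has $Q_j$-eigenvalue $(-1)^{e_1+e_{j+1}+j-1}$, so exactly $\vec e$ and $\vec e\oplus\mathbf 1$ land in each block. Using $|\langle\Phi_n|\sigma_x^{\otimes n}|\Phi_n\rangle|^2=\tfrac12$, you get $A^2-\operatorname{tr}[P_{\vec a}K^2]=q(\vec e)\,q(\vec e\oplus\mathbf 1)=4^{-n}\prod_k(1-g^2(\theta_k))$, which also explains the $\vec a$-independence.

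\emph{The signs.} Two separate facts are needed.

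(i) $W_n$ and $|\Phi_n\rangle$ carry the same sign $\sigma(\vec x)$ on $R_{\vec x}$. Your saturation idea proves this: at $\theta_k=\pi/4$ all $|p_{\vec x}|$ are equal and all $|r_{\vec x}|=1/\sqrt2$, so $\langle\Phi_n|W_n|\Phi_n\rangle=\beta_{nQ}$ forces $p_{\vec x}r_{\vec x}>0$ for every $\vec x$.

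(ii) $\sigma(\vec x)\sigma(\vec x\oplus\vec y)$ times the trace phase is independent of $\vec x$. This is needed already for $\operatorname{tr}[P_{\vec a}W_n^2]$, and (i) says nothing about it. Without it, your per-qubit sum with $(-1)^{y_tx_t}$ gives $\cos\theta_t\sin\theta_t-\sin\theta_t\cos\theta_t=0$ at $y_t=1$. To prove it, write $\sigma=(-1)^{q}$ with $q(\vec x)=\sum_{k<t}x_kx_t+\sum_{t\ge2}t\,x_t$. For even-weight $\vec y$, $q(\vec x)+q(\vec x\oplus\vec y)\equiv w_H(\vec y)/2+\vec x\cdot\vec y+\sum_{t\ge2}t\,y_t \pmod 2$. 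The trace phase is $\operatorname{tr}[Q_{\vec y}R_{\vec x}R_{\vec x\oplus\vec y}]=2^n\,i^{w_H(\vec y)}(-1)^{(\vec x\oplus\vec y)\cdot\vec y}=2^n(-1)^{w_H(\vec y)/2+\vec x\cdot\vec y}$. The product is exactly $(-1)^{\sum_{t\ge2}t\,y_t}$.

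Be careful here: Eq.~\eqref{eq:s:cubic-trace} carries $(-1)^{y_1}$ in place of $(-1)^{w_H(\vec y)/2}$, because its derivation replaces $i^{w_H(\vec y)}$ by $i^{2\sum_{k\ge2}y_k}$. With that phase, the absorption of signs you propose fails, e.g.\ for $\vec y=(0,1,1,0,\dots,0)$. The paper's ``by inspection'' identity is off by the same factor $(-1)^{y_1+w_H(\vec y)/2}$, so the two slips cancel and the stated closed form is correct. Your own derivation, however, must use $i^{w_H(\vec y)}$.
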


\begin{proof}
Only need to calculate analytic expression of $\text{tr} \left[ M^2_{\vec{a}}(\{\theta_k\}_{k=1}^n) \right]$ then combine with result in Eq.~\eqref{eq:s:linear-trace}.
We have
\begin{equation}
  \label{eq:s:square-trace}
  \begin{aligned}
  &\text{tr} \left[ M^2_{\vec{a}}(\{\theta_k\}_{k=1}^n) \right] \\
  =& \mathrm{tr} \left[ P_{\vec{a}} K^2 \left( \{ \theta_k \}_{k=1}^{n} \right) \right] + s_n^2 \mathrm{tr} \left[ P_{\vec{a}} W^2 \left( \{ \theta_k \}_{k=1}^{n} \right) \right] + \mu^2 \mathrm{tr} \left[ P_{\vec{a}} I^{\otimes n} \right] \\
  & - 2s_n \mathrm{tr} \left[ P_{\vec{a}} K \left( \{ \theta_k \}_{k=1}^{n} \right) W \left( \{ \theta_k \}_{k=1}^{n} \right) \right] - 2\mu \mathrm{tr} \left[ P_{\vec{a}} K \left( \{ \theta_k \}_{k=1}^{n} \right) \right] + 2s_n\mu \mathrm{tr} \left[ P_{\vec{a}} W \left( \{ \theta_k \}_{k=1}^{n} \right) \right].
  \end{aligned}
\end{equation}
\begin{enumerate}[label=(\roman*)]
  \item $\mathrm{tr} \left[ P_{\vec{a}} K^2\left(\left\{ \theta_k \right\}_{k=1}^n \right) \right]$.
  Assume $K\left( \{\theta_k\}_{k=1}^n \right)$ has the following generalized Bloch representation:
  \begin{equation}
    \label{eq:s:dephased-bloch}
    K\left( \{\theta_k\}_{k=1}^n \right) = \frac{1}{2^n} \sum_{\substack{\vec{y} \in \{0, 1\}^{n} \\ w_H(\vec{y}) = 0\text{\ mod\ }2}} r_{\vec{y}} Q_{\vec{y}} + \frac{1}{2^n} \sum_{\vec{x} \in \{0, 1\}^{n}} r_{\vec{x}} R_{\vec{x}}.
  \end{equation}
  Then according to Observation~\ref{obv:s:bloch-coef} and \ref{obv:s:dephased-bloch}, the analytic form of $r_{\vec{x}}$ and $r_{\vec{y}}$ can be written
  \begin{equation*}
    \begin{aligned}
      r_{\vec{x}} &= \frac{1}{\sqrt{2}} (-1)^{\sum_{t=2}^n x_t \left( \sum_{k=1}^t x_k \right)} (-1)^{\sum_{t=2}^n x_t (t + 1)} \prod_{t=1}^n g^{x_t} (\theta_t), \\
      r_{\vec{y}} &= (-1)^{\sum_{t=2}^n y_t t} \prod_{t=1}^n g^{y_t} (\theta_t).
    \end{aligned}
  \end{equation*}
  Use Observation~\ref{obv:s:cubic-trace} to calculate the trace, we have
  \begin{equation*}
      \begin{aligned}
          &2^{2n-1} \mathrm{tr} \left[ P_{\vec{a}} K^2\left(\left\{ \theta_k \right\}_{k=1}^n \right) \right] \\
          =& \sum_{\substack{\vec{y} \in \{0,1 \}^{n} \\ w_H(\vec{y}) = 0\text{\ mod\ }2}} \sum_{\substack{\vec{y}_1 \in \{0,1 \}^{n} \\ w_H(\vec{y}_1) = 0\text{\ mod\ }2}} (-1)^{\sum_{k=1}^{n-1} a_k y_{k+1}} r_{\vec{y}_1} r_{\vec{y} \oplus \vec{y}_1} \\
          & + \sum_{\substack{\vec{y} \in \{0,1 \}^{n} \\ w_H(\vec{y}) = 0\text{\ mod\ }2}} \sum_{\vec{x} \in \{0,1\}^n} (-1)^{\sum_{k=1}^{n-1} a_k y_{k+1}} (-1)^{\sum_{k=1}^n y_k x_k} (-1)^{y_1} r_{\vec{x}} r_{\vec{y} \oplus \vec{x}} \\
          =& \sum_{\substack{\vec{y} \in \{0,1 \}^{n} \\ w_H(\vec{y}) = 0\text{\ mod\ }2}} \sum_{\substack{\vec{y}_1 \in \{0,1 \}^{n} \\ w_H(\vec{y}_1) = 0\text{\ mod\ }2}} (-1)^{\sum_{k=1}^{n-1} a_k y_{k+1}} (-1)^{\sum_{t=2}^n y_{1, t} t} (-1)^{\sum_{t=2}^n(y_{1, t} + y_t) t} \prod_{t=1}^n \left[ g(\theta_t) \right]^{y_{1, t} + (y_{1, t} + y_{t} \mathrm{\ mod\ 2})} \\
          &+ \sum_{\substack{\vec{y} \in \{0,1 \}^{n} \\ w_H(\vec{y}) = 0\text{\ mod\ }2}} \sum_{\vec{x} \in \{0,1\}^n} \frac{1}{2} (-1)^{\sum_{k=1}^{n-1} a_k y_{k+1}} (-1)^{\sum_{k=2}^n y_k (x_1 + x_k + 1)} \\
          &\quad\times \left\{ \begin{matrix*}[l]
              & (-1)^{\sum_{t=2}^n x_t \left( \sum_{k=1}^t x_k \right)} (-1)^{\sum_{t=2}^n x_t (t+1)} \prod_{t=1}^n g^{x_t}(\theta_t) \\
              \times & (-1)^{\sum_{t=2}^n \left( (x_t + y_t) \left( \sum_{k=1}^t (x_k + y_k) \right) \right)} (-1)^{\sum_{t=2}^n \left( (x_t + y_t) (t+1) \right)} \prod_{t=1}^n \left[g(\theta_t)\right]^{(x_t + y_t \mathrm{\ mod\ }2)}
          \end{matrix*} \right. 
      \end{aligned}
  \end{equation*}
  To simplify this expression, by inspection one can establish the following equality:
  \begin{equation*}
    \begin{aligned}
        &(-1)^{\sum_{k=2}^n y_k(x_1 + x_k + 1)} \\
            &\times (-1)^{\sum_{t=2}^n x_t(\sum_{k=1}^t x_k)} (-1)^{\sum_{t=2}^n x_t (t+1)} \\
            &\times (-1)^{\sum_{t=2}^n \left( (x_t + y_t) \left( \sum_{k=1}^t (x_k + y_k) \right) \right)} (-1)^{\sum_{t=2}^n \left( (x_t + y_t) (t + 1) \right)} \\
            =& (-1)^{\sum_{t=2}^n y_t t}.
    \end{aligned}
  \end{equation*}
  As a result,
  \begin{equation}
    \label{eq:s:positivity-term-1}
      \begin{aligned}
          &\mathrm{tr} \left[ P_{\vec{a}} K^2\left(\left\{ \theta_k \right\}_{k=1}^n \right) \right] \\
          =& \frac{1}{2^{2n-1}} \sum_{\substack{\vec{y} \in \{0,1 \}^{n} \\ w_H(\vec{y}) = 0\text{\ mod\ }2}} \sum_{\substack{\vec{y}_1 \in \{0,1 \}^{n} \\ w_H(\vec{y}_1) = 0\text{\ mod\ }2}} (-1)^{\sum_{k=1}^{n-1} a_k y_{k+1}} (-1)^{\sum_{t=2}^n y_t t} \prod_{t=1}^n \left[ g(\theta_t) \right]^{y_{1, t} + (y_{1, t} + y_{t} \mathrm{\ mod\ 2})} \\
          &+ \frac{1}{2^{2n}} \sum_{\substack{\vec{y} \in \{0,1 \}^{n} \\ w_H(\vec{y}) = 0\text{\ mod\ }2}} \sum_{\vec{x} \in \{0,1\}^n} (-1)^{\sum_{k=1}^{n-1} a_k y_{k+1}} (-1)^{\sum_{t=2}^n y_t t} \prod_{t=1}^n \left[g(\theta_t)\right]^{x_t + (x_t + y_t \mathrm{\ mod\ }2)}.
      \end{aligned}
  \end{equation}
  \item $s_n^2 \mathrm{tr} \left[ P_{\vec{a}} W^2\left(\left\{ \theta_k \right\}_{k=1}^n \right) \right]$.
  By definition of Bell operator $W_n \left( \{\theta_k\}_{k=1}^n \right)$, we can write its generalized Bloch representation as:
  \begin{equation}
    \label{eq:s:dephased-bell}
    \begin{aligned}
      &W_n \left( \{\theta_k\}_{k=1}^n \right) = \frac{1}{2^n} \sum_{\vec{x} \in \{0, 1\}^n} p_{\vec{x}} R_{\vec{x}}, \\
      &p_{\vec{x}} = 2^{n+1} (-1)^{\sum_{t=2}^n x_t \left( \sum_{k=1}^t x_k \right)} (-1)^{\sum_{t=2}^n x_t\left( t+1 \right)} \prod_{t=1}^n \left( \frac{\cos \theta_t + \sin \theta_t}{2} + (-1)^{x_t} \frac{\cos \theta_t - \sin \theta_t}{2} \right).
    \end{aligned}
  \end{equation}
  Then, according to Observation~\ref{obv:s:cubic-trace}, we have
  \begin{equation*}
      \begin{aligned}
          &2^{2n-1} \mathrm{tr} \left[ P_{\vec{a}} W^2\left(\left\{ \theta_k \right\}_{k=1}^n \right) \right] \\
          =& \sum_{\substack{\vec{y} \in \{0, 1\}^{n} \\ w_H(\vec{y}) = 0\text{\ mod\ }2}} \sum_{\vec{x} \in \{0, 1\}^n} (-1)^{\sum_{k=1}^{n-1} a_k y_{k+1}} (-1)^{\sum_{k=1}^n y_k x_k} (-1)^{y_1} p_{\vec{x}} p_{\vec{y} \oplus \vec{x}} \\
          =& \sum_{\substack{\vec{y} \in \{0, 1\}^{n} \\ w_H(\vec{y}) = 0\text{\ mod\ }2}} \sum_{\vec{x} \in \{0, 1\}^n} (-1)^{\sum_{k=1}^{n-1} a_k y_{k+1}} (-1)^{\sum_{k=1}^n y_k x_k} (-1)^{y_1} \\
          & \times \left\{ \begin{matrix*}[l]
              &2^{n+1} (-1)^{\sum_{t=2}^n x_t \left( \sum_{k=1}^t x_k \right)} (-1)^{\sum_{t=2}^n x_t (t + 1)} \prod_{t=1}^n \left( \frac{\cos \theta_t + \sin \theta_t}{2} + (-1)^{x_t} \frac{\cos \theta_t - \sin \theta_t}{2} \right) \\
              \times &2^{n+1} (-1)^{\sum_{t=2}^n \left( (x_t + y_t) \left( \sum_{k=1}^t (x_k + y_k) \right) \right)} (-1)^{\sum_{t=2}^n \left( (x_t + y_t) (t+1) \right)} \\
              \times &\prod_{t=1}^n \left( \frac{\cos \theta_t + \sin \theta_t}{2} + (-1)^{x_t + y_t} \frac{\cos \theta_t - \sin \theta_t}{2} \right)
          \end{matrix*} \right. \\
          =& \sum_{\substack{\vec{y} \in \{0, 1\}^{n} \\ w_H(\vec{y}) = 0\text{\ mod\ }2}} \sum_{\vec{x} \in \{0, 1\}^n} 2^{2(n+1)} (-1)^{\sum_{k=1}^{n-1} a_k y_{k+1}} (-1)^{\sum_{t=2}^n y_t t} \left\{ \begin{matrix*}
              &\prod_{t=1}^n \left( \frac{\cos \theta_t + \sin \theta_t}{2} + (-1)^{x_t} \frac{\cos \theta_t - \sin \theta_t}{2} \right) \\
              \times &\prod_{t=1}^n \left( \frac{\cos \theta_t + \sin \theta_t}{2} + (-1)^{x_t + y_t} \frac{\cos \theta_t - \sin \theta_t}{2} \right)
          \end{matrix*} \right.  
      \end{aligned}
  \end{equation*}
  Here the last equality holds by simplifying the coefficients according to procedure similar to the first term.
  Fix $\vec{y}$ and look into the sum over $\vec{x}$, we have 
  \begin{equation}
    \label{eq:s:positivity-term-3}
      \begin{aligned}
          &\mathrm{tr} \left[ P_{\vec{a}} W^2\left(\left\{ \theta_k \right\}_{k=1}^n \right) \right] = \sum_{\vec{y} \in \{0, 1\}^{n-1}} 8 (-1)^{\sum_{k=1}^{n-1} a_k y_{k+1}} (-1)^{\sum_{t=2}^n y_t t} \prod_{t=1}^n \left( \frac{1 + \sin (2\theta_t)}{2} + (-1)^{y_t} \frac{1 - \sin 2\theta_{t}}{2} \right).
      \end{aligned}
  \end{equation}
  \item $\mu^2 \mathrm{tr} \left[ P_{\vec{a}} I^{\otimes n} \right] = 2\mu^2$.
  \item $2s_n \mathrm{tr} \left[ P_{\vec{a}} K \left( \{\theta_k\}_{k=1}^{n} \right) W_n \left( \{\theta_k\}_{k=1}^{n} \right) \right]$.
  By Eq.~\eqref{eq:s:dephased-bloch} and Eq.~\eqref{eq:s:dephased-bell}, combined with Observation~\ref{obv:s:cubic-trace}, we have
  \begin{equation*}
      \begin{aligned}
          &\mathrm{tr} \left[ P_{\vec{a}} K \left( \{\theta_k\}_{k=1}^{n} \right) W_n \left( \{\theta_k\}_{k=1}^{n} \right) \right] \\
          =& \frac{1}{2^{2n-1}} \sum_{\substack{\vec{y} \in \{0, 1\}^{n} \\ w_H(\vec{y}) = 0\text{\ mod\ }2}} \sum_{\vec{x} \in \{0, 1\}^n} (-1)^{\sum_{k=1}^{n-1}a_k y_{k+1}} (-1)^{\sum_{k=1}^n y_k x_k} (-1)^{y_1} r_{\vec{x}} p_{\vec{y} \oplus \vec{x}} \\
          =& \frac{1}{2^{2n-1}} \sum_{\substack{\vec{y} \in \{0, 1\}^{n} \\ w_H(\vec{y}) = 0\text{\ mod\ }2}} \sum_{\vec{x} \in \{0, 1\}^n} (-1)^{\sum_{k=1}^{n-1}a_k y_{k+1}} (-1)^{\sum_{k=1}^n y_k x_k} (-1)^{y_1} \\
          &\times \left\{\begin{matrix*}[l]
              &\frac{1}{\sqrt{2}} (-1)^{\sum_{t=2}^n x_t \left( \sum_{k=1}^t x_k \right) (-1)^{\sum_{t=2}^n x_t (t + 1)}} \prod_{t=1}^n g^{x_t} (\theta_t) \\
              \times & 2^{n+1}(-1)^{\sum_{t=2}^n \left( \left( x_t + y_t \right) \left( \sum_{k=1}^t \left( x_k + y_k \right) \right) \right) (-1)^{\sum_{t=2}^n \left( \left( x_t + y_t \right) (t + 1) \right)}} \\
              \times & \prod_{t=1}^n \left( \frac{\cos \theta_t + \sin \theta_t}{2} + (-1)^{x_t + y_t} \frac{\cos \theta_t - \sin \theta_t}{2} \right)
          \end{matrix*} \right. \\
          =& \frac{1}{2^{n-2}} \sum_{\substack{\vec{y} \in \{0, 1\}^{n} \\ w_H(\vec{y}) = 0\text{\ mod\ }2}} \sum_{\vec{x} \in \{0, 1\}^n} \frac{1}{\sqrt{2}} (-1)^{\sum_{k=1}^{n-1} a_k y_{k+1}} (-1)^{\sum_{t=2}^n y_t t} \\ 
          &\times \prod_{t=1}^n g^{x_t} (\theta_t) \left( \frac{\cos \theta_t + \sin \theta_t}{2} + (-1)^{x_t + y_t} \frac{\cos \theta_t - \sin \theta_t}{2} \right).
      \end{aligned}
  \end{equation*}
  Similarly, by fixing $\vec{y}$ and look into the summation over $\vec{x}$, we have 
  \begin{equation}
    \label{eq:s:positivity-term-2}
      \begin{aligned}
          &\mathrm{tr} \left[ P_{\vec{a}} K W \left( \{\theta_k\}_{k=1}^{n} \right) \right] \\
          =& \frac{1}{2^{n-2}} \sum_{\substack{\vec{y} \in \{0, 1\}^{n} \\ w_H(\vec{y}) = 0\text{\ mod\ }2}} \frac{1}{\sqrt{2}} (-1)^{\sum_{k=1}^{n-1} a_k y_{k+1}} (-1)^{\sum_{t=2}^n y_t t} \\
          & \times \prod_{t=1}^n \frac{1}{2} \left( \begin{matrix*}
              \left[ (\cos \theta_t + g(\theta_t) \sin \theta_t) + (g(\theta_t) \cos \theta_t + \sin \theta_t) \right] \\
              + (-1)^{y_t} \left[ (\cos \theta_t + g(\theta_t) \sin \theta_t) - (g(\theta_t) \cos \theta_t + \sin \theta_t) \right]
          \end{matrix*} \right)
      \end{aligned}
  \end{equation}

  Then consider the quadratic terms.
  \item $2\mu \mathrm{tr} \left[ P_{\vec{a}} K \left( \{\theta_k\}_{k=1}^{n} \right) \right]$.
  We already have Eq.~\eqref{eq:91}.
  \item $2\mu \mathrm{tr} \left[ P_{\vec{a}} W \left( \{\theta_k\}_{k=1}^{n} \right) \right] = 0$. This is because $W_n \left( \{\theta_k\}_{k=1}^{n} \right)$ is linear combination of tensor products of Pauli-$X$s and Pauli-$Z$s, and $P_{\vec{a}}$ is linear combination of tensor products of even number of Pauli-$Y$s.
\end{enumerate}
Also, from Eq.~\eqref{eq:s:linear-trace} we have
\begin{equation}
    \label{eq:s:trace-square}
    \begin{aligned}
        \left( \mathrm{tr} \left[ M_{\vec{a}} \right] \right)^2 =& \frac{1}{2^{2(n-1)}} \sum_{\substack{\vec{y} \in \{0, 1\}^{n} \\ w_H(\vec{y}) = 0\text{\ mod\ }2}} \sum_{\substack{\vec{y} \in \{0, 1\}^{n} \\ w_H(\vec{y}) = 0\text{\ mod\ }2}} (-1)^{\sum_{k=2}^n \left( (y_k + y_{1, k}) (a_{k-1} + k) \right)} \prod_{k=1}^{n} g^{y_k + y_{1, k}}(\theta_k) \\
        &- \frac{4\mu}{2^{n-1}} \sum_{\substack{\vec{y} \in \{0, 1\}^{n} \\ w_H(\vec{y}) = 0\text{\ mod\ }2}} (-1)^{\sum_{k=2}^n y_k (a_{k-1} + k)} \prod_{k=1}^n g^{y_k} (\theta_k) + 4\mu^2
    \end{aligned}
\end{equation}

Take Eq.~\eqref{eq:s:square-trace} into Eq.~\eqref{eq:positivity}, now we expand and simplify the following expression:
\begin{align}
    &\lambda_{\vec{a}} = \left( \mathrm{tr} \left[ M_{\vec{a}} \right] \right)^2 - \mathrm{tr} \left[ M_{\vec{a}}^2 \right] \nonumber\\
    =& \left( \mathrm{tr} \left[ P_{\vec{a}} T \left( \{ \theta_k \}_{k=1}^{n} \right) \right] \right)^2 - \mathrm{tr} \left[ P_{\vec{a}} K^2 \left( \{ \theta_k \}_{k=1}^{n} \right) \right] - s_n^2 \mathrm{tr} \left[ P_{\vec{a}} W^2 \left( \{ \theta_k \}_{k=1}^{n} \right) \right] - \mu^2 \mathrm{tr} \left[ P_{\vec{a}} I^{\otimes n} \right] \nonumber \\
    &+ 2s_n \mathrm{tr} \left[ P_{\vec{a}} K W \left( \{ \theta_k \}_{k=1}^{n} \right) \right] + 2\mu \mathrm{tr} \left[ P_{\vec{a}} K \left( \{ \theta_k \}_{k=1}^{n} \right) \right] - 2s_n\mu \mathrm{tr} \left[ P_{\vec{a}} W \left( \{ \theta_k \}_{k=1}^{n} \right) \right] \nonumber \\
    =& \left( \mathrm{tr} \left[ P_{\vec{a}} T \left( \{ \theta_k \}_{k=1}^{n} \right) \right] \right)^2 + 2\mu \mathrm{tr} \left[ P_{\vec{a}} K \left( \{ \theta_k \}_{k=1}^{n} \right) \right] - 2\mu^2 - \mathrm{tr} \left[ P_{\vec{a}} K^2 \left( \{ \theta_k \}_{k=1}^{n} \right) \right] \label{eq:s:positivity-f4-term} \\
    &+ 2s_n \mathrm{tr} \left[ P_{\vec{a}} K W \left( \{ \theta_k \}_{k=1}^{n} \right) \right] - s_n^2 \mathrm{tr} \left[ P_{\vec{a}} W^2 \left( \{ \theta_k \}_{k=1}^{n} \right) \right]. \nonumber 
\end{align}

Combining Eq.~\eqref{eq:s:trace-square}, \eqref{eq:91}, \eqref{eq:s:positivity-term-1}, we have
\begin{align}
    & \left( \mathrm{tr} \left[ P_{\vec{a}} T \left( \{ \theta_k \}_{k=1}^{n} \right) \right] \right)^2 + 2\mu \mathrm{tr} \left[ P_{\vec{a}} K \left( \{ \theta_k \}_{k=1}^{n} \right) \right] - 2\mu^2 - \mathrm{tr} \left[ P_{\vec{a}} K^2 \left( \{ \theta_k \}_{k=1}^{n} \right) \right] \nonumber \\
    =& \frac{4}{2^{2n}} \sum_{\substack{\vec{y} \in \{0, 1\}^{n} \\ w_H(\vec{y}) = 0\text{\ mod\ }2}} \sum_{\substack{\vec{y}_1 \in \{0, 1\}^{n} \\ w_H(\vec{y}_1) = 0\text{\ mod\ }2}} (-1)^{\sum_{k=2}^n \left( (y_k + y_{1, k}) (a_{k-1} + k) \right)} \prod_{k=1}^{n} g^{y_k + y_{1, k}}(\theta_k) \label{eq:s:expand-1} \\
    &- \frac{4\mu}{2^{n-1}} \sum_{\substack{\vec{y} \in \{0, 1\}^{n} \\ w_H(\vec{y}) = 0\text{\ mod\ }2}} (-1)^{\sum_{k=2}^n y_k (a_{k-1} + k)} \prod_{k=1}^n g^{y_k} (\theta_k) + 4\mu^2 \nonumber\\
    &+ \frac{2\mu}{2^{n-1}} \sum_{\substack{\vec{y} \in \{0, 1\}^{n} \\ w_H(\vec{y}) = 0\text{\ mod\ }2}} (-1)^{\sum_{k=2}^n y_k (a_{k-1} + k)} \prod_{k=1}^n g^{y_k} (\theta_k) - 2\mu^2 \nonumber \\
    &- \frac{2}{2^{2n}} \sum_{\substack{\vec{y} \in \{0, 1\}^{n} \\ w_H(\vec{y}) = 0\text{\ mod\ }2}} \sum_{\substack{\vec{y}_1 \in \{0, 1\}^{n} \\ w_H(\vec{y}_1) = 0\text{\ mod\ }2}} (-1)^{\sum_{k=2}^{n} a_{k-1} y_{k}} (-1)^{\sum_{t=2}^n y_t t} \prod_{t=1}^n g^{y_{1, t} + (y_{1, t} + y_t \mathrm{\ mod\ 2})} (\theta_t) \label{eq:s:expand-2}\\
    &- \frac{1}{2^{2n}} \sum_{\substack{\vec{y} \in \{0, 1\}^{n} \\ w_H(\vec{y}) = 0\text{\ mod\ }2}} \sum_{\vec{x} \in \{0, 1\}^{n}} (-1)^{\sum_{k=2}^n a_{k-1} y_k} (-1)^{\sum_{t=2}^n y_t t} \prod_{t=1}^n g^{x_t + (x_t + y_t \mathrm{\ mod\ 2})}(\theta_t). \label{eq:s:expand-3}
\end{align}
Fix $k$ in each term in the summation of Eq.~\eqref{eq:s:expand-1} and inspect its value for different $y_k$ and $y_{1, k}$, we have
\begin{equation*}
    \begin{aligned}
        &(-1)^{(y_k + y_{1, k})(a_{k-1} + k)} g^{y_k + y_{1, k}}(\theta_k) = \left\{ \begin{matrix*}[c]
            y_k = 0, y_{1, k} = 0: & 1\\
            y_k = 0, y_{1, k} = 1: & (-1)^{a_{k-1}}(-1)^k g(\theta_k)\\
            y_k = 1, y_{1, k} = 0: & (-1)^{a_{k-1}}(-1)^k g(\theta_k)\\
            y_k = 1, y_{1, k} = 1: & g^2(\theta_k)\\
        \end{matrix*} \right.
    \end{aligned}
\end{equation*}
Compare with each term in the second-to-last term in Eq.~\eqref{eq:s:expand-2}:
\begin{equation*}
    \begin{aligned}
        &(-1)^{a_{t-1} y_{t}} (-1)^{y_t t} g^{y_{1, t} + (y_{1, t} + y_t \mathrm{\ mod\ 2})} (\theta_t) = \left\{ \begin{matrix*}[c]
            y_t = 0, y_{1, t} = 0: & 1\\
            y_t = 0, y_{1, t} = 1: & g^2(\theta_t)\\
            y_k = 1, y_{1, t} = 0: & (-1)^{a_{t-1}}(-1)^t g(\theta_t)\\
            y_t = 1, y_{1, t} = 1: & (-1)^{a_{t-1}}(-1)^t g(\theta_t)\\
        \end{matrix*} \right.
    \end{aligned}
\end{equation*}
This implies that
\begin{equation*}
    \begin{aligned}
        &\sum_{\substack{\vec{y} \in \{0, 1\}^{n} \\ w_H(\vec{y}) = 0\text{\ mod\ }2}} \sum_{\substack{\vec{y}_1 \in \{0, 1\}^{n} \\ w_H(\vec{y}_1) = 0\text{\ mod\ }2}} (-1)^{\sum_{k=2}^n \left( (y_k + y_{1, k}) (a_{k-1} + k) \right)} \prod_{k=1}^{n} g^{y_k + y_{1, k}}(\theta_k) \\
        =& \sum_{\substack{\vec{y} \in \{0, 1\}^{n} \\ w_H(\vec{y}) = 0\text{\ mod\ }2}} \sum_{\substack{\vec{y}_1 \in \{0, 1\}^{n} \\ w_H(\vec{y}_1) = 0\text{\ mod\ }2}} (-1)^{\sum_{k=2}^{n} a_{k-1} y_{k}} (-1)^{\sum_{t=2}^n y_t t} \prod_{t=1}^n g^{y_{1, t} + (y_{1, t} + y_t \mathrm{\ mod\ 2})} (\theta_t),
    \end{aligned}
\end{equation*}
Moreover, the last term in Eq.~\eqref{eq:s:expand-3} can be written:
\begin{equation*}
    \begin{aligned}
        &\sum_{\substack{\vec{y} \in \{0, 1\}^{n} \\ w_H(\vec{y}) = 0\text{\ mod\ }2}} \sum_{\vec{x} \in \{0, 1\}^{n}} (-1)^{\sum_{k=2}^n a_{k-1} y_k} (-1)^{\sum_{t=2}^n y_t t} \prod_{t=1}^n g^{x_t + (x_t + y_t \mathrm{\ mod\ 2})}(\theta_t) \\
        =&\left\{ \begin{matrix*}[l]
            & \sum_{\substack{\vec{y} \in \{0, 1\}^{n} \\ w_H(\vec{y}) = 0\text{\ mod\ }2}} \sum_{\substack{\vec{y}_1 \in \{0, 1\}^{n} \\ w_H(\vec{y}_1) = 0\text{\ mod\ }2}} (-1)^{\sum_{k=2}^{n} a_{k-1} y_{k}} (-1)^{\sum_{t=2}^n y_t t} \prod_{t=1}^n g^{y_{1, t} + (y_{1, t} + y_t \mathrm{\ mod\ 2})} (\theta_t) \\
            + & \sum_{\substack{\vec{y} \in \{0, 1\}^{n} \\ w_H(\vec{y}) = 0\text{\ mod\ }2}} \sum_{\substack{\vec{y}_1 \in \{0, 1\}^{n} \\ w_H(\vec{y}_1) = 0\text{\ mod\ }2}} (-1)^{\sum_{k=2}^{n} a_{k-1} y_{k}} (-1)^{\sum_{t=2}^n y_t t} \prod_{t=2}^n g^{y_{1, t} + (y_{1, t} + y_t \mathrm{\ mod\ 2})} (\theta_t) \\
            & \times g^{(y_{1, 1} + 1 \mathrm{\ mod\ 2}) + ((y_{1, 1} + 1 \mathrm{\ mod\ 2}) + y_1 \mathrm{\ mod\ 2})} (\theta_1) 
        \end{matrix*}\right. 
    \end{aligned}
\end{equation*}
So we can merge 3 terms given in Eq.~\eqref{eq:s:expand-1}, Eq.~\eqref{eq:s:expand-2} and Eq.~\eqref{eq:s:expand-3}, then obtain:
\begin{equation}
    \label{eq:s:simplified-f4-term}
    \begin{aligned}
        & \left( \mathrm{tr} \left[ P_{\vec{a}} T \left( \{ \theta_k \}_{k=1}^{n} \right) \right] \right)^2 + 2\mu \mathrm{tr} \left[ P_{\vec{a}} K \left( \{ \theta_k \}_{k=1}^{n} \right) \right] - 2\mu^2 - \mathrm{tr} \left[ P_{\vec{a}} K^2 \left( \{ \theta_k \}_{k=1}^{n} \right) \right] \\
        =& \frac{1}{2^{2n}} \sum_{\substack{\vec{y} \in \{0, 1\}^{n} \\ w_H(\vec{y}) = 0\text{\ mod\ }2}} \sum_{\substack{\vec{y}_1 \in \{0, 1\}^{n} \\ w_H(\vec{y}_1) = 0\text{\ mod\ }2}} \left\{ \begin{matrix*}[l]
            (-1)^{\sum_{k=2}^{n} a_{k-1} y_{k}} (-1)^{\sum_{t=2}^n y_t t} \left[ \prod_{t=2}^n g^{y_{1, t} + (y_{1, t} + y_t \mathrm{\ mod\ 2})} (\theta_t) \right] \\
            \times \left[ \begin{matrix*}[l]
                & g^{y_{1, 1} + (y_{1, 1} + y_1 \mathrm{\ mod\ 2})} (\theta_1) \\
                -& g^{(y_{1, 1} + 1 \mathrm{\ mod\ 2}) + ((y_{1, 1} + 1 \mathrm{\ mod\ 2}) + y_1 \mathrm{\ mod\ 2})} (\theta_1)
            \end{matrix*} \right]
        \end{matrix*} \right\} \\
        &- \frac{2\mu}{2^{n-1}} \sum_{\substack{\vec{y} \in \{0, 1\}^{n} \\ w_H(\vec{y}) = 0\text{\ mod\ }2}} (-1)^{\sum_{k=2}^n y_k (a_{k-1} + k)} \prod_{k=1}^n g^{y_k} (\theta_k) + 2\mu^2 \\
        =& \frac{1}{2^{2n}} \prod_{k=1}^n \left( 1 - g^2 (\theta_k) \right) - \frac{2\mu}{2^{n-1}} \sum_{\substack{\vec{y} \in \{0, 1\}^{n} \\ w_H(\vec{y}) = 0\text{\ mod\ }2}} (-1)^{\sum_{k=2}^n y_k (a_{k-1} + k)} \prod_{k=1}^n g^{y_k} (\theta_k) + 2\mu^2,
    \end{aligned}
\end{equation}
where the last equality follows Observation~\ref{obv:s:simplify-positivity-1}.
Combining Eq.~\eqref{eq:s:simplified-f4-term}, Eq.~\eqref{eq:s:positivity-term-2} and Eq.~\eqref{eq:s:positivity-term-3}, we obtain Eq.~\eqref{eq:closed-positivity}.
\end{proof}

\begin{observation}
    \label{obv:s:simplify-positivity-1}
    \begin{equation}
        \label{eq:s:simplify-positivity-1}
        \begin{aligned}
            &\sum_{\substack{\vec{y} \in \{0, 1\}^{n} \\ w_H(\vec{y}) = 0\text{\ mod\ }2}} \sum_{\substack{\vec{y}_1 \in \{0, 1\}^{n} \\ w_H(\vec{y}_1) = 0\text{\ mod\ }2}} \left\{ \begin{matrix*}[l]
            (-1)^{\sum_{k=2}^{n} a_{k-1} y_{k}} (-1)^{\sum_{t=2}^n y_t t} \left[ \prod_{t=2}^n g^{y_{1, t} + (y_{1, t} + y_t \mathrm{\ mod\ 2})} (\theta_t) \right] \\
            \times \left[ \begin{matrix*}[l]
                & g^{y_{1, 1} + (y_{1, 1} + y_1 \mathrm{\ mod\ 2})} (\theta_1) \\
                -& g^{(y_{1, 1} + 1 \mathrm{\ mod\ 2}) + ((y_{1, 1} + 1 \mathrm{\ mod\ 2}) + y_1 \mathrm{\ mod\ 2})} (\theta_1)
            \end{matrix*} \right]
        \end{matrix*} \right\} \\
            =& \prod_{k=1}^n \left(1 - g^2(\theta_k) \right).
        \end{aligned}
    \end{equation}
\end{observation}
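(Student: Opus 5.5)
The plan is to make the double sum factorize over qubits. The only obstruction is the pair of parity constraints on $\vec{y}$ and $\vec{y}_1$, so I would remove them with characters. Write $\mathbf{1}\{w_H(\vec{y})\equiv 0\} = \frac12\sum_{u\in\{0,1\}}(-1)^{u\,w_H(\vec{y})}$, and likewise with a variable $v$ for $\vec{y}_1$. The summand is a product of single-site factors:
\begin{itemize}
\item For $t\ge 2$ the factor is $c_t^{y_t} g^{y_{1,t}+(y_{1,t}\oplus y_t)}(\theta_t)$, with $c_t\coloneqq(-1)^{a_{t-1}+t}$.
\item Site $t=1$ carries the bracketed difference.
\end{itemize}
Hence, for each fixed $(u,v)$, the sum over $\vec{y},\vec{y}_1\in\{0,1\}^n$ factorizes into a product of four-term local sums.

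First I would compute the local sums for $t\ge 2$:
\[
\sum_{y,y_1\in\{0,1\}} (-1)^{uy+vy_1} c_t^{y} g^{y_1+(y_1\oplus y)}(\theta_t)
= 1 + (-1)^v g^2 + (-1)^u c_t g + (-1)^{u+v} c_t g,
\]
where $g=g(\theta_t)$. Next I would compute the site-$1$ local sum with the bracket, whose two terms swap $y_{1,1}\leftrightarrow y_{1,1}\oplus 1$. For $y_1=0$ the bracket is $g^{2y_{1,1}}-g^{2(1-y_{1,1})}$, contributing $(1-(-1)^v)(1-g^2)$. For $y_1=1$ both exponents equal $1$, so the bracket vanishes. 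The site-$1$ sum is therefore $(1-(-1)^v)(1-g^2(\theta_1))$, which forces $v=1$.

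With $v=1$ the local factor for $t\ge 2$ collapses to $1-g^2(\theta_t)$ for either value of $u$, since the $c_t g$ terms cancel. The total is then $\frac14\sum_{u}2(1-g^2(\theta_1))\prod_{t\ge2}(1-g^2(\theta_t)) = \prod_{k=1}^n(1-g^2(\theta_k))$, as claimed. The $(-1)^{a}$ and $(-1)^t$ signs drop out entirely, which is consistent with the right-hand side being independent of $\vec{a}$.

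The main obstacle is bookkeeping rather than any conceptual step. I need to check that the signs attached to the parity characters are inserted correctly at site $1$. Note that $y_1$ enters the constraint on $\vec{y}$ even though the summand's sign factors only involve $t\ge2$. I also need to confirm that the second bracket term is exactly the $y_{1,1}\oplus1$ relabelling. If the normalization comes out off by a factor of $2$, I would recheck the $\frac12\cdot\frac12$ from the two character expansions against the factor $2$ produced by the $v$-sum at site $1$.
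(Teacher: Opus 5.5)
Your proof is correct. I checked each step.
\begin{itemize}
\item For $t\ge 2$ the local sum is $1+(-1)^v g^2+(-1)^u c_t g\,(1+(-1)^v)$.
\item The site-$1$ local sum is $(1-(-1)^v)(1-g^2(\theta_1))$, independently of $u$.
\item Once $v=1$ is forced, the normalization $\frac14\cdot 2\cdot 2=1$ comes out right.
\end{itemize}

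Your route is genuinely different from the paper's. The paper never expands the parity constraints. It fixes $\vec{y}$ and shows the inner sum over $\vec{y}_1$ vanishes whenever $\vec{y}\neq\vec{0}$:
\begin{itemize}
\item If $y_1=1$, the bracket is $g-g=0$.
\item If $y_1=0$ but $y_{k_0}=1$ for some $k_0\ge 2$, it uses the parity-preserving involution that flips $y_{1,1}$ and $y_{1,k_0}$ simultaneously. This leaves the site-$k_0$ factor $g(\theta_{k_0})$ unchanged but flips the site-$1$ factor $(-1)^{y_{1,1}}(1-g^2(\theta_1))$, so the terms cancel in pairs.
\end{itemize}
It then evaluates the lone $\vec{y}=\vec{0}$ term by writing $(-1)^{y_{1,1}}=\prod_{t\ge 2}(-1)^{y_{1,t}}$ and summing freely over $y_{1,2},\dots,y_{1,n}$.

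The paper's sign-reversing involution is hands-on and exhibits the stronger per-$\vec{y}$ vanishing explicitly. Your character expansion replaces that case analysis with one uniform factorization. It makes the disappearance of the $\vec{a}$- and $t$-dependent signs automatic, because the $c_t g$ terms cancel once $v=1$. It would also carry over unchanged to other single-site weights.

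Your method recovers the per-$\vec{y}$ statement too, if you expand only the $\vec{y}_1$ constraint and keep $\vec{y}$ fixed. The site-$t$ factor with $y_t=1$ at $v=1$ is $c_t g - c_t g = 0$.
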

\begin{proof}
We first prove that, for a fixed choice of $\vec{y} = (y_1 y_2 \cdots y_n) \in \{0, 1\}^n$ with $w_H(\vec{y}) = 0 \text{\ mod\ }2$, we have
\begin{equation*}
    \begin{aligned}
        &\sum_{\substack{\vec{y}_1 \in \{0, 1\}^{n} \\ w_H(\vec{y}_1) = 0\text{\ mod\ }2}} \left\{ \begin{matrix*}[l]
            (-1)^{\sum_{k=2}^{n} a_{k-1} y_{k}} (-1)^{\sum_{t=2}^n y_t t} \left[ \prod_{t=2}^n g^{y_{1, t} + (y_{1, t} + y_t \mathrm{\ mod\ 2})} (\theta_t) \right] \\
            \times \left[ \begin{matrix*}[l]
                & g^{y_{1, 1} + (y_{1, 1} + y_1 \mathrm{\ mod\ 2})} (\theta_1) \\
                -& g^{(y_{1, 1} + 1 \mathrm{\ mod\ 2}) + ((y_{1, 1} + 1 \mathrm{\ mod\ 2}) + y_1 \mathrm{\ mod\ 2})} (\theta_1)
            \end{matrix*} \right]
        \end{matrix*} \right\} = 0 \mathrm{\ if\ }\vec{y} \neq \vec{0}.
    \end{aligned}
\end{equation*}
Discuss by cases of value of $y_1$.
If $y_1 = 0$, then since $\vec{y} \neq \vec{0}$, then there exists some integer $k_0 \geq 2$ such that $y_{k_0} = 1$.
Without loss of generality we can assume $k_0 = 2$.
Then Eq.~\eqref{eq:s:simplify-positivity-1} can be written
\begin{equation}
    \label{eq:s:inproof-1}
    \begin{aligned}
        &\sum_{\substack{\vec{y}_1 \in \{0, 1\}^{n} \\ w_H(\vec{y}_1) = 0\text{\ mod\ }2}} \left\{ \begin{matrix*}[l]
            (-1)^{\sum_{k=2}^{n} a_{k-1} y_{k}} (-1)^{\sum_{t=2}^n y_t t} \left[ \prod_{t=2}^n g^{y_{1, t} + (y_{1, t} + y_t \mathrm{\ mod\ 2})} (\theta_t) \right] \\
            \times \left[ g^{2y_{1, 1}} (\theta_1) - g^{2(y_{1, 1} + 1 \mathrm{\ mod\ 2})} (\theta_1) \right]
        \end{matrix*} \right\} \\
        =&\sum_{\substack{\vec{y}_1 \in \{0, 1\}^{n} \\ w_H(\vec{y}_1) = 0\text{\ mod\ }2}}
        \left\{ (-1)^{\sum_{k=3}^{n} a_{k-1} y_{k}} (-1)^{\sum_{t=3}^n y_t t} \left[ \prod_{t=3}^n g^{y_{1, t} + (y_{1, t} + y_t \mathrm{\ mod\ 2})} (\theta_t) \right] (-1)^{a_1 + 2} g(\theta_2) (-1)^{y_{1, 1}} \left( 1 - g^2 (\theta_1) \right) \right\} \\
        =& \sum_{\substack{\vec{y}_1 \in \{0, 1\}^{n} \\ w_H(\vec{y}_1) = 0\text{\ mod\ }2 \\ y_{1, 2} = 0}}
        \left\{ (-1)^{\sum_{k=3}^{n} a_{k-1} y_{k}} (-1)^{\sum_{t=3}^n y_t t} \left( \prod_{t=3}^n g^{y_{1, t} + (y_{1, t} + y_t \mathrm{\ mod\ 2})} (\theta_t) \right) (-1)^{a_1 + 2} g(\theta_2) (-1)^{y_{1, 1}} \left( 1 - g^2 (\theta_1) \right) \right\} \\
        &+ \sum_{\substack{\vec{y}_1 \in \{0, 1\}^{n} \\ w_H(\vec{y}_1) = 0\text{\ mod\ }2 \\ y_{1, 2} = 1}}
        \left\{ (-1)^{\sum_{k=3}^{n} a_{k-1} y_{k}} (-1)^{\sum_{t=3}^n y_t t} \left( \prod_{t=3}^n g^{y_{1, t} + (y_{1, t} + y_t \mathrm{\ mod\ 2})} (\theta_t) \right) (-1)^{a_1 + 2} g(\theta_2) (-1)^{y_{1, 1}} \left( 1 - g^2 (\theta_1) \right) \right\} \\
        =& 0.
    \end{aligned}
\end{equation}
Here the last equality holds by considering a bijection between $n$-bit strings:
\begin{equation*}
    \mathcal{F} : \vec{y}_1 = (y_{1, 1} y_{1, 2} y_{1, 3}\cdots y_{1, n}) \mapsto \vec{y}_2 = ((y_{1, 1} \oplus 1)(y_{1, 2} \oplus 1)y_{1, 3} \cdots y_{1, n}),
\end{equation*}
which flips the first two bits of a bit string and does not affect parity.
Then for $\vec{y}_2 = \mathcal{F}(\vec{y}_1)$, we have
\begin{equation*}
    \begin{aligned}
        &(-1)^{\sum_{k=3}^{n} a_{k-1} y_{k}} (-1)^{\sum_{t=3}^n y_t t} \left( \prod_{t=3}^n g^{y_{2, t} + (y_{2, t} + y_t \mathrm{\ mod\ 2})} (\theta_t) \right) (-1)^{a_1 + 2} g(\theta_2) (-1)^{y_{2, 1}} \left( 1 - g^2 (\theta_1) \right) \\
        =& (-1)^{\sum_{k=3}^{n} a_{k-1} y_{k}} (-1)^{\sum_{t=3}^n y_t t} \left( \prod_{t=3}^n g^{y_{1, t} + (y_{1, t} + y_t \mathrm{\ mod\ 2})} (\theta_t) \right) (-1)^{a_1 + 2} g(\theta_2) (-1)^{y_{2, 1}} \left( 1 - g^2 (\theta_1) \right) \\
        =& (-1) \times (-1)^{\sum_{k=3}^{n} a_{k-1} y_{k}} (-1)^{\sum_{t=3}^n y_t t} \left( \prod_{t=3}^n g^{y_{1, t} + (y_{1, t} + y_t \mathrm{\ mod\ 2})} (\theta_t) \right) (-1)^{a_1 + 2} g(\theta_2) (-1)^{y_{1, 1}} \left( 1 - g^2 (\theta_1) \right).
    \end{aligned}
\end{equation*}
This gives the last equality in Eq.~\eqref{eq:s:inproof-1}.

Now suppose $y_1 = 1$.
Take into Eq.~\eqref{eq:s:simplify-positivity-1}, it can be easily verified that
\begin{equation*}
    \begin{aligned}
        &\sum_{\substack{\vec{y}_1 \in \{0, 1\}^{n} \\ w_H(\vec{y}_1) = 0\text{\ mod\ }2}} \left\{ \begin{matrix*}[l]
            (-1)^{\sum_{k=2}^{n} a_{k-1} y_{k}} (-1)^{\sum_{t=2}^n y_t t} \left[ \prod_{t=2}^n g^{y_{1, t} + (y_{1, t} + y_t \mathrm{\ mod\ 2})} (\theta_t) \right] \\
            \times \left[ \begin{matrix*}[l]
                & g^{y_{1, 1} + (y_{1, 1} + 1 \mathrm{\ mod\ 2})} (\theta_1) \\
                -& g^{(y_{1, 1} + 1 \mathrm{\ mod\ 2}) + ((y_{1, 1} + 1 \mathrm{\ mod\ 2}) + 1 \mathrm{\ mod\ 2})} (\theta_1)
            \end{matrix*} \right]
        \end{matrix*} \right\} \\
        =&\sum_{\substack{\vec{y}_1 \in \{0, 1\}^{n} \\ w_H(\vec{y}_1) = 0\text{\ mod\ }2}}
        \left\{ (-1)^{\sum_{k=2}^{n} a_{k-1} y_{k}} (-1)^{\sum_{t=2}^n y_t t} \left[ \prod_{t=2}^n g^{y_{1, t} + (y_{1, t} + y_t \mathrm{\ mod\ 2})} (\theta_t) \right] \left[ g(\theta_1) - g (\theta_1) \right] \right\} \\
        =& 0.
    \end{aligned}
\end{equation*}

Lastly, it remains to consider the case $\vec{y} = 0$, for which Eq.~\eqref{eq:s:simplify-positivity-1} transforms into:
\begin{equation*}
    \begin{aligned}
        &\sum_{\substack{\vec{y}_1 \in \{0, 1\}^{n} \\ w_H(\vec{y}_1) = 0\text{\ mod\ }2}} \left\{ \begin{matrix*}[l]
            \left[ \prod_{t=2}^n g^{y_{1, t} + (y_{1, t} + 0 \mathrm{\ mod\ 2})} (\theta_t) \right] \\
            \times \left[ \begin{matrix*}[l]
                & g^{y_{1, 1} + (y_{1, 1} + 0 \mathrm{\ mod\ 2})} (\theta_1) \\
                -& g^{(y_{1, 1} + 1 \mathrm{\ mod\ 2}) + ((y_{1, 1} + 1 \mathrm{\ mod\ 2}) + 0 \mathrm{\ mod\ 2})} (\theta_1)
            \end{matrix*} \right]
        \end{matrix*} \right\} \\
        =&\sum_{\substack{\vec{y}_1 \in \{0, 1\}^{n} \\ w_H(\vec{y}_1) = 0\text{\ mod\ }2}} \left( \prod_{t=2}^n g^{2y_{1, t}}(\theta_t) \right) (-1)^{y_{1, 1}} \left( 1 - g^2 (\theta_1)\right).
    \end{aligned}
\end{equation*}
Writing $(-1)^{y_{1, 1}} = \prod_{k=2}^n(-1)^{y_{1, k}}$, we have
\begin{equation*}
    \begin{aligned}
        &\sum_{\substack{\vec{y}_1 \in \{0, 1\}^{n} \\ w_H(\vec{y}_1) = 0\text{\ mod\ }2}} \left( \prod_{t=2}^n g^{2y_{1, t}}(\theta_t) \right) (-1)^{y_{1, 1}} \left( 1 - g^2 (\theta_1)\right) = \sum_{\substack{\vec{y}_1 \in \{0, 1\}^{n} \\ w_H(\vec{y}_1) = 0\text{\ mod\ }2}} \left( \prod_{t=2}^n (-1)^{y_{1, t}} g^{2y_{1, t}}(\theta_t) \right) \left( 1 - g^2 (\theta_1)\right) \\
        =& \prod_{t=1}^n \left( 1 - g^2 (\theta_t) \right).
    \end{aligned}
\end{equation*}
\end{proof}

\begin{remark}
  \label{remark:sign}
  The sign
  \begin{equation}
    \label{eq:sign}
    (-1)^{\sum_{k=1}^{n-1} a_k y_{k+1}} (-1)^{\sum_{t=2}^n y_t t} = (-1)^{\left[\sum_{k=1}^{n-1} (a_k + k) y_{k+1}\right] + y_1} = (-1)^{\sum_{k=2}^n (a_{k-1}+k)y_k}
  \end{equation}
  in Eq.~\eqref{eq:closed-positivity} plays crucial role in the following estimation of $\lambda_{\vec{a}}$. The binary string $\vec{a}$ controls which elements of $\vec{y}_{k+1}$ can affect the sign.
  Consider the index set
  \begin{equation*}
    \mathcal{T}_{\vec{a}} \coloneqq \left\{ k \in \{1, 2, \cdots, n\} - \{1\} \middle| a_{k-1} + k = 1 \text{\ mod\ 2} \right\},
  \end{equation*}
  which is the positions that affects the sign.
  Then since $w_H(\vec{y})$ is restricted to be even, we have the symmetry between $\mathcal{T}_{\vec{a}}$ and its complement $\mathcal{P}_{\vec{a}} = \{1, 2, \cdots, n\} - \mathcal{T}_{\vec{a}}$.
  That is, the sign in Eq.~\eqref{eq:sign} is negative if and only if $\vec{y}$ has odd number of `1' at positions in $\mathcal{T}_{\vec{a}}$, or equivalently $\mathcal{P}_{\vec{a}}$.

  For convenience, in the following discussion we denote $\mathcal{T}_{\vec{a}} = \{t_l\}_{l=1}^h$ and $\mathcal{P}_{\vec{a}} = \{j_l\}_{l=1}^d$, where $h+d=n$.
\end{remark}

\subsection{Lower Bound Estimation}

To have a lower-bound of positivity in Eq.~\eqref{eq:closed-positivity} uniformly in $\{\theta_k\}_{k=1}^n$, we first apply the following substitution for all $t = 1, \cdots n$:
\begin{equation*}
  v_t = 1 - \sin \left( \theta_t + \frac{\pi}{4} \right),
\end{equation*}
which maps the domain $[0, \pi/4]^n$ to $[0, \kappa]^n$, where $\kappa = 1 - 1/\sqrt{2}$.
More specifically, under this substitution we have
\begin{equation*}
    \begin{aligned}
        &g(\theta_t) \to 1 - \sqrt{2} \left( 1 + \sqrt{2} \right) v_t, \\
        &\sin(2\theta_t) \to 1 + 2v_t (v_t - 2), \\
        &\cos \theta_t + \sin \theta_t + g(\theta_t) \left[ \cos \theta_t + \sin \theta_t \right] =
        \sqrt{2} \left(1 + g(\theta_t) \right) \sin \left( \theta_t + \frac{\pi}{4} \right) \to 2\sqrt{2} (1 - v_t) \left( 1 - \frac{1 + \sqrt{2}}{\sqrt{2}} v_t \right) \\
        &\cos \theta_t - \sin \theta_t - g(\theta_t) (\cos \theta_t - \sin \theta_t) =
        \sqrt{2} \left( 1 - g(\theta_t) \right) \cos \left( \theta_t + \frac{\pi}{4} \right) \to 2 \left( 1 + \sqrt{2} \right) v_t \sqrt{2v_t - v_t^2}.
    \end{aligned}
\end{equation*}

The following observation further simplifies the summations in Eq.~\eqref{eq:closed-positivity}.

\begin{observation}
  \label{obv:no-bistring}
  For arbitrary expressions $\{A_t\}_{t=1}^n$ and $\{B_t\}_{t=1}^n$, we have
  \begin{equation*}
    \begin{aligned}
      &\sum_{\substack{\vec{y} \in \{0, 1\}^{n} \\ w_H(\vec{y}) = 0\text{\ mod\ }2}} \prod_{t=1}^n \left( A_t + (-1)^{y_t} B_t \right) = 2^{n-1} \left( \prod_{t=1}^n A_t + \prod_{t=1}^n B_t \right), \\
      &\sum_{\substack{\vec{y} \in \{0, 1\}^{n} \\ w_H(\vec{y}) = 1\text{\ mod\ }2}} \prod_{t=1}^n \left( A_t + (-1)^{y_t} B_t \right) = 2^{n-1} \left( \prod_{t=1}^n A_t - \prod_{t=1}^n B_t \right).
    \end{aligned}
  \end{equation*}
\end{observation}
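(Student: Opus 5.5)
The plan is to prove both identities at once by expanding the full product over all bit strings and then using the parity character to separate even from odd Hamming weight. Define
\begin{equation*}
  F_\pm \coloneqq \sum_{\vec{y} \in \{0,1\}^n} (\pm 1)^{w_H(\vec{y})} \prod_{t=1}^n \left( A_t + (-1)^{y_t} B_t \right).
\end{equation*}
Since the weight $(\pm1)^{w_H(\vec{y})} = \prod_t (\pm 1)^{y_t}$ factorizes over positions, and the summand is already a product of single-position factors, the sum over $\vec{y}$ factorizes into a product of single-bit sums. For each $t$, the sum over $y_t \in \{0,1\}$ gives $(A_t + B_t) + (A_t - B_t) = 2A_t$ in the $+$ case and $(A_t + B_t) - (A_t - B_t) = 2B_t$ in the $-$ case. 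Hence $F_+ = 2^n \prod_t A_t$ and $F_- = 2^n \prod_t B_t$. The expressions $A_t, B_t$ only need to commute with each other, which they do here since they are scalar functions of $\theta_t$ or $v_t$.

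Next I use the indicator of even weight, $\frac{1}{2}\bigl(1 + (-1)^{w_H(\vec{y})}\bigr)$. The even-weight sum is $\frac{1}{2}(F_+ + F_-) = 2^{n-1}\bigl(\prod_t A_t + \prod_t B_t\bigr)$. Similarly, the indicator of odd weight, $\frac{1}{2}\bigl(1 - (-1)^{w_H(\vec{y})}\bigr)$, gives $\frac{1}{2}(F_+ - F_-) = 2^{n-1}\bigl(\prod_t A_t - \prod_t B_t\bigr)$. These are exactly the two claimed identities.

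There is no real obstacle here. The only point needing care is the factorization step, which requires that each factor depend on $y_t$ alone; that holds because the sign $(-1)^{y_t}$ multiplies only $B_t$. When the identity is later applied to Eq.~\eqref{eq:closed-positivity}, the extra sign $(-1)^{\sum_{k}(a_{k-1}+k)y_k}$ from Remark~\ref{remark:sign} can be absorbed by replacing $B_t$ with $-B_t$ for $t \in \mathcal{T}_{\vec{a}}$, so that the sum again takes the product form required here.
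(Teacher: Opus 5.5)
Your proof is correct and complete. The paper states this identity without proof, so there is no competing route to compare. Your argument is the standard one the paper implicitly relies on: factorize $F_\pm=\sum_{\vec y}(\pm1)^{w_H(\vec y)}\prod_t\bigl(A_t+(-1)^{y_t}B_t\bigr)$ into $2^n\prod_t A_t$ and $2^n\prod_t B_t$, then project onto even or odd weight with $\tfrac12\bigl(1\pm(-1)^{w_H(\vec y)}\bigr)$.

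One correction concerns your closing remark about the later application, not the statement itself. The sign from Remark~\ref{remark:sign} cannot be absorbed by replacing $B_t\mapsto -B_t$ for $t\in\mathcal{T}_{\vec{a}}$. The reason is that $(-1)^{y_t}\bigl(A_t+(-1)^{y_t}B_t\bigr)=B_t+(-1)^{y_t}A_t$, which is not $A_t-(-1)^{y_t}B_t$. The correct move is to swap $A_t\leftrightarrow B_t$ on $\mathcal{T}_{\vec{a}}$. By your identity this gives $2^{n-1}\bigl(\prod_{t\in\mathcal{T}_{\vec{a}}}B_t\prod_{t\in\mathcal{P}_{\vec{a}}}A_t+\prod_{t\in\mathcal{T}_{\vec{a}}}A_t\prod_{t\in\mathcal{P}_{\vec{a}}}B_t\bigr)$, which is exactly the pair of cross-products appearing in Eq.~\eqref{eq:simp-positivity}.

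Your negation would instead produce $2^{n-1}\bigl(\prod_tA_t+(-1)^h\prod_tB_t\bigr)$, which is wrong as soon as $h\ge1$. For example, take $n=2$ with extra sign $(-1)^{y_2}$. The true even-weight sum is $2(A_1B_2+B_1A_2)$, whereas negating $B_2$ gives $2(A_1A_2-B_1B_2)$.
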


After the substitution and by Observation~\ref{obv:no-bistring}, we have
\begin{equation}
  \label{eq:simp-positivity}
    \begin{aligned}
        &\lambda_{\vec{a}} \left( \{ v_t \}_{t=1}^n \right) \\
        =& 2 \mu^2 + \left( \frac{1 + \sqrt{2}}{\sqrt{2}} \right)^n \prod_{t=1}^n v_t \left( 1 - \frac{1 + \sqrt{2}}{\sqrt{2}} v_t \right) \\
        &- 2\mu \left\{ \left[ \prod_{l=1}^h \left( 1 - \frac{1 + \sqrt{2}}{\sqrt{2}} v_{t_l} \right) \right] \left[ \prod_{l=1}^d \frac{1 + \sqrt{2}}{\sqrt{2}} v_{j_l} \right] + \left[ \prod_{l=1}^h \frac{1 + \sqrt{2}}{\sqrt{2}} v_{t_l} \right] \left[ \prod_{l=1}^d \left( 1 - \frac{1 + \sqrt{2}}{\sqrt{2}} v_{j_l} \right) \right] \right\} \\
        &- 2^{n+2} s_n^2 \left\{ \left[ \prod_{l=1}^h (1 - v_{t_l})^2 \right] \left[ \prod_{l=1}^d v_{j_l} (2 - v_{j_l}) \right] + \left[ \prod_{l=1}^h v_{t_l} (2 - v_{t_l}) \right] \left[ \prod_{l=1}^d (1 - v_{j_l})^2 \right] \right\} \\
        &+ 2^{\frac{n+3}{2}} s_n \left\{ \begin{matrix*}[l]
            \left[ \prod_{l=1}^h \left( 1 - \frac{1 + \sqrt{2}}{\sqrt{2}} v_{t_l} \right) \left( 1 - v_{t_l} \right) \right] \left[ \prod_{l=1}^d \frac{1 + \sqrt{2}}{\sqrt{2}} v_{j_l} \sqrt{2 v_{j_l} - v_{j_l}^2} \right] \\
            + \left[ \prod_{l=1}^h \frac{1 + \sqrt{2}}{\sqrt{2}} v_{t_l} \sqrt{2 v_{t_l} - v_{t_l}^2} \right] \left[ \prod_{l=1}^{d} \left( 1 - \frac{1 + \sqrt{2}}{\sqrt{2}} v_{j_l} \right) \left( 1 - v_{j_l} \right) \right]
        \end{matrix*} \right\},
    \end{aligned}
\end{equation}
where we use the notation $\mathcal{T}_{\vec{a}} = \{t_l\}_{l=1}^h$ and $\mathcal{P}_{\vec{a}} = \{j_l\}_{l=1}^d$ in Remark~\ref{remark:sign}.

\begin{figure}[htbp]
    \centering
    \includegraphics[width=\textwidth]{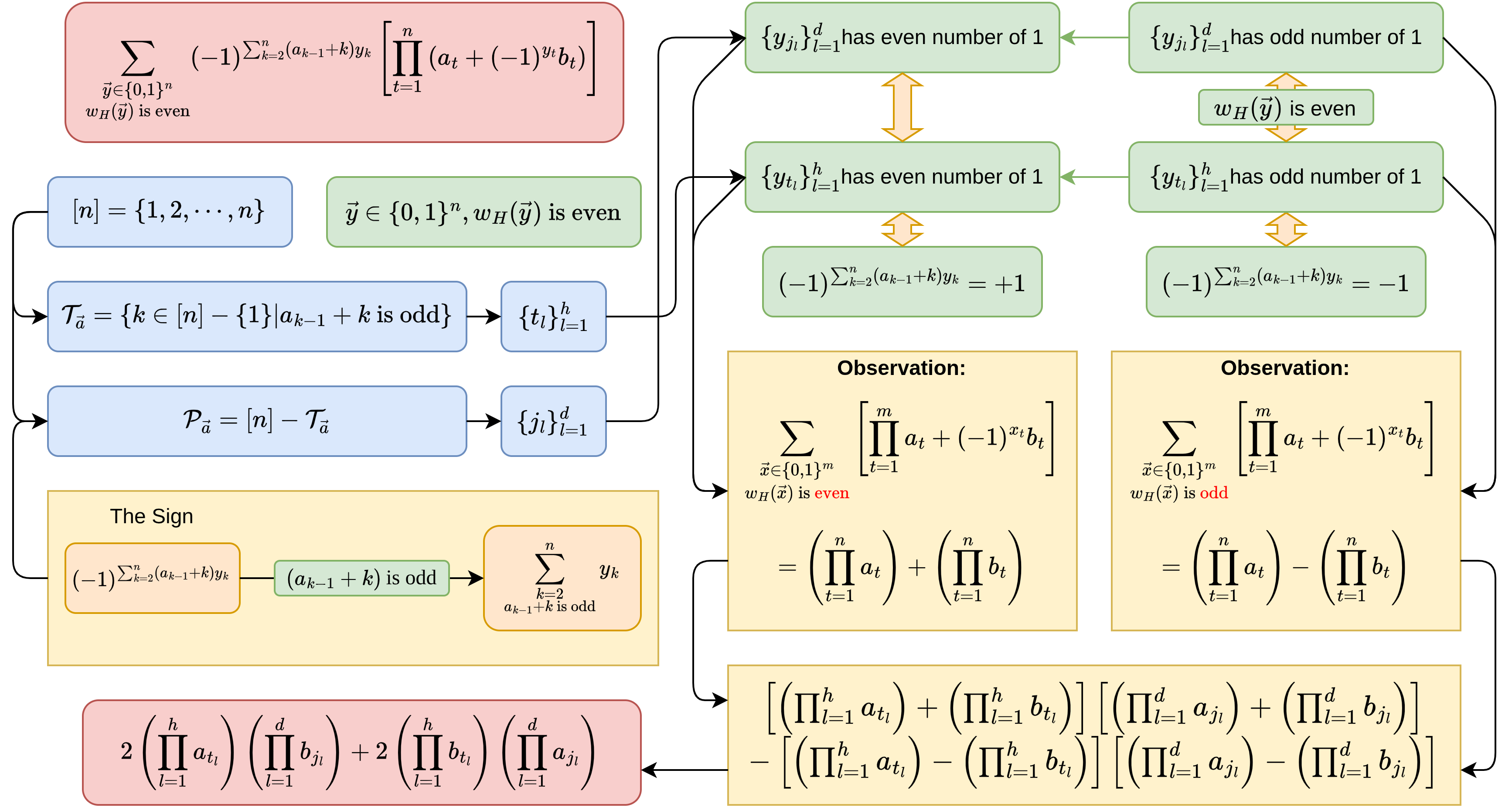}
    \caption{Sketch of simplification procedure of $\lambda_{\vec{a}}$.}
    \label{fig:simplify-sketch}
\end{figure}

Now we discuss in cases by value of $h$.

\noindent \textbf{Case 1}: $h \geq 1$.
We apply a rough estimation on the only one term with negative coefficient $-2^{n+2} s_n^2$.

\begin{observation}
  Suppose we have two sets of variables $\{c_t\}_{t=1}^j \in [0, 1/2]^j$ and $\{v_t\}_{t=1}^n \in [1/2, 1]^n$ for some integers $j$ and $n$.
  Then we have the following inequality:
  \begin{equation*}
    \left( \prod_{t=1}^j c_t \right) \left( \prod_{k=1}^n v_k \right) + \left( \prod_{t=1}^j (1 - c_t) \right) \left( \prod_{k=1}^n (1 - v_k) \right) \leq \max \left\{ \frac{1}{2^n}, \frac{1}{2^j} \right\}.
  \end{equation*}
\end{observation}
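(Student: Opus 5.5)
The plan is to use multilinearity and reduce to the vertices of the box. The left-hand side
\begin{equation*}
  f\bigl(\{c_t\},\{v_k\}\bigr) \coloneqq \left( \prod_{t=1}^j c_t \right) \left( \prod_{k=1}^n v_k \right) + \left( \prod_{t=1}^j (1 - c_t) \right) \left( \prod_{k=1}^n (1 - v_k) \right)
\end{equation*}
is affine in each variable separately when the others are fixed. An affine function of one variable on a closed interval attains its maximum at an endpoint. Iterating this over all $j+n$ variables, the maximum of $f$ over the box $[0,1/2]^j\times[1/2,1]^n$ is attained at a vertex, that is, with every $c_t\in\{0,1/2\}$ and every $v_k\in\{1/2,1\}$. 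So it suffices to bound $f$ on these $2^{j+n}$ points. I will assume $j,n\geq 1$. This is needed: for $n=0$ the choice $c_t\equiv 1/2$ gives $f=2^{1-j}>2^{-j}$. In the intended application (Case 1, $h\geq1$) both index sets are nonempty.

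At a vertex, let $p$ be the number of indices with $c_t=1/2$ (the rest equal $0$) and $q$ the number with $v_k=1/2$ (the rest equal $1$). Then:
\begin{itemize}
  \item The first term equals $2^{-j}2^{-q}$ if $p=j$, and $0$ otherwise.
  \item The second term equals $2^{-p}2^{-n}$ if $q=n$, and $0$ otherwise.
\end{itemize}
This splits into three cases.
\begin{enumerate}
  \item If $p<j$ and $q<n$, then $f=0$.
  \item If exactly one of $p=j$, $q=n$ holds, only one term survives. It is at most $2^{-j}$ in the first situation and at most $2^{-n}$ in the second, so $f\leq\max\{2^{-n},2^{-j}\}$.
  \item If $p=j$ and $q=n$ (all variables equal $1/2$), then $f=2\cdot 2^{-j-n}=2^{1-j-n}$. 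Since $n\geq1$, this is at most $2^{-j}\leq\max\{2^{-n},2^{-j}\}$.
\end{enumerate}
This covers all vertices, so the inequality holds on the whole box.

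The only point needing care is the vertex reduction. Each variable appears at most once in each product, with degree one, so $f$ is genuinely multilinear and the coordinate-wise endpoint argument applies without convexity assumptions. The other delicate point is the degenerate case $\min\{j,n\}=0$, which has to be excluded as noted above.
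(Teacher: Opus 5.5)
Your proof is correct and follows the paper's route. The paper only asserts that the extremum is attained at a vertex, citing the ``fundamental theorem of linear programming''; your coordinatewise-affine (multilinear) argument is the accurate justification of that step, and your explicit vertex enumeration supplies the check the paper omits.

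One correction to a side remark: $n=0$ is not a counterexample. For $n=0$ the right-hand side is $\max\{1,2^{-j}\}=1\geq 2^{1-j}$. Your case-3 value $2^{1-j-n}$ is bounded by $\max\{2^{-n},2^{-j}\}$ as soon as $j+n\geq 1$. So the only case that must be excluded is $j=n=0$, where the left-hand side equals $2>1$. Your stronger restriction $j,n\geq 1$ is harmless for the application.
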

\noindent \emph{Proof}. The observation is a straightforward corollary of fundamental theorem of linear programming, that the extreme values can only be taken at vertices.
\hfill $\square$

Then, after discarding all non-constant positive terms in Eq.~\ref{eq:simp-positivity}, we have
\begin{equation*}
  \lambda_{\vec{a}} \left( \{ v_t \}_{t=1}^n \right) \geq 2\mu^2 - 2^{n+2} s_n^2 \max \left\{ \frac{1}{2^h}, \frac{1}{2^{n-h}} \right\},
\end{equation*}
Substituting $s_n = (\sqrt{2}+1)/2^{n/2}$ and $\mu = -(1+\sqrt{2})$,
$\displaystyle 2\mu^{2} - 2^{n+2}s_n^{2}\max\{2^{-h},\,2^{-(n-h)}\}
= 2(1+\sqrt{2})^{2}\bigl[1 - 2\cdot\max\{2^{-h},\,2^{-(n-h)}\}\bigr]
\geq 0$ for all $n \geq 3$ and $h \geq 1$.

\noindent \textbf{Case 2}: $h = 0$.
In this case, $\lambda_{\vec{a}}$ is a symmetric function about $\{v_t\}_{t=1}^n$.
We apply the following inequality to $\lambda_{\vec{a}}$:
\begin{equation}
  \label{eq:h0-ineq}
  \prod_{t=1}^n v_t \left( 2 - v_t \right) \leq \prod_{t=1}^n 2v_t, \qquad \forall \{v_t\}_{t=1}^n \in \left[0, \kappa\right].
\end{equation}
Then discard the product with square roots in Eq.~\eqref{eq:simp-positivity}.
Since the discarded term is manifestly non-negative for all $\{v_t\}_{t=1}^n \in [0, \kappa]^n$, we have
\begin{equation*}
  \begin{aligned}
    &\lambda_{\vec{a}} \geq P \left( \{v_t\}_{t=1}^{n} \right) \\
    \coloneqq& 2\mu^2 + \left( \frac{1 + \sqrt{2}}{\sqrt{2}} \right)^n \prod_{t=1}^n v_t \left( 1 - \frac{1 + \sqrt{2}}{\sqrt{2}} v_t \right) - 2\mu \left[ \prod_{t=1}^n \left( 1 - \frac{1 + \sqrt{2}}{\sqrt{2}} v_t \right) + \prod_{t=1}^n \frac{1 + \sqrt{2}}{\sqrt{2}} v_t \right] \\
    &- 2^{n+2} s_n^2 \left[ \prod_{t=1}^n (1 - v_t)^2 + 2^n \prod_{t=1}^n v_t \right] + 2^{\frac{n+3}{2}} s_n \prod_{t=1}^n \left( 1 - \frac{1 + \sqrt{2}}{\sqrt{2}} v_t \right) \left( 1 - v_t \right).
  \end{aligned}
\end{equation*}
The right-hand-side, $P\left( \{v_t\}_{t=1}^n \right)$, is a symmetric polynomial about $\{v_t\}_{t=1}^n$.
Take the second-order derivative about arbitrary one variable:
\begin{equation*}
  \begin{aligned}
    \frac{\partial^2 P\left( \{v_t\}_{t=1}^n \right)}{\partial v_n^2} =& - \sqrt{2} \left( 1 + \sqrt{2} \right) \left( \frac{1 + \sqrt{2}}{\sqrt{2}} \right)^n \prod_{t=1}^{n-1} \left[ v_t \left( 1 - \frac{1 + \sqrt{2}}{\sqrt{2}} v_t \right) \right] \\
    &- 2^{n+3} s_n^2 \prod_{t=1}^{n-1} (1 - v_t)^2 + 2^{\frac{n+4}{2}} \left( 1 + \sqrt{2} \right) s_n \prod_{t=1}^{n-1} \left( 1 - \frac{1 + \sqrt{2}}{\sqrt{2}} v_t \right) \left( 1 - v_t \right),
  \end{aligned}
\end{equation*}
which is independent about $v_n$.
Take the value of $\mu$ and $s_n$ in Proposition~\ref{prop:main} into this partial derivative, we have
\begin{equation*}
  \left. \frac{\partial^2 P\left( \{v_t\}_{t=1}^n \right)}{\partial v_n^2} \right|_{\mu, s_n} \leq 4 \left( 1 + \sqrt{2} \right)^2 \left\{ - 2 \left[ \prod_{t=1}^{n-1} (1 - v_t)^2 \right] + \left[ \prod_{t=1}^{n-1} \left( 1 - \frac{1 + \sqrt{2}}{\sqrt{2}} v_t \right) \left( 1 - v_t \right) \right] \right\} \leq 0.
\end{equation*}
Hence $P\left( \{v_t\}_{t=1}^n \right)$ is concave in each one variable, which implies that for arbitrary $\{v_t\}_{t=1}^n$ we have the following inequalities:
\begin{equation*}
  \label{eq:concave-decomp}
  \begin{aligned}
    &\left. P \left( \{v_t\}_{t=1}^n \right) \right|_{\mu, s_n} \\
    \geq& \frac{\kappa - v_n}{\kappa} P \left( \{v_t\}_{t=1}^{n-1} \cup \{0\} \right) + \frac{v_n}{\kappa} P \left( \{v_t\}_{t=1}^{n-1} \cup \{\kappa\} \right) \\
    \geq& \frac{(\kappa - v_{n-1}) (\kappa - v_n)}{\kappa^2} P \left( \{v_t\}_{t=1}^{n-2} \cup \{0\}^2 \right) + \frac{v_{n-1} (\kappa - v_n)}{\kappa^2} P \left( \{v_t\}_{t=1}^{n-2} \cup \{\kappa, 0\} \right) \\
    &+ \frac{(\kappa - v_{n-1})v_n}{\kappa^2} P \left( \{v_t\}_{t=1}^{n-2} \cup \{0, \kappa\} \right) + \frac{v_{n-1}v_n}{\kappa^2} P \left( \{v_t\}_{t=1}^{n-2} \cup \{\kappa\}^2 \right) \\
    \geq& \cdots
  \end{aligned}
\end{equation*}
Eventually, $P \left( \{v_t\}_{t=1}^{n} \right)$ can be lower-bounded by convex combination of values on vertices.
So it suffices to verify the non-negativity of $P \left( \{v_t\}_{t=1}^{n}\right)$ on vertices for $\mu$ and $s_n$ given in Proposition~\ref{prop:main}.
Note that its symmetry about variables simplifies the verification.
For integer $1 \leq j \leq n-1$ (i.e.\ at least one $v_t = 0$), the terms containing $\prod v_t$ or $\prod \alpha v_t$ vanish, yielding
\begin{equation*}
   \left. P \left( \{\kappa\}^j \cup \{0\}^{n-j} \right) \right|_{\mu, s_n} = 2 \left( 1 + \sqrt{2} \right)^2 + 2 \left( 1 + \sqrt{2} \right) \cdot 2^{-j} - 4 \left( 1 + \sqrt{2} \right)^2 \cdot 2^{-j} + 2 \left( 2 + \sqrt{2} \right) \cdot 2^{-\frac{3j}{2}}.
\end{equation*}
To verify its monotonicity, regard $j$ as a real variable and differentiate:
\begin{equation*}
  \begin{aligned}
    \frac{d}{dj}\,
    \Bigl. P \bigl( \{\kappa\}^j \cup \{0\}^{n-j} \bigr) \Bigr|_{\mu,s_n}
    &=\ln(2)\,2^{-j}\Bigl[
      4(1+\sqrt{2})^{2} - 2(1+\sqrt{2})
      - 3\bigl(2+\sqrt{2}\bigr)\,2^{-j/2}
    \Bigr] \\
    &\geq \ln(2)\,2^{-j}\Bigl[
      4(1+\sqrt{2})^{2} - 2(1+\sqrt{2})
      - 3\bigl(2+\sqrt{2}\bigr)
    \Bigr],
  \end{aligned}
\end{equation*}
where the inequality uses $2^{-j/2} \leq 1$ for $j \geq 0$.
The bracketed quantity simplifies to
$4 + 3\sqrt{2} > 0$, hence the derivative is positive for all $j \geq 0$, establishing strict monotonic increase on $[1,+\infty)$.
Also, we have the following boundary values:
\begin{equation*}
  \begin{aligned}
    \left. P \left( \{0\}^n \right) \right|_{\mu, s_n} = 0,\quad \left. P \left( \{\kappa\} \cup \{0\}^{n-1} \right) \right|_{\mu, s_n} = 2 \left( 1 + \sqrt{2} \right) > 0,
  \end{aligned}
\end{equation*}
The monotonicity argument together with $P(\{\kappa\}\cup\{0\}^{n-1})>0$ establishes non-negativity for all $1 \leq j \leq n-1$.

For the remaining vertex $j=n$ (all $v_t = \kappa$), the terms that vanished previously now contribute.
A direct evaluation gives
\begin{equation}
  \label{eq:P-eta-n}
  \begin{aligned}
    \left. P\left( \{\kappa\}^n \right) \right|_{\mu, s_n}
    =&\ 2\left(1+\sqrt{2}\right)^{2} + 4^{-n} + 4\left(1+\sqrt{2}\right)2^{-n} - 4\left(1+\sqrt{2}\right)^{2}2^{-n} \\
    &- 4\left(1+\sqrt{2}\right)^{2}\left(2-\sqrt{2}\right)^{n} + 2\left(2+\sqrt{2}\right)2^{-\frac{3n}{2}} \\
    =&\ 2\left(1+\sqrt{2}\right)^{2} + 4^{-n} - 4\left(1+\sqrt{2}\right)\sqrt{2}\cdot 2^{-n} - 4\left(1+\sqrt{2}\right)^{2}\left(2-\sqrt{2}\right)^{n} + 2\left(2+\sqrt{2}\right)2^{-\frac{3n}{2}}.
  \end{aligned}
\end{equation}
Discarding the manifestly positive terms $4^{-n}$ and $2(2+\sqrt{2})\,2^{-3n/2}$ gives the lower bound
\begin{equation*}
  P\bigl(\{\kappa\}^{n}\bigr) > 2\left(1+\sqrt{2}\right)^{2} - 4\left(1+\sqrt{2}\right)\sqrt{2}\cdot 2^{-n} - 4\left(1+\sqrt{2}\right)^{2}\left(2-\sqrt{2}\right)^{n}.
\end{equation*}
For $n \geq 3$, the decreasing functions $2^{-n}$ and $(2-\sqrt{2})^{n}$ attain their maxima at $n=3$.
Hence
\begin{equation*}
  P\bigl(\{\kappa\}^{n}\bigr) > 2\left(1+\sqrt{2}\right)^{2} - 4\left(1+\sqrt{2}\right)\sqrt{2}\cdot 2^{-3} - 4\left(1+\sqrt{2}\right)^{2}\left(2-\sqrt{2}\right)^{3} = -11 + \frac{23}{2}\sqrt{2}.
\end{equation*}
Since $(23\sqrt{2}/2)^{2}=529/2 = 264.5 > 121 = 11^{2}$, we have $23\sqrt{2}/2 > 11$, and therefore $P(\{\kappa\}^{n}) > 0$ for all $n \geq 3$.
Together with the boundary values $j=0,1$ and the monotonicity argument for $1 \leq j \leq n-1$, all vertex values are non-negative, which completes the proof of Proposition~\ref{prop:main}.

\section{Comparison with Previous Analytic Robust Self-Testing Bounds}
\label{sec:comparison}

This section supplies the formal derivation of the comparison displayed in Fig.~\ref{fig:tolerance_bound_qubits} of the main text: the largest relative violation deficiency $\eta \coloneqq \varepsilon/\beta_{nQ}$ for which a certified extractability $\Xi \geq 1/2$ of the $n$-qubit GHZ state can be guaranteed, plotted as a function of the number of parties $n$ for the bound of Result~\ref{result:main} and for McKague's graph-state self-tests~\cite{McKague_2011,McKague_2016}.

\subsection{White-Noise Benchmark and Identification of the Deviation Parameters}

All three bounds are evaluated on the same one-parameter family of white-noise states
\begin{equation}
  \label{eq:sm:comparison:white-noise}
  \rho_p \coloneqq p\, |\text{GHZ}_n\rangle\langle\text{GHZ}_n| + (1-p)\,\frac{I}{2^n}, \qquad p \in [0,1].
\end{equation}
For this benchmark the parties are equipped with the measurement setup attaining the maximal violation, for which every correlation term entering the Bell operator $W_n$ of Section~\ref{sec:proof-main} is a product of traceless observables; hence $\operatorname{tr}[W_n I/2^{n}] = 0$, $\operatorname{tr}[W_n\rho_p] = p\,\beta_{nQ}$, and the relative violation deficiency reads
\begin{equation}
  \label{eq:sm:comparison:def-eta}
  \eta \coloneqq \frac{\beta_{nQ} - \operatorname{tr}[W_n \rho_p]}{\beta_{nQ}} = 1 - p .
\end{equation}

To instantiate McKague's tests on the same family, recall that the graph state $|K_n\rangle$ of the complete graph on $n$ vertices is locally Clifford-equivalent to $|\text{GHZ}_n\rangle$: local complementation at any vertex $v$ maps $K_n$ to the star graph $K_{1,n-1}$, and the star-graph state is transformed into $|\text{GHZ}_n\rangle$ by a Hadamard on each leaf qubit. Hence
\begin{equation}
  \label{eq:sm:comparison:lc}
  |K_n\rangle = U\,|\text{GHZ}_n\rangle, \qquad U = \bigotimes_{k=1}^n U_k ,
\end{equation}
with $U$ a product of single-qubit Clifford unitaries. McKague's complete-graph test estimates the $n$ stabilizer correlations
\begin{equation*}
  S_v \coloneqq X_v Z^{N_v} = X_v \otimes \Bigl( \bigotimes_{u \neq v} Z_u \Bigr), \qquad v \in V,
\end{equation*}
together with a triangle correlation: for a triangle $V' = \{a, b, c\}$ one measures $X^{V'} Z^{N(V')} = X_a X_b X_c \otimes \bigl( \bigotimes_{u \notin V'} Z_u \bigr)$, since every vertex outside $V'$ has all three triangle vertices as neighbours. The stabilizer operators satisfy $S_v |K_n\rangle = |K_n\rangle$ for every $v$ by definition, and Lemma~1 of~\cite{McKague_2011}, applied to the triangle $V'$ (each of whose vertices has even degree in the induced subgraph, with $|E'| = 3$), gives $X^{V'} Z^{N(V')} |K_n\rangle = -|K_n\rangle$. Every tested operator is a traceless tensor product of Pauli $X$ and $Z$ observables; consequently, on the locally rotated state
\begin{equation*}
  \rho_p' \coloneqq U\rho_p U^\dagger = p\, |K_n\rangle\langle K_n| + (1-p)\,\frac{I}{2^n},
\end{equation*}
each tested correlation equals $\pm p$, and the per-correlation deviation $\epsilon_M$ that governs McKague's robustness theorems reads
\begin{equation}
  \label{eq:sm:comparison:eps-mck}
  \epsilon_M = 1 - p = \eta .
\end{equation}
Moreover, extractability is invariant under local unitaries, $\Xi(\rho_p \to |\text{GHZ}_n\rangle\langle\text{GHZ}_n|) = \Xi(\rho_p' \to |K_n\rangle\langle K_n|)$, so the three bounds are directly comparable as functions of the same parameter $\eta$.
We stress that Result~\ref{result:main} is not restricted to this family: it holds for \emph{every} state attaining the observed violation $\beta$.
The white-noise family~\eqref{eq:sm:comparison:white-noise} merely provides the common benchmark on which all three tests can be evaluated exactly.
We also note that the comparison is naturally phrased in terms of the \emph{relative} deficiency $\eta$: in absolute terms every certified tolerance grows exponentially in $n$ ($\Theta(2^{n/2})$ for the bound of Result~\ref{result:main} and as $\Theta(n^{-4}2^{n/2})$ for the McKague tests) and only the relative deficiency is a scale-invariant measure of how close to the maximal violation the observed statistics must lie.

\subsection{Bound of This Work}

From Result~\ref{result:main}, $\Xi \geq s_n \beta + \mu$ with $s_n = \left(\sqrt{2}+1\right)/2^{n/2}$ and $\mu = -\left(1+\sqrt{2}\right)$. Substituting $s_n \beta_{nQ} + \mu = 1$ and $s_n = \left(2+\sqrt{2}\right)/\beta_{nQ}$,
\begin{equation}
  \label{eq:sm:comparison:ours}
  1 - \Xi\bigl(\rho_p \to |\text{GHZ}_n\rangle\langle\text{GHZ}_n|\bigr) \leq \left(2+\sqrt{2}\right) \eta .
\end{equation}
For the conjectured-optimal parameters $\bar{s}_n$, $\bar{\mu}$ of Eq.~\eqref{eq:upper-bound-params} (proved optimal for $n = 3, 4, 5$~\cite{Kaniewski_2016,Jha_Singh_Pan_2026}, numerically established up to $n \leq 7$~\cite{Baccari_Augusiak_Supic_Tura_Acin_2020}, and supported by the evidence of Section~\ref{sec:numeric} up to $n = 100$), the coefficient halves:
\begin{equation}
  \label{eq:sm:comparison:ours-opt}
  1 - \Xi\bigl(\rho_p \to |\text{GHZ}_n\rangle\langle\text{GHZ}_n|\bigr) \leq \frac{2+\sqrt{2}}{2}\, \eta .
\end{equation}
Hence extractability $\Xi \geq 1/2$ is certified for every $n$ as soon as
\begin{equation}
  \label{eq:sm:comparison:thr-ours}
  \eta \leq \eta_{\rm prov}^{\star} \coloneqq \frac{1}{2\left(2+\sqrt{2}\right)} = \frac{2-\sqrt{2}}{4} \approx 0.1464,
  \qquad
  \eta \leq \eta_{\rm opt}^{\star} \coloneqq \frac{1}{2+\sqrt{2}} = \frac{2-\sqrt{2}}{2} \approx 0.2929,
\end{equation}
respectively; both thresholds are independent of $n$.

\subsection{McKague's Complete-Graph Tests}

\emph{Bound of~\cite{McKague_2011} (reference experiment~1).}
Theorem~2 of~\cite{McKague_2011} states that if a compatible physical experiment $\epsilon_M$-simulates reference experiment~1 on a connected graph $G$ with $n$ vertices (that is, every tested correlation deviates from its ideal value by at most $\epsilon_M$), then it is $\delta$-equivalent to the reference experiment: there exist a local isometry $\Phi = \bigotimes_{v \in V} \Phi_v$ and a state $|\text{junk}\rangle$ such that $\bigl\| \Phi(|\psi'\rangle) - |\text{junk}\rangle \otimes |\psi_G\rangle \bigr\|_1 \leq \delta$ (and likewise for the tested measurements). The proof in Section~3.2 of~\cite{McKague_2011} provides the graph-specific estimate
\begin{equation}
  \label{eq:sm:comparison:mck11-delta}
  \delta = \Bigl( \left(4l + c + 1\right)\Bigl( n + \frac{|E|}{2} \Bigr) + n \Bigr) \sqrt{\epsilon_M},
\end{equation}
where $c$ is the size of the distinguished odd induced cycle and $l$ the largest distance from it to any vertex. For $K_n$: $c = 3$ (a triangle), $l = 1$, and $|E| = n(n-1)/2$, so
\begin{equation}
  \label{eq:sm:comparison:mck11-delta-kn}
  \delta_{11} = \left(2n^{2} + 7n\right) \sqrt{\eta}.
\end{equation}
For a vector the trace norm coincides with the Euclidean norm, so with $|a\rangle \coloneqq \Phi(|\psi'\rangle)$ and $|b\rangle \coloneqq |\text{junk}\rangle \otimes |K_n\rangle$ one has $\operatorname{Re}\langle a|b\rangle \geq 1 - \delta^2/2$ and hence
\begin{equation}
  1 - F = 1 - \bigl|\langle a|b\rangle\bigr|^{2} \leq 1 - \Bigl( 1 - \frac{\delta^{2}}{2} \Bigr)^{\!2} = \delta^{2} - \frac{\delta^{4}}{4} \leq \delta^{2}.
\end{equation}
Purifying $\rho_p'$, applying the theorem to the purification, and tracing out the junk register, which converts each local isometry into a local channel and cannot decrease the fidelity, yields
\begin{equation}
  \label{eq:sm:comparison:mck11-fid}
  1 - \Xi\bigl(\rho_p \to |\text{GHZ}_n\rangle\langle\text{GHZ}_n|\bigr) \leq \delta_{11}^{2} = \left(2n^{2} + 7n\right)^{2} \eta .
\end{equation}
The curve plotted in Fig.~\ref{fig:tolerance_bound_qubits} follows from the sharp condition $\delta_{11}^{2} \leq 1/2$, i.e.\
\begin{equation}
  \label{eq:sm:comparison:thr-mck11}
  \eta \leq \eta_{\rm M11}^{\star}(n) \coloneqq \frac{1}{2\left(2n^{2} + 7n\right)^{2}} = \Theta\bigl(n^{-4}\bigr).
\end{equation}
(An even simpler sufficient condition, $\delta_{11} \leq 1/2$, certifies the stronger target $\Xi \geq 3/4$ through $1 - F \leq \delta_{11}^{2} \leq 1/4$ and corresponds to the smaller threshold $\eta \leq 1/\bigl(4\left(2n^{2}+7n\right)^{2}\bigr)$.)

\emph{Bound of~\cite{McKague_2016} (Theorem~3).}
Theorem~3 of~\cite{McKague_2016} is stated for triangular-lattice graphs, but as noted in Section~3.1.1 of that paper, its proof only requires a family of triangles covering the vertex set; it therefore applies to every graph in which each vertex lies in a triangle, in particular to $K_n$ for all $n \geq 3$. Let $G$ be a graph on $n$ vertices with adjacency matrix $\mathbf{A}$ in which every vertex lies in a triangle. If $\langle\psi'|S'_v|\psi'\rangle \geq 1 - \epsilon_M$ for every $v$ and $-\langle\psi'|X'^{\tau}Z'^{\mathbf{A}\tau}|\psi'\rangle \geq 1 - \epsilon_M$ for every triangle $\tau$ with characteristic vector $\tau$, then there exist a local isometry $\Phi$ and a state $|\text{junk}\rangle$ such that
\begin{equation}
  \label{eq:sm:comparison:mck16-thm}
  \Bigl\| \Phi\bigl( X'^{q} Z'^{p} |\psi'\rangle \bigr) - |\text{junk}\rangle \otimes X^{q} Z^{p} |G\rangle \Bigr\|_{2}
  \leq \Bigl( 2\sqrt{p\cdot p} + 2\sqrt{2n} + \sqrt{|E| + n} \,\Bigr) \left( 2\epsilon_M \right)^{\frac{1}{4}}
\end{equation}
for all $p, q \in \{0, 1\}^{n}$. For the state ($p = q = 0$) and $G = K_n$, with $|E| = n(n-1)/2$, set
\begin{equation}
  \label{eq:sm:comparison:def-A}
  A_n \coloneqq 2\sqrt{2n} + \sqrt{\frac{n(n+1)}{2}},
  \qquad
  \delta_{3} \coloneqq A_n \left( 2\eta \right)^{\frac{1}{4}},
\end{equation}
so that $1 - F \leq \delta_{3}^{2}$ and, after purification and tracing out the junk register,
\begin{equation}
  \label{eq:sm:comparison:mck16-fid}
  1 - \Xi\bigl(\rho_p \to |\text{GHZ}_n\rangle\langle\text{GHZ}_n|\bigr) \leq \delta_{3}^{2} = A_n^{2} \left( 2\eta \right)^{\frac{1}{2}} .
\end{equation}
The sharp threshold for $\Xi \geq 1/2$ is
\begin{equation}
  \label{eq:sm:comparison:thr-mck16}
  \eta \leq \eta_{\rm M16}^{\star}(n) \coloneqq \frac{1}{8 A_n^{4}} = \Theta\bigl(n^{-4}\bigr),
\end{equation}
(An even simpler sufficient condition, $\delta_3 \leq 1/2$, certifies the stronger target $\Xi \geq 3/4$ and corresponds to the smaller threshold $\eta \leq 1/\bigl(32 A_n^{4}\bigr)$, a constant factor of $4$ below the sharp value.) We note that the hypotheses of Theorem~3, as stated, involve the $n$ stabilizer correlations and one triangle correlation per triangle of the graph (for $K_n$ this is $\binom{n}{3} = \Theta(n^{3})$ correlations), although the triangle-cover remark shows that $O(n)$ of them suffice for the proof. In the plotted range $3 \leq n \leq 40$ the 2011 bound certifies the same state from only $n + 1$ correlations and is tighter than the 2016 bound; the two curves cross only around $n \approx 84$, where $1/\bigl(2(2n^{2}+7n)^{2}\bigr) = 1/\bigl(8 A_n^{4}\bigr)$.

\emph{The star-graph test.}
For completeness, reference experiment~2 of~\cite{McKague_2011} applied to the star graph $K_{1,n-1}$, which is the GHZ state itself up to leaf Hadamards, carries the bound
\begin{equation}
  \label{eq:sm:comparison:star}
  \delta_{\rm star} = \left(7n - 2\right) \sqrt{\eta} + \frac{13}{2}\left(3n - 1\right) \eta^{\frac{1}{4}},
\end{equation}
obtained from the estimate $\delta = \bigl(2l\left(2n+|E|\right) + n\bigr)\sqrt{\epsilon_M} + 13\left(n + |E|/2\right)\epsilon_M^{1/4}$ of Section~3.3 of~\cite{McKague_2011} with the distinguished vertex taken as the centre of the star, so that $l = 1$, and $|E| = n - 1$. This test is strictly dominated by the complete-graph test: certification requires $\delta_{\rm star} \leq 1/\sqrt{2}$, which forces $\eta \leq 4/\bigl(13(3n-1)\bigr)^{4}$ since $\delta_{\rm star} \geq \tfrac{13}{2}(3n-1)\,\eta^{1/4}$, and
\begin{equation}
  \frac{1}{2\left(2n^{2}+7n\right)^{2}} \cdot \frac{\bigl(13(3n-1)\bigr)^{4}}{4}
  = \frac{\bigl(13(3n-1)\bigr)^{4}}{8\left(2n^{2}+7n\right)^{2}} > 10^{3}
\end{equation}
for every $n \geq 3$ (the ratio is increasing in $n$, with value $\approx 9.6 \times 10^{3}$ at $n = 3$). The star-graph curve is therefore omitted from Fig.~\ref{fig:tolerance_bound_qubits}.

\subsection{Summary}

Combining Eqs.~\eqref{eq:sm:comparison:thr-ours}, \eqref{eq:sm:comparison:thr-mck11} and \eqref{eq:sm:comparison:thr-mck16}, Table~\ref{tab:sm:comparison} lists the certified tolerance for a few representative values of $n$. The gap between the $n$-independent tolerance of Result~\ref{result:main} and the tolerances of the McKague tests grows quartically in $n$:
\begin{equation}
  \label{eq:sm:comparison:ratio}
  \frac{\eta_{\rm prov}^{\star}}{\eta_{\rm M11}^{\star}(n)} = \frac{2-\sqrt{2}}{2}\,\bigl(2n^{2}+7n\bigr)^{2},
  \qquad
  \frac{\eta_{\rm prov}^{\star}}{\eta_{\rm M16}^{\star}(n)} = 2\left(2-\sqrt{2}\right) A_n^{4},
\end{equation}
both of order $\Theta(n^{4})$.
For instance, at $n = 10$ the proved bound certifies extractability $\Xi \geq 1/2$ for a relative deficiency $\eta \approx 0.146$, which is over four orders of magnitude larger than what either McKague bound tolerates, and at $n = 40$ the gap exceeds six orders of magnitude.
In particular, at relative deficiency $\eta = 10^{-4}$ the 2011 bound certifies $\Xi \geq 1/2$ only for $n \leq 4$, and the 2016 bound for no $n \geq 3$ at all, consistent with the statement in the main text that under this noise level none of the previously known analytic bounds certifies entanglement for large $n$.

\begin{table}[htbp]
  \centering
  \caption{Largest relative violation deficiency $\eta = \varepsilon/\beta_{nQ}$ for which extractability $\Xi \geq 1/2$ of the $n$-qubit GHZ state is certified, for the proved bound of Result~\protect\ref{result:main} ($\eta_{\rm prov}^{\star}$), the conjectured-optimal bound ($\eta_{\rm opt}^{\star}$), and the sharp forms $\eta_{\rm M11}^{\star}(n)$, $\eta_{\rm M16}^{\star}(n)$ of McKague's complete-graph bounds of Eqs.~\protect\eqref{eq:sm:comparison:thr-mck11} and \protect\eqref{eq:sm:comparison:thr-mck16}. The curves of Fig.~\protect\ref{fig:tolerance_bound_qubits} plot exactly these sharp thresholds.}
  \label{tab:sm:comparison}
  \begin{tabular}{cccccc}
    \toprule
    $n$ & $\eta_{\rm prov}^{\star}$ & $\eta_{\rm opt}^{\star}$ & $\eta_{\rm M11}^{\star}(n)$ & $\eta_{\rm M16}^{\star}(n)$ & $\eta_{\rm prov}^{\star}/\eta_{\rm M16}^{\star}(n)$ \\
    \midrule
    3  & $0.1464$ & $0.2929$ & $3.287\times10^{-4}$ & $4.287\times10^{-5}$ & $3.42\times10^{3}$ \\
    5  & $0.1464$ & $0.2929$ & $6.920\times10^{-5}$ & $1.156\times10^{-5}$ & $1.27\times10^{4}$ \\
    10 & $0.1464$ & $0.2929$ & $6.859\times10^{-6}$ & $1.745\times10^{-6}$ & $8.39\times10^{4}$ \\
    20 & $0.1464$ & $0.2929$ & $5.659\times10^{-7}$ & $2.304\times10^{-7}$ & $6.36\times10^{5}$ \\
    40 & $0.1464$ & $0.2929$ & $4.129\times10^{-8}$ & $2.668\times10^{-8}$ & $5.49\times10^{6}$ \\
    \bottomrule
  \end{tabular}
\end{table}

\section{Efficient Verification of Optimal Lower Bound}
\label{sec:numeric}

In this section, we discuss the bottleneck of the proof of operator inequality~\ref{eq:operator-ineq-main-sm} with optimal parameters $\overline{s}_n$ and $\overline{\mu}$ defined by
\begin{equation}
  \label{eq:upper-bound-params-sm}
  \bar{s}_n \coloneqq \frac{1 + \sqrt{2}}{2^{\frac{n}{2} + 1}},\quad \bar{\mu} \coloneqq - \frac{1}{\sqrt{2}},
\end{equation}

To do this, simply replace $s_n$ and $\mu$ with $\overline{s}_n$ and $\overline{\mu}$ in the simplified expression of $\lambda_{\vec{a}} \left( \{v_t\}_{t=1}^n \right)$ in Eq.~\eqref{eq:simp-positivity}.
We have
\begin{equation}
  \begin{aligned}
    &\lambda_{\vec{a}}\left( \left\{ v_t \right\}_{t=1}^n \right) \\
    =&1 + \left( \frac{1 + \sqrt{2}}{\sqrt{2}} \right)^n \prod_{t=1}^n v_t \left( 1 - \frac{1 + \sqrt{2}}{\sqrt{2}} v_t \right) \\
    &+ \sqrt{2} \left\{ \left[ \prod_{l=1}^h \left( 1 - \frac{1 + \sqrt{2}}{\sqrt{2}} v_{t_l} \right) \right] \left[ \prod_{l=1}^d \frac{1 + \sqrt{2}}{\sqrt{2}} v_{j_l} \right] + \left[ \prod_{l=1}^h \frac{1 + \sqrt{2}}{\sqrt{2}} v_{t_l} \right] \left[ \prod_{l=1}^d \left( 1 - \frac{1 + \sqrt{2}}{\sqrt{2}} v_{j_l} \right) \right] \right\} \\
    &- \left( \sqrt{2} + 1 \right)^2 \left\{ \left[ \prod_{l=1}^h (1 - v_{t_l})^2 \right] \left[ \prod_{l=1}^d v_{j_l} (2 - v_{j_l}) \right] + \left[ \prod_{l=1}^h v_{t_l} (2 - v_{t_l}) \right] \left[ \prod_{l=1}^d (1 - v_{j_l})^2 \right] \right\} \\
    &+ \left( 2 + \sqrt{2} \right) \left\{ \begin{matrix*}[l]
        \left[ \prod_{l=1}^h \left( 1 - \frac{1 + \sqrt{2}}{\sqrt{2}} v_{t_l} \right) \left( 1 - v_{t_l} \right) \right] \left[ \prod_{l=1}^d \frac{1 + \sqrt{2}}{\sqrt{2}} v_{j_l} \sqrt{2 v_{j_l} - v_{j_l}^2} \right] \\
        + \left[ \prod_{l=1}^h \frac{1 + \sqrt{2}}{\sqrt{2}} v_{t_l} \sqrt{2 v_{t_l} - v_{t_l}^2} \right] \left[ \prod_{l=1}^{d} \left( 1 - \frac{1 + \sqrt{2}}{\sqrt{2}} v_{j_l} \right) \left( 1 - v_{j_l} \right) \right]
    \end{matrix*} \right\} \\
    \geq& 1 - \left( \sqrt{2} + 1 \right)^2 \left\{ \left[ \prod_{l=1}^h (1 - v_{t_l})^2 \right] \left[ \prod_{l=1}^d v_{j_l} (2 - v_{j_l}) \right] + \left[ \prod_{l=1}^h v_{t_l} (2 - v_{t_l}) \right] \left[ \prod_{l=1}^d (1 - v_{j_l})^2 \right] \right\} \\
    \geq& 1 - \left( \sqrt{2} + 1 \right)^2 \max \left\{ \frac{1}{2^h}, \frac{1}{2^d} \right\}.
  \end{aligned}
\end{equation}

Then we can still discuss by cases of values of $h$.
For $n \le 5$ the optimal bound is already established analytically~\cite{Kaniewski_2016,Jha_Singh_Pan_2026}, so in the following we assume $n \ge 6$.
Moreover, $\lambda_{\vec{a}}$ in Eq.~\eqref{eq:simp-positivity} is symmetric under exchanging the two index sets $\mathcal{T}_{\vec{a}}$ and $\mathcal{P}_{\vec{a}}$ (cf.~Remark~\ref{remark:sign}), and the domain $[0, \kappa]^n$ is invariant under permutations of the variables, so the minimum of $\lambda_{\vec{a}}$ over $[0, \kappa]^n$ is invariant under exchanging $\mathcal{T}_{\vec{a}}$ and $\mathcal{P}_{\vec{a}}$. Without loss of generality we may therefore assume $h \le d = n - h$.

\noindent \textbf{Case 1}: $h \geq 3$.
By the assumption $h \le d$ this implies $d \ge 3$ as well, so $\max\{2^{-h}, 2^{-d}\} \le 2^{-3} = 1/8$; since $1 \ge \left( \sqrt{2} + 1 \right)^2 / 8$, in this case we have $\lambda_{\vec{a}} \left( \{v_t\}_{t=1}^n \right) \geq 1 - \left( \sqrt{2} + 1 \right)^2 / 8 \geq 0$.

\noindent \textbf{Case 2}: $h = 0$. In this case, we can still apply the inequality in Eq.~\eqref{eq:h0-ineq} and lower bound $\lambda_{\vec{a}} \left( \{v_t\}_{t=1}^n \right)$ with a symmetric polynomial $P\left( \{v_t\}_{t=1}^n \right)$:
\begin{equation*}
  \begin{aligned}
    &\lambda_{\vec{a}} \left( \{v_t\}_{t=1}^n \right) \geq \overline{P} \left( \{ v_t \}_{t=1}^n \right) \\
    \coloneqq& 1 + \left( \frac{1 + \sqrt{2}}{\sqrt{2}} \right)^n \prod_{t=1}^n v_t \left( 1 - \frac{1 + \sqrt{2}}{\sqrt{2}} v_t \right) +\sqrt{2} \left[ \prod_{t=1}^n \left( 1 - \frac{1 + \sqrt{2}}{\sqrt{2}} v_t \right) + \prod_{t=1}^n \frac{1 + \sqrt{2}}{\sqrt{2}}v_t \right] \\
    &-\left( \sqrt{2} + 1 \right)^2 \left[ \prod_{t=1}^n \left( 1 - v_{t} \right)^2 + 2^n \prod_{t=1}^n v_t \right] + \left( 2 + \sqrt{2} \right) \prod_{t=1}^n \left( 1 - \frac{1 + \sqrt{2}}{\sqrt{2}} v_t \right)\left( 1 - v_t \right).
  \end{aligned}
\end{equation*}
Calculate the second-order partial derivative about one variable:
\begin{equation*}
  \begin{aligned}
    \frac{\partial^2 \overline{P} \left( \{v_t\}_{t=1}^n \right)}{\partial v_n^2} =& -\sqrt{2} \left( 1 + \sqrt{2} \right) \left( \frac{1 + \sqrt{2}}{\sqrt{2}} \right)^n \prod_{t=1}^{n-1} v_t \left( 1 - \frac{1 + \sqrt{2}}{\sqrt{2}} v_t \right) \\
    &- 2\left( \sqrt{2} + 1 \right)^2 \prod_{t=1}^{n-1} \left( 1 - v_t \right)^2 + 2 \left( \sqrt{2} + 1 \right)^2 \prod_{t=1}^{n-1} \left( 1 - \frac{1 + \sqrt{2}}{\sqrt{2}} v_t \right) \left( 1 - v_t \right) \leq 0.
  \end{aligned}
\end{equation*}
Still, since $1 - v_t \geq 1 - \frac{1 + \sqrt{2}}{\sqrt{2}} v_t$ for all $v_t \in [0, \kappa]$, this second-order partial derivative is still non-positive.
So we can follows the decomposition in Eq.~\eqref{eq:concave-decomp} and lower bound $\overline{P} \left( \{v_t\}_{t=1}^n \right)$ by convex combination of values on vertices.
Straightforward calculation shows $\overline{P} \left( \{0\}^n \right) = 0$ and, for all $1 \leq j \leq n-1$,
\begin{equation*}
  \begin{aligned}
    &\overline{P}(\{\kappa\}^j \cup \{0\}^{n-j}) \\
    =& 1 + \sqrt{2} \cdot 2^{-j} - \left( 1 + \sqrt{2} \right)^2 \cdot 2^{-j} + \left( 2 + \sqrt{2} \right) \cdot 2^{-\frac{3j}{2}}.
  \end{aligned}
\end{equation*}
This function about $j$ takes value of $0$ at $j=1$, and monotonically increasing in $j$ on $(1, + \infty)$.

For the remaining vertex $j=n$ (all $v_t = \kappa$), the terms that vanished for $j \leq n-1$ now contribute.
A direct evaluation of the original $\lambda_{\vec{a}}$ at this vertex gives
\begin{equation}
  \label{eq:P-bar-eta-n}
  \begin{aligned}
    \lambda_{\vec{a}}\left( \{\kappa\}^n \right)
    =&\ 1 + 4^{-n} + 2\sqrt{2}\cdot 2^{-n}
       - 2\left(1+\sqrt{2}\right)^{2} 2^{-n}
       + 2\left(2+\sqrt{2}\right) 2^{-\frac{3n}{2}} \\
    =&\ 1 + 4^{-n}
       - \left(6 + 2\sqrt{2}\right) 2^{-n}
       + 2\left(2+\sqrt{2}\right) 2^{-\frac{3n}{2}}.
  \end{aligned}
\end{equation}
Discarding the manifestly positive terms $4^{-n}$ and $2(2+\sqrt{2})\,2^{-3n/2}$ gives the lower bound
\begin{equation*}
  \lambda_{\vec{a}}\bigl(\{\kappa\}^{n}\bigr) > 1 - \left(6 + 2\sqrt{2}\right) 2^{-n}.
\end{equation*}
For $n \geq 4$, the decreasing function $2^{-n}$ attains its maximum at $n = 4$, yielding
\begin{equation*}
  \lambda_{\vec{a}}\bigl(\{\kappa\}^{n}\bigr) > 1 - \frac{6 + 2\sqrt{2}}{2^{4}} = \frac{5 - \sqrt{2}}{8} > 0.
\end{equation*}
Since $n = 3$ is covered by the known optimal result of~\cite{Kaniewski_2016}, all vertex values are non-negative for $n \geq 3$, hence $\lambda_{\vec{a}}\left( \{v_t\}_{t=1}^n \right) \geq 0$ if $h=0$.

\noindent \textbf{Case 3}: $h = 1$ or $2$.
The proof strategy on small systems proposed in \cite{Kaniewski_2016} tries to establish inequality $\lambda_{\vec{a}} - \lambda_{\vec{a}^+} \geq 0$, where $\vec{a}^+$ is the binary string satisfies $h = 0$.
However, this inequality could fail on larger systems for $\vec{a}$ satisfies $h=1$.

The remaining obstacle is therefore to verify non-negativity of $\lambda_{\vec{a}}$ for $h = 1, 2$ and arbitrary $n \geq 3$.
Since $\lambda_{\vec{a}}$ is block-symmetric in $\{v_{t_l}\}_{l=1}^h$ and $\{v_{j_l}\}_{l=1}^d$, the verification domain can be reduced to a \emph{fundamental domain} $\mathcal{D}_{h, n}^{\rm fund}$ defined by ordering constraints within each block:
\begin{equation*}
  \mathcal{D}_{h, n}^{\rm fund} = \left\{ 
    \{v_{t_l}\}_{l=1}^h \cup \{v_{j_l}\}_{l=1}^d \in [0, \kappa]^n 
    \;\middle|\;
    v_{t_1} \leq \cdots \leq v_{t_h},\;
    v_{j_1} \leq \cdots \leq v_{j_d}
  \right\}.
\end{equation*}
The full hypercube $[0, \kappa]^n$ is the union of $h! \cdot d!$ congruent copies of $\mathcal{D}_{h, n}^{\rm fund}$ under the action of the block-symmetry group.
Consequently, a function attaining its global minimum at some point in the full domain must attain the same minimum at the corresponding ordered representative in $\mathcal{D}_{h, n}^{\rm fund}$.

\emph{Grid generation.}
Let $m$ be the number of uniformly spaced grid values per dimension on $[0, \kappa]$:
$\bigl\{0, \frac{1}{m-1}\kappa, \frac{2}{m-1}\kappa, \dots, \kappa\bigr\}$.
The number of grid points in the fundamental domain can be obtained via combinations with repetitions:
\begin{equation*}
  \begin{aligned}
    |\mathcal{G}|(n, h, m) &=
    \begin{cases}
      m \cdot \displaystyle\binom{m + n - 2}{n - 1}, & h = 1,\\[10pt]
      \displaystyle\frac{m(m+1)}{2} \cdot \binom{m + n - 3}{n - 2}, & h = 2.
    \end{cases}
  \end{aligned}
\end{equation*}
For fixed $m$, $|\mathcal{G}| = O(n^{m-1})$, a polynomial in $n$, as opposed to the full hypercube grid $|\mathcal{G}_{\rm full}| = m^n$ which grows exponentially.
As an illustration, for $n = 40$ and $m = 7$, $|\mathcal{G}| \approx 5.7 \times 10^7$ ($h=1$) or $\approx 2.0 \times 10^8$ ($h=2$), whereas $m^n \approx 7^{40} \approx 6.4 \times 10^{33}$.

\emph{Verification algorithm.}
The verification proceeds in three phases.
\begin{enumerate}[label=(\arabic*)]
  \item \textbf{Boundary check.}
  Evaluate $\lambda_{\vec{a}}$ at all distinct boundary configurations where each $v_t \in \{0, \kappa\}$.
  Due to block symmetry, only $(h+1)(n-h+1) = O(n)$ combinations are distinct.
  \item \textbf{Grid evaluation.}
  Generate all $|\mathcal{G}|$ grid points in $\mathcal{D}_{h, n}^{\rm fund}$ and evaluate $\lambda_{\vec{a}}$ at \emph{every} point.
  No random sampling is employed.
  \item \textbf{Multi-start local optimization.}
  Using each grid point as an initial iterate, run a bounded quasi-Newton solver (L-BFGS-B, $\texttt{ftol}=10^{-12}$, $\texttt{maxiter}=200$) to locate nearby local minima.
  Phase~2 already provides a lower bound from the grid itself; Phase~3 refines it by descending from every grid point.
\end{enumerate}


\emph{Numerical results.}
The algorithm is implemented with GPU parallelization using \texttt{JAX}.
Computations run at \texttt{float64} precision throughout.
Table~\ref{tab:benchmark} reports the optimized minimum values for a range of $(n, h, m)$ combinations.
We take fine-grid $m \geq 7$ for small number of qubits ($n \leq 40$) to exhaustively chart the fundamental domain at low dimension, confirming the non-negativity of minimum value at high resolution.
The optimized minimum is stable across tested values of $n$ and $m$ for each fixed $h$.
For $h=1$, the optimized minimum is approximately $-4.44 \times 10^{-16}$, equivalently $0$ within \texttt{fp64} numerical precision.
The result is consistent with definition of $\lambda_{\vec{a}}$ with $h=1$, for which the polynomial touches $0$ at some boundary points such as $(\kappa, 0, \cdots, 0)$.
For $h=2$, the optimized minimum is approximately $0.3232$.
By definition of $\lambda_{\vec{a}}$, such value can be attained at boundary configurations of the form $(\kappa, \kappa, 0, \cdots, 0)$.
We extend the verification to higher party numbers with more coarse grid ($n \geq 60, m \leq 6$), where the constancy of the minimum holds, supporting the conjecture that $\lambda_{\vec{a}}$ has no negative values in the domain for $h=1$ and $h=2$.

Together, the numerical evidence supports the optimal robustness bound in Eq.~\eqref{eq:upper-bound-params-sm} for all $n \geq 3$.
For $n \le 5$ this bound is already established analytically~\cite{Kaniewski_2016,Jha_Singh_Pan_2026}; for $n \geq 6$ the two remaining cases ($h = 1, 2$) are an open conjecture supported by the numerical evidence above.

\begin{table}[htbp]
\centering
\caption{Numerical results of the verification algorithm.}
\label{tab:benchmark}
\begin{tabular}{ccccc}
\toprule
$n$ & $h$ & $m$ & Number of Grid Points & Optimized Minimum \\
\midrule
20  & 1 & 9 & 19980675 & $-4.44\times10^{-16}$ \\
20  & 2 & 9 & 70302375 & $0.3232$ \\
40  & 1 & 7 & 57015420 & $-4.44\times10^{-16}$ \\
40  & 2 & 7 & 197653456 & $0.3232$ \\
60  & 1 & 6 & 45747072  & $-4.44\times10^{-16}$ \\
60  & 2 & 6 & 147605787 & $0.3232$ \\
80  & 1 & 5 & 9188110 & $-4.44\times10^{-16}$ \\
80  & 2 & 5 & 26235900 & $0.3232$ \\
100 & 1 & 5 & 22106375 & $-4.44\times10^{-16}$ \\
100 & 2 & 5 & 63743625 & $0.3232$ \\
\bottomrule
\end{tabular}
\end{table}

\section{Finite Sampling Complexity Analysis}
\label{sec:finite_sampling}

In this section, we analyze the finite-sampling complexity of the $n$-qubit MABK self-testing protocol.
Given a finite number $N$ of experimental rounds, we derive a confidence interval for the MABK violation $\beta$, combine it with the operator inequality in Proposition~\ref{prop:main} to obtain a certified extractability bound, and determine the number of rounds required to achieve a target extractability with prescribed confidence.
The analysis reveals that for near-maximal violation the required number of rounds scales as $N \propto 2^{n}$, consistent with the $2^{n}$ measurement settings of the $n$-qubit MABK inequality.

\subsection{Estimation Protocol}

\label{sm:fs:protocol}

Consider $N$ independent experimental rounds.
In each round $i \in \{1, \ldots, N\}$, a measurement setting $x_i \in \{0, 1\}^n$ is chosen uniformly at random, and the $n$ parties each perform the projective measurement corresponding to their respective observable $O_{x_{i,k}}^{(k)}$.
The outcomes $a_{i,1}, \ldots, a_{i,n} \in \{\pm 1\}$ are recorded and the product $z_i = \prod_{k=1}^n a_{i,k}$ is computed.

The $n$-qubit MABK Bell operator admits the expansion
\begin{equation*}
  W_n = \sum_{\vec{x} \in \{0, 1\}^n} c_{\vec{x}} \, \bigotimes_{k=1}^n O_{x_k}^{(k)},
\end{equation*}
where the coefficients are determined recursively as described in the main text and satisfy $|c_{\vec{x}}| \le 1$ for $n \ge 3$ (for odd $n$ exactly half of them vanish).
The maximum quantum violation is $\beta_{nQ} = 2^{(n+1)/2}$.

For a state $\rho$ and fixed measurement observables, define the \emph{correlator} for setting $\vec{x} \coloneqq (x_1 x_2 \cdots x_n)$ as
\begin{equation*}
  E(\vec{x}) = \operatorname{tr}\!\bigl[ \rho \, \bigotimes_{k=1}^n O_{x_k}^{(k)} \bigr].
\end{equation*}
The MABK violation value is then
\begin{equation*}
  \beta = \sum_{\vec{x} \in \{0, 1\}^n} c_{\vec{x}} \, E(\vec{x}).
\end{equation*}
Since each $O_{x_k}^{(k)}$ is a Hermitian operator with eigenvalues $\pm 1$, each correlator satisfies $|E(\vec{x})| \leq 1$.

We consider two complementary estimators for $\beta$.
The \emph{importance-weighted (IW) estimator} uses a fixed reweighting factor $2^n$ to correct for the uniform sampling probability:
\begin{equation*}
  \hat{\beta}_{\rm IW} = \frac{1}{N} \sum_{i=1}^N 2^n \, c_{x_i} \, z_i.
\end{equation*}
Since each setting is chosen with probability $1/2^n$, we have $\mathbb{E}[2^n c_{x_i} z_i] = \sum_{\vec{x}} c_{\vec{x}} E(\vec{x}) = \beta$, so $\hat{\beta}_{\rm IW}$ is unbiased.
Each term $2^n c_{x_i} z_i$ lies in $[-2^n, 2^n]$ (range $2^{n+1}$), reflecting the variance inflation inherent in the per-round reweighting.

The \emph{per-setting (PS) estimator} partitions the $N$ rounds by setting and computes the empirical mean for each:
\begin{equation*}
  \hat{E}(\vec{x}) = \begin{cases}
    \displaystyle \frac{1}{N_{\vec{x}}} \sum_{i: \vec{x}_i = \vec{x}} z_i, & N_{\vec{x}} > 0, \\[12pt]
    0, & N_{\vec{x}} = 0,
  \end{cases}
  \qquad
  \hat{\beta}_{\rm PS} = \sum_{\vec{x} \in \{0, 1\}^n} c_{\vec{x}} \, \hat{E}(\vec{x}),
\end{equation*}
where $N_{\vec{x}} = |\{i : \vec{x}_i = \vec{x}\}|$ is the number of rounds in which setting $\vec{x}$ was selected.
The random variables $\{N_{\vec{x}}\}$ are multinomially distributed with $N_{\vec{x}} \sim \mathrm{Binomial}(N, 1/2^n)$ and $\sum_{\vec{x}} N_{\vec{x}} = N$.
Conditioned on $N_{\vec{x}} > 0$, $\hat{E}(\vec{x})$ is the sample mean of $N_{\vec{x}}$ independent draws of $z_i$, each bounded in $[-1, 1]$ (range $2$).
In contrast to the IW estimator, each summand in the PS formulation remains in $[-1, 1]$ without amplification, yielding lower per-round variance.
This makes the PS estimator more suitable when the sample size $N$ is small.
However, the random sample sizes $\{N_{\vec{x}}\}$ are correlated through the multinomial constraint $\sum_{\vec{x}} N_{\vec{x}} = N$, which complicates the concentration analysis.
Specifically, after conditioning and union bounds, the resulting confidence interval scales as $2^{3n/2} / \sqrt{N}$, as detailed below.

Both estimators are unbiased for $\beta$ and process the same experimental data.
For the certification analysis we adopt the IW formulation, whose i.i.d.\ summands yield the tighter $2^n / \sqrt{N}$ confidence interval and a correspondingly tighter measurement budget.

\subsection{Hoeffding Concentration}

\label{sm:fs:hoeffding}

We begin with the classic concentration inequality for bounded random variables.

\begin{lemma}[Hoeffding's Inequality~\cite{Hoeffding_1963}]
  \label{lem:sm:fs:hoeffding}
  Let $Y_1, \ldots, Y_m$ be independent random variables such that $Y_i \in [a_i, b_i]$ almost surely.
  Define $\bar{Y} = \frac{1}{m} \sum_{i=1}^m Y_i$ and $\mu = \mathbb{E}[\bar{Y}]$.
  Then for any $\tau > 0$,
  \begin{equation*}
    \Pr\!\bigl( |\bar{Y} - \mu| \geq \tau \bigr) \leq 2 \exp\!\left( -\frac{2 m \tau^2}{\frac{1}{m} \sum_{i=1}^m (b_i - a_i)^2} \right).
  \end{equation*}
  For the special case where all $Y_i \in [-1, 1]$, the bound simplifies to
  \begin{equation*}
    \Pr\!\bigl( |\bar{Y} - \mu| \geq \tau \bigr) \leq 2 \exp\!\left( -\frac{m \tau^2}{2} \right).
  \end{equation*}
\end{lemma}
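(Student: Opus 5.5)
The statement to prove is Hoeffding's inequality (Lemma~\ref{lem:sm:fs:hoeffding}). I would follow the standard Chernoff--Cramér route. It has three ingredients: an exponential Markov bound, a bound on the moment generating function of a single bounded, centered variable (Hoeffding's lemma), and an optimization over the free exponential parameter. By symmetry (apply the same argument to $-Y_i$) it suffices to bound the upper tail $\Pr(\bar{Y}-\mu\ge\tau)$ by $\exp(\cdot)$ and then take a union bound, which produces the factor $2$.

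\textbf{Step 1: exponential Markov bound.} For any $s>0$, Markov's inequality applied to $e^{s\sum_i (Y_i-\mathbb{E}Y_i)}$ gives
\begin{equation*}
\Pr\Bigl(\sum_i (Y_i-\mathbb{E}Y_i)\ge m\tau\Bigr)\le e^{-sm\tau}\prod_{i=1}^m \mathbb{E}\bigl[e^{s(Y_i-\mathbb{E}Y_i)}\bigr].
\end{equation*}
Independence is used here to factor the expectation of the product.

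\textbf{Step 2: Hoeffding's lemma.} This is the main technical point. One must show that a centered variable $X\in[a,b]$ satisfies $\mathbb{E}e^{sX}\le e^{s^2(b-a)^2/8}$.
\begin{itemize}
\item By convexity of $x\mapsto e^{sx}$, for $x\in[a,b]$ we have $e^{sx}\le \frac{b-x}{b-a}e^{sa}+\frac{x-a}{b-a}e^{sb}$.
\item Taking expectations with $\mathbb{E}X=0$ gives $\mathbb{E}e^{sX}\le e^{L(u)}$. Here $u=s(b-a)$, $p=-a/(b-a)$, and $L(u)=-pu+\log(1-p+pe^u)$.
\item We have $L(0)=L'(0)=0$. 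Moreover $L''(u)=q(1-q)\le 1/4$, where $q=pe^u/(1-p+pe^u)\in[0,1]$.
\item Taylor's theorem then gives $L(u)\le u^2/8$, as required.
\end{itemize}

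\textbf{Step 3: optimize over $s$.} Combining the two steps yields $\exp\bigl(-sm\tau+\tfrac{s^2}{8}\sum_i(b_i-a_i)^2\bigr)$. Minimizing over $s$ at $s=4m\tau/\sum_i(b_i-a_i)^2$ gives
\begin{equation*}
\exp\Bigl(-2m^2\tau^2\big/\textstyle\sum_i(b_i-a_i)^2\Bigr)=\exp\Bigl(-2m\tau^2\big/\tfrac1m\textstyle\sum_i(b_i-a_i)^2\Bigr),
\end{equation*}
which is exactly the stated exponent. For the special case $Y_i\in[-1,1]$, substitute $(b_i-a_i)^2=4$ to obtain $2\exp(-m\tau^2/2)$.
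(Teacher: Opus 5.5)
Your proof is correct. It is the standard Chernoff--Cram\'er argument: an exponential Markov bound, then Hoeffding's lemma via convexity and the Taylor bound $L''(u)=q(1-q)\le 1/4$, then optimization at $s=4m\tau/\sum_i(b_i-a_i)^2$, and a union bound over the two tails. This is essentially Hoeffding's original 1963 argument. The paper gives no proof of its own and simply cites that reference, so your route agrees with the cited source.

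Two small steps you leave implicit are immediate. Centering $Y_i$ shifts $[a_i,b_i]$ without changing its width. Also, $\mathbb{E}X=0$ forces $a\le 0\le b$, so $p\in[0,1]$.
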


We now apply this lemma to establish concentration of the (IW) estimator.
Since each per-round contribution $2^n c_{x_i} z_i$ lies in $[-2^n, 2^n]$, the range of each term is $R = 2^{n+1}$.
The sum of ranges squared is $N \cdot (2^{n+1})^2 = N \cdot 4^{n+1}$.
By Lemma~\ref{lem:sm:fs:hoeffding} with $a_i = -2^n$ and $b_i = 2^n$,
\begin{equation}
  \Pr\!\bigl( |\hat{\beta}_{\rm IW} - \beta| \geq \tau \bigr)
  \leq 2 \exp\!\left( -\frac{2 N \tau^2}{4^{n+1}} \right)
  = 2 \exp\!\left( -\frac{N \tau^2}{2 \cdot 4^{n}} \right).
  \label{sm:fs:iw_hoeffding}
\end{equation}

For the purpose of certification, we require a one-sided confidence bound $\beta \geq \hat{\beta}_{\rm IW} - \tau$ with confidence $1 - \delta$.
Since $|\hat{\beta}_{\rm IW} - \beta| \geq \tau$ implies either $\beta \leq \hat{\beta}_{\rm IW} - \tau$ or $\beta \geq \hat{\beta}_{\rm IW} + \tau$, the two-sided bound in~\eqref{sm:fs:iw_hoeffding} directly yields
\begin{equation*}
  \Pr\!\bigl( \beta \leq \hat{\beta}_{\rm IW} - \tau \bigr)
  \leq \Pr\!\bigl( |\hat{\beta}_{\rm IW} - \beta| \geq \tau \bigr)
  \leq 2 \exp\!\left( -\frac{N \tau^2}{2 \cdot 4^{n}} \right).
\end{equation*}
Setting this tail probability equal to $\delta$ and solving for $\tau$ yields the following concentration result.

\begin{observation}
  \label{obv:sm:fs:iw_concentration}
  For the IW estimator $\hat{\beta}_{\rm IW}$, with confidence $1 - \delta$,
  \begin{equation}
    \beta \geq \hat{\beta}_{\rm IW} - 2^n \sqrt{\frac{2 \ln(2 / \delta)}{N}}.
    \label{sm:fs:iw_bound}
  \end{equation}
\end{observation}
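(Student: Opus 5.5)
The plan is to read Observation~\ref{obv:sm:fs:iw_concentration} directly off the Hoeffding tail bound~\eqref{sm:fs:iw_hoeffding} by inverting it for $\tau$. The steps are: verify the hypotheses of Lemma~\ref{lem:sm:fs:hoeffding}, specialise to the one-sided event, and solve for the deviation.

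\textbf{Hypotheses.} Set $Y_i \coloneqq 2^n c_{x_i} z_i$ for $i=1,\dots,N$.

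\begin{itemize}
\item \emph{Independence.} The rounds are independent, because each setting $x_i$ is drawn uniformly at random and the devices are used i.i.d. We take this as the protocol's standing assumption; it is the one modelling point to state explicitly.
\item \emph{Unbiasedness.} Conditioning on $x_i$ gives $\mathbb{E}[Y_i] = \sum_{\vec{x}} 2^{-n}\, 2^n c_{\vec{x}} E(\vec{x}) = \beta$. Hence $\mathbb{E}[\hat{\beta}_{\rm IW}] = \beta$.
\item \emph{Boundedness.} Since $|c_{\vec{x}}|\le 1$ and $z_i\in\{\pm1\}$, each $Y_i$ lies in $[-2^n,2^n]$.
\end{itemize}

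Applying the lemma with $b_i-a_i = 2^{n+1}$ yields exactly~\eqref{sm:fs:iw_hoeffding}:
\[
\Pr\bigl(|\hat{\beta}_{\rm IW}-\beta|\ge\tau\bigr)\le 2\exp\bigl(-N\tau^2/(2\cdot 4^n)\bigr).
\]

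\textbf{One-sided event.} The bad event $\{\beta \le \hat{\beta}_{\rm IW}-\tau\}$ is contained in $\{|\hat{\beta}_{\rm IW}-\beta|\ge\tau\}$. It therefore has probability at most $2\exp(-N\tau^2/(2\cdot4^n))$.

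\textbf{Inversion.} Set the right-hand side equal to $\delta$, giving $N\tau^2/(2\cdot 4^n)=\ln(2/\delta)$, so $\tau = 2^n\sqrt{2\ln(2/\delta)/N}$. Then with probability at least $1-\delta$ we have $\beta > \hat{\beta}_{\rm IW}-\tau$, which is~\eqref{sm:fs:iw_bound}.

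\textbf{Obstacle.} There is no real difficulty here. The only point to handle carefully is the claim $|c_{\vec{x}}|\le 1$ for the MABK coefficients, which the bound on $Y_i$ relies on. I would check it from the recursion for $W_n$. With the normalization $W_1 = 2O_{r_1}$, the coefficients at level $n$ are half-sums or half-differences of coefficients of $W_{n-1}$ and $\overline{W}_{n-1}$. This keeps their magnitudes bounded by $1$ from $n\ge 3$ onward (the coefficients reach $\pm 1$ at $n=2$ and stay there, or vanish, thereafter), as the section asserts.

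Optionally, one can also remark that the one-sided Hoeffding bound drops the factor $2$ to $\ln(1/\delta)$. The stated two-sided form is what is claimed, so I would keep it.
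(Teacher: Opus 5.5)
Your proposal is correct and is essentially the paper's own argument. Hoeffding with per-round range $2^{n+1}$ gives Eq.~\eqref{sm:fs:iw_hoeffding}, the one-sided failure event $\{\beta \le \hat{\beta}_{\rm IW}-\tau\}$ lies inside the two-sided one, and equating the tail to $\delta$ yields $\tau = 2^n\sqrt{2\ln(2/\delta)/N}$.

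One minor correction to your parenthetical remark: the MABK coefficients do not stay at $\pm 1$ for $n \ge 4$. In this normalization the nonzero ones have magnitude $2^{-(n-3)/2}$ for odd $n$ and $2^{-(n-2)/2}$ for even $n$. Your half-sum/half-difference induction still correctly gives $|c_{\vec{x}}| \le 1$, which is all the range bound needs.
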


For the PS estimator, we need to handle the random sample sizes $\{N_{\vec{x}}\}$.
We proceed by first conditioning on the multinomial draw, applying Hoeffding per setting, and then using a Chernoff bound to control the event that some $N_{\vec{x}}$ is too small.

\begin{lemma}[Multiplicative Chernoff Bound \cite{mitzenmacher2017probability}]
  \label{lem:sm:fs:chernoff}
  Let $X \sim \mathrm{Binomial}(m, p)$ with $\mu = \mathbb{E}[X] = mp$.
  For any $\gamma \in (0, 1)$,
  \begin{equation*}
    \Pr\!\bigl( X \leq (1 - \gamma) \mu \bigr) \leq \exp\!\left( -\frac{\gamma^2 \mu}{2} \right).
  \end{equation*}
\end{lemma}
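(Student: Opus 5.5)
The plan is the standard exponential-moment (Chernoff) argument applied to the lower tail, followed by an elementary one-variable inequality. Write $X=\sum_{i=1}^m X_i$ with $X_i$ i.i.d.\ Bernoulli$(p)$, so that $\mu = mp$.

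\textbf{Exponential Markov bound.} For any $t>0$, the map $x\mapsto e^{-tx}$ is decreasing. Markov's inequality therefore gives
\begin{equation*}
  \Pr\bigl(X \le (1-\gamma)\mu\bigr) = \Pr\bigl(e^{-tX} \ge e^{-t(1-\gamma)\mu}\bigr) \le e^{t(1-\gamma)\mu}\,\mathbb{E}\bigl[e^{-tX}\bigr].
\end{equation*}

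\textbf{Moment generating function.} By independence, $\mathbb{E}[e^{-tX}] = \bigl(1+p(e^{-t}-1)\bigr)^m$. Applying $1+x\le e^x$ to each factor yields $\mathbb{E}[e^{-tX}] \le \exp\bigl(\mu(e^{-t}-1)\bigr)$, which removes the explicit dependence on $m$ and $p$ separately.

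\textbf{Optimizing over $t$.} Take $t = -\ln(1-\gamma)$, which is positive because $\gamma\in(0,1)$. Then $e^{-t}=1-\gamma$, and the bound becomes
\begin{equation*}
  \Pr\bigl(X \le (1-\gamma)\mu\bigr) \le \exp\Bigl(\mu\bigl[-\gamma - (1-\gamma)\ln(1-\gamma)\bigr]\Bigr).
\end{equation*}

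\textbf{Elementary inequality.} It remains to show
\begin{equation*}
  -\gamma-(1-\gamma)\ln(1-\gamma)\le -\frac{\gamma^2}{2},
\end{equation*}
which is equivalent to $f(\gamma)\coloneqq(1-\gamma)\ln(1-\gamma)+\gamma-\gamma^2/2\ge 0$ on $(0,1)$.
\begin{itemize}
  \item At the left endpoint, $f(0)=0$.
  \item Differentiating gives $f'(\gamma) = -\ln(1-\gamma)-\gamma$.
  \item Since $-\ln(1-\gamma)=\sum_{k\ge1}\gamma^k/k\ge\gamma$, we have $f'(\gamma)\ge 0$.
\end{itemize}
Hence $f$ is nondecreasing from $0$, so $f\ge 0$ on $(0,1)$. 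Substituting this into the previous display gives $\exp(-\gamma^2\mu/2)$, which is the claimed bound.

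This last calculus step is the only point needing care; the rest is routine. If a slicker route were wanted, one could instead compare Taylor series directly, using $(1-\gamma)\ln(1-\gamma) = -\gamma + \sum_{k\ge2}\gamma^k/\bigl(k(k-1)\bigr)$, whose first correction term is already $\gamma^2/2$.
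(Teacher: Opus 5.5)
Your proof is correct: the exponential Markov step, the bound $\mathbb{E}[e^{-tX}]\le\exp\bigl(\mu(e^{-t}-1)\bigr)$, the choice $t=-\ln(1-\gamma)$, and the monotonicity argument showing $(1-\gamma)\ln(1-\gamma)+\gamma-\gamma^2/2\ge 0$ on $[0,1)$ all check out. The paper does not prove this lemma itself; it only cites Mitzenmacher--Upfal, and your argument is the standard textbook proof of this bound.
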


\begin{observation}
  \label{obv:sm:fs:nx_lower}
  Let $N_{\min} = \min_{\vec{x}} N_{\vec{x}}$.
  With probability at least $1 - \delta_1$,
  \begin{equation*}
    N_{\min} \geq \frac{N}{2^{n+1}},
  \end{equation*}
  provided that $N \geq 8 \cdot 2^n \ln(2^n / \delta_1)$.
\end{observation}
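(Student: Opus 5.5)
The plan is to apply Lemma~\ref{lem:sm:fs:chernoff} to each setting count separately and then take a union bound over the $2^n$ settings; the multinomial correlation between the $N_{\vec{x}}$ plays no role here. Each marginal satisfies $N_{\vec{x}} \sim \mathrm{Binomial}(N, 2^{-n})$, with mean $\mu = N/2^n$. Taking $\gamma = 1/2$ in Lemma~\ref{lem:sm:fs:chernoff} gives
\begin{equation*}
  \Pr\!\left( N_{\vec{x}} \leq \frac{N}{2^{n+1}} \right) \leq \exp\!\left( -\frac{\mu}{8} \right) = \exp\!\left( -\frac{N}{8 \cdot 2^{n}} \right).
\end{equation*}

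Next comes the union bound over all $\vec{x} \in \{0,1\}^n$. It yields
\begin{equation*}
  \Pr\!\left( N_{\min} \leq \frac{N}{2^{n+1}} \right) \leq 2^n \exp\!\left( -\frac{N}{8\cdot 2^n} \right).
\end{equation*}
The right-hand side is at most $\delta_1$ precisely when $N \geq 8 \cdot 2^n \ln(2^n/\delta_1)$, which is the stated hypothesis. On the complementary event, which has probability at least $1-\delta_1$, every $N_{\vec{x}} > N/2^{n+1}$, and so in particular $N_{\min} \geq N/2^{n+1}$.

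No step is a genuine obstacle. The only points to handle carefully are two technicalities. First, Lemma~\ref{lem:sm:fs:chernoff} requires $\gamma \in (0,1)$, and $\gamma = 1/2$ qualifies. Second, the event is stated with a non-strict inequality, while Chernoff controls $X \leq (1-\gamma)\mu$; bounding the probability of that event directly gives the strict-inequality complement, which is stronger than what the statement asks for.
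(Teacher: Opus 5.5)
Your proposal is correct and follows the paper's proof essentially verbatim: Lemma~\ref{lem:sm:fs:chernoff} with $\gamma = 1/2$ on each binomial marginal $N_{\vec{x}}$, a union bound over the $2^n$ settings, and then solving $2^n \exp\!\bigl(-N/(8\cdot 2^n)\bigr) \le \delta_1$ for $N$. Your remark that the complement event gives the strict inequality $N_{\min} > N/2^{n+1}$ is a correct small refinement.
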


\noindent \emph{Proof.}
For each $\vec{x}$, $N_{\vec{x}} \sim \mathrm{Binomial}(N, 1/2^n)$ with $\mu = N / 2^n$.
Taking $\gamma = 1/2$ in Lemma~\ref{lem:sm:fs:chernoff},
\begin{equation*}
  \Pr\!\left(N_{\vec{x}} \leq \frac{N}{2^{n+1}}\right) \leq \exp\!\left(-\frac{N}{8 \cdot 2^n}\right).
\end{equation*}
By the union bound over all $2^n$ settings,
\begin{equation*}
  \Pr\!\left(\bigcup_{\vec{x}} \left\{ N_{\vec{x}} \leq \frac{N}{2^{n+1}} \right\}\right)
  \leq 2^n \exp\!\left(-\frac{N}{8 \cdot 2^n}\right) \leq \delta_1,
\end{equation*}
where the last inequality follows from the condition $N \geq 8 \cdot 2^n \ln(2^n / \delta_1)$.
\hfill $\square$

We now combine the Chernoff guarantee on $N_{\min}$ with Hoeffding's inequality applied conditionally to each setting.

\begin{observation}
  \label{obv:sm:fs:ps_concentration}
  For the PS estimator $\hat{\beta}_{\rm PS}$, with confidence $1 - \delta$,
  \begin{equation}
    |\hat{\beta}_{\rm PS} - \beta| \leq 2^n \sqrt{\frac{2^{n+2} \ln(4 \cdot 2^n / \delta)}{N}},
    \label{sm:fs:ps_bound}
  \end{equation}
  provided that $N \geq 8 \cdot 2^n \ln(2 \cdot 2^n / \delta)$.
\end{observation}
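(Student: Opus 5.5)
We condition on the event of Observation~\ref{obv:sm:fs:nx_lower}, run Hoeffding separately for each setting given the sample sizes, and take a union bound over all failure events. We split the confidence budget as $\delta = \delta_1 + \delta_2$ with $\delta_1 = \delta_2 = \delta/2$.

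\textbf{Controlling the sample sizes.} Apply Observation~\ref{obv:sm:fs:nx_lower} with $\delta_1 = \delta/2$. Its hypothesis $N \geq 8\cdot 2^n \ln(2^n/\delta_1)$ is exactly the stated condition $N \geq 8 \cdot 2^n \ln(2\cdot 2^n/\delta)$. Hence, with probability at least $1-\delta/2$, every $N_{\vec{x}} \geq N/2^{n+1} > 0$. In particular, the degenerate branch $\hat{E}(\vec{x}) = 0$ never occurs on this event.

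\textbf{Per-setting concentration.} Condition on the realized vector $\{N_{\vec{x}}\}$. The settings are drawn independently of the outcomes, and the rounds are independent. So for each $\vec{x}$, $\hat{E}(\vec{x})$ is a mean of $N_{\vec{x}}$ independent $[-1,1]$-valued variables with mean $E(\vec{x})$. The special case of Lemma~\ref{lem:sm:fs:hoeffding} then gives
\[
\Pr\bigl(|\hat{E}(\vec{x}) - E(\vec{x})| \geq \tau_{\vec{x}} \,\big|\, N_{\vec{x}}\bigr) \leq 2\exp(-N_{\vec{x}}\tau_{\vec{x}}^2/2).
\]
Choose $\tau_{\vec{x}} = \sqrt{2\ln(4\cdot 2^n/\delta)/N_{\vec{x}}}$, so that each conditional failure probability is at most $\delta/(2\cdot 2^n)$. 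A union bound over the $2^n$ settings, followed by averaging over the conditioning, puts the total per-setting failure probability at no more than $\delta/2$.

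\textbf{Combining.} On the intersection of the two good events, which has probability at least $1-\delta$, the triangle inequality and $|c_{\vec{x}}| \le 1$ give
\[
|\hat{\beta}_{\rm PS} - \beta| \leq \sum_{\vec{x}} |c_{\vec{x}}|\,\tau_{\vec{x}} \leq 2^n \max_{\vec{x}}\tau_{\vec{x}} \leq 2^n\sqrt{\frac{2\ln(4\cdot 2^n/\delta)}{N/2^{n+1}}} = 2^n\sqrt{\frac{2^{n+2}\ln(4\cdot 2^n/\delta)}{N}},
\]
where the third inequality uses $N_{\vec{x}} \ge N_{\min} \geq N/2^{n+1}$. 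This is exactly Eq.~\eqref{sm:fs:ps_bound}.

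\textbf{Main obstacle.} The delicate step is justifying the conditional Hoeffding application. The multinomial counts are correlated, so we must note that, given the whole setting sequence, the outcomes $z_i$ are independent with setting-dependent means. Per-setting Hoeffding is then valid for every realization, not only for typical ones. A secondary point is that $\tau_{\vec{x}}$ is data-dependent; this is harmless because the bound holds conditionally for every fixed $N_{\vec{x}}$ before we invoke the lower bound on $N_{\min}$.
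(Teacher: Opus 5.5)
Your proof is correct and follows the paper's argument. You use Observation~\ref{obv:sm:fs:nx_lower} with $\delta_1=\delta/2$, conditional per-setting Hoeffding with a union bound at level $\delta/(2\cdot 2^n)$, and the triangle inequality together with $N_{\min}\ge N/2^{n+1}$. The only cosmetic differences are your setting-dependent thresholds $\tau_{\vec{x}}$ and the intersection of two good events, where the paper uses a single threshold $\tau$ tied to $N_{\min}$ and conditions on $\mathcal{E}_{\rm C}$; both routes give the same bound.
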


\noindent \emph{Proof.}
Condition on the multinomial draw $\{N_{\vec{x}}\}$.
For each setting $\vec{x}$ with $N_{\vec{x}} > 0$, applying Lemma~\ref{lem:sm:fs:hoeffding} to the $N_{\vec{x}}$ i.i.d.\ observations $z_i \in [-1, 1]$ yields
\begin{equation}
  \Pr\!\bigl( |\hat{E}(\vec{x}) - E(\vec{x})| \geq \tau_{\vec{x}} \mid N_{\vec{x}} \bigr) \leq 2 \exp\!\left( -\frac{N_{\vec{x}} \tau_{\vec{x}}^2}{2} \right).
  \label{sm:fs:cond_hoeff}
\end{equation}
If $N_{\vec{x}} = 0$, then $\hat{E}(\vec{x}) = 0$ and $|\hat{E}(\vec{x}) - E(\vec{x})| \leq 1$, which can be absorbed into the subsequent analysis with a trivial bound.

Apply Observation~\ref{obv:sm:fs:nx_lower} with $\delta_1 = \delta/2$; this requires $N \geq 8 \cdot 2^n \ln(2 \cdot 2^n / \delta)$.
Condition on the event $\mathcal{E}_{\rm C} = \{N_{\min} \geq N / 2^{n+1}\}$, which holds with probability at least $1 - \delta/2$.

Choose a uniform error threshold $\tau$ for all settings and set
\begin{equation*}
  \tau = \sqrt{ \frac{2 \ln(4 \cdot 2^n / \delta)}{N_{\min}} }.
\end{equation*}
For each $\vec{x}$, Eq.~\eqref{sm:fs:cond_hoeff} gives
\begin{equation*}
  \Pr\!\bigl( |\hat{E}(\vec{x}) - E(\vec{x})| \geq \tau \mid N_{\vec{x}}, \mathcal{E}_{\rm C} \bigr)
  \leq 2 \exp\!\left( -\frac{N_{\vec{x}} \tau^2}{2} \right)
  \leq 2 \exp\!\left( -\frac{N_{\min} \tau^2}{2} \right)
  = \frac{\delta}{2 \cdot 2^n}.
\end{equation*}
By the union bound over all $2^n$ settings,
\begin{equation*}
  \Pr\!\left( \bigcup_{\vec{x}} \bigl\{ |\hat{E}({\vec{x}}) - E({\vec{x}})| \geq \tau \bigr\} \;\Big|\; \mathcal{E}_{\rm C} \right)
  \leq 2^n \cdot \frac{\delta}{2 \cdot 2^n} = \frac{\delta}{2}.
\end{equation*}
Therefore, unconditionally,
\begin{equation*}
  \Pr\!\left( \bigcup_{\vec{x}} \bigl\{ |\hat{E}(\vec{x}) - E(\vec{x})| \geq \tau \bigr\} \right)
  \leq 1 - \Pr(\mathcal{E}_{\rm C}) + \frac{\delta}{2} \leq \delta.
\end{equation*}
With probability at least $1 - \delta$,
\begin{equation*}
  |\hat{\beta}_{\rm PS} - \beta|
  = \Bigl| \sum_{\vec{x}} c_{\vec{x}} \bigl( \hat{E}(\vec{x}) - E(\vec{x}) \bigr) \Bigr|
  \leq \sum_{\vec{x}} |\hat{E}(\vec{x}) - E(\vec{x})|
  \leq 2^n \tau
  = 2^n \sqrt{ \frac{2 \ln(4 \cdot 2^n / \delta)}{N_{\min}} }.
\end{equation*}
Substituting $N_{\min} \geq N / 2^{n+1}$ yields Eq.~\eqref{sm:fs:ps_bound}.
\hfill $\square$

Comparing the IW bound~\eqref{sm:fs:iw_bound} with the PS bound~\eqref{sm:fs:ps_bound}, we note that the IW bound scales as $2^n / \sqrt{N}$ (up to logarithmic factors), whereas the PS bound scales as $2^{3n/2} / \sqrt{N}$, the additional factor $\sqrt{2^{n+2}} = 2^{n/2 + 1}$ reflecting that the effective sample size per setting is $N / 2^n$ rather than $N$.
For small to moderate $n$, the PS estimator may nevertheless be preferable in practice because its per-round contribution remains bounded in $[-1, 1]$ (rather than $[-2^n, 2^n]$), which can yield lower variance in finite samples.
For the subsequent certification analysis we adopt the IW bound~\eqref{sm:fs:iw_bound}, whose $2^n / \sqrt{N}$ scaling yields a tighter and simpler measurement budget formula.

\subsection{Confidence Interval for the MABK Violation}

\label{sm:fs:interval}

From Observation~\ref{obv:sm:fs:iw_concentration}, with confidence $1 - \delta$,
\begin{equation}
  \beta \geq \hat{\beta}_{\rm IW} - \Delta(N, \delta),
  \label{sm:fs:beta_lower}
\end{equation}
where the width of the confidence interval is
\begin{equation}
  \Delta(N, \delta) = 2^n \sqrt{\frac{2 \ln(2 / \delta)}{N}}.
  \label{sm:fs:delta_def}
\end{equation}

For what follows, we drop the subscript and write $\hat{\beta} \equiv \hat{\beta}_{\rm IW}$.



Proposition~\ref{prop:main} establishes the operator inequality~\eqref{eq:operator-ineq-main-sm}
with
\begin{equation}
  s_n = \frac{\sqrt{2} + 1}{2^{n/2}}, \qquad \mu = -(1 + \sqrt{2}),
  \label{sm:fs:params}
\end{equation}
holding for all measurement angles $\{\theta_k\} \in [0, \pi/2]^n$ and for all $n \geq 3$.
Consequently, for any $n$-partite state $\rho$ and any choice of local measurement observables, the extractability of the target state $\Phi_n$ (which is locally equivalent to the $n$-qubit GHZ state) satisfies
\begin{equation}
  \Xi(\rho \to \Phi_n) \geq s_n \, \beta + \mu,
  \label{sm:fs:extractability}
\end{equation}
where $\beta = \operatorname{tr}[\rho W_n]$ is the MABK violation produced by $\rho$.

Combining the concentration inequality~\eqref{sm:fs:beta_lower} with the extractability bound~\eqref{sm:fs:extractability}, we obtain the finite-sample certified extractability.

\begin{proposition}[Finite-Sample Certified Extractability]
  \label{prop:sm:fs:certified}
  Let $\hat{\beta}$ be the MABK violation estimated from $N$ experimental rounds with the protocol of Section~\ref{sm:fs:protocol}.
  With confidence at least $1 - \delta$,
  \begin{equation*}
    \Xi(\rho \to \Phi_n) \geq s_n \bigl( \hat{\beta} - \Delta(N, \delta) \bigr) + \mu,
  \end{equation*}
  where $\Delta(N, \delta)$ is defined in Eq.~\eqref{sm:fs:delta_def}, $s_n$ and $\mu$ are given in Eq.~\eqref{sm:fs:params}, and $\Phi_n$ is the $n$-qubit state locally equivalent to $|\mathrm{GHZ}_n\rangle$.
\end{proposition}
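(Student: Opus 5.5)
The plan is to chain the concentration result of Observation~\ref{obv:sm:fs:iw_concentration} with the extractability bound of Eq.~\eqref{sm:fs:extractability}, using that the slope $s_n$ is positive. The only probabilistic event involved is the one-sided confidence event
\begin{equation*}
  \mathcal{E} \coloneqq \bigl\{ \beta \geq \hat{\beta} - \Delta(N,\delta) \bigr\},
\end{equation*}
where $\beta = \operatorname{tr}[\rho W_n]$ is the true violation produced by the (unknown but fixed) state and observables. Observation~\ref{obv:sm:fs:iw_concentration} gives $\Pr(\mathcal{E}) \geq 1-\delta$. That observation follows from Hoeffding's inequality, because the rounds are i.i.d.\ with summands $2^n c_{x_i} z_i \in [-2^n,2^n]$. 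I would note explicitly that this uses the i.i.d.\ assumption on the devices across rounds, which the protocol of Section~\ref{sm:fs:protocol} implicitly adopts.

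The second step is deterministic. By Proposition~\ref{prop:main} and the reduction of Eq.~\eqref{eq:operator-ineq}, the inequality $\Xi(\rho\to\Phi_n) \geq s_n\beta + \mu$ holds for every state and every choice of observables. Naimark dilation, Jordan's lemma and the symmetry of Observation~\ref{obv:angle-symmetry} reduce arbitrary measurements to the parametrised angles, so it applies in particular to the actual experiment. On $\mathcal{E}$, monotonicity of $x \mapsto s_n x + \mu$ (since $s_n > 0$) gives
\begin{equation*}
  \Xi(\rho\to\Phi_n) \geq s_n\beta + \mu \geq s_n\bigl(\hat{\beta} - \Delta(N,\delta)\bigr) + \mu .
\end{equation*}
Hence the claimed inequality holds on an event of probability at least $1-\delta$.

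The main subtlety, rather than a genuine obstacle, is to state clearly what the probability is over. The randomness is in the settings and outcomes, while $\rho$ and the Bell value $\beta$ are fixed parameters. No union bound is needed, because the extractability bound is deterministic given $\beta$.

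If the devices are not assumed i.i.d., I would remark that one could replace Hoeffding by Azuma's inequality applied to the martingale $\sum_i (2^n c_{x_i} z_i - \beta_i)$. This yields the same $\Delta(N,\delta)$ for the average violation $\bar\beta = \frac1N\sum_i \beta_i$. Linearity of the bound $s_n\beta+\mu$ then transfers the certificate to the averaged state. I would keep this only as a remark, since the statement refers to the protocol as defined.
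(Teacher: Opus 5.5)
Your argument is correct and is exactly the paper's: the proposition is obtained by combining the one-sided Hoeffding bound $\beta \geq \hat{\beta} - \Delta(N,\delta)$ of Eq.~\eqref{sm:fs:beta_lower} with the deterministic bound $\Xi(\rho\to\Phi_n) \geq s_n\beta + \mu$ and $s_n>0$, under the same i.i.d.\ assumption you make explicit. One caveat concerns only your closing remark: $\Xi$ is convex in the state, so the round-averaged bound transfers to the averaged state only if the observables, and hence the extraction operator $K$, are the same in every round; otherwise you certify only the average of the per-round extractabilities.
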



\noindent Thus, from a finite number of experimental rounds, we can certify a lower bound on the extractability of the $n$-qubit GHZ state with a rigorously quantified confidence level.

\subsection{Measurement Budget}

\label{sm:fs:budget}

We now determine the number of rounds $N$ required to certify extractability at least $\Xi_{\rm target}$ with confidence $1 - \delta$.

From Proposition~\ref{prop:sm:fs:certified}, the requirement $\Xi \geq \Xi_{\rm target}$ is satisfied whenever
\begin{equation*}
  s_n \bigl( \hat{\beta} - \Delta(N, \delta) \bigr) + \mu \geq \Xi_{\rm target}.
\end{equation*}
Rearranging,
\begin{equation}
  \Delta(N, \delta) \leq \hat{\beta} + \frac{\mu}{s_n} - \frac{\Xi_{\rm target}}{s_n}.
  \label{sm:fs:delta_requirement}
\end{equation}
Substituting the explicit form of $\Delta(N, \delta)$ from Eq.~\eqref{sm:fs:delta_def} and solving for $N$ yields the following result.

\begin{proposition}[Measurement Budget]
  \label{prop:sm:fs:budget}
  To certify extractability $\Xi(\rho \to \Phi_n) \geq \Xi_{\rm target}$ with confidence $1 - \delta$ given an estimated MABK violation $\hat{\beta}$, it suffices to perform
  \begin{equation}
    N \geq \frac{2 \cdot 2^{2n} \cdot \ln(2 / \delta)}{\bigl( \hat{\beta} + \mu / s_n - \Xi_{\rm target} / s_n \bigr)^2}.
    \label{sm:fs:budget_formula}
  \end{equation}
\end{proposition}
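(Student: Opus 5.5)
The plan is to combine Proposition~\ref{prop:sm:fs:certified} with the explicit width $\Delta(N,\delta)$ of Eq.~\eqref{sm:fs:delta_def} and then invert the resulting inequality for $N$. No new probabilistic argument is needed: Observation~\ref{obv:sm:fs:iw_concentration} already supplies the confidence-$(1-\delta)$ event $\beta \ge \hat\beta - \Delta(N,\delta)$. On that event, Proposition~\ref{prop:sm:fs:certified} gives
\[
\Xi(\rho\to\Phi_n) \ge s_n(\hat\beta - \Delta) + \mu .
\]
It therefore suffices to choose $N$ so that the right-hand side is at least $\Xi_{\rm target}$.

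Since $s_n>0$, the requirement $s_n(\hat\beta-\Delta)+\mu \ge \Xi_{\rm target}$ is equivalent to Eq.~\eqref{sm:fs:delta_requirement}, namely
\[
\Delta(N,\delta) \le D \coloneqq \hat\beta + \mu/s_n - \Xi_{\rm target}/s_n .
\]
I will first note the implicit hypothesis $D>0$, i.e.\ $s_n\hat\beta+\mu > \Xi_{\rm target}$: the point estimate must itself clear the target, or no finite $N$ can work. Under this hypothesis both sides of $2^n\sqrt{2\ln(2/\delta)/N} \le D$ are nonnegative. Squaring is then an equivalence, and rearranging gives $N \ge 2\cdot 2^{2n}\ln(2/\delta)/D^2$, which is Eq.~\eqref{sm:fs:budget_formula}.

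\textbf{Main obstacle.} The one delicate point is the logical order, because $\hat\beta$ is a random quantity that depends on $N$. The clean reading is that for any $N$ satisfying the inequality with the realized $\hat\beta$, the certificate of Proposition~\ref{prop:sm:fs:certified} holds with confidence $1-\delta$. This is because the Hoeffding event of Observation~\ref{obv:sm:fs:iw_concentration} holds for each fixed $N$ and is unconditional on $\hat\beta$. I will state this interpretation explicitly rather than claim an a priori sample size.
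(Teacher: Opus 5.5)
Your proposal is correct and follows the paper's proof essentially verbatim: you substitute $\Delta(N,\delta)=2^n\sqrt{2\ln(2/\delta)/N}$ into the requirement $\Delta(N,\delta)\le \hat\beta+\mu/s_n-\Xi_{\rm target}/s_n$, square, and solve for $N$, and the paper likewise notes that the right-hand side must be positive. Your remark on reading the guarantee on the fixed-$N$ Hoeffding event, with the condition checked against the realized $\hat\beta$, is a sensible clarification that the paper leaves implicit.
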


\begin{proof}
  From Eq.~\eqref{sm:fs:delta_def},
  \begin{equation*}
    \Delta(N, \delta) = 2^n \sqrt{\frac{2 \ln(2 / \delta)}{N}}.
  \end{equation*}
  Inserting this into the condition $\Delta(N, \delta) \leq \hat{\beta} + \mu/s_n - \Xi_{\rm target}/s_n$ and squaring both sides,
  \begin{equation*}
    \frac{2 \cdot 2^{2n} \cdot \ln(2 / \delta)}{N}
    \leq \left( \hat{\beta} + \frac{\mu}{s_n} - \frac{\Xi_{\rm target}}{s_n} \right)^{\!2}.
  \end{equation*}
  Solving for $N$ gives Eq.~\eqref{sm:fs:budget_formula}.
  For the bound to be meaningful, the denominator must be positive, which requires $\hat{\beta} \geq \Xi_{\rm target} / s_n - \mu / s_n$.
  Using the explicit parameter values from~\eqref{sm:fs:params},
  \begin{equation*}
    \frac{\mu}{s_n} = -\frac{(1+\sqrt{2}) \cdot 2^{n/2}}{\sqrt{2}+1} = -2^{n/2}.
  \end{equation*}
  Hence the condition reads $\hat{\beta} \geq (\Xi_{\rm target} + 2^{n/2}) / s_n$, which is always satisfied for sufficiently large $\hat{\beta}$ approaching the maximal violation $\beta_{nQ} = 2^{(n+1)/2}$.
\end{proof}

The denominator in Eq.~\eqref{sm:fs:budget_formula} has a clear physical interpretation: the term $\hat{\beta} + \mu/s_n$ represents the slack between the estimated violation and the intercept of the extractability bound.
Specifically, $\hat{\beta} + \mu/s_n = \hat{\beta} - 2^{n/2}$ must exceed $\Xi_{\rm target}/s_n$ for certification to be possible.
At the maximal quantum violation $\beta_{nQ} = \sqrt{2} \cdot 2^{n/2}$, the slack is $(\sqrt{2} - 1) \, 2^{n/2}$, which grows exponentially with $n$.
Since $s_n \propto 2^{-n/2}$, the required precision $\Xi_{\rm target}/s_n$ also grows with $n$; the net effect is that the denominator scales proportional to $2^{n/2}$ times constants determined by the target and the estimated violation.

\subsection{Discussion}

\label{sm:fs:discussion}

Equation~\eqref{sm:fs:budget_formula} reveals the fundamental scaling of the measurement budget with the number of parties.
For the near-maximal violation regime, the denominator acquires a compensating $2^{n}$ factor from $(\mu/s_n)^2$, canceling one factor of $2^{2n}$ in the numerator and yielding $N \propto 2^{n}$.
Explicitly, taking $\hat{\beta} = \beta_{nQ}(1 - \eta)$, where $\eta \coloneqq \varepsilon/\beta_{nQ}$ is the relative violation deficiency, with $\beta_{nQ} = \sqrt{2} \cdot 2^{n/2}$, $\mu/s_n = -2^{n/2}$, and $s_n = (\sqrt{2}+1)/2^{n/2}$, the denominator becomes $2^{n/2}\bigl[\sqrt{2}(1-\eta) - 1 - \Xi_{\rm target}/(\sqrt{2}+1)\bigr]$, whence
\begin{equation*}
  N \geq \frac{2 \cdot 2^{n} \cdot \ln(2/\delta)}{\bigl[\sqrt{2}(1-\eta) - 1 - \Xi_{\rm target}/(\sqrt{2}+1)\bigr]^2},
\end{equation*}
that is, $N \propto 2^{n}$ for fixed $\eta$ and $\Xi_{\rm target}$.

The $2^{2n}$ in the numerator originates from the $2^n$ range of each IW summand, squared via Hoeffding; the compensating $2^n$ in the denominator arises because $|\mu|/s_n = 2^{n/2}$ (squared thereafter), reflecting the growing slack between violation and certification threshold.
If the violation does not scale with $n$, the term $\mu/s_n \propto -2^{n/2}$ dominates and makes the RHS of~\eqref{sm:fs:delta_requirement} negative for large $n$; certification of a fixed $\Xi_{\rm target} > 0$ then requires $N \to \infty$, so the protocol is only practically meaningful in the near-maximal regime.

The budget $N \propto 2^{n}$ limits applicability to $n = O(\log N)$ parties, consistent with the $2^{n}$ measurement settings of the MABK family.
If an a priori bound on per-round variance is available, Bernstein's inequality~\cite{Bernstein_1924} can tighten the concentration analysis; a sequential testing procedure (e.g.\ the law of the iterated logarithm~\cite{Balsubramani_2014}) could further reduce the required rounds by adaptive stopping.
We leave these refinements for future work.

\section{Certified Randomness from the Extractability Bound}
\label{sec:dirg}

In a DIRG protocol based on the $n$-qubit MABK inequality, the certified randomness rate, that is, the conditional entropy $H(R|E)$ of the measurement outcomes given Eve's side information, can deviate from its ideal value due to two independent sources:
\begin{enumerate}[label=(\roman*)]
  \item \textbf{State imperfection}: the shared state $\rho_Q$ is not exactly the $n$-qubit GHZ state $|\Phi_n\rangle$;
  \item \textbf{Measurement imperfection}: the local observables of the $n$ parties are not exactly the optimal Pauli measurements (angles $\theta_k \neq \pi/4$).
\end{enumerate}
At maximal MABK violation, self-testing guarantees that both are ideal and the randomness rate attains its maximal value~\cite{Wooltorton_Brown_Colbeck_2025}.
At non-maximal violation, the imperfect state and measurements could both contribute to the deficit $H_{\rm ideal} - H(R|E)$.

In this section, we formulate the sources of randomness deficit and quantify their contributions.
Using the extractability bound of Proposition~\ref{prop:main}, which holds uniformly for \emph{all} measurement angles, we first derive a certified lower bound on $H(R|E)$ that accounts for the fidelity gap between the physical state and the ideal GHZ state (state-imperfection contribution).
The measurement-imperfection contribution, arising from the difference between the physical and ideal measurement channels, together with the measurement-translation residual arising from the non-commutation of the physical measurement with the extraction isometry, is then bounded in Section~\ref{sm:dirg:completion} via a commutation-based analysis, yielding a combined, fully device-independent bound.

\subsection{Background on DIRG and MABK Randomness}
\label{sm:dirg:background}

We consider a device-independent scenario with $n$ spatially separated parties, each holding an uncharacterized quantum device.
In each experimental round, the $n$ parties receive binary inputs $x_k \in \{0, 1\}$ and produce binary outputs $a_k \in \{\pm 1\}$.
The correlations are described by an unknown $n$-partite quantum state $\rho_Q$ measured with local binary observables $\{O^{(k)}_{x_k}\}$.
We assume the existence of an adversary, Eve, who may be entangled with the devices; the global state $|\Psi\rangle_{QE}$ on the device registers $Q = Q_1 \cdots Q_n$ and Eve's register $E$ is thus taken to be pure, with $\rho_Q = \operatorname{tr}_E |\Psi\rangle\langle\Psi|_{QE}$.

In a DIRG protocol, the $n$ parties fix a common input setting, conventionally $\vec{x} = \mathbf{0} \coloneqq (0,0,\dots,0)$, and collect the raw output string $\vec{a} = (a_1, \dots, a_n)$.
The device-independent randomness rate certified by a Bell inequality $W_n$ (expressed in terms of the measurement observables) that achieves violation $\beta = \langle W_n \rangle$ is defined as~\cite{Wooltorton_Brown_Colbeck_2025}
\begin{equation*}
  R(\beta) \coloneqq \inf_{\substack{\text{quantum strategies with} \\ \langle W_n \rangle \geq \beta}} H(R \mid X = \mathbf{0}, E)_{\rho},
\end{equation*}
where $\rho_{RE|X=\mathbf{0}}$ is the classical-quantum (cq) state after measuring input $\mathbf{0}$:
\begin{equation*}
  \rho_{RE|\mathbf{0}} = \sum_{\vec{a} \in \{\pm 1\}^n} |\vec{a}\rangle\langle \vec{a}|_R \otimes \operatorname{tr}_Q\!\bigl[ (P_{\vec{a}}^{(\mathbf{0})} \otimes I_E) \, |\Psi\rangle\langle\Psi|_{QE} \bigr],
\end{equation*}
with $P_{\vec{a}}^{(\mathbf{0})} = \bigotimes_{k=1}^n P_{a_k}^{(k,0)}$ the projector onto outcome $a_k$ for the $0$-th measurement of party $k$.
The register $R$ holds the classical output string $\vec{a}$, and $E$ is Eve's quantum side information.
The conditional entropy $H(R|E) = H(\rho_{RE|\mathbf{0}}) - H(\rho_E)$ quantifies the extractable randomness in the presence of an adversary.

\begin{proposition}[Wooltorton~et~al.\ \cite{Wooltorton_Brown_Colbeck_2025}]
  \label{sm:dirg:wooltorton}
  For the $n$-qubit MABK inequality, the maximal quantum violation $\beta_{nQ} = 2^{(n+1)/2}$ certifies, on input $\mathbf{0}$,
  \begin{equation}
    H_{\rm ideal}(n) = \begin{cases}
      n, & n \text{ odd and } n \equiv 3 \pmod 4,\\[4pt]
      n - 1, & n \text{ odd and } n \equiv 1 \pmod 4,\\[4pt]
      n + \frac{1}{2} - \frac{\log_2(1+\sqrt{2})}{\sqrt{2}} \approx n - 0.4, & n \text{ even},
    \end{cases}
    \label{sm:dirg:hideal_def}
  \end{equation}
  where $H_{\rm ideal}(n) \coloneqq H(R|E)$ evaluated at maximal violation.
  For odd $n \equiv 1 \pmod 4$ the uniformly random input is $\mathbf{1}$ rather than $\mathbf{0}$: the maximal violation then certifies the full $n$ bits, whereas on input $\mathbf{0}$ the outcomes are uniform over only half of the strings, giving $n-1$ bits.
\end{proposition}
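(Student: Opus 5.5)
The plan is to reduce the statement to a computation on the ideal strategy. At maximal violation the device-independent infimum defining $R(\beta_{nQ})$ is attained, up to local isometries, only by the ideal strategy. So we first show that any strategy with $\langle W_n\rangle=\beta_{nQ}$ is locally equivalent to the reference one, and then evaluate $H(R|X=\mathbf{0},E)$ for that reference strategy.

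\textbf{Step 1 (rigidity).} Apply Proposition~\ref{prop:main} at $\beta=\beta_{nQ}$. Since $s_n\beta_{nQ}+\mu=1$, this gives $\Xi(\rho_Q\to\Phi_n)=1$. Because $\Phi_n$ is pure, the extracted state on $Q$ is exactly $|\Phi_n\rangle$, so the purification decouples: after the local isometry, $|\Psi\rangle_{QE}\mapsto|\Phi_n\rangle\otimes|\text{junk}\rangle_{Q'E}$. For the measurements, use the Jordan-block reduction already in place (angles $\theta_k$ per block). The bound $\beta_{nQ}$ is attained only when $\theta_1=\cdots=\theta_n=\pi/4$ on every block carrying weight, and commutator-based observable self-testing~\cite{Kaniewski_2017} confirms this. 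Hence on the support of the state, the input-$0$ observable of party $k$ is $O_{r_k}=(\sigma_x+(-1)^{r_k}\sigma_z)/\sqrt{2}$, tensored with identity on junk. It follows that Eve's register is independent of the outcomes, and
\[
H(R|E)=H(\{p(\vec a)\}),\qquad p(\vec a)=\langle\Phi_n|\textstyle\bigotimes_k P^{(k,0)}_{a_k}|\Phi_n\rangle .
\]

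\textbf{Step 2 (outcome distribution).} Conjugate by the local unitary $U$ of Remark~\ref{remark:s:1}, so the state becomes $|\text{GHZ}_n\rangle$. Each $U_kO_{r_k}U_k^\dagger$ should then be an equatorial observable $\cos\phi_k\,\sigma_x+\sin\phi_k\,\sigma_y$; this uses that $\Phi_n$ is stabilised by the $\sigma_x/\sigma_z$-type strings of Observation~\ref{obv:s:bloch-coef}, so the relevant correlators live in the $\sigma_x$/$\sigma_y$ plane after rotation. For equatorial measurements on GHZ, all marginals are uniform, and the standard formula
\[
p(\vec a)=2^{-n}\Bigl(1+\textstyle\prod_k a_k\cos\bigl(\sum_k\phi_k\bigr)\Bigr)
\]
holds. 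Equivalently, without the explicit rotation, one can read $\cos(\sum\phi_k)$ directly from the Bloch coefficient of $\Phi_n$ on the string $\bigotimes_k O_{r_k}$ using Eq.~\eqref{eq:90}. Thus $H=n-1+h\bigl((1+|c|)/2\bigr)$ with $c=\cos(\sum_k\phi_k)$. Each $\phi_k$ is an odd multiple of $\pi/4$ with signs fixed by $r_k$ and $\omega_k=(-1)^k\pi/2$, so the total phase is $n\pi/4$ plus multiples of $\pi/2$. This gives $|c|=1/\sqrt{2}$ for even $n$, and $c\in\{0,\pm1\}$ for odd $n$ depending on $n\bmod 4$.

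\textbf{Step 3 (evaluation).} For odd $n$, the value $c=0$ gives $n$ bits and $|c|=1$ gives $n-1$ bits. Redoing Step 2 with input $\mathbf{1}$ (replace $O_{r_k}$ by $O_{r_k\oplus1}$, which flips the $\sigma_z$ signs) shifts the total phase by an odd multiple of $\pi/2$. This gives the stated swap between inputs $\mathbf{0}$ and $\mathbf{1}$ for $n\equiv1\pmod 4$. For even $n$, compute
\[
h\bigl(\cos^2(\pi/8)\bigr)=\tfrac32-\log_2(1+\sqrt2)/\sqrt2,
\]
which yields $n+\tfrac12-\log_2(1+\sqrt2)/\sqrt2$.

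\textbf{Main obstacle.} I expect the hardest step to be the exact phase bookkeeping in Step 2: determining $\sum_k\phi_k \bmod 2\pi$, including the contribution of the nontrivial $U_1$, for general $n$. That bookkeeping is where the mod-4 dichotomy comes from. I would first try an induction on $n$ using the recursion $|\Phi_n\rangle=\tfrac1{\sqrt2}(|\Psi^+_{n-1}\rangle|\Upsilon_1^+\rangle+|\Phi^-_{n-1}\rangle|\Upsilon_1^-\rangle)$ of Eq.~\eqref{eq:88}, tracking the full correlator $\langle\bigotimes_kO_{r_k}\rangle$. If the sign pattern resists a clean induction, I would fall back on checking $n=3,4,5,6$ directly.
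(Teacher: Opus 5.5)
\textbf{Where the argument breaks.} Your architecture is sound. Step~1 is correct: at $\beta_{nQ}$ the main Proposition gives $\Xi=1$, Kaniewski's bound forces $t_k=1$ and hence $\theta=\pi/4$ on every weighted Jordan block, so the dephasing is trivial and Eve decouples. Step~3 is also correct: $H=n-1+h\bigl((1+|c|)/2\bigr)$ and $h(\cos^2(\pi/8))=\tfrac32-\log_2(1+\sqrt2)/\sqrt2$. Carried through, this would be a self-contained derivation of a statement the paper only imports from Wooltorton et al. The gap is in Step~2, which is where the entire case split lives.

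Your claim that every $\phi_k$ is an odd multiple of $\pi/4$, so that $\sum_k\phi_k\equiv n\pi/4 \pmod{\pi/2}$, is false. Your conclusion also does not follow from it. That premise would give $c\in\{0,\pm1\}$ for even $n$ and $|c|=1/\sqrt2$ for odd $n$, the reverse of what you assert and of the truth. For example, $n=4$ would give $\sum\phi_k\equiv\pi$, whereas the true correlator is $-1/\sqrt2$.

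Separately, your input-$\mathbf 1$ argument only shows that for odd $n$ the two inputs are complementary: $c_{\mathbf 0}=0$ if and only if $|c_{\mathbf 1}|=1$. It does not decide which input is uniform for $n\equiv1$ versus $n\equiv3\pmod 4$. Checking $n\le6$ cannot settle that for all $n$.

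\textbf{How to fix it.} The bookkeeping is short once you compute the conjugations.
\begin{itemize}
\item With the paper's $U$ (Remark~\ref{remark:s:1}), $U_1=\sigma_zR_z(-\pi/2)R_x(\pi/2)R_y(-\pi/4)$ sends $O_{r_1}\mapsto\sigma_x$ and $O_{r_1\oplus1}\mapsto\sigma_y$. So party~1 has angles $0$ and $\pi/2$, not odd multiples of $\pi/4$.
\item For $k\ge2$ you have $(-1)^{r_k}=(-1)^{k-1}$ and $R_x\bigl((-1)^k\pi/2\bigr)\,\sigma_z\,R_x\bigl((-1)^k\pi/2\bigr)^\dagger=(-1)^{k-1}\sigma_y$. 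Hence every $O_{r_k}\mapsto(\sigma_x+\sigma_y)/\sqrt2$ and every $O_{r_k\oplus1}\mapsto(\sigma_x-\sigma_y)/\sqrt2$.
\item Therefore $E(\mathbf 0)=\cos\bigl((n-1)\pi/4\bigr)$ and $E(\mathbf 1)=\sin\bigl((n-1)\pi/4\bigr)$.
\end{itemize}
This gives exactly the three cases, including which input is uniform for each odd residue mod~$4$.

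A convention-free route uses only the rigidity you already have, namely that all optimal strategies share the same correlators. Linearize the recursion:
\[
W_n+i\overline W_n=\tfrac{1+i}{2}\bigl(O_{r_n}-iO_{r_n\oplus1}\bigr)\bigl(W_{n-1}+i\overline W_{n-1}\bigr),\qquad W_1+i\overline W_1=2\bigl(O_{r_1}+iO_{r_1\oplus1}\bigr).
\]
Writing $W_n=\sum_{\vec x}\gamma_{\vec x}\bigotimes_kO_{r_k\oplus x_k}$, this gives $\gamma_{\mathbf 0}=2^{(3-n)/2}\cos\bigl((n-1)\pi/4\bigr)$. The explicit GHZ strategy above has correlators $E(\vec x)=2^{(n-3)/2}\gamma_{\vec x}$. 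This is the precise form of the paper's heuristic remark that the case split is governed by which correlators appear in $W_n$.
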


The state and measurements that achieve $\beta_{nQ}$ are known: the state is locally equivalent to the $n$-qubit GHZ state, and the optimal measurement angles are $\theta_k = \pi/4$ for all $k$, making the two observables of each party maximally complementary.
At this optimal point the measurement outcomes are decoupled from any external system; the resulting conditional entropy $H(R|E)$ on input $\mathbf{0}$ equals $n$ for $n \equiv 3 \pmod 4$, $n - 1$ for $n \equiv 1 \pmod 4$, and $n - 0.4$ for even $n$, as stated in Proposition~\ref{sm:dirg:wooltorton}.
The parity dependence originates from the structure of the MABK correlators: for even $n$ every $n$-party correlator appears with non-zero coefficient, preventing full uniform distribution on a single input setting; for odd $n$ exactly half of the correlators appear, and the input setting that yields the uniform distribution is determined by $n \bmod 4$.
In the DIRG protocol below we take input $\mathbf{0}$ as the randomness input; for $n \equiv 1 \pmod 4$ the same analysis applies verbatim to input $\mathbf{1}$ (the state- and measurement-imperfection bounds of Sections~\ref{sm:dirg:post_measurement} and \ref{sm:dirg:completion} do not depend on which of the two observables of each party is measured), for which the ideal entropy is $H_{\rm ideal}(n) = n$.

Our goal is to bound the randomness rate $R(\beta_{nQ} - \varepsilon)$, for both parities, when the observed violation falls short of the maximum by $\varepsilon > 0$, using the robust extractability bound of Proposition~\ref{prop:main}.

\subsection{From Extractability to Trace Distance}
\label{sm:dirg:trace_distance}

Proposition~\ref{prop:main} provides a lower bound on the extractability $\Xi(\rho_Q \to \Phi_n)$ in terms of the MABK violation:
\begin{equation*}
  \Xi(\rho_Q \to \Phi_n) \geq s_n \beta + \mu,
  \qquad
  s_n = \frac{\sqrt{2} + 1}{2^{n/2}}, \quad \mu = -(1+\sqrt{2}).
\end{equation*}
The extractability is defined with respect to a \emph{specific} local extraction channel $\Lambda = \bigotimes_{k=1}^n \Lambda_k$, constructed from the measurement observables via Jordan rotation and dephasing (see the main text).
For a violation $\beta = \beta_{nQ} - \varepsilon$, we obtain
\begin{equation}
  \Xi(\rho_Q \to \Phi_n) \geq s_n(\beta_{nQ} - \varepsilon) + \mu
  = (s_n\beta_{nQ} + \mu) - s_n\varepsilon
  = 1 - s_n \varepsilon,
  \label{sm:dirg:extractability_eps}
\end{equation}
where we used $s_n\beta_{nQ} + \mu = (\sqrt{2}+1)\sqrt{2} - (1+\sqrt{2}) = 1$.

Proposition~\ref{prop:main} is established via the operator inequality
\begin{equation*}
  K\bigl(\{\theta_k\}_{k=1}^n\bigr) \geq s_n\, W_n\bigl(\{\theta_k\}_{k=1}^n\bigr) + \mu I,
  \qquad
  K\bigl(\{\theta_k\}_{k=1}^n\bigr) \coloneqq \Lambda^\dagger\bigl(\{\theta_k\}_{k=1}^n\bigr)\bigl(|\Phi_n\rangle\langle\Phi_n|\bigr),
\end{equation*}
which holds for the \emph{specific} local extraction channel $\Lambda = \bigotimes_{k=1}^n \Lambda_k$ constructed from the measurement observables via Jordan rotation and dephasing (see the main text).
By the duality of $\Lambda$, $F\bigl(\Lambda(\rho_Q), \Phi_n\bigr) = \operatorname{tr}\!\bigl[\rho_Q\, K\bigl(\{\theta_k\}_{k=1}^n\bigr)\bigr]$ (with $\Phi_n$ pure); tracing the operator inequality with $\rho_Q$ and substituting $\beta = \beta_{nQ} - \varepsilon$ as in Eq.~\eqref{sm:dirg:extractability_eps} yields
\begin{equation}
  F\!\bigl( \Lambda(\rho_Q), \, \Phi_n \bigr) \geq 1 - s_n \varepsilon,
  \label{sm:dirg:fidelity_bound}
\end{equation}
where $F(\rho,\sigma) = \|\sqrt{\rho}\sqrt{\sigma}\|_1^2$ is the Uhlmann fidelity.

We now lift this fidelity bound on the reduced state to a trace-distance bound on a global purification.
Let $V_k: \mathcal{H}_{Q_k} \to \mathcal{H}_{A_k} \otimes \mathcal{H}_{J_k}$ be the Stinespring dilation of the local channel $\Lambda_k$, so that $\Lambda_k(\cdot) = \operatorname{tr}_{J_k}[V_k \,(\cdot)\, V_k^\dagger]$.
The combined isometry $V = \bigotimes_{k=1}^n V_k$ dilates $\Lambda$:
\begin{equation}
  \Lambda(\rho_Q) = \operatorname{tr}_{J}\!\bigl[ V \rho_Q V^\dagger \bigr],
  \label{sm:dirg:stinespring}
\end{equation}
with $J = J_1 \cdots J_n$ the joint junk system.

Define the dilated global extracted state
\begin{equation*}
  |\Psi'\rangle_{AJE} \coloneqq (V \otimes I_E) |\Psi\rangle_{QE},
\end{equation*}
where registers $A = A_1 \cdots A_n$ hold the logical qubits (the output of the isometry) and $J$ the junk.
Tracing out $J$, we recover
\begin{equation*}
  \operatorname{tr}_J\!\bigl[ |\Psi'\rangle\langle\Psi'|_{AJE} \bigr]
  = (\Lambda \otimes I_E)\bigl(|\Psi\rangle\langle\Psi|_{QE}\bigr)
  = (\Lambda \otimes I_E)(\rho_{QE}).
\end{equation*}

\begin{lemma}
  \label{lem:sm:dirg:purification_trace_distance}
  There exists a pure state $|\zeta\rangle_{JE}$ on the junk and Eve registers such that
  \begin{equation}
    \bigl\|\, |\Psi'\rangle\langle\Psi'|_{AJE}
    - |\Phi_n\rangle\langle\Phi_n|_A \otimes |\zeta\rangle\langle\zeta|_{JE} \,\bigr\|_1
    \leq 2\sqrt{s_n \varepsilon}.
    \label{sm:dirg:trace_distance_bound}
  \end{equation}
\end{lemma}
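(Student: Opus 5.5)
The plan is to combine the fidelity bound \eqref{sm:dirg:fidelity_bound} with Uhlmann's theorem, which turns a closeness statement about the reduced state on $A$ into one about purifications. The pure state $|\Psi'\rangle_{AJE}$ purifies the reduced state
\begin{equation*}
  \operatorname{tr}_{JE}\bigl[|\Psi'\rangle\langle\Psi'|_{AJE}\bigr] = \Lambda(\rho_Q),
\end{equation*}
because tracing $J$ gives $(\Lambda\otimes I_E)(\rho_{QE})$ and tracing $E$ then gives $\Lambda(\rho_Q)$. The pure target $\Phi_n$ is purified by $|\Phi_n\rangle_A\otimes|\zeta\rangle_{JE}$ for any unit vector $|\zeta\rangle_{JE}$, and every purification of $\Phi_n$ on $A\otimes JE$ has this product form since $\Phi_n$ is pure.

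First step: apply Uhlmann's theorem, fixing the purification $|\Psi'\rangle$ of $\Lambda(\rho_Q)$ and optimizing over purifications of $\Phi_n$ on the purifying space $JE$. This gives
\begin{equation*}
  F\bigl(\Lambda(\rho_Q),\Phi_n\bigr) = \max_{|\zeta\rangle_{JE}} \bigl|\langle\Psi'|\,(|\Phi_n\rangle\otimes|\zeta\rangle)\bigr|^2 .
\end{equation*}
The purifying space $JE$ is at least as large as $A$, so the maximum is attained.

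The maximizer can also be written down directly, avoiding any dimension issue. Let
\begin{equation*}
  |\tilde\zeta\rangle_{JE} = (\langle\Phi_n|_A\otimes I_{JE})\,|\Psi'\rangle .
\end{equation*}
Then $\|\tilde\zeta\|^2 = \langle\Phi_n|\Lambda(\rho_Q)|\Phi_n\rangle = F$. Normalizing, $|\zeta\rangle = |\tilde\zeta\rangle/\|\tilde\zeta\|$ gives overlap squared exactly $F$. If $F=0$ the bound is trivial, since the right-hand side is then at least $2$ because $s_n\varepsilon \geq 1$.

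Second step: for two pure states, the trace distance is given exactly by their overlap,
\begin{equation*}
  \bigl\||\psi\rangle\langle\psi| - |\phi\rangle\langle\phi|\bigr\|_1 = 2\sqrt{1-|\langle\psi|\phi\rangle|^2}.
\end{equation*}
This is the equality case of Fuchs--van de Graaf. Substituting the overlap $F \geq 1 - s_n\varepsilon$ from \eqref{sm:dirg:fidelity_bound} gives
\begin{equation*}
  \bigl\||\Psi'\rangle\langle\Psi'| - \Phi_n\otimes|\zeta\rangle\langle\zeta|\bigr\|_1 \leq 2\sqrt{1-F} \leq 2\sqrt{s_n\varepsilon},
\end{equation*}
which is \eqref{sm:dirg:trace_distance_bound}.

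The main point needing care is confirming the identification $\operatorname{tr}_{JE}|\Psi'\rangle\langle\Psi'| = \Lambda(\rho_Q)$. This uses only the Stinespring relation \eqref{sm:dirg:stinespring} and the fact that $V$ acts trivially on $E$. One should also note that $\Lambda$ includes the Jordan rotation and the dephasing, so that the $\Phi_n$ in \eqref{sm:dirg:fidelity_bound} refers to the same logical register $A$. The fidelity convention $F=\|\sqrt\rho\sqrt\sigma\|_1^2$ (squared) is used throughout, which gives the $\sqrt{s_n\varepsilon}$ rather than a fourth root.
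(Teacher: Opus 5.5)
Your proposal is correct and follows the same route as the paper. Both apply Uhlmann's theorem with $|\Psi'\rangle$ as the fixed purification of $\Lambda(\rho_Q)$ and product purifications $|\Phi_n\rangle\otimes|\zeta\rangle$ of the pure target, then use the exact pure-state trace-distance formula. Your explicit maximizer $(\langle\Phi_n|_A\otimes I_{JE})|\Psi'\rangle$, normalized, together with the $F=0$ case, makes the existence of $|\zeta\rangle$ self-contained; it also makes the remark about the dimension of $JE$ unnecessary, since by purity of $\Phi_n$ any unit vector $|\zeta\rangle$ already gives a purification.
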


\begin{proof}
  From Eq.~\eqref{sm:dirg:fidelity_bound} and Eq.~\eqref{sm:dirg:stinespring}, the fidelity between the reduced states satisfies
  \begin{equation*}
    F\!\bigl( \operatorname{tr}_{JE}[|\Psi'\rangle\langle\Psi'|], \, \Phi_n \bigr)
    = F(\Lambda(\rho_Q), \Phi_n) \geq 1 - s_n\varepsilon.
  \end{equation*}
  By Uhlmann's theorem~\cite{Uhlmann_1976}, the fidelity between two mixed states equals the maximum squared overlap between their purifications.
  Since $|\Psi'\rangle_{AJE}$ is a purification of $\operatorname{tr}_{JE}[|\Psi'\rangle\langle\Psi'|]$ and any state $|\Phi_n\rangle_A \otimes |\zeta\rangle_{JE}$ is a purification of $\Phi_n$ (with arbitrary $|\zeta\rangle$), there exists a purification $|\zeta\rangle_{JE}$ achieving the fidelity:
  \begin{equation*}
    \bigl| \langle \Phi_n |_A \langle \zeta |_{JE} \, |\Psi'\rangle_{AJE} \bigr|^2
    = F(\Lambda(\rho_Q), \Phi_n) \geq 1 - s_n\varepsilon.
  \end{equation*}
  The Fuchs--van~de~Graaf inequality~\cite{Fuchs_van_de_Graaf_1999} relates the trace distance between two pure states to their fidelity:
  \begin{equation*}
    \frac{1}{2} \bigl\| |\psi\rangle\langle\psi| - |\phi\rangle\langle\phi| \bigr\|_1
    = \sqrt{1 - |\langle \psi | \phi \rangle|^2}.
  \end{equation*}
  Applying this to $|\psi\rangle = |\Psi'\rangle_{AJE}$ and $|\phi\rangle = |\Phi_n\rangle_A \otimes |\zeta\rangle_{JE}$, and using $1 - |\langle\phi|\psi\rangle|^2 \leq s_n\varepsilon$, we obtain~\eqref{sm:dirg:trace_distance_bound}.
\end{proof}

The isometry $V = \bigotimes_k V_k$ is a product of local isometries, reflecting the fact that the extraction channel $\Lambda$ respects the tensor-product structure of the $n$ parties.
This locality essentially guarantees that the logical registers $A_k$ remain associated with the respective parties, so that measurements on the physical systems translate (up to the isometry, and up to the measurement-translation residual quantified in Section~\ref{sm:dirg:completion}) to measurements on the logical qubits.

\subsection{Post-Measurement State Analysis}
\label{sm:dirg:post_measurement}

We now analyze the cq-state obtained after measuring input $\mathbf{0}$.
In the DIRG protocol the physical measurement is performed on the physical registers $Q_k$; under the local isometry $V_k$ the physical projector $P_{a_k}^{(k,\mathrm{phys})}$ is mapped to $\widetilde{A}_{a_k}^{(k)} \coloneqq V_k P_{a_k}^{(k,\mathrm{phys})} V_k^\dagger$, which acts on $A_k \otimes J_k$.
As shown in Section~\ref{sm:dirg:completion}, $\widetilde{A}_{a_k}^{(k)}$ does \emph{not} coincide with the logical projector $P_{a_k}^{(k,0)} \otimes I_{J_k}$ unless $\theta_k = \pi/4$; the resulting \emph{measurement-translation residual} is quantified in Lemma~\ref{lem:sm:dirg:translation} below.
Since the junk registers are held by the honest parties and are irrelevant for the certified randomness, we discard them, so that all post-measurement cq-states below live on the registers $R \otimes E$.
The post-measurement analysis then bounds the physical cq-state by the ideal one (where $|\Phi_n\rangle$ is measured with the optimal Pauli observables), up to the state- and measurement-imperfection contributions quantified below.

Let $\mathcal{M}$ denote the CPTP map that effects the $n$-partite measurement of input $\mathbf{0}$ and records the outcomes in a classical register $R$:
\begin{equation}
  \mathcal{M}(\rho_{AJE}) = \sum_{\vec{a} \in \{\pm 1\}^n} |\vec{a}\rangle\langle \vec{a}|_R \otimes \operatorname{tr}_A\!\bigl[ (P_{\vec{a}}^{(\mathbf{0})} \otimes I_{JE}) \, \rho_{AJE} \bigr],
  \label{sm:dirg:measurement_map}
\end{equation}
where $P_{\vec{a}}^{(\mathbf{0})}$ are the projectors for the (transformed) measurement operators acting on the logical qubits $A$.
The physical post-measurement cq-state is
\begin{equation}
  \rho^{\rm phys}_{RE|\mathbf{0}} \coloneqq \sum_{\vec{a} \in \{\pm 1\}^n} |\vec{a}\rangle\langle \vec{a}|_R \otimes \operatorname{tr}_Q\!\bigl[ (P_{\vec{a}}^{(\rm phys)} \otimes I_E) \, |\Psi\rangle\langle\Psi|_{QE} \bigr],
  \label{sm:dirg:phys_cq}
\end{equation}
with $P_{\vec{a}}^{(\rm phys)} = \bigotimes_{k=1}^n P_{a_k}^{(k,\rm phys)}$ the physical projectors, and the \emph{extracted-picture} cq-state (the logical measurement on the extracted state, with the junk discarded) is
\begin{equation}
  \rho'_{RE|\mathbf{0}} \coloneqq \operatorname{tr}_J\!\bigl[ \mathcal{M}\!\bigl( |\Psi'\rangle\langle\Psi'|_{AJE} \bigr) \bigr],
  \label{sm:dirg:actual_cq}
\end{equation}
The ideal post-measurement cq-state (achieved at $\beta = \beta_{nQ}$) is defined analogously,
\begin{equation*}
  \rho^{\rm ideal}_{RE|\mathbf{0}} \coloneqq \operatorname{tr}_J\!\bigl[ \mathcal{M}_{\rm ideal}\!\bigl( |\Phi_n\rangle\langle\Phi_n|_A \otimes |\zeta\rangle\langle\zeta|_{JE} \bigr) \bigr],
\end{equation*}

At maximal MABK violation, self-testing forces the measurement angles to $\theta_k = \pi/4$ for all $k$, irrespective of the parity of $n$.
At this optimal point, the measurement outcomes are decoupled from the junk register and from Eve; on the uniformly random input of Proposition~\ref{sm:dirg:wooltorton} (input $\mathbf{0}$ for $n \equiv 3 \pmod 4$ and input $\mathbf{1}$ for $n \equiv 1 \pmod 4$) the outcomes are uniformly distributed:
\begin{equation}
  \rho^{\rm ideal}_{RE|\mathbf{0}}
  = \frac{1}{2^n} \sum_{\vec{a} \in \{\pm 1\}^n} |\vec{a}\rangle\langle \vec{a}|_R \otimes \sigma_{JE},
  \label{sm:dirg:ideal_form}
\end{equation}
where $\sigma_{JE} = \operatorname{tr}_A[|\Phi_n\rangle\langle\Phi_n| \otimes |\zeta\rangle\langle\zeta|] = |\zeta\rangle\langle\zeta|_{JE}$ is independent of the outcome $\vec{a}$.
For even $n$ the ideal cq-state is decoupled but not uniform:
\begin{equation*}
  \rho^{\rm ideal}_{RE|\mathbf{0}}
  = \sum_{\vec{a} \in \{\pm 1\}^n} p_{\rm ideal}(\vec{a})\,|\vec{a}\rangle\langle \vec{a}|_R \otimes \sigma_{JE},
\end{equation*}
with $H\bigl(\{p_{\rm ideal}(\vec{a})\}\bigr) = H_{\rm ideal}(n)$ of Proposition~\ref{sm:dirg:wooltorton}.
Since only the \emph{value} of the ideal conditional entropy enters the entropy-continuity argument below, the parity (and the class modulo $4$) of $n$ affects the final bound solely through $H_{\rm ideal}(n)$.

\emph{Decomposition of the randomness deficit.}
The extracted-picture cq-state $\rho'_{RE|\mathbf{0}}$ of Eq.~\eqref{sm:dirg:actual_cq} is obtained by first applying the extraction isometry $V$ to $|\Psi\rangle_{QE}$, then measuring the logical qubits $A$ with the \emph{transformed physical measurement} operators $P_{\vec{a}}^{(\mathbf{0})}$.
At maximal violation, self-testing forces $\theta_k = \pi/4$, and the transformed measurement operators coincide with the optimal Pauli observables on $A$, so that $\rho^{\rm phys}_{RE|\mathbf{0}} = \rho'_{RE|\mathbf{0}}$.
At $\varepsilon > 0$, however, the measurement channel $\mathcal{M}$ depends on the unknown angles $\{\theta_k\}_{k=1}^n$ and differs from the ideal measurement channel $\mathcal{M}_{\rm ideal}$, which would employ the optimal Pauli projectors $P_{\vec{a}}^{(\mathbf{0}), {\rm ideal}}$ on $A$; moreover the physical measurement no longer commutes exactly with the extraction isometry.

To separate the sources of randomness loss, define the \emph{extracted cq-state with ideal measurement}
\begin{equation*}
  \rho^{\rm extr}_{RE|\mathbf{0}} \coloneqq \operatorname{tr}_J\!\bigl[ \mathcal{M}_{\rm ideal}\!\bigl( |\Psi'\rangle\langle\Psi'|_{AJE} \bigr) \bigr],
\end{equation*}
where $\mathcal{M}_{\rm ideal}$ uses the optimal Pauli projectors $P_a^{(\mathbf{0}), {\rm ideal}}$ in place of $P_a^{(\mathbf{0})}$ in Eq.~\eqref{sm:dirg:measurement_map}.
By the triangle inequality,
\begin{equation}
  \begin{aligned}
    \bigl\| \rho^{\rm phys}_{RE|\mathbf{0}} - \rho^{\rm ideal}_{RE|\mathbf{0}} \bigr\|_1
    &\leq \bigl\| \rho^{\rm phys}_{RE|\mathbf{0}} - \rho'_{RE|\mathbf{0}} \bigr\|_1
       + \bigl\| \rho'_{RE|\mathbf{0}} - \rho^{\rm extr}_{RE|\mathbf{0}} \bigr\|_1
       + \bigl\| \rho^{\rm extr}_{RE|\mathbf{0}} - \rho^{\rm ideal}_{RE|\mathbf{0}} \bigr\|_1.
  \end{aligned}
  \label{sm:dirg:decomposition}
\end{equation}
The third term is the \emph{state-imperfection contribution}: it measures how far the extracted state $|\Psi'\rangle$ deviates from the ideal $|\Phi_n\rangle \otimes |\zeta\rangle$, {both measured with the same ideal channel} $\mathcal{M}_{\rm ideal}$.
By the contractivity argument of Lemma~\ref{lem:sm:dirg:post_measurement_distance}, this term is bounded by $2\sqrt{s_n\varepsilon}$.

The second term is the \emph{measurement-imperfection contribution}:
\begin{equation}
  \bigl\| \rho'_{RE|\mathbf{0}} - \rho^{\rm extr}_{RE|\mathbf{0}} \bigr\|_1
  \leq \bigl\| (\mathcal{M} - \mathcal{M}_{\rm ideal})(|\Psi'\rangle\langle\Psi'|_{AJE}) \bigr\|_1,
  \label{sm:dirg:measurement_gap}
\end{equation}
where the inequality follows from contractivity of the partial trace over $J$; it depends on the difference between the two measurement channels.
For qubit measurements parametrized by angles $\{\theta_k\}_{k=1}^n$ (see the main text), the transformed operators $P_{\vec{a}}^{(\mathbf{0})}$ are smooth functions of $\theta_k$, and $\mathcal{M} = \mathcal{M}_{\rm ideal}$ precisely when $\theta_k = \pi/4$ for all $k$.
Bounding the measurement-imperfection contribution $\|(\mathcal{M} - \mathcal{M}_{\rm ideal})(|\Psi'\rangle\langle\Psi'|)\|_1$ in terms of the MABK violation deficit $\varepsilon$ therefore constitutes a measurement-robust self-testing problem for the $n$-qubit MABK inequality.

The first term in Eq.~\eqref{sm:dirg:decomposition} is the \emph{measurement-translation residual} arising from the non-commutation of the physical measurement with the dephasing part of the extraction channel; it is bounded in Section~\ref{sm:dirg:completion} (Lemma~\ref{lem:sm:dirg:translation}) by $(\sqrt{2}/2)\,nw$.

In the remainder of this section, we bound the state-imperfection contribution and obtain the corresponding entropy lower bound.
The measurement-imperfection contribution and the translation residual are then quantified in Section~\ref{sm:dirg:completion} using the measurement-robustness self-testing result for the MABK inequality established in \cite{Kaniewski_2017}.

\begin{lemma}
  \label{lem:sm:dirg:post_measurement_distance}
  The state-imperfection contribution to the trace distance satisfies
  \begin{equation*}
    \bigl\| \rho^{\rm extr}_{RE|\mathbf{0}} - \rho^{\rm ideal}_{RE|\mathbf{0}} \bigr\|_1
    \leq 2\sqrt{s_n \varepsilon}.
  \end{equation*}
\end{lemma}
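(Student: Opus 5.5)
The bound follows from contractivity of the trace norm under CPTP maps, combined with Lemma~\ref{lem:sm:dirg:purification_trace_distance}. Both states in the statement are images of global states on $AJE$ under the \emph{same} map. Define
\begin{equation*}
  \mathcal{N} \coloneqq \operatorname{tr}_J \circ \mathcal{M}_{\rm ideal}.
\end{equation*}
By definition, $\rho^{\rm extr}_{RE|\mathbf{0}} = \mathcal{N}(|\Psi'\rangle\langle\Psi'|_{AJE})$ and $\rho^{\rm ideal}_{RE|\mathbf{0}} = \mathcal{N}(|\Phi_n\rangle\langle\Phi_n|_A \otimes |\zeta\rangle\langle\zeta|_{JE})$. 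Here $|\zeta\rangle_{JE}$ is the Uhlmann-optimal pure state produced in Lemma~\ref{lem:sm:dirg:purification_trace_distance}, which is exactly the state entering the definition of $\rho^{\rm ideal}_{RE|\mathbf{0}}$.

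The first step is to check that $\mathcal{N}$ is CPTP. The map $\mathcal{M}_{\rm ideal}$ in Eq.~\eqref{sm:dirg:measurement_map}, with the ideal Pauli projectors, is a quantum-to-classical instrument: its Kraus-type action is $\rho \mapsto \sum_{\vec a} |\vec a\rangle\langle\vec a| \otimes \operatorname{tr}_A[(P^{(\mathbf{0}),{\rm ideal}}_{\vec a}\otimes I)\rho]$. Since the projectors $\{P^{(\mathbf{0}),{\rm ideal}}_{\vec a}\}$ sum to the identity, this map is completely positive and trace preserving. The partial trace $\operatorname{tr}_J$ is CPTP as well, so their composition $\mathcal{N}$ is CPTP.

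The second step applies the data-processing inequality for the trace norm, $\|\mathcal{N}(X)\|_1 \le \|X\|_1$, valid for Hermitian $X$ and any CPTP $\mathcal{N}$. Taking $X$ to be the difference of the two global states gives
\begin{equation*}
  \bigl\|\rho^{\rm extr}_{RE|\mathbf{0}}-\rho^{\rm ideal}_{RE|\mathbf{0}}\bigr\|_1 \le \bigl\| |\Psi'\rangle\langle\Psi'|_{AJE} - |\Phi_n\rangle\langle\Phi_n|_A\otimes|\zeta\rangle\langle\zeta|_{JE}\bigr\|_1 .
\end{equation*}
The right-hand side is at most $2\sqrt{s_n\varepsilon}$ by Eq.~\eqref{sm:dirg:trace_distance_bound}, which completes the argument.

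The only real point of care is consistency: the same $|\zeta\rangle$ must be used in $\rho^{\rm ideal}$ and in Lemma~\ref{lem:sm:dirg:purification_trace_distance}, and the same channel $\mathcal{M}_{\rm ideal}$ must act on both states, including the junk treatment $\operatorname{tr}_J$. Both hold by construction. If one prefers not to rely on Uhlmann's choice of $|\zeta\rangle$, any other purification $|\zeta\rangle$ still gives a valid bound via Fuchs--van de Graaf with the achieved overlap, but the stated constant requires the optimal one. Beyond this, no obstacle is expected; the estimate is a one-line contractivity argument.
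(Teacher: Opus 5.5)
Your proof is correct and is the paper's argument: trace-norm contractivity under the CPTP map built from $\mathcal{M}_{\rm ideal}$, applied to the purification bound of Lemma~\ref{lem:sm:dirg:purification_trace_distance} with the same $|\zeta\rangle$. Writing the map explicitly as $\mathcal{N}=\operatorname{tr}_J\circ\mathcal{M}_{\rm ideal}$ is slightly more careful than the paper's displayed equality. That equality leaves out the partial trace over the junk register that appears in the definitions of $\rho^{\rm extr}_{RE|\mathbf{0}}$ and $\rho^{\rm ideal}_{RE|\mathbf{0}}$.
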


\begin{proof}
  Since $\mathcal{M}_{\rm ideal}$ is a CPTP map, it is contractive under the trace norm.
  Applying this to Lemma~\ref{lem:sm:dirg:purification_trace_distance},
  \begin{equation*}
    \begin{aligned}
      \bigl\| \rho^{\rm extr}_{RE|\mathbf{0}} - \rho^{\rm ideal}_{RE|\mathbf{0}} \bigr\|_1
      &= \bigl\| \mathcal{M}_{\rm ideal}(|\Psi'\rangle\langle\Psi'|)
           - \mathcal{M}_{\rm ideal}(|\Phi_n\rangle\langle\Phi_n| \otimes |\zeta\rangle\langle\zeta|) \bigr\|_1 \\
      &\leq \bigl\| |\Psi'\rangle\langle\Psi'|
           - |\Phi_n\rangle\langle\Phi_n| \otimes |\zeta\rangle\langle\zeta| \bigr\|_1 \\
      &\leq 2\sqrt{s_n \varepsilon}.
    \end{aligned}
  \end{equation*}
\end{proof}

\subsection{Conditional Entropy Continuity}
\label{sm:dirg:entropy_continuity}

To translate the trace-distance bound on the cq-states into a bound on the conditional entropy $H(R|E)$, we apply the tight continuity bound for the conditional von Neumann entropy due to Winter~\cite{Winter_2016}, which sharpens the Alicki--Fannes inequality~\cite{Alicki_Fannes_2004}.

\begin{lemma}[Tight continuity bound for the conditional entropy]
  \label{lem:sm:dirg:fannes_audenaert}
  Let $\rho_{AB}$ and $\sigma_{AB}$ be two classical-quantum (cq) states on the same Hilbert space $\mathcal{H}_A \otimes \mathcal{H}_B$ (with $A$ the classical register), and let $d_A = \dim(\mathcal{H}_A)$.
  Denote the full trace distance by $\delta = \|\rho_{AB} - \sigma_{AB}\|_1$.
  Then, for $\delta \leq 2$~\cite{Winter_2016},
  \begin{equation*}
    \bigl| H(A|B)_\rho - H(A|B)_\sigma \bigr|
    \leq \frac{\delta}{2} \log_2 d_A
      + \Bigl(1 + \frac{\delta}{2}\Bigr)\,h\!\left(\frac{\delta}{2+\delta}\right),
  \end{equation*}
  where $h(x) = -x\log_2 x - (1-x)\log_2(1-x)$ is the binary entropy (with $h(0) \coloneqq 0$) and $H(A|B)_\rho \coloneqq H(\rho_{AB}) - H(\rho_B)$.
  This is the classical-quantum case of Lemma~2 of~\cite{Winter_2016}; for two arbitrary quantum states the first term is replaced by $\delta \log_2 d_A$, which recovers the (sharpened) Alicki--Fannes bound~\cite{Alicki_Fannes_2004}.
  For $\delta \leq 1$, the bound $(1+\delta/2) \leq 2$ together with $h(\delta/(2+\delta)) \leq h(\delta/2)$ yields the simplified bound
  \begin{equation*}
    \bigl| H(A|B)_\rho - H(A|B)_\sigma \bigr|
    \leq \frac{\delta}{2} \log_2 d_A + 2\,h\!\left(\frac{\delta}{2}\right).
  \end{equation*}
  If $A$ is a classical register (i.e.\ $\rho_{AB}$ and $\sigma_{AB}$ are classical-quantum states), then $0 \leq H(A|B) \leq \log_2 d_A$ for every such state, and hence
  \begin{equation}
    \bigl| H(A|B)_\rho - H(A|B)_\sigma \bigr| \leq \log_2 d_A
    \label{sm:dirg:afa_cq_trivial}
  \end{equation}
  holds for \emph{all} $\delta \leq 2$.
\end{lemma}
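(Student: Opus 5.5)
The lemma has three parts, and I will take them in order: the cq form of Winter's bound, the simplification for $\delta \le 1$, and the trivial bound for classical $A$.

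\textbf{Step 1 (cq form of Winter's bound).} I will invoke Lemma~2 of~\cite{Winter_2016} with trace distance $\tfrac12\|\rho-\sigma\|_1 = \delta/2$. Its proof gives the factor $\tfrac{\delta}{2}\log_2 d_A$ through a coupling argument: write $\rho$ and $\sigma$ as convex combinations around a common part, using a flag ancilla whose binary entropy yields the $h$ term. Winter's statement is in terms of $\epsilon = \delta/2$; with $\epsilon=\delta/2$, the expression $\epsilon\log d_A + (1+\epsilon)h(\epsilon/(1+\epsilon))$ becomes the displayed bound, since $\epsilon/(1+\epsilon)=\delta/(2+\delta)$.

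The only structural point to check is where the cq assumption enters. The difference states $(\rho-\sigma)_\pm/\epsilon$ are again cq on the same classical basis. For cq states $0\le H(A|B)\le \log d_A$, whereas general states only satisfy $|H(A|B)|\le\log d_A$, which ranges over an interval of length $2\log d_A$. This is exactly why the cq case carries $\tfrac{\delta}{2}\log d_A$ and the general case carries $\delta\log d_A$. I will state this rather than reprove Winter's lemma.

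\textbf{Step 2 (simplification for $\delta\le1$).} This step is elementary.
\begin{itemize}
\item Since $\delta\le 1$, we have $1+\delta/2\le 3/2\le 2$.
\item Since $\delta/(2+\delta)\le \delta/2\le 1/2$ and $h$ is increasing on $[0,1/2]$, we get $h(\delta/(2+\delta))\le h(\delta/2)$.
\item Both factors are nonnegative, so the products compare in the same direction.
\end{itemize}

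\textbf{Step 3 (trivial bound for classical $A$).} For $\rho_{AB}=\sum_a p_a|a\rangle\langle a|\otimes\rho_B^a$ I have $H(A|B)=H(\{p_a\}) + \sum_a p_a H(\rho_B^a) - H(\sum_a p_a\rho_B^a)$. The Holevo bound $\chi\le H(\{p_a\})$ gives $H(A|B)\ge0$. Conditioning reduces entropy, so $H(A|B)\le H(A)\le\log d_A$. Two numbers in $[0,\log d_A]$ differ by at most $\log d_A$, regardless of $\delta$.

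The main obstacle is Step 1, specifically making sure that Winter's cq variant (the $\tfrac{\delta}{2}$ coefficient) is correctly matched to the normalization $\delta=\|\cdot\|_1$. I would cite Winter's statement directly and carry out only this rescaling explicitly.
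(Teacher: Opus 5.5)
Your proposal is correct and follows the same route as the paper. Both cite Winter's Lemma~2 with $\epsilon=\delta/2$ (so that $\epsilon/(1+\epsilon)=\delta/(2+\delta)$), obtain the simplified form from $1+\delta/2\le 2$ and the monotonicity of $h$ on $[0,1/2]$, and get the trivial bound from $0\le H(A|B)\le \log_2 d_A$ for cq states. Your additions fill in details the paper leaves implicit: the Holevo-bound argument for $H(A|B)\ge 0$, and the observation that the positive and negative parts of $\rho-\sigma$ stay cq on the same classical basis, which is exactly why the coefficient is $\frac{\delta}{2}$ rather than $\delta$.
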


We apply this inequality with $A = R$ (dimension $d_R = 2^n$) and $B = E$.
Setting $\delta = \|\rho^{\rm extr}_{RE|\mathbf{0}} - \rho^{\rm ideal}_{RE|\mathbf{0}}\|_1 \leq 2\sqrt{s_n\varepsilon}$ from Lemma~\ref{lem:sm:dirg:post_measurement_distance}, we obtain, for $\delta \leq 1$ (equivalently $\sqrt{s_n\varepsilon} \le 1/2$),
\begin{equation*}
  \begin{aligned}
    \bigl| H(R|E)_{\rm extr} - H(R|E)_{\rm ideal} \bigr|
    &\leq \frac{\delta}{2} \log_2(2^n) + 2\,h\!\left(\frac{\delta}{2}\right) \\
    &\leq n\sqrt{s_n\varepsilon} + 2\,h\!\bigl(\sqrt{s_n\varepsilon}\,\bigr),
  \end{aligned}
\end{equation*}
where the last inequality uses $h(\delta/2) \le h(\sqrt{s_n\varepsilon})$, valid by monotonicity of $h$ on $[0,1/2]$ provided $\sqrt{s_n\varepsilon} \le 1/2$.
For $\sqrt{s_n\varepsilon} \in (1/2, 1]$ (i.e.\ $\delta \in (1,2]$), the simplified bound of Lemma~\ref{lem:sm:dirg:fannes_audenaert} is not guaranteed to hold (its condition $\delta \le 1$ may fail); instead, since $\rho^{\rm extr}_{RE|\mathbf{0}}$ and $\rho^{\rm ideal}_{RE|\mathbf{0}}$ are classical-quantum states, the trivial bound $\bigl|H(R|E)_{\rm extr} - H(R|E)_{\rm ideal}\bigr| \leq \log_2 d_R = n$ of Eq.~\eqref{sm:dirg:afa_cq_trivial} holds.

For the ideal state, the decoupling discussed above yields $H(R|E)_{\rm ideal} = H_{\rm ideal}(n)$ as given in Proposition~\ref{sm:dirg:wooltorton}.
Therefore, in both regimes,
\begin{equation}
  H(R|E)_{\rm extr} \geq H_{\rm ideal}(n) - \Delta(\varepsilon, n),
  \label{sm:dirg:entropy_lower_bound}
\end{equation}
with the piecewise deficit
\begin{equation}
  \Delta(\varepsilon, n) \coloneqq
  \begin{cases}
    n \sqrt{s_n\varepsilon} + 2\,h\!\bigl( \sqrt{s_n\varepsilon} \,\bigr), & \sqrt{s_n\varepsilon} \le \tfrac{1}{2},\\[2pt]
    n, & \tfrac{1}{2} < \sqrt{s_n\varepsilon} \le 1,
  \end{cases}
  \label{sm:dirg:deficit_def}
\end{equation}
where the first branch is the simplified conditional-entropy deficit of Lemma~\ref{lem:sm:dirg:fannes_audenaert} (valid for $\delta \leq 1$) and the second is the trivial classical-quantum bound of Eq.~\eqref{sm:dirg:afa_cq_trivial}.
Since $\sqrt{s_n\varepsilon} = \sqrt{(1+\sqrt{2})\varepsilon}\,/\,2^{n/4}$, the first branch reads explicitly as $n\sqrt{(1+\sqrt{2})\varepsilon}\,2^{-n/4} + 2h\bigl(\sqrt{(1+\sqrt{2})\varepsilon}\,2^{-n/4}\bigr)$.
Throughout the stated range $\varepsilon \in [0, \beta_{nQ}-\beta_{nC}]$ one has $\sqrt{s_n\varepsilon} \le 1$, so that $\Delta(\varepsilon,n)$ is well defined and finite.

\subsection{State-Imperfection Entropy Bound}
\label{sm:dirg:state_entropy}

Combining the extractability-derived fidelity bound of Eq.~\eqref{sm:dirg:fidelity_bound}, the post-measurement decomposition of Lemma~\ref{lem:sm:dirg:post_measurement_distance}, and the entropy continuity bound of Lemma~\ref{lem:sm:dirg:fannes_audenaert}, we obtain the following certified lower bound on the extracted conditional entropy, \emph{assuming ideal measurements}.

\begin{lemma}[State-Imperfection Entropy Bound]
  \label{lem:sm:dirg:state_entropy}
  For $n \geq 3$, let $\beta = \beta_{nQ} - \varepsilon = 2^{(n+1)/2} - \varepsilon$ be the observed MABK violation with $\varepsilon \in [0, \beta_{nQ} - \beta_{nC}]$, where $\beta_{nC} = 2^{n/2}$ is the biseparable bound.
  Under the extraction channel $\Lambda = \bigotimes_k \Lambda_k$ constructed from the physical observables, and assuming the ideal measurement channel $\mathcal{M}_{\rm ideal}$, the extracted conditional entropy satisfies
  \begin{equation}
    H(R|E)_{\rm extr}\bigl(2^{(n+1)/2} - \varepsilon\bigr)
    \geq H_{\rm ideal}(n) - \Delta(\varepsilon, n),
    \label{sm:dirg:rate_explicit}
  \end{equation}
  where $\Delta(\varepsilon,n)$ is the piecewise deficit of Eq.~\eqref{sm:dirg:deficit_def}.
  In the simplified regime $\sqrt{s_n\varepsilon} \le 1/2$ this reads explicitly as
  \begin{equation}
    H(R|E)_{\rm extr}\bigl(2^{(n+1)/2} - \varepsilon\bigr)
    \geq H_{\rm ideal}(n) - n\,\frac{\sqrt{(1+\sqrt{2})\varepsilon}}{2^{n/4}}
           - 2\,h\!\left( \frac{\sqrt{(1+\sqrt{2})\varepsilon}}{2^{n/4}} \right),
    \label{sm:dirg:rate_explicit_simplified}
  \end{equation}
  while for $1/2 < \sqrt{s_n\varepsilon} \le 1$ the bound $H(R|E)_{\rm extr} \geq H_{\rm ideal}(n) - n$ holds via the trivial classical-quantum bound of Eq.~\eqref{sm:dirg:afa_cq_trivial}.
  The bound relies only on Proposition~\ref{prop:main} (which holds for all measurement angles and dimensions) and the conditional-entropy continuity bound of Lemma~\ref{lem:sm:dirg:fannes_audenaert} together with the classical-quantum entropy bound.
  Here $H_{\rm ideal}(n)$ is the ideal entropy of Proposition~\ref{sm:dirg:wooltorton}: $n$ for odd $n \equiv 3 \pmod 4$ and $n - 1$ for odd $n \equiv 1 \pmod 4$ (on input $\mathbf{0}$; on input $\mathbf{1}$ the latter becomes $n$), and $n + \frac{1}{2} - \log_2(1+\sqrt{2})/\sqrt{2}$ for even $n$.
\end{lemma}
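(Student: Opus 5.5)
The proof simply chains results already established in this section. There are three steps, each of which invokes a prior lemma.

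First, I would fix $\varepsilon\in[0,\beta_{nQ}-\beta_{nC}]$. By Proposition~\ref{prop:main}, traced against $\rho_Q$ as in Eq.~\eqref{sm:dirg:extractability_eps}, the specific local extraction channel $\Lambda$ satisfies $F(\Lambda(\rho_Q),\Phi_n)\ge 1-s_n\varepsilon$. Here I use $s_n\beta_{nQ}+\mu=1$. Before going further I would check that the range of $\varepsilon$ gives $s_n\varepsilon\le(\sqrt2+1)(\sqrt2-1)=1$, so that $\sqrt{s_n\varepsilon}\le1$ and every quantity below is well defined. Lemma~\ref{lem:sm:dirg:purification_trace_distance} (Uhlmann followed by Fuchs--van de Graaf on the Stinespring-dilated pure state) then gives a product state $|\Phi_n\rangle\otimes|\zeta\rangle_{JE}$ within trace distance $2\sqrt{s_n\varepsilon}$ of $|\Psi'\rangle_{AJE}$.

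Second, I would push this bound through the ideal measurement channel and the partial trace over $J$. Both are CPTP, so contractivity applies. This is exactly Lemma~\ref{lem:sm:dirg:post_measurement_distance}, giving $\delta\coloneqq\|\rho^{\rm extr}_{RE|\mathbf0}-\rho^{\rm ideal}_{RE|\mathbf0}\|_1\le2\sqrt{s_n\varepsilon}$. By the decoupled form of the ideal cq-state, Eq.~\eqref{sm:dirg:ideal_form} and its even-$n$ analogue, the outcome register is in product with $\sigma_{JE}$. Hence $H(R|E)_{\rm ideal}$ equals the Shannon entropy of $p_{\rm ideal}$, which is $H_{\rm ideal}(n)$ by Proposition~\ref{sm:dirg:wooltorton}. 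The parity of $n$ enters only through this value.

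Third, I would apply Lemma~\ref{lem:sm:dirg:fannes_audenaert} with $d_R=2^n$ and split into two cases.

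\emph{Case $\sqrt{s_n\varepsilon}\le1/2$.} Here $\delta\le1$, so the simplified cq bound gives $\frac{\delta}{2}n+2h(\delta/2)$. Since $h$ is monotone on $[0,1/2]$ and $\delta/2\le\sqrt{s_n\varepsilon}\le 1/2$, this is at most $n\sqrt{s_n\varepsilon}+2h(\sqrt{s_n\varepsilon})$. Substituting $s_n=(1+\sqrt2)2^{-n/2}$ gives Eq.~\eqref{sm:dirg:rate_explicit_simplified}.

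\emph{Case $1/2<\sqrt{s_n\varepsilon}\le1$.} Here I would instead use the trivial bound Eq.~\eqref{sm:dirg:afa_cq_trivial}, valid because both states are classical-quantum, which gives deficit $n$.

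The two cases together give the piecewise $\Delta(\varepsilon,n)$ of Eq.~\eqref{sm:dirg:deficit_def} and hence Eq.~\eqref{sm:dirg:rate_explicit}.

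The only delicate point I expect is the bookkeeping at the case boundary. The simplified continuity bound needs $\delta\le1$ and the monotonicity of $h$ needs its argument at most $1/2$, so the case split must be taken on $\sqrt{s_n\varepsilon}$, an upper bound for $\delta/2$, rather than on the unknown $\delta$. Everything else is direct invocation of earlier lemmas. The measurement-ideal assumption means no commutation or translation terms arise here.
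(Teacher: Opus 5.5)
Your proposal is correct and follows the paper's own argument. Proposition~\ref{prop:main} gives $F(\Lambda(\rho_Q),\Phi_n)\ge 1-s_n\varepsilon$; the Uhlmann/Fuchs--van de Graaf step plus contractivity of $\mathcal{M}_{\rm ideal}$ and $\operatorname{tr}_J$ give $\delta\le 2\sqrt{s_n\varepsilon}$; decoupling gives $H(R|E)_{\rm ideal}=H_{\rm ideal}(n)$; and Winter's bound (for $\sqrt{s_n\varepsilon}\le 1/2$) or the trivial cq bound (for $1/2<\sqrt{s_n\varepsilon}\le 1$) yields the piecewise deficit. The paper only adds that the bound is uniform over all strategies with violation at least $\beta_{nQ}-\varepsilon$, so one may take the infimum, which your argument already gives implicitly.
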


\noindent \emph{Proof.}
The entropy lower bound was derived for the extracted state $\rho^{\rm extr}_{RE|\mathbf{0}} = \operatorname{tr}_J[\mathcal{M}_{\rm ideal}(|\Psi'\rangle\langle\Psi'|)]$ with the optimal measurement channel, where $|\Psi'\rangle = (V \otimes I_E)|\Psi\rangle_{QE}$ and $V = \bigotimes_k V_k$ is the Stinespring dilation of $\Lambda = \bigotimes_k \Lambda_k$.
The isometry $V$ acts only on the device registers and preserves correlations with Eve, so $H(R|E)_{\rm extr} = H(R|E)_{\rho^{\rm extr}}$.
Taking the infimum over all strategies compatible with violation $\geq \beta_{nQ} - \varepsilon$ yields Eq.~\eqref{sm:dirg:rate_explicit}.
\hfill $\square$

Note that the simplified bound of Lemma~\ref{lem:sm:dirg:fannes_audenaert} applies for $\delta \leq 1$, i.e.\ for $\sqrt{s_n\varepsilon} \le 1/2$, equivalently $\varepsilon \le 2^{n/2}/(4(1+\sqrt{2}))$.
Within the remaining part of the stated range, $1/2 < \sqrt{s_n\varepsilon} \le 1$ (equivalently $\varepsilon \in \bigl(2^{n/2}/(4(1+\sqrt{2})),\; \beta_{nQ}-\beta_{nC}\bigr]$), the simplified bound is not directly applicable; the deficit is then bounded by the trivial classical-quantum bound $\log_2 d_R = n$ of Eq.~\eqref{sm:dirg:afa_cq_trivial}, which is valid for all $\delta \le 2$ since the post-measurement states are classical-quantum.
This yields the piecewise deficit $\Delta(\varepsilon,n)$ of Eq.~\eqref{sm:dirg:deficit_def}.
One verifies $2^{n/2}/(1+\sqrt{2}) = \beta_{nQ} - \beta_{nC}$ for all $n \geq 3$, so the condition $\sqrt{s_n\varepsilon} \le 1$ holds throughout the stated range and $\Delta(\varepsilon,n)$ is finite there.

\subsection{Measurement-Imperfection Gap}
\label{sm:dirg:completion}

In this section, we analytically bound the measurement-imperfection contribution $\delta_{\mathcal{M}} = \|(\mathcal{M} - \mathcal{M}_{\rm ideal})(|\Psi'\rangle\langle\Psi'|)\|_1$ identified in Eq.~\eqref{sm:dirg:measurement_gap} by combining Kaniewski's commutation-based self-testing~\cite{Kaniewski_2017} with a diamond-norm analysis of the measurement channel, and we bound the measurement-translation residual arising from the non-commutation of the physical measurement with the extraction isometry (Lemma~\ref{lem:sm:dirg:translation}).
This further completes the robustness analysis of the physical DIRG protocol,  yielding a fully device-independent bound on the conditional entropy $H(R|E)$ that accounts for both state and measurement imperfections.

\subsubsection{Effective commutator and measurement angles}
\label{sm:dirg:comp_commutator}

As discussed in the main text,
the two observables of party~$k$ acting on the logical qubit take the form
\begin{equation}
  O_{0}^{(k)} = \cos\theta_k\,\sigma_x + \sin\theta_k\,\sigma_z,
  \qquad
  O_{1}^{(k)} = \cos\theta_k\,\sigma_x - \sin\theta_k\,\sigma_z,
  \label{sm:dirg:comp_obs}
\end{equation}
with angles $\theta_k \in [0,\pi/2]$.
At the optimal point $\theta_k = \pi/4$ the two observables are maximally complementary;
a direct calculation yields the commutator
\begin{equation}
  \bigl[O_{0}^{(k)}, O_{1}^{(k)}\bigr] = 2i\sin(2\theta_k)\,\sigma_y.
  \label{sm:dirg:comp_comm}
\end{equation}
Following Kaniewski~\cite{Kaniewski_2017}, we denote the \emph{matrix modulus}
of an operator $X$ by $|X| \coloneqq \sqrt{X^\dagger X}$,
which for the commutator above gives
\begin{equation}
  \bigl|\bigl[O_{0}^{(k)}, O_{1}^{(k)}\bigr]\bigr| = 2|\sin(2\theta_k)|\,I.
  \label{sm:dirg:comp_abscomm}
\end{equation}

The \emph{effective commutator} introduced by Kaniewski~\cite{Kaniewski_2017} is
\begin{equation}
  t_k \coloneqq \frac{1}{2}\operatorname{tr}\!\Bigl(\bigl|[A_0^{(k)}, A_1^{(k)}]\bigr|\;\rho_k\Bigr),
  \label{sm:dirg:comp_tk}
\end{equation}
where $\rho_k$ is the reduced state of party~$k$ on the support of the physical state.
By Jordan's lemma each pair of physical observables decomposes into a direct sum of $2\times2$ blocks, and the block-diagonal unitary part of the extraction (see the main text) maps each block $f$ to the logical form~\eqref{sm:dirg:comp_obs} with its own angle $\theta_k^f \in [0,\pi/2]$ (i.e.\ $A_0^{(k)} = U_k^\dagger O_0^{(k)} U_k$ and $A_1^{(k)} = U_k^\dagger O_1^{(k)} U_k$ for a block-diagonal unitary $U_k$).
The reduced state decomposes accordingly, $\rho_k = \sum_f p_{k,f}\,\rho_k^f$, with weights $p_{k,f} \ge 0$ and $\sum_f p_{k,f} = 1$.
Since the matrix modulus is unitarily invariant and block-diagonal, taking the trace with $\rho_k$ and using Eq.~(\ref{sm:dirg:comp_abscomm}) gives
\begin{equation}
  t_k = \sum_f p_{k,f}\,\bigl|\sin(2\theta_k^f)\bigr|
     = \sum_f p_{k,f}\,\cos(2\delta_k^f),
  \qquad
  \delta_k^f \coloneqq \theta_k^f - \frac{\pi}{4},
  \label{sm:dirg:comp_tk_angle}
\end{equation}
where the second equality uses $\theta_k^f \in [0,\pi/2]$, so that $2\delta_k^f \in [-\pi/2,\pi/2]$ and hence $|\sin 2\theta_k^f| = \cos 2\delta_k^f \ge 0$.
A lower bound on $t_k$ therefore constrains the state-weighted distribution of the angle deviations rather than any single block angle; this is all that is needed below.

\subsubsection{From MABK violation to the effective-commutator deficit}
\label{sm:dirg:comp_violation}

Kaniewski~\cite{Kaniewski_2017} proved a tight operator inequality for the $n$-party MABK
Bell operator (Proposition~B.2 of~\cite{Kaniewski_2017}).
Specialising to certify the $k$-th party and tracing with the physical state,
the observed violation $\beta = \langle W_n\rangle$ and the effective commutator
of \emph{each} party satisfy
\begin{equation}
  \beta \leq \sqrt{2^{\,n-2}}\;\sqrt{\,1 + t_k\,}, \qquad \forall\,k = 1,\dots,n.
  \label{sm:dirg:comp_kaniewski_orig}
\end{equation}

In our normalisation convention $W_1 = 2 O_{r_1}$ (whereas \cite{Kaniewski_2017} uses~$W_1 = A_0^1$),
the Bell operator is multiplied by an overall factor of~$2$, and the maximal quantum
violation reads $\beta_{nQ} = 2^{(n+1)/2}$.
Rescaling~(\ref{sm:dirg:comp_kaniewski_orig}) accordingly gives
\begin{equation}
  \beta \leq \beta_{nQ}\,\sqrt{\frac{1 + t_k}{2}}, \qquad
  \beta_{nQ} = 2^{(n+1)/2}.
  \label{sm:dirg:comp_kaniewski}
\end{equation}

Writing the observed violation as $\beta = \beta_{nQ} - \varepsilon$
with $\varepsilon \in [0,\, \beta_{nQ} - \beta_{nC}]$
and solving for $t_k$, we obtain
\begin{equation*}
  t_k \geq 2\Bigl(1 - \frac{\varepsilon}{\beta_{nQ}}\Bigr)^{\!2} - 1
       = 1 - \frac{4\varepsilon}{\beta_{nQ}} + \frac{2\varepsilon^{2}}{\beta_{nQ}^{2}}.
\end{equation*}

Together with~(\ref{sm:dirg:comp_tk_angle}) this yields
\begin{equation}
  \sum_f p_{k,f}\,\cos(2\delta_k^f) \geq 1 - w,
  \qquad
  w \coloneqq \frac{4\varepsilon}{\beta_{nQ}} - \frac{2\varepsilon^{2}}{\beta_{nQ}^{2}},
  \label{sm:dirg:comp_x_def}
\end{equation}
and, using $1 - \cos(2\delta) = 2\sin^{2}\delta$,
\begin{equation}
  \sum_f p_{k,f}\,\sin^{2}\delta_k^f \leq \frac{w}{2}.
  \label{sm:dirg:comp_blockbound}
\end{equation}
Equation~\eqref{sm:dirg:comp_blockbound} is the block-weighted analogue of an angle-deviation bound: it does not force any single block angle to be close to $\pi/4$, but it controls the state-weighted second moment of the deviations, which is precisely what enters the trace-distance analysis below.
For $\varepsilon\ll\beta_{nQ}$, $w \approx 4\varepsilon/\beta_{nQ}$, so the characteristic deviation scale is $\sqrt{w/2} \approx \sqrt{2\varepsilon/\beta_{nQ}} = \sqrt{\varepsilon}\;2^{-(n-1)/4}$.

\subsubsection{Measurement-translation residual}
\label{sm:dirg:comp_translation_sec}

As noted in Section~\ref{sm:dirg:post_measurement}, the physical measurement does not commute exactly with the extraction isometry.
In the extracted picture the physical projector of party~$k$ reads $\widetilde{A}_{a_k}^{(k)} \coloneqq V_k P_{a_k}^{(k,\rm phys)} V_k^\dagger$, whereas the logical measurement of input~$\mathbf{0}$ uses the projectors $P_{a_k}^{(k,0)}$ built from the observables of Eq.~\eqref{sm:dirg:comp_obs} (extended by the identity on the junk).
For a single block~$f$ of party~$k$, writing the dephasing dilation as $V_D|\psi\rangle = \alpha|\psi\rangle|0\rangle + \beta\Gamma|\psi\rangle|1\rangle$ with $\alpha^2 = (1+g_f)/2$ and $\beta^2 = (1-g_f)/2$, a direct calculation gives, for any block component $\sigma^f = V_D X^f V_D^\dagger$ of the extracted-picture state (with $X^f = U_k \tilde\rho^f U_k^\dagger$ and $\tilde\rho^f$ the corresponding block-$f$ component of the physical state),
\begin{equation}
  \operatorname{tr}_{A_k J_k}\!\bigl[ \bigl(P_{a_k}^{(k,0)} \otimes I_{J_k}\bigr)\,\sigma^f \bigr]
  - \operatorname{tr}_{A_k J_k}\!\bigl[ \widetilde{A}_{a_k}^{(k)}\,\sigma^f \bigr]
  = \beta_f^2\,\operatorname{tr}\!\bigl[ \bigl(\Gamma P_{a_k}^{(k,0)}\Gamma - P_{a_k}^{(k,0)}\bigr) \otimes I_{\mathrm{rest}}\,X^f \bigr],
  \label{sm:dirg:comp_translation_block}
\end{equation}
where the first term uses $V_D^\dagger(P \otimes I_J)V_D = \alpha^2 P + \beta^2 \Gamma P\Gamma$.
Telescoping over the $n$ parties as in Eq.~\eqref{sm:dirg:comp_telescoping} (the measurements of the other parties act on disjoint subsystems and do not alter the party-$j$ block weights $\{p_{j,f}\}_f$ of Eq.~\eqref{sm:dirg:comp_tk_angle}), and using $\|\Gamma P_{a_k}^{(k,0)}\Gamma - P_{a_k}^{(k,0)}\|_\infty = \min(\sin\theta_k, \cos\theta_k)$, we obtain
\begin{equation}
  \bigl\| \rho^{\rm phys}_{RE|\mathbf{0}} - \rho'_{RE|\mathbf{0}} \bigr\|_1
  \leq \sum_{j=1}^{n} \sum_{f} p_{j,f}\,\bigl(1 - g_{j,f}\bigr)\,
    \min\bigl(\sin\theta_{j,f},\, \cos\theta_{j,f}\bigr).
  \label{sm:dirg:comp_translation_bound}
\end{equation}

\begin{observation}
  \label{obv:sm:dirg:translation_ratio}
  For all $\theta \in [0,\pi/2]$,
  \begin{equation}
    \bigl(1 - g(\theta)\bigr)\min(\sin\theta, \cos\theta)
    \leq \sqrt{2}\,\sin^2\!\Bigl(\theta - \frac{\pi}{4}\Bigr).
    \label{sm:dirg:comp_ratio}
  \end{equation}
\end{observation}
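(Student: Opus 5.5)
The plan is to reduce the inequality to an elementary trigonometric estimate by exploiting symmetry and a half-angle substitution.

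\emph{Symmetry.} Both sides of Eq.~\eqref{sm:dirg:comp_ratio} are invariant under $\theta \mapsto \pi/2 - \theta$:
\begin{itemize}
\item $g(\theta) = g(\pi/2-\theta)$ by its definition, as already used in the proof of Observation~\ref{obv:angle-symmetry};
\item $\min(\sin\theta,\cos\theta)$ is symmetric under this map;
\item $\sin^2(\theta-\pi/4)$ is symmetric under this map.
\end{itemize}
It therefore suffices to treat $\theta \in [0,\pi/4]$. On this interval the minimum is $\sin\theta$.

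\emph{Substitution.} Put $\varphi \coloneqq \pi/4 - \theta \in [0,\pi/4]$. Then $\sin\theta+\cos\theta = \sqrt{2}\cos\varphi$. Using $(1+\sqrt2)\sqrt2 = 2+\sqrt2$, this gives
\begin{equation*}
  1 - g(\theta) = 1 - (1+\sqrt2)(\sqrt2\cos\varphi - 1) = (2+\sqrt2)(1-\cos\varphi) = 2(2+\sqrt2)\sin^2(\varphi/2).
\end{equation*}
The right-hand side becomes $\sqrt2\sin^2\varphi = 4\sqrt2\,\sin^2(\varphi/2)\cos^2(\varphi/2)$.

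\emph{Cancelling the common factor.} Both sides carry the factor $\sin^2(\varphi/2)$. At $\varphi=0$ (i.e.\ $\theta=\pi/4$) both sides vanish, so equality holds there. For $\varphi>0$ we divide by $2\sin^2(\varphi/2)$ and use $2\cos^2(\varphi/2) = 1+\cos\varphi$. The claim reduces to
\begin{equation*}
  (2+\sqrt2)\sin\theta \le \sqrt2\,(1+\cos\varphi).
\end{equation*}

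\emph{Final estimate.} Write $\sin\theta = \sin(\pi/4-\varphi) = (\cos\varphi - \sin\varphi)/\sqrt2$. The left side then equals $(1+\sqrt2)(\cos\varphi-\sin\varphi)$. After rearranging, the inequality becomes
\begin{equation*}
  \cos\varphi - (1+\sqrt2)\sin\varphi \le \sqrt2 .
\end{equation*}
This holds trivially on $[0,\pi/4]$, since $\cos\varphi \le 1 < \sqrt2$ and $\sin\varphi \ge 0$.

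\emph{Main obstacle.} The only real work is spotting that $1-g$ factors as a multiple of $1-\cos\varphi$. This factorization is what produces the quadratic vanishing at $\theta=\pi/4$, matching that of $\sin^2(\theta-\pi/4)$. I would first verify this factorization and the value of $g$ at the endpoints $\theta=0$ and $\theta=\pi/4$ as sanity checks. Once the common factor is cancelled, the remaining estimate has slack, so no delicate calculus is needed.
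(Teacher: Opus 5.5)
Your proof is correct and follows essentially the paper's route: the same substitution $\varphi = |\theta - \pi/4|$, the same factorization $1-g(\theta) = (2+\sqrt{2})(1-\cos\varphi)$, and the same expression $\min(\sin\theta,\cos\theta) = (\cos\varphi-\sin\varphi)/\sqrt{2}$. The only difference is the final step. The paper bounds the two factors separately, using $\min \le 1/\sqrt{2}$ and $(1-\cos\varphi)/\sin^2\varphi \le 1/(2\cos^2(\pi/8))$, and these combine to exactly $\sqrt{2}$; you keep the factors coupled and reduce to $\cos\varphi - (1+\sqrt{2})\sin\varphi \le \sqrt{2}$, which also shows the inequality has slack away from $\theta=\pi/4$.
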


\noindent \emph{Proof.}
Let $\varphi = |\theta - \pi/4| \in [0,\pi/4]$.
Since $\sin\theta + \cos\theta = \sqrt{2}\cos\varphi$, we have $1 - g(\theta) = (2+\sqrt{2})(1 - \cos\varphi)$.
Moreover $\min(\sin\theta, \cos\theta) = (\cos\varphi - \sin\varphi)/\sqrt{2} \leq 1/\sqrt{2}$ and
$(1-\cos\varphi)/\sin^2\varphi = 1/(2\cos^2(\varphi/2)) \leq 1/(2\cos^2(\pi/8))$.
Combining these bounds and using $2\cos^2(\pi/8) = (2+\sqrt{2})/2$ gives
$(1-g)\min \leq (2+\sqrt{2})\cdot\bigl(2\cos^2(\pi/8)\bigr)^{-1}\cdot(1/\sqrt{2})\cdot\sin^2\varphi = \sqrt{2}\,\sin^2\varphi$.
\hfill$\square$

\begin{lemma}[Measurement-Translation Residual]
  \label{lem:sm:dirg:translation}
  For any quantum strategy achieving MABK violation $\beta = \beta_{nQ} - \varepsilon$,
  the physical and extracted-picture post-measurement cq-states satisfy
  \begin{equation}
    \bigl\| \rho^{\rm phys}_{RE|\mathbf{0}} - \rho'_{RE|\mathbf{0}} \bigr\|_1
    \leq \frac{\sqrt{2}}{2}\,n\,w,
    \label{sm:dirg:comp_translation}
  \end{equation}
  with $w = 4\eta - 2\eta^2$ and $\eta = \varepsilon\,\beta_{nQ}^{-1}$.
\end{lemma}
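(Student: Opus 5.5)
The argument chains three ingredients already established above: the per-block translation estimate, Observation~\ref{obv:sm:dirg:translation_ratio}, and the commutator bound of Eq.~\eqref{sm:dirg:comp_blockbound}. We start from the telescoped estimate Eq.~\eqref{sm:dirg:comp_translation_bound}. Although it was stated before the lemma, its derivation should be made explicit.

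The first step is the telescoping itself. We write the difference of the two cq-states as a sum of $n$ terms. In the $j$-th term, parties $1,\dots,j-1$ use the physical projectors $\widetilde{A}^{(k)}$, parties $j+1,\dots,n$ use the logical projectors $P^{(k,0)}\otimes I_{J_k}$, and party $j$ carries the difference. By the triangle inequality, each term is bounded in trace norm by summing over outcomes $\vec a$. For fixed outcomes of the other parties, the operators acting on the other registers are positive and sum to the identity. The trace norm of the classical register sum therefore reduces to $\sum_{a_j}$ of the single-party difference, evaluated against the reduced state of party $j$ together with Eve.

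Next, Eq.~\eqref{sm:dirg:comp_translation_block} is applied blockwise in the Jordan decomposition of party $j$. Since the Jordan blocks are orthogonal and preserved, the state decomposes with weights $p_{j,f}$. Summing $|\operatorname{tr}[(\Gamma P\Gamma-P)X^f]|$ over the two outcomes $a_j=\pm1$ gives a factor at most $2\|\Gamma P\Gamma-P\|_\infty$. This factor combines with $\beta_f^2=(1-g_f)/2$ to produce $(1-g_{j,f})\min(\sin\theta_{j,f},\cos\theta_{j,f})$. This yields Eq.~\eqref{sm:dirg:comp_translation_bound}.

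The remaining steps are short chains of inequalities. Observation~\ref{obv:sm:dirg:translation_ratio} bounds each summand by $\sqrt2\,p_{j,f}\sin^2\delta_{j,f}$, with $\delta_{j,f}=\theta_{j,f}-\pi/4$. Then Eq.~\eqref{sm:dirg:comp_blockbound}, obtained from Kaniewski's commutator bound Eq.~\eqref{sm:dirg:comp_kaniewski} applied separately to each party $j$, gives $\sum_f p_{j,f}\sin^2\delta_{j,f}\le w/2$. Summing over the $n$ parties gives $\sqrt2\cdot n\cdot w/2=(\sqrt2/2)nw$. Finally, $w=4\varepsilon/\beta_{nQ}-2\varepsilon^2/\beta_{nQ}^2=4\eta-2\eta^2$, which is the claimed bound.

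\textbf{Main obstacle.} The hardest part is the telescoping step. Two things must be justified carefully:
\begin{enumerate}[label=(\roman*)]
\item The block weights $p_{j,f}$ seen by party $j$ remain the marginal Jordan weights once the other parties' measurements are inserted. After summing over their outcomes the other operators form a complete measurement, so they drop out, but this summation must be done before taking absolute values of the per-block traces.
\item The factor from summing over $a_j$ must be exactly $2$ and must pair with $\beta_f^2=(1-g_f)/2$, so that no extra constant appears.
\end{enumerate}
For (i), I would handle it by bounding the trace norm through the dual characterization with a contraction $\pm1$ on $R$. The other parties' POVMs then combine with that contraction into an operator of norm at most $1$. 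A variational bound $|\operatorname{tr}[Y X]|\le\|Y\|_\infty\|X\|_1$ on each block yields the weights $p_{j,f}$ directly.
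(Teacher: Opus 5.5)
Your proposal is correct and follows the paper's proof. It establishes Eq.~\eqref{sm:dirg:comp_translation_bound}, applies Observation~\ref{obv:sm:dirg:translation_ratio} blockwise, invokes Eq.~\eqref{sm:dirg:comp_blockbound} for each party, and sums over the $n$ parties. Where you go further is the duality argument behind Eq.~\eqref{sm:dirg:comp_translation_bound}, which the paper only sketches, and that argument is sound once the "$\pm1$ contraction" is read as an outcome-indexed contraction on $E$: combined with the other parties' POVM elements it gives an operator of norm at most $1$, and $2\beta_f^2\,\|\Gamma P\Gamma-P\|_\infty=(1-g_f)\min(\sin\theta_f,\cos\theta_f)$ yields exactly the stated summand.
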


\noindent \emph{Proof.}
Apply Observation~\ref{obv:sm:dirg:translation_ratio} block-wise in Eq.~\eqref{sm:dirg:comp_translation_bound} and invoke the block-weighted angle bound $\sum_f p_{j,f}\sin^2\delta_{j,f} \leq w/2$ of Eq.~\eqref{sm:dirg:comp_blockbound} for each party~$j$:
$\|\rho^{\rm phys} - \rho'\|_1 \leq \sqrt{2}\,\sum_{j=1}^{n} \sum_f p_{j,f}\sin^2\delta_{j,f} \leq \sqrt{2}\,n\,(w/2)$.
\hfill$\square$

\emph{Remark.}
At the self-testing point $\theta_k = \pi/4$ for all $k$ (and at the endpoints $\theta_k \in \{0,\pi/2\}$) the dephasing is trivial or the projector is $\Gamma$-invariant, and the translation residual vanishes, consistent with the exact translation of the physical measurement at maximal violation.

\subsubsection{Diamond-norm bound on the measurement channel}
\label{sm:dirg:comp_diamond}

We now translate the angle-deviation bound into a bound on the measurement channel.
Recall from Eq.~(\ref{sm:dirg:measurement_map}) that $\mathcal{M}$ records the outcomes
$\vec{a} \in \{\pm1\}^{n}$ of the measurement with input~$\mathbf{0}$ on the logical
qubits $A_1\cdots A_n$:
\begin{equation*}
  \mathcal{M}(\rho_{AJE}) = \sum_{\vec{a}} |\vec{a}\rangle\langle\vec{a}|_{R}
  \otimes \operatorname{tr}_{A}\!\bigl[(P_{\vec{a}}^{(\mathbf{0})} \otimes I_{JE})\,\rho_{AJE}\bigr],
\end{equation*}
where $P_{\vec{a}}^{(\mathbf{0})} = \bigotimes_{k=1}^{n} P_{a_k}^{(k,0)}$.
From~(\ref{sm:dirg:comp_obs}), the projectors for party~$k$ and outcome $a_k = \pm 1$ are
\begin{equation*}
  P_{a_k}^{(k,0)} = \frac{I + a_k(\cos\theta_k\,\sigma_x + \sin\theta_k\,\sigma_z)}{2},
  \qquad
  P_{a_k}^{(k,0,\rm ideal)} = \frac{I + a_k\,\frac{\sigma_x + \sigma_z}{\sqrt{2}}}{2}.
\end{equation*}

For a \emph{single} party, the measurement channel is a binary-outcome quantum instrument
mapping states on $A_k\otimes J\otimes E$ to classical-quantum states on $R_k\otimes J\otimes E$.

\begin{lemma}
  \label{lem:sm:dirg:comp_diamond_binary}
  Let $\Lambda_M, \Lambda_N$ be two binary-outcome measurement channels defined by
  POVM elements $\{M_0, M_1 = I-M_0\}$ and $\{N_0, N_1 = I-N_0\}$ acting on a
  system $A$, i.e.\ for any state $\rho$ on $A\otimes R$ (with $R$ an arbitrary
  reference system),
  \begin{equation*}
    \Lambda_M(\rho) = \sum_{b=0,1} |b\rangle\langle b| \otimes
    \operatorname{tr}_A\!\bigl[(M_b \otimes I_R)\,\rho\bigr],
  \end{equation*}
  and similarly for $\Lambda_N$.
  Then the diamond-norm distance between the two channels satisfies the exact identity
  \begin{equation*}
    \|\Lambda_{M} - \Lambda_{N}\|_{\diamond} = 2\,\|M_0 - N_0\|_{\infty}.
  \end{equation*}
\end{lemma}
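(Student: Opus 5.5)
Set $D \coloneqq M_0 - N_0$. Because $M_1 = I - M_0$ and $N_1 = I - N_0$, we get $M_1 - N_1 = -D$. Hence for any input $\rho$ on $A\otimes R$,
\begin{equation*}
  (\Lambda_M - \Lambda_N)(\rho) = \bigl(|0\rangle\langle 0| - |1\rangle\langle 1|\bigr)\otimes \operatorname{tr}_A\!\bigl[(D\otimes I_R)\rho\bigr].
\end{equation*}
The output is block diagonal in the classical register, and the two blocks are equal up to sign. So the trace norm is
\begin{equation*}
  \|(\Lambda_M-\Lambda_N)(\rho)\|_1 = 2\,\bigl\|\operatorname{tr}_A[(D\otimes I_R)\rho]\bigr\|_1 .
\end{equation*}
This reduces the identity to showing that $\sup_\rho \|\operatorname{tr}_A[(D\otimes I_R)\rho]\|_1 = \|D\|_\infty$, with the supremum over states on $A\otimes R$ and $R$ arbitrary.

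For the upper bound I use trace-norm duality. Write $\|X\|_1 = \sup_{\|Y\|_\infty\le 1} |\operatorname{tr}[YX]|$ with $Y$ acting on $R$. Then
\begin{equation*}
  \operatorname{tr}\bigl[Y\operatorname{tr}_A[(D\otimes I)\rho]\bigr] = \operatorname{tr}\bigl[(D\otimes Y)\rho\bigr].
\end{equation*}
Its modulus is at most $\|D\otimes Y\|_\infty\,\|\rho\|_1 \le \|D\|_\infty$, since $\|D\otimes Y\|_\infty = \|D\|_\infty\|Y\|_\infty$.

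For the lower bound, $D$ is Hermitian, so it has an eigenvector $|\psi\rangle$ with $|\lambda| = \|D\|_\infty$. Take $\rho = |\psi\rangle\langle\psi|\otimes\omega_R$ for any state $\omega_R$. Then $\operatorname{tr}_A[(D\otimes I)\rho] = \lambda\,\omega_R$, whose trace norm is $|\lambda| = \|D\|_\infty$. This input is a product state, so the bound is attained without entanglement, and the supremum defining the diamond norm is reached. Combining the two bounds gives $\|\Lambda_M-\Lambda_N\|_\diamond = 2\|M_0-N_0\|_\infty$.

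The only delicate point is the norm step in the upper bound. A naive bound there would give $\|D\|_1$ instead of the operator norm. I avoid this by putting the $\|\cdot\|_\infty$ on $D\otimes Y$ and the $\|\cdot\|_1$ on $\rho$ (Hölder), rather than on the partial-trace side. The remaining steps are direct bookkeeping.
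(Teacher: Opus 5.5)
Your proof is correct and follows the paper's argument. Both reduce $(\Lambda_M-\Lambda_N)(\rho)$ to $(|0\rangle\langle 0|-|1\rangle\langle 1|)\otimes\operatorname{tr}_A[(D\otimes I)\rho]$, pick up the factor of $2$, and use the same product-state eigenvector input for the lower bound; the only difference is that you justify the upper bound $\|\operatorname{tr}_A[(D\otimes I)\rho]\|_1\le\|D\|_\infty$ explicitly via trace-norm duality and H\"older applied to $D\otimes Y$, which makes precise the step the paper simply attributes to H\"older's inequality.
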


\begin{proof}
  Let $\Delta \coloneqq M_0 - N_0$; then $M_1 - N_1 = (I-M_0) - (I-N_0) = -\Delta$.
  The channel difference acts as
  \begin{equation*}
    (\Lambda_M - \Lambda_N)(\rho) = (|0\rangle\langle 0| - |1\rangle\langle 1|)
    \otimes \operatorname{tr}_A\!\bigl[(\Delta \otimes I_R)\,\rho\bigr].
  \end{equation*}
  Tensoring with an identity channel on an arbitrary reference system $R'$,
  \begin{equation*}
    \bigl((\Lambda_M - \Lambda_N) \otimes \mathrm{id}_{R'}\bigr)(\omega)
    = (|0\rangle\langle 0| - |1\rangle\langle 1|)
    \otimes \operatorname{tr}_A\!\bigl[(\Delta \otimes I_{R \otimes R'})\,\omega\bigr]
  \end{equation*}
  for any state $\omega$ on $A \otimes R \otimes R'$.
  Taking the trace norm and using $\||0\rangle\langle 0| - |1\rangle\langle 1|\|_1 = 2$,
  \begin{equation*}
    \bigl\|\bigl((\Lambda_M - \Lambda_N) \otimes \mathrm{id}_{R'}\bigr)(\omega)\bigr\|_1
    = 2\,\bigl\|\operatorname{tr}_A\!\bigl[(\Delta \otimes I_{R \otimes R'})\,\omega\bigr]\bigr\|_1.
  \end{equation*}
  The supremum over $\omega$ of the right-hand side is precisely the
  $1 \to 1$ norm of the map $\sigma \mapsto \operatorname{tr}_A[(\Delta \otimes I)\,\sigma]$,
  which equals $\|\Delta\|_{\infty}$ (the operator norm of $\Delta$).
  To see this, note that for any $\sigma \ge 0$,
  $\|\operatorname{tr}_A[(\Delta\otimes I)\,\sigma]\|_1 \le \|\Delta\|_{\infty}\operatorname{tr}(\sigma)$
  by the H\"older inequality $\|\operatorname{tr}_A[(\Delta\otimes I)\,\sigma]\|_1 \le \|\Delta\|_{\infty}\,\|\sigma\|_1$, and equality is attained by choosing
  $\sigma = |\psi\rangle\langle\psi| \otimes \tau$ where $|\psi\rangle$ is an eigenvector
  of $\Delta$ corresponding to its largest absolute eigenvalue and $\tau$ is any normalised
  state on the remaining registers.
  Hence $\|\Lambda_M - \Lambda_N\|_{\diamond} = 2\|\Delta\|_{\infty} = 2\|M_0 - N_0\|_{\infty}$.
\end{proof}

Specialising to our case, the operator-norm difference of the projectors is
\begin{equation*}
  \begin{aligned}
    \|P_{+1}^{(k,0)} - P_{+1}^{(k,0,\rm ideal)}\|_{\infty}
    &= \frac{1}{2}\Bigl\|
      \bigl(\cos\theta_k - \tfrac{1}{\sqrt{2}}\bigr)\sigma_x
      + \bigl(\sin\theta_k - \tfrac{1}{\sqrt{2}}\bigr)\sigma_z
    \Bigr\|_{\infty} \\[4pt]
    &= \frac{1}{2}\sqrt{\bigl(\cos\theta_k - \tfrac{1}{\sqrt{2}}\bigr)^{2}
                     + \bigl(\sin\theta_k - \tfrac{1}{\sqrt{2}}\bigr)^{2}}.
  \end{aligned}
\end{equation*}

Setting $\delta_k \coloneqq \theta_k - \pi/4$ and using elementary trigonometric identities,
\begin{align*}
  \cos\theta_k &= \frac{\cos\delta_k - \sin\delta_k}{\sqrt{2}},
  \qquad
  \sin\theta_k = \frac{\cos\delta_k + \sin\delta_k}{\sqrt{2}}.
\end{align*}
Squaring and summing the coordinate differences yields
\begin{align*}
  \bigl(\cos\theta_k - \tfrac{1}{\sqrt{2}}\bigr)^{2}
  + \bigl(\sin\theta_k - \tfrac{1}{\sqrt{2}}\bigr)^{2}
  &= \frac{(\cos\delta_k - \sin\delta_k - 1)^{2}
          + (\cos\delta_k + \sin\delta_k - 1)^{2}}{2} \\
  &= 2(1 - \cos\delta_k) = 4\sin^{2}\!\left(\frac{\delta_k}{2}\right).
\end{align*}
Consequently,
\begin{equation}
  \|P_{+1}^{(k,0)} - P_{+1}^{(k,0,\rm ideal)}\|_{\infty}
  = \frac{1}{2}\sqrt{4\sin^{2}(\delta_k/2)}
  = \bigl|\sin(\delta_k/2)\bigr|.
  \label{sm:dirg:comp_proj_diff_exact}
\end{equation}
Lemma~\ref{lem:sm:dirg:comp_diamond_binary} then gives, for each single block~$f$,
\begin{equation}
  \bigl\|\mathcal{M}_k^{f} - \mathcal{M}_k^{\rm ideal}\bigr\|_{\diamond}
  = 2\,\bigl|\sin(\delta_k^{f}/2)\bigr|,
  \label{sm:dirg:comp_single_diamond}
\end{equation}
whose Taylor expansion
$2\sin(|\delta_k^{f}|/2) = |\delta_k^{f}| + O(|\delta_k^{f}|^{3})$
exhibits the linear order of the per-block measurement deviation.
The physical measurement channel on the logical register decomposes block-wise, $\mathcal{M}_k = \bigoplus_f \mathcal{M}_k^{f}$, and the logical reduced state is the corresponding mixture $\rho_{A_k} = \sum_f p_{k,f}\,\rho_{A_k}^{f}$; the extraction is block-respecting, so the weights coincide with those of Eq.~\eqref{sm:dirg:comp_tk_angle}.

\emph{From single-party to $n$-party via channel composition.}
Let $\widetilde{\mathcal{M}}_k$ denote the single-party measurement channel~$\mathcal{M}_k$
extended by the identity channel on all other parties, i.e.\ $\widetilde{\mathcal{M}}_k = \mathcal{M}_k \otimes \mathrm{id}^{\otimes(n-1)}$.
$\widetilde{\mathcal{M}}_k$ is a CPTP map acting on the full $n$-party space, recording the outcome of party~$k$
in its own classical register.  Since the $n$ measurements act on different subsystems, the corresponding
extended channels commute under composition, and the joint $n$-party channel factorizes as
\begin{equation*}
  \mathcal{M} = \widetilde{\mathcal{M}}_1 \circ \widetilde{\mathcal{M}}_2 \circ \cdots \circ \widetilde{\mathcal{M}}_n,
  \qquad
  \mathcal{M}_{\rm ideal} = \widetilde{\mathcal{M}}_1^{\rm ideal} \circ \widetilde{\mathcal{M}}_2^{\rm ideal} \circ \cdots \circ \widetilde{\mathcal{M}}_n^{\rm ideal}.
\end{equation*}
(Any ordering yields the same channel; composition is associative and the classical registers accumulate into the
joint $n$-bit register $R$.)

To verify that this composition coincides with the POVM-based definition~(\ref{sm:dirg:measurement_map}),
let $\rho$ be an arbitrary $n$-party state.
Each $\widetilde{\mathcal{M}}_k$ acts nontrivially only on qubit~$k$, recording outcome $a_k \in \{\pm1\}$
in its classical register.
Since $[\widetilde{\mathcal{M}}_i, \widetilde{\mathcal{M}}_j] = 0$ for $i \neq j$ (because they act on disjoint subsystems),
the composition is independent of ordering, and expanding gives
\begin{equation*}
  \bigl(\widetilde{\mathcal{M}}_1 \circ \cdots \circ \widetilde{\mathcal{M}}_n\bigr)(\rho)
  = \sum_{\vec{a} \in \{\pm1\}^{n}}
    |\vec{a}\rangle\langle\vec{a}|_{R}
    \otimes \operatorname{tr}_{A}\!\bigl[
      \bigl(\bigotimes_{k=1}^{n} P_{a_k}^{(k,0)}\bigr) \otimes I_{JE}\,\rho \bigr],
\end{equation*}
which is precisely the action of $\mathcal{M}$ in Eq.~\eqref{sm:dirg:measurement_map}.
The same holds for $\mathcal{M}_{\rm ideal}$ with projectors $P_{a_k}^{(k,0,\rm ideal)}$.

Writing the channel difference as a telescoping sum of compositions,
\begin{align}
  \mathcal{M} - \mathcal{M}_{\rm ideal}
  &= \sum_{j=1}^{n} \,
     \widetilde{\mathcal{M}}_1^{\rm ideal} \circ \cdots \circ \widetilde{\mathcal{M}}_{j-1}^{\rm ideal}
     \circ \bigl(\widetilde{\mathcal{M}}_j - \widetilde{\mathcal{M}}_j^{\rm ideal}\bigr)
     \circ \widetilde{\mathcal{M}}_{j+1} \circ \cdots \circ \widetilde{\mathcal{M}}_n .
  \label{sm:dirg:comp_telescoping}
\end{align}
Every map preceding or following the $j$-th difference in the composition is a CPTP channel and hence contractive in the trace norm.
Since the extraction is block-respecting and the measurements of the other parties act on different subsystems (preserving the reduced state of party~$j$), each intermediate state in the telescoping sum carries the same party-$j$ block weights $\{p_{j,f}\}_f$ as in Eq.~\eqref{sm:dirg:comp_tk_angle}.
Applying the telescoping sum to $|\Psi'\rangle\langle\Psi'|$ and using the per-block diamond norm~\eqref{sm:dirg:comp_single_diamond},
\begin{align}
  \bigl\|(\mathcal{M} - \mathcal{M}_{\rm ideal})\bigl(|\Psi'\rangle\langle\Psi'|\bigr)\bigr\|_1
  &\leq \sum_{j=1}^{n}\,\sum_{f} p_{j,f}\,2\bigl|\sin(\delta_j^{f}/2)\bigr|
  \nonumber\\
  &\leq \sum_{j=1}^{n} 2\sum_{f} p_{j,f}\,\bigl|\sin\delta_j^{f}\bigr|
  \leq \sum_{j=1}^{n} 2\sqrt{\sum_{f} p_{j,f}\sin^{2}\delta_j^{f}}
  \leq n\sqrt{2w},
  \label{sm:dirg:comp_ndiamond_sum}
\end{align}
where the second step uses $|\sin(\delta/2)| \le |\sin\delta|$ for $|\delta|\le\pi/2$, the third is the weighted Cauchy--Schwarz inequality, and the last invokes the block-weighted bound~\eqref{sm:dirg:comp_blockbound}.
Note that this bounds the trace distance evaluated on the actual logical state $|\Psi'\rangle$ rather than the diamond norm of the channel difference; bounding the diamond norm would require controlling the maximum deviation over all blocks, which the state-weighted bound~\eqref{sm:dirg:comp_blockbound} does not provide.

The prefactor $n$ in the leading term arises from the sum over parties.
Whether a more refined bound can suppress this factor is an interesting question that we leave for future investigation.

\subsubsection{Trace-distance combination and measurement-imperfection lemma}
\label{sm:dirg:comp_combination}

We now assemble the total trace-distance bound between the physical and ideal cq-states.
For concise reference we introduce
\begin{equation}
  \eta \coloneqq \frac{\varepsilon}{\beta_{nQ}} = \varepsilon\cdot 2^{-(n+1)/2},
  \qquad
  w = 4\eta - 2\eta^{2},
  \label{sm:dirg:comp_aux}
\end{equation}
so that the measurement-imperfection bound of
Eq.~\eqref{sm:dirg:comp_ndiamond_sum} reads
\begin{equation}
  \bigl\|(\mathcal{M} - \mathcal{M}_{\rm ideal})\bigl(|\Psi'\rangle\langle\Psi'|\bigr)\bigr\|_1
  \leq n\sqrt{2w}
  = 2n\sqrt{2\eta - \eta^{2}}.
  \label{sm:dirg:comp_diamond_final}
\end{equation}

Recall the triangle-inequality decomposition of the trace distance
between the physical and ideal cq-states,
Eq.~(\ref{sm:dirg:decomposition}), together with the
measurement-imperfection bound of Eq.~\eqref{sm:dirg:comp_ndiamond_sum},
the state-imperfection bound of Lemma~\ref{lem:sm:dirg:post_measurement_distance},
and the measurement-translation residual of Lemma~\ref{lem:sm:dirg:translation}:
\begin{equation*}
  \bigl\| \rho^{\rm phys}_{RE|\mathbf{0}} - \rho^{\rm ideal}_{RE|\mathbf{0}} \bigr\|_1
  \leq n\sqrt{2w} + 2\sqrt{s_n\varepsilon} + \frac{\sqrt{2}}{2}\,n\,w.
\end{equation*}

\begin{lemma}[Measurement-Imperfection Trace-Distance Bound]
  \label{lem:sm:dirg:measurement_bound}
  For any quantum strategy achieving MABK violation $\beta = \beta_{nQ} - \varepsilon$,
  the measurement-imperfection contribution to the cq-state trace distance satisfies
  \begin{equation}
    \bigl\| (\mathcal{M} - \mathcal{M}_{\rm ideal})(|\Psi'\rangle\langle\Psi'|) \bigr\|_1
    \leq n\sqrt{2w},
    \qquad
    w = 4\eta - 2\eta^{2},
    \label{sm:dirg:comp_measurement_bound}
  \end{equation}
  with $\eta = \varepsilon\,2^{-(n+1)/2}$.
  To leading order in $\varepsilon$ this reduces to
  $2\sqrt{2}\,n\sqrt{\varepsilon}\;2^{-(n+1)/4}$.
\end{lemma}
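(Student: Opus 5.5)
The plan is to assemble the lemma from three ingredients already established: the block-weighted angle bound of Eq.~\eqref{sm:dirg:comp_blockbound}, the per-block single-party diamond norm of Eq.~\eqref{sm:dirg:comp_single_diamond}, and the telescoping decomposition of Eq.~\eqref{sm:dirg:comp_telescoping}.

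\textbf{Step 1: a per-party bound.} For each party $j$, the Kaniewski inequality~\eqref{sm:dirg:comp_kaniewski}, applied with the observed violation $\beta=\beta_{nQ}-\varepsilon$, gives $t_j\ge 1-w$. Via Eq.~\eqref{sm:dirg:comp_tk_angle} this becomes $\sum_f p_{j,f}\sin^2\delta_j^f\le w/2$.

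\textbf{Step 2: telescoping over parties.} I will expand $\mathcal{M}-\mathcal{M}_{\rm ideal}$ as the telescoping sum and apply the $j$-th term to $|\Psi'\rangle\langle\Psi'|$. The CPTP maps acting after the difference are contractive in trace norm, so they can be dropped. The maps acting before it belong to parties other than $j$; they act on disjoint registers and so leave the party-$j$ reduced state, and hence its block weights $p_{j,f}$, unchanged.

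\textbf{Step 3: the difference acting block-wise.} The map $\widetilde{\mathcal{M}}_j-\widetilde{\mathcal{M}}_j^{\rm ideal}$ acts block-diagonally on party $j$. I will decompose the input state into its block components, discarding off-diagonal block coherences. This is justified because the extracted logical measurement of each block acts within that block, so the output depends only on the block-diagonal part. On each block, Lemma~\ref{lem:sm:dirg:comp_diamond_binary} gives trace norm at most $p_{j,f}\cdot 2|\sin(\delta_j^f/2)|$.

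\textbf{Step 4: scalar inequalities.} Three elementary estimates finish the bound. First, $|\sin(\delta/2)|\le|\sin\delta|$ holds for $|\delta|\le\pi/2$. Second, Jensen (equivalently Cauchy--Schwarz) with weights $p_{j,f}$ gives $\sum_f p_{j,f}|\sin\delta_j^f|\le\sqrt{w/2}$. Summing over $j$ then yields $2n\sqrt{w/2}=n\sqrt{2w}$. Finally, $w=4\eta-2\eta^2$ gives the leading order $n\sqrt{8\eta}=2\sqrt2\,n\sqrt{\varepsilon}\,2^{-(n+1)/4}$.

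The main obstacle is Step 3. A genuine diamond-norm bound would need the worst block angle, which the state-weighted Kaniewski bound does not control. The argument must therefore be run on the actual state, and the one point needing care is the claim that the block weights of party $j$ survive the other parties' channels in the telescoping.
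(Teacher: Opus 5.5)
Your Steps~1--4 follow the paper's proof essentially line by line: Kaniewski's bound turned into $\sum_f p_{j,f}\sin^2\delta_j^f\le w/2$, the telescoping sum, the per-block value $2|\sin(\delta_j^f/2)|$, then $|\sin(\delta/2)|\le|\sin\delta|$ and Cauchy--Schwarz. However, the justification you give in Step~3 is false as stated, and the paper makes the same identification (treating the state as a block mixture) without addressing it.

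The block label $f$ of party $j$ does not live in $A_j$, which is a single qubit. It lives in the junk $J_j$, as part of the environment of the Stinespring dilation $V_j$ of the block-rotation-plus-dephasing channel. The block-dependent measurement must therefore read $J_j$, with projectors $\Pi_a=\sum_f P^{f}_a\otimes|f\rangle\langle f|_{J_j}$. Block-diagonal projectors do not remove the coherences $|f\rangle\langle f'|_{J_j}$. Since $\mathcal{M}$ keeps $J$ in its output, $(\widetilde{\mathcal{M}}_j-\widetilde{\mathcal{M}}_j^{\rm ideal})(\omega)$ has off-diagonal blocks, so the output does \emph{not} depend only on the block-diagonal part of $\omega$.

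The convexity estimate $\sum_f p_{j,f}\,2|\sin(\delta_j^f/2)|$ can then fail. Take two blocks with weights $p$ and $1-p$, block~2 ideal, and $|\phi\rangle=\sqrt{p}\,|e^{1}_{-}\rangle|1\rangle+\sqrt{1-p}\,|e^{\rm ideal}_{+}\rangle|2\rangle$, built from eigenvectors of the block-1 and ideal observables. A direct computation gives trace norm $\sqrt{p^2s^4+4p(1-p)s^2}+ps^2\approx 2\sqrt{p}\,s$, with $s=|\sin(\delta_1/2)|$, against the claimed $2ps$.

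The lemma itself survives, and there are two repairs.

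(i) What is used downstream, Eq.~\eqref{sm:dirg:measurement_gap}, only needs the $J$-traced quantity $\bigl\|\rho'_{RE|\mathbf{0}}-\rho^{\rm extr}_{RE|\mathbf{0}}\bigr\|_1$. Both measurements have block-diagonal projectors, so $\mathrm{tr}_{AJ}[\Pi_a\rho\Pi_a]=\sum_f\mathrm{tr}_{AJ}[\Pi_a\,|f\rangle\langle f|\rho|f\rangle\langle f|]$. Hence $\mathrm{tr}_J\circ\mathcal{M}$ and $\mathrm{tr}_J\circ\mathcal{M}_{\rm ideal}$ are unchanged if the block labels are dephased first. You may therefore replace $|\Psi'\rangle\langle\Psi'|$ by a genuine mixture $\sum_{\vec f}p_{\vec f}\,\omega_{\vec f}$ over block configurations, after which your Steps~2--4 go through verbatim.

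(ii) To bound the untraced quantity as literally stated, purify the intermediate state to $|\phi\rangle$ and set $u_a=\Pi_a\phi$, $v_a=(P_a^{\rm ideal}\otimes I)\phi$. Then use $\bigl\| |u\rangle\langle u|-|v\rangle\langle v| \bigr\|_1\le\|u-v\|(\|u\|+\|v\|)$. This gives, per party, $2\bigl(\sum_a\|u_a-v_a\|^2\bigr)^{1/2}\le 2\bigl(2\sum_f p_{j,f}\sin^2(\delta_j^f/2)\bigr)^{1/2}$, a second-moment bound that is insensitive to coherences. With your crude step $|\sin(\delta/2)|\le|\sin\delta|$ this only yields $2\sqrt{w}$ per party. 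You need the sharper $\sin^2(\delta/2)\le\sin^2\delta/(2+\sqrt{2})$, valid for $|\delta|\le\pi/4$, which gives $2\sqrt{w/(2+\sqrt{2})}\le\sqrt{2w}$ per party and hence the stated $n\sqrt{2w}$.
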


\noindent \emph{Proof.}
The bound follows from the telescoping decomposition of
Eq.~\eqref{sm:dirg:comp_telescoping}, the per-block single-party diamond norm of
Eq.~\eqref{sm:dirg:comp_single_diamond} (via Lemma~\ref{lem:sm:dirg:comp_diamond_binary}),
and the block-weighted angle bound of Eq.~\eqref{sm:dirg:comp_blockbound}
(through the Cauchy--Schwarz step in Eq.~\eqref{sm:dirg:comp_ndiamond_sum}).
\hfill $\square$

The third term in the decomposition is the state-imperfection contribution,
bounded by Lemma~\ref{lem:sm:dirg:post_measurement_distance}:
$\|\mathcal{M}_{\rm ideal}(|\Psi'\rangle\langle\Psi'|
   - |\Phi_n\rangle\langle\Phi_n|\otimes|\zeta\rangle\langle\zeta|)\|_1
 \leq 2\sqrt{s_n\varepsilon}$.
The first term (the measurement-translation residual) is bounded by
Lemma~\ref{lem:sm:dirg:translation}.
Altogether,
\begin{equation}
  \delta_{\rm total}
  \coloneqq \bigl\| \rho^{\rm phys}_{RE|\mathbf{0}} - \rho^{\rm ideal}_{RE|\mathbf{0}} \bigr\|_1
  \leq 2\sqrt{s_n\varepsilon}
     + n\sqrt{2w}
     + \frac{\sqrt{2}}{2}\,n\,w
     = 2\sqrt{s_n\varepsilon} + 2n\sqrt{2\eta - \eta^{2}} + \sqrt{2}\,n\,\eta\,(2-\eta),
  \label{sm:dirg:comp_total}
\end{equation}
where the three terms are bounded by
Lemmas~\ref{lem:sm:dirg:post_measurement_distance},
\ref{lem:sm:dirg:measurement_bound} and \ref{lem:sm:dirg:translation}, respectively.

The translation of this trace-distance bound into a conditional-entropy bound
via the conditional-entropy continuity bound of Lemma~\ref{lem:sm:dirg:fannes_audenaert} is carried out in
Proposition~\ref{prop:sm:dirg:complete}.


\subsection{Complete DIRG Randomness Bound}
\label{sm:dirg:complete}

Having bounded all sources of randomness deficit, we now combine them into a single device-independent lower bound on the conditional entropy $H(R|E)$ for the physical DIRG protocol.

\begin{proposition}[DIRG Randomness Bound]
  \label{prop:sm:dirg:complete}
  For $n \geq 3$, let $\beta = \beta_{nQ} - \varepsilon$ be the observed MABK violation
  with $\varepsilon \in [0,\, \beta_{nQ} - \beta_{nC}]$.
  Define the auxiliary quantities
  \begin{equation*}
    \eta \coloneqq \varepsilon\cdot 2^{-(n+1)/2},
    \qquad
    w \coloneqq 4\eta - 2\eta^{2},
    \qquad
    s_n \coloneqq \frac{1+\sqrt{2}}{2^{n/2}}.
  \end{equation*}
  Let $\delta_{\rm total} \coloneqq 2\sqrt{s_n\varepsilon}
  + n\sqrt{2w} + \frac{\sqrt{2}}{2}\,n\,w
  = 2\sqrt{s_n\varepsilon} + 2n\sqrt{2\eta - \eta^{2}} + \sqrt{2}\,n\,\eta\,(2-\eta)$ bound the total trace distance between the physical
  and ideal cq-states (Eq.~\eqref{sm:dirg:comp_total}).
  Under the extraction isometry $V = \bigotimes_k V_k$ and the physical measurement
  channel $\mathcal{M}$, the conditional entropy satisfies
  \begin{equation}
    H(R|E)\bigl(2^{(n+1)/2} - \varepsilon\bigr)
    \geq H_{\rm ideal}(n)
      - \frac{\delta_{\rm total}}{2}\,n
      - \Bigl(1 + \frac{\delta_{\rm total}}{2}\Bigr)\,
        h\!\left(\frac{\delta_{\rm total}}{2}\right),
    \label{sm:dirg:complete_explicit}
  \end{equation}
  where $h(\cdot)$ is the binary entropy and the inequality is valid for $\delta_{\rm total} \leq 1$ (Lemma~\ref{lem:sm:dirg:fannes_audenaert}).
  For $\delta_{\rm total} > 1$, the simplified bound of Lemma~\ref{lem:sm:dirg:fannes_audenaert} is replaced by the trivial classical-quantum bound of Eq.~\eqref{sm:dirg:afa_cq_trivial}, yielding $H(R|E) \geq H_{\rm ideal}(n) - n$.
  For $\delta_{\rm total} \leq 1$, the simplified form
  \begin{equation}
    H(R|E)(\beta_{nQ} - \varepsilon)
    \geq H_{\rm ideal}(n)
      - n\sqrt{s_n\varepsilon}
      - n^{2}\sqrt{w/2}
      - \frac{n^{2}w}{2\sqrt{2}}
      - 2\,h\!\left(\sqrt{s_n\varepsilon} + n\sqrt{w/2} + \frac{n w}{2\sqrt{2}}\right)
    \label{sm:dirg:complete_simplified}
  \end{equation}
  holds.
  In the limit $\varepsilon \to 0^{+}$, the bound recovers the ideal values
  $H_{\rm ideal}(n)$ of Proposition~\ref{sm:dirg:wooltorton}:
  $n$ for odd $n \equiv 3 \pmod 4$; $n - 1$ for odd $n \equiv 1 \pmod 4$ on input $\mathbf{0}$ (and $n$ on input $\mathbf{1}$);
  $n + \frac{1}{2} - \log_2(1+\sqrt{2})/\sqrt{2}$ for even $n$.
\end{proposition}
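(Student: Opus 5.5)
The plan is to assemble the proposition directly from the trace-distance bound already established and the continuity bound of Lemma~\ref{lem:sm:dirg:fannes_audenaert}. The argument proceeds in four steps.

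\textbf{Step 1: total trace distance.} I would apply the triangle-inequality decomposition Eq.~\eqref{sm:dirg:decomposition} and bound its three terms separately:
\begin{itemize}
\item the translation residual, by Lemma~\ref{lem:sm:dirg:translation};
\item the measurement-imperfection term, by Lemma~\ref{lem:sm:dirg:measurement_bound};
\item the state-imperfection term, by Lemma~\ref{lem:sm:dirg:post_measurement_distance}.
\end{itemize}
Together these give $\|\rho^{\rm phys}_{RE|\mathbf{0}}-\rho^{\rm ideal}_{RE|\mathbf{0}}\|_1\le\delta_{\rm total}$, as recorded in Eq.~\eqref{sm:dirg:comp_total}.

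Both states are classical-quantum with classical register $R$ of dimension $2^n$. The ideal one has $H(R|E)=H_{\rm ideal}(n)$ by the decoupled form Eq.~\eqref{sm:dirg:ideal_form} and Proposition~\ref{sm:dirg:wooltorton}. For $n\equiv1\pmod 4$ the input is switched to $\mathbf{1}$, and the lemmas above are insensitive to which observable is measured.

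\textbf{Step 2: continuity.} Let $\delta$ be the true trace distance, so $\delta\le\delta_{\rm total}$. The right-hand side of Winter's bound, $\frac{\delta}{2}\log_2 d_R+(1+\frac{\delta}{2})h(\frac{\delta}{2+\delta})$, is nondecreasing in $\delta$ on $[0,2]$, since $\delta/(2+\delta)\le 1/2$ there. Hence $\delta$ may be replaced by $\delta_{\rm total}$ whenever $\delta_{\rm total}\le 1$. Using $h(\delta/(2+\delta))\le h(\delta/2)$, which holds by monotonicity of $h$ on $[0,1/2]$, gives Eq.~\eqref{sm:dirg:complete_explicit}.

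For $\delta_{\rm total}>1$, I would instead invoke the classical-quantum bound Eq.~\eqref{sm:dirg:afa_cq_trivial}, which gives $H\ge H_{\rm ideal}(n)-n$. The resulting bound on the physical cq-state is a bound for every strategy with violation at least $\beta$. Taking the infimum therefore gives the rate statement, exactly as in the proof of Lemma~\ref{lem:sm:dirg:state_entropy}.

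\textbf{Step 3: simplified form.} I would expand $\frac{\delta_{\rm total}}{2}n = n\sqrt{s_n\varepsilon}+n^2\sqrt{w/2}+\frac{n^2w}{2\sqrt2}$, using $\frac{n\sqrt{2w}}{2}=n\sqrt{w/2}$ and $\frac{\sqrt2}{4}nw=\frac{nw}{2\sqrt2}$. I would then bound $(1+\delta_{\rm total}/2)\le 2$ when $\delta_{\rm total}\le1$. The argument of $h$ is $\delta_{\rm total}/2=\sqrt{s_n\varepsilon}+n\sqrt{w/2}+\frac{nw}{2\sqrt2}$, which is at most $1/2$, so no further monotonicity step is needed. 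This yields Eq.~\eqref{sm:dirg:complete_simplified}.

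\textbf{Step 4: the limit.} As $\varepsilon\to0^+$ we have $\eta, w\to0$, hence $\delta_{\rm total}\to0$. Continuity of $h$ at $0$ with $h(0)=0$ then recovers $H_{\rm ideal}(n)$.

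The main obstacle is Step 2, and specifically the monotonicity replacement of $\delta$ by the upper bound $\delta_{\rm total}$. I would check it by differentiating $(1+x)h(x/(1+x))$ with $x=\delta/2$; this expression equals the entropy of a two-outcome sum and is increasing. The rest is bookkeeping already done in the cited lemmas.
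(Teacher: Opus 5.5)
Your proposal is correct and follows the paper's proof essentially step for step. The shared steps are the triangle-inequality decomposition bounded by the translation, measurement-imperfection and state-imperfection lemmas; Winter's classical-quantum continuity bound with $d_R = 2^n$; the trivial classical-quantum bound when $\delta_{\rm total} > 1$; and $(1+\delta_{\rm total}/2) \le 2$ for the simplified form. Your explicit monotonicity check, which justifies replacing the true trace distance by its upper bound $\delta_{\rm total}$, fills in a step the paper leaves implicit when it applies the continuity bound ``with $\delta = \delta_{\rm total}$''.
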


\noindent \emph{Proof.}
By the triangle-inequality decomposition of Eq.~\eqref{sm:dirg:decomposition},
the total trace distance $\delta_{\rm total}$ is bounded by the sum of the
state-imperfection distance $2\sqrt{s_n\varepsilon}$
(Lemma~\ref{lem:sm:dirg:post_measurement_distance}), the
measurement-imperfection distance
$n\sqrt{2w}$
(Lemma~\ref{lem:sm:dirg:measurement_bound}), and the
measurement-translation residual
$(\sqrt{2}/2)\,n\,w$
(Lemma~\ref{lem:sm:dirg:translation}).
Applying the conditional-entropy continuity bound of
(Lemma~\ref{lem:sm:dirg:fannes_audenaert}) with $d_R = 2^{n}$ and
$\delta = \delta_{\rm total}$ directly yields~\eqref{sm:dirg:complete_explicit}
whenever $\delta_{\rm total} \leq 1$.
For $\delta_{\rm total} > 1$, the simplified bound of Lemma~\ref{lem:sm:dirg:fannes_audenaert} does not apply; since the
post-measurement states are classical-quantum, the bound
$|H(R|E)_\rho - H(R|E)_\sigma| \leq \log_2 d_R = n$ of
Eq.~\eqref{sm:dirg:afa_cq_trivial} holds instead, giving
$H(R|E) \geq H_{\rm ideal}(n) - n$.
The simplified form~\eqref{sm:dirg:complete_simplified} follows from
the bound $(1+\delta/2) \leq 2$ (valid for $\delta \leq 1$),
together with the substitution $\delta_{\rm total}/2 = \sqrt{s_n\varepsilon}
+ \frac{n}{2}\sqrt{2w} + \frac{n w}{2\sqrt{2}}$ in the linear term.
\hfill $\square$

\subsection{Discussion}
\label{sm:dirg:discussion}

Prior work on DIRG with the MABK inequality has either studied the maximum achievable randomness via convex optimization~\cite{Wooltorton_Brown_Colbeck_2025}, derived guessing-probability bounds for $n=3$~\cite{Woodhead_Bourdoncle_Acin_2018}, or obtained entropy bounds for $n=3$ using reduction techniques~\cite{Grasselli_Murta_Kampermann_Bruth_2021}.
Different from these approaches, Proposition~\ref{prop:sm:dirg:complete} provides a fully analytical device-independent bound on the conditional entropy that accounts for both state and measurement imperfections.
The combined randomness deficit scales as $n^{2}\cdot 2^{-n/4}\sqrt{\varepsilon}$, which becomes negligible exponentially fast as the number of parties grows, and recovers the known ideal rates at maximal violation when $\varepsilon \to 0^+$.
The analysis applies uniformly to all $n \geq 3$: the ideal conditional entropy $H_{\rm ideal}(n)$ depends on $n$ through its parity and its class modulo~$4$ (the latter fixing which input is uniformly random, see Proposition~\ref{sm:dirg:wooltorton})~\cite{Wooltorton_Brown_Colbeck_2025}, while the deficit is independent of both.

Several features of the derived bound merit attention.

\textit{Square-root scaling.}
Both the state- and measurement-imperfection contributions scale as $O(\sqrt{\varepsilon})$, in contrast with the $O(\varepsilon)$ scaling of the extractability bound itself (Proposition~\ref{prop:main}).
For the state term, this arises from the Fuchs--van~de~Graaf conversion from fidelity to trace distance, which turns an $O(\varepsilon)$ fidelity deficit into an $O(\sqrt{\varepsilon})$ trace distance.
For the measurement term, it follows from the effective-commutator deficit $t_k \geq 1 - w$: the block-weighted bound $\sum_f p_{k,f}\sin^{2}\delta_k^f \le w/2$ (Eq.~\eqref{sm:dirg:comp_blockbound}) together with the Cauchy--Schwarz step of Eq.~\eqref{sm:dirg:comp_ndiamond_sum} turns the $O(\varepsilon)$ deficit $w \sim 4\varepsilon/\beta_{nQ}$ into the $O(\sqrt{\varepsilon})$ trace-distance contribution $\sqrt{2w}$.
The measurement-translation residual, which arises because the dephasing part of the extraction channel prevents the physical measurement from commuting exactly with the extraction isometry, contributes an additional $O(\varepsilon/\beta_{nQ})$ term per party to the trace distance (Lemma~\ref{lem:sm:dirg:translation}), which is subdominant relative to the $O(\sqrt{\varepsilon})$ contributions.
An $O(\varepsilon)$-scaling bound for the full protocol would require either a tighter conversion from fidelity to trace distance or a more direct entropic continuity argument that bypasses the trace-distance intermediate.

\textit{The factor $n^{2}$.}
The measurement-imperfection contribution $\Delta_{\rm meas}$ carries an $n^{2}$ prefactor, arising from the product of two independent sources: one factor of $n$ from the sum over parties in the composition decomposition of the $n$-party measurement channel (Eq.~\eqref{sm:dirg:comp_ndiamond_sum}), and a second factor of $n$ from $\log_2 d_R = n$ in the conditional-entropy continuity bound (Lemma~\ref{lem:sm:dirg:fannes_audenaert}).
The state-imperfection contribution $\Delta_{\rm state}$ carries only a single factor of $n$, entering solely through the conditional-entropy continuity bound.
A possible route to eliminating the prefactor $n$ from the measurement contribution is to exploit the structure of the GHZ state, whose relevant expectation values depend on the measurement angles through a product $\prod_k \sin(2\theta_k)$ rather than a sum.
Refining the $n$-dependence of the combined bound is left for future investigation.

Finally, we note that the analysis above sets aside finite-statistics effects; combining the present bounds with the finite-sampling confidence intervals of Section~\ref{sec:finite_sampling} would yield a fully device-independent certified randomness rate for DIRG with prescribed statistical confidence~\cite{Pironio_Acin_Massar_de_la_Giroday_Matsukevich_Maunz_Olmschenk_Hayes_Luo_Manning_et_al_2010}.

\end{document}